%% file: main.tex
\documentclass[manuscript,screen,nonacm]{acmart}
\input{latex/packages} 				%
\input{latex/thm-env}  				%
\input{latex/hyperref-config}       %
\input{latex/macro-english}         %
\input{latex/macro-tikz}			%
\usepackage{latex/fbarrows}         %
\input{latex/macro-beluga}          %

\input{latex/macro-symbols} 		%

\input{latex/macro}   				%

\begin{document}
	\input{latex/metadata} %
	{
    \let\clearpage\relax

   	\include{sections/newintro}

	\include{sections/axioms}
	\include{sections/uniqueness}

	\include{sections/ccc}

	\include{sections/proved-lts}

	\include{sections/complementarity}

	\include{sections/relations-pccsk}
	\include{sections/key-based}
   	\include{sections/newconclusion}

	}
	
	\bibliographystyle{ACM-Reference-Format}
	\bibliography{bib/bib}
	
	\appendix
	{
	\let\clearpage\relax
	}

\end{document}

%% file: latex/packages.tex
\usepackage{mathtools} %
\usepackage[inline]{enumitem} %
\usepackage{xspace} %
\usepackage{multirow} %
\usepackage{tabularx} %
\usepackage[caption=false]{subfig} %
\usepackage{multicol} %
\usepackage{ebproof} %
\usepackage{csquotes} %
\usepackage{tikz} %
\usepackage{diagbox} %
\usepackage{thm-restate}

\usepackage{cancel} %

\usepackage{graphicx}
\usepackage[export]{adjustbox}

\usepackage{siunitx}

\usepackage{ushort} %

\usepackage[most]{tcolorbox} %
\definecolor{Gray}{gray}{0.85}
\tcbset{ %
	halign=center,
	colbacktitle=black!15!white, 
	colframe=black!15!white, 
	coltitle=black, 
	colback=white, 
	titlerule=0pt, 
	boxsep=1pt,
	left=1pt,
	right=1pt,
	top=1pt,
	bottom=1pt, 
	middle=0pt,
	lower separated=false,
	sidebyside gap=1pt,
	every box on layer 2/.style={  %
		sidebyside=false,
		colbacktitle=black!5!white,
		colframe=black!5!white,
	} 
}

%% file: latex/thm-env.tex
\AtEndPreamble{%
	\theoremstyle{acmdefinition}
	\newtheorem{notation}[theorem]{Notation}
	\newtheorem{remark}[theorem]{Remark}
	\newtheorem{discussion}[theorem]{Beluga's Breach}%
}

%% file: latex/hyperref-config.tex
\makeatletter
	\providecommand*{\toclevel@theorem}{0}
	\providecommand*{\toclevel@proposition}{0}
	\providecommand*{\toclevel@definition}{0}
	\providecommand*{\toclevel@notation}{0}
	\providecommand*{\toclevel@example}{0}
	\providecommand*{\toclevel@description}{0}
	\providecommand*{\toclevel@lemma}{0}
	\providecommand*{\toclevel@remark}{0}
	\providecommand*{\toclevel@conjecture}{0}
\makeatother

\newcommand{\itemref}[1]{%
	(\ref{#1})\@\xspace%
}

\hypersetup{hypertexnames=false}

%% file: latex/macro-english.tex
\newcommand*{\eg}{e.g.\@,\xspace}
\newcommand*{\st}{s.t.\xspace}

\newcommand*{\ie}{i.e.\@,\xspace}

\newcommand*{\aka}{a.k.a.\@\xspace}
\newcommand*{\resp}{resp.\@\xspace}

\makeatletter
\newcommand*{\etc}{%
	\@ifnextchar{.}%
	{etc}
	{etc.\@\xspace}%
}
\makeatother

%% file: latex/macro-tikz.tex
\usetikzlibrary{arrows.meta}
\usetikzlibrary{calc}
\usetikzlibrary{decorations.pathmorphing}
\usetikzlibrary{shapes.geometric}
\usetikzlibrary{backgrounds}
\usetikzlibrary{arrows.meta}
\usetikzlibrary{decorations.pathreplacing}
\usetikzlibrary{shapes.multipart}
\usetikzlibrary{positioning}
\usetikzlibrary{tikzmark} %
\usetikzlibrary{arrows}  %
\usetikzlibrary{fit}
\usetikzlibrary{hobby}

\tikzset{
	sys/.style = {
		line width=.5mm,
		inner sep=0pt,
		outer sep=0pt,
		rounded corners=0.5cm,
		draw = tsdep,
		minimum size=1.5cm,
		text width=1.6cm,
		text centered
	},
	spl/.style = {
		rectangle split,
		rectangle split horizontal,
		rectangle split parts=2,
		rectangle split part fill={sdep!20,ind!20},
	},
	sya/.style= {
		thick,
		font = {\LARGE},	
	},
	rel/.style = {
		line width=1mm, %
	},
	adj/.style = {
		line width=1mm, %
		dotted,
	},
	cpt/.style = {
	},
	u/.style = {yshift=#1\pgflinewidth},
	s/.style = {xshift=#1\pgflinewidth},
}

\tikzstyle{f} = [-> %
] %
\tikzstyle{b} = [decorate, %
decoration={snake, amplitude=-1pt, segment length=1.5mm, pre=lineto, post length=2pt, pre length=1pt},%
-{Straight Barb[scale=0.8]} %
] %
\tikzstyle{fb} = [arrows={- angle 45},%
-{>[sep=-1pt]>}%
] %

\tikzstyle{pf} = [{Bar[]}-> %
] %
\tikzstyle{pb} = [decorate, %
decoration={snake, amplitude=-1pt, segment length=1.5mm, pre=lineto, post length=2pt, pre length=1pt},%
{Bar[]}-{Straight Barb[scale=0.8]} %
] %
\tikzstyle{pfb} = [arrows={- angle 45},%
{Bar}-{>[sep=-1pt]>}%
] %

%% file: latex/macro-beluga.tex
\newcommand*{\emacsfont}{\fontfamily{lmtt}\selectfont}

\definecolor{belugapurple}{RGB}{154,0,214}
\definecolor{belugared}{RGB}{225, 0, 0}
\definecolor{belugagreen}{RGB}{31,147,15}
\definecolor{belugablue}{RGB}{0,0,238}
\definecolor{belugapink}{RGB}{237,103,166}

\newcommand\delimone{{\color{belugapurple}\emacsfont\bfseries LF}\color{belugagreen}\aftergroup:}
\newcommand\delimtwo{{\color{belugapurple}\emacsfont\bfseries inductive}\color{belugagreen}\aftergroup:}
\newcommand\delimthree{{\color{belugapurple}\emacsfont\bfseries schema}\color{belugagreen}\aftergroup=}
\newcommand\delimfour{{\color{belugapurple}\emacsfont\bfseries rec}\color{belugablue}\aftergroup:}
\newcommand\delimfive{{/}\slshape\aftergroup/}
\newcommand\delimaux{{\color{belugapink}.}}
\newcommand\delimsix{\color{belugapink}--\aftergroup\delimaux}

\lstdefinelanguage{Beluga}
{
	morekeywords=[1]{mlam,fn,case,of,total,in,type,impossible,let,and,ctype},
	keywordstyle=[1]\color{belugapurple}\emacsfont\bfseries,
	morecomment=[l]{\%},
	morecomment=[s]{\%\{}{\}\%},
	commentstyle=\color{belugared},
	sensitive=true,
}

\lstdefinestyle{belugastyle}{
	language=Beluga,
	basicstyle=\ttfamily\small,
	columns=flexible,
	keepspaces=true,
	showstringspaces=false,
	breaklines=true,
	breakatwhitespace=true,
	rangeprefix=\%\%\ ,
	rangesuffix=\ \%\%,
	includerangemarker=false,
	moredelim=**[is][\delimone]{LF}{:},
	moredelim=**[is][\delimtwo]{inductive}{:},
	moredelim=**[is][\delimthree]{schema}{=},
	moredelim=**[is][\delimfour]{rec}{:},
	moredelim=**[is][\delimfive]{/}{/},
	moredelim=**[is][\delimsix]{--}{.},
	literate={→}{{$\rightarrow$}}{1}
	{⊢}{{$\vdash$}}{1}
	{⇒}{{$\Rightarrow$}}{1}
}

\lstdefinestyle{belugastyleframes}{
	language=Beluga,
	basicstyle=\ttfamily\small,
	columns=flexible,
	keepspaces=true,
	showstringspaces=false,
	breaklines=true,
	breakatwhitespace=true,
	rangeprefix=\%\%\ ,
	rangesuffix=\ \%\%,
	includerangemarker=false,
	moredelim=**[is][\delimone]{LF}{:},
	moredelim=**[is][\delimtwo]{inductive}{:},
	moredelim=**[is][\delimthree]{schema}{=},
	moredelim=**[is][\delimfour]{rec}{:},
	moredelim=**[is][\delimfive]{/}{/},
	moredelim=**[is][\delimsix]{--}{.},
	frame=tblr,
	captionpos=b,
	literate={→}{{$\rightarrow$}}{1}
	{⊢}{{$\vdash$}}{1}
	{⇒}{{$\Rightarrow$}}{1}
}

\newcommand{\bellogo}{%
	\includegraphics[height=2ex, valign=m]%
		{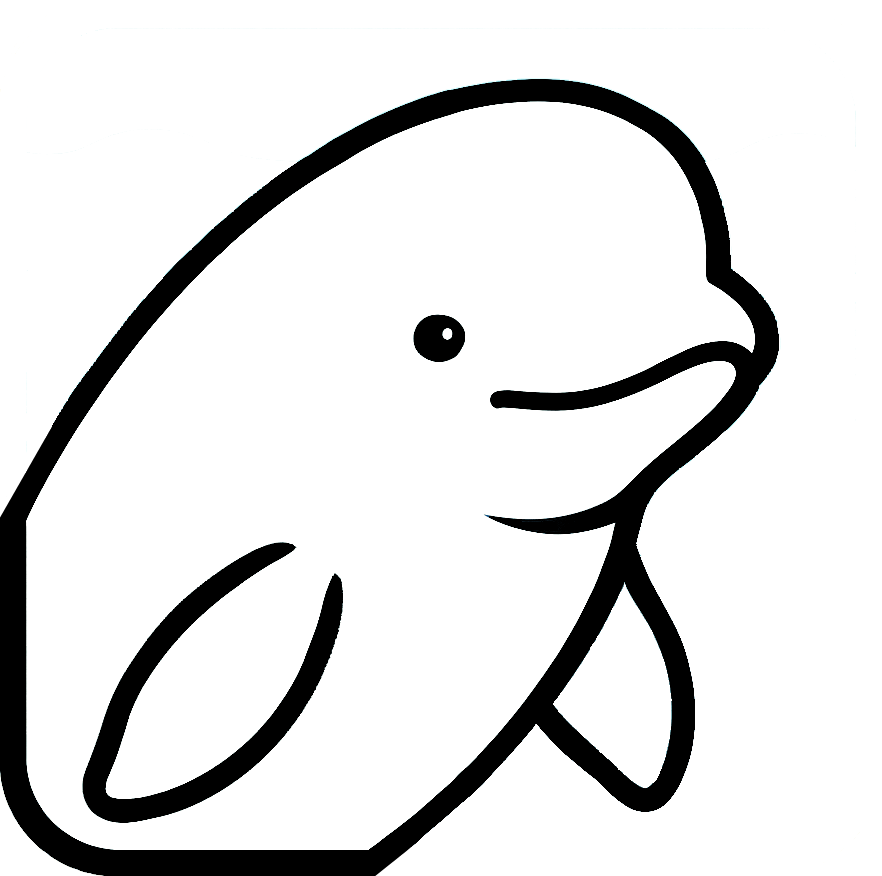}%
}

\newcommand{\repourl}{https://github.com/CinRC/Formalising-Independence-and-Causality-in-Reversible-Concurrent-Calculi}
\newcommand{\latestcommit}{805756a}%

\makeatletter
	\providecommand*{\hrefurl}{\hyper@normalise\hrefurl@}
	\providecommand*{\hrefurl@}[2]{\hyper@linkurl{#2}{#1}}
\makeatother

\ExplSyntaxOn
\NewDocumentCommand{\bel}{s m o o o} 
{%
	\IfValueTF{#4}%
	{%
		\IfValueTF{#3}%
		{%
			\IfBooleanTF#1%
			{%
				\hrefurl{\repourl/blob/\latestcommit/#2\#L#4-L#5}{\emacsfont{#3}\,\bellogo}%
			}%
			{%
				\hrefurl{\repourl/blob/\latestcommit/#2\#L#4-L#5}{#3\,\bellogo}%
			}
		}%
		{%
			\hrefurl{\repourl/blob/\latestcommit/#2\#L#4-L#5}{\bellogo}%
		}
	}%
	{%
		\IfValueTF{#3}%
		{%
						\IfBooleanTF#1%
			{%
				\hrefurl{\repourl/blob/\latestcommit/#2}{\emacsfont{#3}\,\bellogo}%
			}%
			{%
				\hrefurl{\repourl/blob/\latestcommit/#2}{#3\,\bellogo}%
			}		
		}%
		{%
			\hrefurl{\repourl/blob/\latestcommit/#2}{\bellogo}%
		}%
	}
}

\NewDocumentCommand{\tbel}{m o o}{%
	\IfValueTF{#2}%
	{%
		\bel*{#1}[#1][#2][#3]
	}
	{%
		\bel*{#1}[#1]
	}
} 
\ExplSyntaxOff

\DeclareSIUnit[quantity-product = {}]\kloc{\text{k LOC}}
\newcommand{\bkloc}[1]{\qty{#1}{\kloc}} %

%% file: latex/macro-symbols.tex
\newcommand{\axiom}[1]{{\color{black}\bfseries#1}}
\newcommand{\SP}{\hyperref[ax:sp]{\axiom{SP}}\xspace}
\newcommand{\BTI}{\hyperref[ax:bti]{\axiom{BTI}}\xspace}
\newcommand{\WF}{\hyperref[ax:wf]{\axiom{WF}}\xspace}
\newcommand{\PCI}{\hyperref[ax:pci]{\axiom{PCI}}\xspace}
\newcommand{\ID}{\hyperref[ax:id]{\axiom{ID}}\xspace}
\newcommand{\IRE}{\hyperref[ax:ire]{\axiom{IRE}}\xspace}
\newcommand{\CIRE}{\hyperref[ax:cire]{\axiom{CIRE}}\xspace}
\newcommand{\RPI}{\hyperref[ax:rpi]{\axiom{RPI}}\xspace}
\newcommand{\NRE}{\hyperref[def-NRE]{\axiom{NRE}}\xspace}
\newcommand{\BLD}{\hyperref[def-BLD]{\axiom{BLD}}\xspace}
\newcommand{\ED}{\hyperref[def-ED]{\axiom{ED}}\xspace}
\newcommand{\FLD}{\hyperref[def-FLD]{\axiom{FLD}}\xspace}
\newcommand{\PL}{\hyperref[ax:pl]{\axiom{PL}}\xspace}
\newcommand{\UT}{\hyperref[def-UT]{\axiom{UT}}\xspace}

\newcommand{\procset}{\ensuremath{\mathbb{P}}}  %
\newcommand{\kprocset}{\ensuremath{\mathbb{X}}} %

\newcommand{\names}{\ensuremath{\mathsf{N}}}
\newcommand{\labelset}{\ensuremath{\mathsf{L}}}
\newcommand{\keyset}{\ensuremath{\mathsf{K}}}

\newcommand{\proofset}{\ensuremath{\mathsf{P}}}

\newcommand{\Proc}{\mathsf{Proc}} %
\newcommand{\Lab}{\mathsf{Lab}}
\newcommand{\union}{\cup}
\newcommand{\inter}{\cap}

\newcommand{\Par}{\mid}

\newcommand{\prov}[1]{{#1}^\dagger}
\newcommand{\base}[1]{{#1}^\circ}

\newcommand{\rev}[1]{\ushortw{#1}} %
\newcommand{\Rev}[1]{\ensuremath{\rev{#1}}} %
\newcommand{\tRev}[1]{\Rev{\text{#1}}}

\newcommand{\myol}[2][3]{{}\mkern#1mu\overline{\mkern-#1mu#2}} %

\DeclareSymbolFont{symbolsC}{U}{txsyc}{m}{n}
\SetSymbolFont{symbolsC}{bold}{U}{txsyc}{bx}{n}
\DeclareFontSubstitution{U}{txsyc}{m}{n}
\DeclareMathSymbol{\opentimes}{\mathrel}{symbolsC}{93}

\newcommand{\equ}{\sim}							%
\newcommand{\nequ}{\not\sim}					%
\newcommand{\kequ}{\equ_{\keyset}}				%
\newcommand{\lequ}{\equ_{\Lab}}					%
\newcommand{\ind}{\mathrel{\iota}}				%
\newcommand{\nind}{\mathrel{\myol[1]{\iota}}}	%
\newcommand{\lind}{\mathrel{\Vert}}				%
\newcommand{\cind}{\mathrel{\mathsf{ci}}}		%
\newcommand{\ncind}{\mathrel{\myol[1]{\mathsf{ci}}}}  %

\newcommand{\conc}{\mathrel{\Diamond}}			%
\newcommand{\cf}{\mathrel{\#}} 					%
\newcommand{\ncf}{\mathrel{\myol[1]{\#}}}		%
\newcommand{\conn}{\mathrel{\curlyvee}} 		%
\newcommand{\dep}{\mathrel{\varkappa%
}}												%
\newcommand{\tdep}{\mathrel{\varkappa^*}}		%
\newcommand{\ldep}{\opentimes}					%
\newcommand{\cfi}{\mathrel{\cf_i}} 				%
\newcommand{\ncfi}{\mathrel{\ncf_i}} 				%
\newcommand{\cfg}{\mathrel{\cf_g}}              %

\newcommand{\eveqt}{\equ}
\newcommand{\sqeqt}{\equ}

\newcommand{\Cf}{\cf}
\newcommand{\Cfi}{\cfi}

\newcommand{\Cfg}{\cfg}

\newcommand{\sdep}{\ldep}
\newcommand{\coind}{\cind}
\newcommand{\ci}{\cind}
\newcommand{\co}{\conc}

\newcommand{\evkeqt}{\kequ}
\newcommand{\evleqt}{\lequ}

\newcommand{\cte}{\sharp} %

\newcommand{\es}{\varepsilon}

\newcommand{\nil}{\mathbf{0}}

\newcommand{\evtkof}[2]{\evop_{#1}(#2)}

\newcommand{\len}[1]{|#1|}

\newcommand{\srcof}[1]{\mathsf{src}(#1)}
\newcommand{\tgtof}[1]{\mathsf{tgt}(#1)}
\newcommand{\lblof}[1]{\mathsf{lbl}(#1)}
\newcommand{\ip}{\prec}

\makeatletter
\DeclareFontEncoding{LS2}{}{\@noaccents}
\makeatother
\DeclareFontSubstitution{LS2}{stix}{m}{n}
\DeclareSymbolFont{largesymbolsstix}{LS2}{stixex}{m}{n}
\DeclareMathDelimiter{\lBrace}{\mathopen} {largesymbolsstix}{"E8}{largesymbolsstix}{"0E}
\DeclareMathDelimiter{\rBrace}{\mathclose}{largesymbolsstix}{"E9}{largesymbolsstix}{"0F}

\DeclareFontEncoding{LS1}{}{}
\DeclareFontSubstitution{LS1}{stix}{m}{n}
\DeclareSymbolFont{stixsymbols}{LS1}{stixscr}{m}{n}
\SetSymbolFont{stixsymbols}{bold}{LS1}{stixscr}{b}{n}

\DeclareMathOperator{\ordop}{ord}

\newcommand{\card}[1]{\ensuremath{|#1|}}

\newcommand{\orig}[1]{O_{#1}} %
\newcommand{\bs}{\backslash}

\newcommand{\Left}{\mathrm{L}}        %
\newcommand{\Right}{\mathrm{R}}       %
\newcommand{\Dir}{\mathrm{d}}         %
\newcommand{\OpDir}[1]{\overline{#1}} %

\renewcommand{\L}{\Left}              
\newcommand{\R}{\Right}
\newcommand{\D}{\Dir}
\newcommand{\OD}{\OpDir{\D}}

\newcommand{\lmidr}{{\mid_{\R}}}
\newcommand{\lmidl}{{\mid_{\L}}}
\newcommand{\lmidd}{{\mid_{\D}}}
\newcommand{\lmidod}{{\mid_{\OD}}}

\newcommand{\lplusr}{{+_{\R}}}
\newcommand{\lplusl}{{+_{\L}}}
\newcommand{\lplusd}{{+_{\D}}}
\newcommand{\lplusod}{{+_{\OD}}}

\ushortCreate()[.05em](1.2){ushortwt} %

\newcommand{\rlmidr}{{\ushortwt{\mid_{\R}}}}
\newcommand{\rlmidl}{{\ushortwt{\mid_{\L}}}}

\newcommand{\rlplusr}{{\Rev{+_{\R}}}}
\newcommand{\rlplusl}{{\Rev{+_{\L}}}}

\newcommand{\bpair}[2]{\langle #1 , #2 \rangle} %
\newcommand{\cpair}[2]{\bpair{\lmidl #1}{\lmidr #2}} %

\newcommand{\kplabelset}{\labelset_{\keyset}^{\proofset}}
\newcommand{\klabelset}{\labelset_{\keyset}}

\newcommand{\out}[1]{\overline{#1}}

\newcommand{\setst}{ \mid }

\newcommand{\ccs}{\ensuremath{\mathsf{CCS}}\xspace}
\newcommand{\ccsk}{\ensuremath{\mathsf{CCS}{\keyset}}\xspace}

\newcommand{\pccs}{\ensuremath{\mathsf{CCS}^{\proofset}}\xspace}
\newcommand{\pccsk}{\ensuremath{\mathsf{CCS}{\keyset}^{\proofset}}\xspace}

\DeclareMathOperator{\keysop}{\mathsf{keys}}
\DeclareMathOperator{\keyop}{\mathsf{key}}
\DeclareMathOperator{\evop}{\mathsf{ev}}

\DeclareMathOperator{\lablop}{\ell} %
\DeclareMathSymbol{\kayop}{\mathalpha}{stixsymbols}{"6B} %
\DeclareMathSymbol{\ekayop}{\mathalpha}{stixsymbols}{"62} %

\newcommand{\kay}[1]{\kayop(#1)}

\newcommand{\ekay}[1]{\labl{#1}[\kay{#1}]}

\newcommand{\key}[1]{\keyop(#1)}
\newcommand{\keys}[1]{\keysop(#1)}
\newcommand{\ev}[1]{\evop(#1)}
\newcommand{\ord}[1]{\ordop(#1)}
\newcommand{\labl}[1]{\lablop(#1)}

%% file: latex/macro.tex
\definecolor{ind} {RGB}{43,131,186}
\colorlet  {conc}{ind!30}
\definecolor{sdep}{RGB}{215,25,28}
\colorlet  {tsdep}{sdep!20}
\definecolor{conf}{RGB}{171,221,164}
\definecolor{ord} {RGB}{253,174,97}

\newcommand{\nhphantom}[1]{\sbox0{#1}\hspace{-\the\wd0}}

%% file: latex/metadata.tex
\title[Concurrency, Causality and Conflict via Independence in Reversible Calculi]{%
	Concurrency, Causality and Conflict
	via Independence %
	in Reversible Calculi%
}

\author{Clément Aubert}
\authornote{Supported by the \href{https://www.nsf.gov}{National Science Foundation} under Grant No.: \href{https://www.nsf.gov/awardsearch/showAward?AWD_ID=2242786}{2242786}.}
\email{caubert@augusta.edu}
\orcid{0000-0001-6346-3043}
\author{Gabriele Cecilia}
\authornotemark[1]
\email{gcecilia@augusta.edu}
\orcid{0009-0007-7797-5008}
\affiliation{%
	\institution{Augusta University}
	\city{Augusta}
	\state{Georgia}
	\country{United States of America}
}

\author{Iain Phillips}
\email{i.phillips@imperial.ac.uk}
\orcid{0000-0001-5013-5876}
\affiliation{%
	\institution{Imperial College London}
	\city{London}
	\country{England}
}

\author{Irek Ulidowski}
\authornote{Partial support of the AY2024 International PI Invitation Program, IAR Nagoya University.}
\email{ulidowski@agh.edu.pl}
\orcid{0000-0002-3834-2036}
\affiliation{%
	\institution{AGH University of Kraków}
	\city{Kraków}
	\country{Poland},
}
\affiliation{%
	\institution{University of Leicester}
	\city{Leicester}
	\country{England}
}

\renewcommand{\shortauthors}{Aubert et al.}

\begin{abstract}
	Among the different ways of approaching the semantics of process calculi, true-concurrency models stand out for their ability to highlight subtle interplays between events.
	At their heart lie three crucial relations: concurrency, causality and conflict.
	This paper shows that reversibility, when endowed with a notion of independence, provides a rich tooling to study and characterise these true-concurrency relations.
	First, we prove that systems admitting pre-reversibility (\ie that can be extended with an independence relation satisfying some basic axioms) have a unique notion of independence, events, concurrency, causality and conflict.
	We then analyse the relationship between independence and the true-concurrency relations, establishing novel independence-based characterisations of causality and conflict.
	Our second series of contributions revolves around two concrete process calculi and two syntactic notions defined on their transition labels, namely independence and dependence;
	we prove that they partition connected transitions and characterise elegantly concurrency on adjacent transitions.
	This part of our development was machine-checked using the proof assistant Beluga.
	Last, we study how the key mechanism commonly used in reversible process algebra can be used as a proxy to retrieve causality and core independence on past events.
	We conclude by discussing how our results extend beyond reversible systems. %

\end{abstract}

\begin{CCSXML}
	<ccs2012>
	<concept>
		<concept_id>10010147.10011777.10011014</concept_id>
		<concept_desc>Computing methodologies~Concurrent programming languages</concept_desc>
		<concept_significance>500</concept_significance>
	</concept>
	<concept>
		<concept_id>10003752.10003753.10003761.10003764</concept_id>
		<concept_desc>Theory of computation~Process calculi</concept_desc>
		<concept_significance>500</concept_significance>
	</concept>
	<concept>
		<concept_id>10003752.10003790.10002990</concept_id>
		<concept_desc>Theory of computation~Logic and verification</concept_desc>
		<concept_significance>500</concept_significance>
	</concept>
	<concept>
		<concept_id>10003752.10003777.10003786</concept_id>
		<concept_desc>Theory of computation~Interactive proof systems</concept_desc>
		<concept_significance>500</concept_significance>
	</concept>
	</ccs2012>
\end{CCSXML}

\ccsdesc[500]{Computing methodologies~Concurrent programming languages}
\ccsdesc[500]{Theory of computation~Process calculi}
\ccsdesc[500]{Theory of computation~Interactive proof systems}
\ccsdesc[500]{Theory of computation~Logic and verification}

\keywords{%
	Labelled Transition Systems,
	Concurrency,
	Reversibility,
	Dependence,
	Axiomatic Approach to Reversibility,
	Formalisation,
	Machine-Checked Proof,
	Beluga
}

\maketitle

%% file: sections/newintro.tex
\section{Introduction}

\subsection*{Background: Reversibility as a Novel Approach to True-Concurrency Relations}

	Concurrent systems, where multiple actions may execute simultaneously or interact with each other, are inherently complex: their behaviour is shaped not only by which events (\ie atomic actions) occur, but also by how they relate.
	Semantic models provide a useful framework for describing such executions and reasoning about their properties, while abstracting away from implementation details and providing a basis for formal specification and analysis.
	They have often adopted %
the interleaving style, where concurrent executions are represented as alternative sequences of events~\cite{TOA1992}---however, this perspective makes it harder to observe and study subtle interplays, or lack thereof, between events.
A refined approach, the \emph{true-concurrency semantics}, aims at capturing and making explicit, among other relations, the causal dependencies between events~\cite[Section 1.2]{Glabbeek2001}.

At the heart of %
true-concurrency models lie three fundamental relations between events: \emph{concurrency}, relating events that may occur together; \emph{causality}, describing dependency among events; and \emph{conflict}, expressing incompatible events.
These \enquote{true-concurrency relations} can either be given directly as part of the model or derived, sometimes one from another.
For example, prime event structures require causality as a primitive parameter~\cite[Definition 1.3.4]{Winskel1986}, while stable configuration structures derive the notion of causality from a universal quantification over all execution paths, and then derive concurrency from the absence of causality~\cite[Deﬁnition 5.6]{Glabbeek2001}.

Different approaches have aimed at capturing directly the true-concurrency relations in process calculi such as the Calculus of Communicating Systems (CCS~\cite{milner80lncs}), or more generally in labelled transition systems (LTSes).
In those widely studied formalisms for modelling and analysing concurrent systems~\cite{BAETEN2005131}, it was first observed that these relations can be defined operationally as properties quantified over execution paths~\cite{NPW81}: \eg events are in conflict if they cannot both appear in any computation~\cite[p.~412]{BoudolC88Rex}.
An alternative method uses \emph{proof labels}: in short, it enhances transition labels to define \emph{independence} or, dually, \emph{dependence}, that allows one to retrieve true-concurrency relations between events from local relations on transitions~\cite{BC88,BC94,DeganoGP03}.

The advent of reversibility, \ie the ability to undo previously executed computations, opened a series of original investigations on those questions.
It was observed that bisimulations intimately related to true-concurrency relations defined by semantics~\cite{Bednarczyk1991} or logical means~\cite{10.1007/3-540-48340-3_32} could be characterised by reversible calculi at the syntactical level~\cite{Aubert2020b,AubertPU26,BEM25}.
Those results followed the intuition that reversibility provides a natural environment to study causality~\cite{PU07a,Ulidowski2014}: indeed, the \emph{causal-consistent reversibility} approach~\cite{DK04,DBLP:conf/rc/GluckLMMPUV23} states that an event can be reversed \emph{only after all the events it caused have been undone}, grounding causality as a fundamental notion in reversible calculi.

Our contributions revolve around fleshing out concretely the intuition that reversible calculi are well-suited to studying true-concurrency relations.
We work \enquote{both ways to the middle} by first leveraging the axiomatic theory for reversible computation~\cite{LPU24} to provide characterisations of concurrency, causality and conflict in \emph{pre-reversible} labelled transition systems satisfying sometimes additional axioms.
We then instantiate the axiomatic approach to two reversible process calculi and refine our results by leveraging the structure made apparent by proof labels and keys.
We briefly introduce the mechanisms at play behind those process calculi in the following example, before listing our main contributions and detailing how the paper is structured.

\subsection*{Building Intuitions with an Example}
Consider actions $a$ and $b$, and the following \ccs processes: $a\Par b$, where the actions can be executed concurrently, and $a.b + b.a$, where $a$ executes first followed by $b$, or vice versa.
These processes are equivalent under interleaving semantics ($a$ and $b$ are executed in some order), but they are not under true-concurrency semantics: \(a\), \(b\) are concurrent in $a \Par b$ but not in $a.b + b.a$, where $a$ causes $b$ or $b$ causes $a$.
However, this cannot be retrieved from the execution traces alone, and so \emph{something} must be added.

The \emph{proof-label} approach used by \pccs~\cite{BC88,BC94,DeganoGP03} annotates transition labels with syntactic information identifying the sub-process responsible for the transition.
The execution of \(a\) followed by \(b\) in $a\Par b$ is written in this system as
\[a \Par b \pr{f}[\lmidl a] \nil \Par b \pr{f}[\lmidr b] \nil \Par \nil\]
where \(\lmidl\), \(\lmidr\) in the transition labels show that \(a\), \(b\) originate from the left and the right side of the $\Par$ parallel operator respectively---which is the condition to declare them \emph{independent}.
The corresponding execution in $a.b + b.a$ then becomes
\[a.b + b.a \pr{f}[\lplusl a] b \pr{f}[~b] \nil\]
and one can establish by complementarity that those two transitions are not independent~\cite{DeganoGP03}, differentiating them from the previous transitions labelled \(\lmidl a\), \(\lmidr b\).

This difference in causal structure between our two example processes can also be observed in the reversible setting: \ccsk~\cite{PU07}, a reversible version of \ccs, enables the representation of reversible systems~\cite{Aman2020} by marking each executed action with a unique \emph{communication key}.
This key is recorded both in the label of the transition and in the syntax of the processes.
An action can later be undone by removing its corresponding key, but only in a causal-consistent manner.
In our example, the unique keys \(k\) and \(l\) are attached to $a$ and $b$ respectively\footnote{The actual keys are irrelevant: what matters is that they are different in this case, and attached to the same label.}, obtaining: 
\begin{align*}
	a\Par b \pr{f}[a[k]] a[k]\Par b \pr{f}[b[l]]  a[k]\Par b[l] && \text{ and } && a.b +b.a \pr{f}[a[k]] a[k].b + b.a \pr{f}[b[l]] a[k].b[l]+ b.a\text{.}
\end{align*}
The resulting processes, $a[k]\Par b[l]$ and $a[k].b[l]+ b.a$, have different backward behaviours: while the former can reverse $a[k]$, the latter cannot, since $b[l]$ guards (\ie was caused by) $a[k]$ in the backward direction and needs to be undone first to preserve causal consistency.
These \enquote{undoing} steps (called \emph{backward transitions}) are represented by
\begin{align*}
	a[k]\Par b[l]\pr{b}[a[k]] a \Par b[l] && \text{ and } && a[k].b[l]+ b.a \pr{b}[b][l] a[k].b + b.a\text{.}
\end{align*}
Thus, the ability of the former process to reverse $a[k]$ and $b[l]$ in any order indicates that \(a\) and \(b\) are concurrent.
Conversely, the impossibility to reverse these actions in both orders in the latter process is evidence of causal ordering between them.
This hints at the ability for backward systems to capture true-concurrency relations natively.

However, both the proof-label-based and the reversible-based approaches have shortcomings.
Proof-label-based approaches typically \emph{define} true-concurrency relations using only independence or its complement, dependence.
This can lead to conflating conflict with absence of independence~\cite[p.~442]{BC88}, or to defining concurrency from absence of dependence~\cite[p.~312]{DeganoGP03}, but such definitions are only accurate in particular settings.
For instance, absence of independence coincides with conflict for coinitial transitions, but not for composable transitions, where it represents causality (\autoref{lem-adj non-ind vs ccc}).
Reversible-based approaches require to observe all transitions in both directions to detect causality and do not seem to have a native mechanism to detect conflicts.
Therefore, a natural idea is to inject proof labels into reversible concurrent systems~\cite{aubert2023c,BEM24,BEM25}, combining the approaches to obtain the best of both worlds.
This is precisely what \pccsk does, with the added benefit of recording information via keys in the processes themselves~\cite{APU25,aubert2023c}.

\subsection*{Contributions Overview: Capturing True-Concurrency Relations for Reversible Systems}

We now highlight some of the results from our efforts to capture concurrency, causality and conflict in reversible systems.
We begin our study in the general setting of the axiomatic theory for reversible computation, where the only assumption for a labelled transition system is that an independence relation is provided.
Our first contribution is to show that this relation, as well as the notion of events, is unique if some expected properties are required (\autoref{thm-uniqueness}). We additionally introduce a new label-based definition of events, that is proven to be equivalent to the original independence-based definition of events (\autoref{prop:eveqt evleqt}).

Next, we analyse under which conditions the true-concurrency relations can be represented in terms of independence or its complement. 
We prove that, if some basic reversibility properties are satisfied, the true-concurrency relations for adjacent (\ie coinitial, cofinal or composable) forward events partition independence and its complement (Lemmas~\ref{lem-coind ind adj} and \ref{lem-adj non-ind vs ccc}).
Furthermore, for non-adjacent events, we provide new existentially quantified characterisations of causality and conflict, in terms of \emph{immediate predecessor} and \emph{initial conflict} (\autoref{prop-chain ip} and \autoref{thm-conf is conf dep}); these lead to new, independence-based methods to check causality and conflict for transitions (Propositions~\ref{prop:leq chain} and \ref{prop:cf leq}).

We then turn our attention to the concrete LTSes of \pccsk and \ccsk and prove that they are in bijection (\autoref{lem-bijection}).
\pccsk provides a convenient way of defining independence and dependence from transition labels only (\autoref{def-ind-dep-pccsk}); consequently, we can establish their complementarity instead of having to assume it (\autoref{thm-complementarity}).
We further prove that this notion of independence fits the requirements of the axiomatic approach (Theorems~\ref{thm-axioms-hold-pccsk} and \ref{thm-axioms-hold-ccsk}) and instantiate the independence-based characterisation of true-concurrency relations from~\autoref{sec:CCC}.
Additionally, we show how the order on keys stored in a process can be used to 
reconstruct \enquote{statically} causal ordering and core independence (Theorems~\ref{thm-ordering-event-key} and \ref{thm-independent-event-key}). %

Last but not least, one of our highlighted contributions is the formalisation, in the proof assistant Beluga~\cite{PientkaD10}, of some of our development.
We present it briefly below, before concluding our introduction with an outline of the paper and a detailed account of the improvements over previous publications.

\subsection*{Formalisation and Paper Organisations}

\paragraph{Beluga Formalisation}
The formalisation covers the definitions, examples and proofs of Sections~\ref{sec:proved-lts}, \ref{sec:ind-complem}, \ref{ssec:true-conc-rel-ccskp} and \ref{ssec:proof-axioms-ccsk}: in particular, it includes the definitions of \pccsk and \ccsk, the bijection between their LTSes, the complementarity of their independence and dependence relations and the proof that the axiomatic theory holds for the two LTSes.
Machine-checking our proofs allowed to certify their correctness, identify and resolve issues, and sometimes prompted changes to the proof strategy adopted.
It also allowed us to streamline some tedious arguments and to omit low-level details from the presentation.
To the best of our knowledge, very few results about reversible calculi have been formalised~\cite{MalettoR24,PaoliniPR15,Nicolo2026}, and even fewer about \emph{concurrent} reversible systems~\cite{Davalos2026,Cec25}.

Beluga is a proof assistant designed to specify and reason about formal systems.
Its architecture is based on two levels: a data level, built on the logical framework LF~\cite{Harper93}, which is used to represent the syntax, inference rules and derivation trees of the object language (in our case, process algebras); and a computation level, which supports the analysis and manipulation of LF data through pattern matching and recursive functions.
Beluga employs higher-order abstract syntax (HOAS)~\cite{pfenning88pldi} to represent object-level binding constructs using binders of LF's meta-language, relieving the user from explicitly managing $\alpha$-equivalence and capture-avoiding substitutions. %
This makes Beluga particularly well-suited for encoding process calculi such as \ccsk or \pccsk, where restriction binds names. %
In contrast, the mechanisation of generic LTSes, as used by the axiomatic theory, does not involve names or binding constructs, and therefore does not benefit from HOAS.
Moreover, compared to other proof assistants, Beluga provides fewer facilities for conveniently reasoning about abstract mathematical structures, such as support for equivalence classes---needed to reason about properties of events.
For these reasons, the machine-checking of Sections~\ref{sec:axiomatic}, \ref{sec:CCC}, \ref{ssec:true-concu-relation-pccsk} and \ref{sec:key-based} was omitted.

Formalised definitions, lemmas, theorems, etc. are hyperlinked to the accompanying Beluga development, hosted on \href{\repourl}{GitHub}; links are marked by a \bellogo\ symbol.
\emph{Beluga's Breach} environments\footnote{Named after the act of leaping out of the water, also known as cresting, which allows one to observe cetaceans such as belugas or orcas--or, in our context, formalisation details--without having to go underwater.} %
 highlight the main formalisation choices and differences between the encoding and its mathematical development; a more detailed discussion, together with a paper-to-artifact table, is provided in the \hrefurl{\repourl/blob/\latestcommit/overview.md}{overview file}.
We additionally refer the reader to the \hrefurl{\repourl/blob/\latestcommit/README.md}{README file} for installation and execution instructions, as well as for a brief description of the content of each file.
Lines of code (denoted by LOC) are measured using \href{https://github.com/AlDanial/cloc}{the cloc program} and a \hrefurl{\repourl/blob/\latestcommit/beluga_cloc_config.txt}{custom configuration} file that excludes comments and blank lines; our development amounts to approximately \bkloc{9}, and the examples span close to an additional \bkloc{1}.
The formalisation is entirely the authors' work, with no use of generative AI tools in either its development or validation.

\paragraph{Structure of the paper}
We first recall the axiomatic theory of reversibility~\cite{LPU24}, and then present a novel label-based definition of events, enabled by our observation on the uniqueness of independence under mild conditions (\autoref{sec:axiomatic}).
\autoref{sec:CCC} presents relations (core independence, immediate predecessor, initial and global conflicts) and details how they characterise true-concurrency relations in pre-reversible LTSIs sometimes subject to additional requirements.
\autoref{sec:proved-lts} recalls how \ccsk is extended with proof keyed labels to give \pccsk, and how the two systems are in bijection; this is also the starting point of our machine-checked development, and insights on the formalisation choices are discussed.
We use proof keyed labels to define independence and dependence separately, and then prove their complementarity on connected transitions (\autoref{sec:ind-complem}).
After instantiating the axiomatic approach to \pccsk and \ccsk, \autoref{sec:true-conc-rel} \enquote{lifts} our relations from transitions to events and relates them to the axiomatic-based definitions of concurrency, causality and conflict.
\autoref{sec:key-based} finally shows how key-based properties capture the axiomatic ordering on events.

\paragraph{Changelog}

This paper is a revised and expanded version of a conference paper~\cite{APU25} and its technical report~\cite{APU24}.
It also integrates and strengthens the formal development that was presented as a workshop paper~\cite{Cec25}.
Furthermore: 

\begin{itemize}
\item %
	The label-based definition of events (\autoref{def-event-label-equivalence}) and proof of its equivalence (\autoref{prop:eveqt evleqt}) are novel.
	\item \autoref{sec:CCC} is entirely new and contains only unpublished results.
	\item The formal development of \pccsk~\cite{cecilia_2025_15660907} was significantly extended, from about \bkloc{2} to \bkloc{6},
	 and improved: in particular, an issue with name binding was resolved, as discussed in \autoref{dis:open-closed}.
	\item The bijection between \pccsk and \ccsk has been mentioned repeatedly~\cite{aubert2023c,APU25,APU24}, %
	but it is now entirely formalised and discussed precisely in \autoref{ssec-pccsk-ccsk-bijection}.
	\item All the examples in Sections~\ref{sec:proved-lts} and \ref{sec:ind-complem} have been formalised.
	\item The results of Sections~\ref{ssec:true-conc-rel-ccskp} and \ref{ssec:proof-axioms-ccsk} have been formalised and are now presented more clearly.
	\item Sections~\ref{ssec:true-concu-relation-pccsk} and \ref{sec:key-based} contain additional elements, results, and their complete proofs.
\end{itemize}

%% file: sections/axioms.tex
\section{The Axiomatic Approach and Uniqueness Results}%
\label{sec:axiomatic}
We recall the basics of the axiomatic approach~\cite{LPU24} in \autoref{ssec:reminders-axiom}, before presenting our uniqueness results and alternative definition of events in \autoref{ssec:uniqueness}.

\subsection{The Axiomatic Approach}
\label{ssec:reminders-axiom}

Once a model of computation (presented as a labelled transition system with independence~\cite{Sassone1996}) is reversed, for example a process calculus, it is challenging to prove that it satisfies desired properties such as
\emph{causal consistency}~\cite{DK04} or \emph{causal safety} and \emph{causal liveness}~\cite{LPU24}.
The axiomatic approach allows us to obtain such properties---among others---from simpler axioms.
We recall the basic axioms, and introduce
\emph{polychotomies} and the conditions under which these properties hold (\autoref{prop-poly}).
Note that all these axioms (and therefore all derived properties) will hold for both \pccsk (\autoref{thm-axioms-hold-pccsk}) and \ccsk (\autoref{thm-axioms-hold-ccsk}).

\begin{definition}[LTSI~\protect{\cite[Definitions 2.1--2.3]{LPU24}}]%
	\label{def-ltsi}
	Let $\Proc$ be a set of \emph{processes}, ranged over by $P,Q$, and $\Lab$ a set of \emph{labels}, ranged over by $a,b$.
	A \emph{combined LTS} is a forward LTS $(\Proc,\Lab,\r{f})$ together with a backward LTS $(\Proc,\Lab,\r{b})$ satisfying the Loop Lemma: $P \r{f}[a] Q$ iff $Q \r{b}[a] P$.
	To refer to transitions which may be either forward or backward, we introduce backward labels $\rev{\Lab} = \{\rev{a} : a \in \Lab\}$, and let $\alpha$, $\beta$ range over \emph{directed labels}, \ie members of the disjoint union $\Lab \union \rev{\Lab}$.
	Then $P \r{fb}[\alpha] Q$ denotes $P \r{f}[a] Q$ if $\alpha = a$ and $P \r{b}[a] Q$ if $\alpha = \rev{a}$.
	We let $\rev{\rev{a}} = a$ and define the \emph{inverse} of a transition $t: P \r{fb}[\alpha] Q$ to be 
	$\rev{t}: Q \r{fb}[\rev{\alpha}] P$.
	
	We say that $(\Proc,\Lab,\r{fb},\ind)$ is a \emph{labelled transition system with independence} (LTSI) if $(\Proc,\Lab,\r{fb})$ is a combined LTS and $\ind$
	is an irreflexive symmetric binary relation on transitions---the \emph{independence} relation.
\end{definition}

\begin{remark}
	\label{rem:figure-convention}
	In our figures we will indicate either the name of the transition (\(t\), \(u\), etc.) or the label (\(a\), \(b\), \(c\), etc.) on the middle of the arrow.
\end{remark}

\begin{definition}[Notions on transitions, paths]%
	\label{def-transitions-paths}
	A transition \(t: P \r{bf}[\alpha] Q\) has for \emph{source} \(\srcof{t} = P\), for \emph{target} \(\tgtof{t} = Q\), and for \emph{label} \(\lblof{t} = \alpha\).
	Transitions \(t\), \(u\) are \emph{coinitial} if \(\srcof{t} = \srcof{u}\), \emph{cofinal} if \(\tgtof{t} = \tgtof{u}\), \emph{composable} if \(\tgtof{t} = \srcof{u}\), and \emph{adjacent} if they are either coinitial, cofinal or composable (in either order).
	
	A \emph{path} is a sequence of transitions \(r = t_1 t_2 \cdots t_n\) such that \(t_i\) and \(t_{i+1}\), for \(1 \leqslant i < n\), are composable.
	Its source \(\srcof{r}\) is \(\srcof{t_1}\), its target \(\tgtof{r}\) is \(\tgtof{t_n}\), its length \(\len{r}\) is \(n\), and it is \emph{rooted} if $\srcof{r}$ cannot perform a backward transition.
	Given a path $r = t_1 \cdots t_n$, its inverse path $\rev{r}$ is $\rev{t_n} \cdots \rev{t_1}$, and we let $r$ and $s$ range over paths.
		
	A transition \(t\) \emph{is connected to} a transition \(u\) if there exists a (possibly empty) path $r$ \st \(\srcof{r} = \srcof{t}\) and \(\tgtof{r} = \tgtof{u}\) (\autoref{fig:connectedness}).
	Two transitions are \emph{connected} if one is connected to the other.
\end{definition}

\begin{figure}
	\input{figures/connected.tex}
	\caption{Connectedness of transitions $t$ and $u$ (\autoref{def-transitions-paths}) with %
		the wavy line representing a path $r$ of any length.}%
	\label{fig:connectedness}
	\Description[short description]{long description}%
\end{figure}

Intuitively, the coinitial transitions in a commuting square (\aka diamond) %
 are \emph{concurrent}
since they can happen in any order (or at the same time) before arriving at the same target process. %
Hence, existence of \emph{non-degenerate} commuting squares in an LTS, where all %
 states are distinct, 
indicates the presence of \emph{concurrency}.

\begin{definition}[Events, general definition~\protect{\cite[Definition 4.1]{LPU24}}]%
	\label{def-event-general}
	Consider a pre-reversible LTSI and let $\sqeqt$ be the smallest equivalence relation satisfying:
	if 
	\(P\), \(Q\), \(R\) and \(S\) form a commuting square with 
	$t:P \r{fb}[\alpha] Q$, $u:P \r{fb}[\beta] R$,
	$u':Q \r{fb}[\beta] S$, $t':R \r{fb}[\alpha] S$,
	and $t \ind u$, $\rev{u} \ind t'$, $\rev{t'} \ind \rev{u'}$, \(u' \ind \rev{t}\) and 
	\begin{itemize}
		\item \(Q \neq R\) if \(\alpha\) and \(\beta\) are both forward or both backward,
		\item \(P \neq S\) otherwise,
	\end{itemize}
	then $t \sqeqt t'$. %
	The equivalence classes of transitions, denoted by $[t]$, are the \emph{events}.
	We say that an event is \emph{forward} if it is the equivalence class of a forward transition; similarly for \emph{backward} events.
	Given an event \(e = [t]\) we let \(\rev{e} = [\rev{t}]\). We write $t\in e$ if $e=[t]$.
\end{definition}

Next, we present some of the axioms and properties of the axiomatic theory of reversible computation~\cite{LPU24}. 
Additional properties (\NRE, \BLD, \UT and \ED) are introduced when used, and all bold acronyms are hyperlinked to their definition.

\begin{figure}%
	\begin{tcolorbox}[title = {Basic Axioms}]
		\begin{tabularx}{\textwidth}{>{\bfseries}l X}
			SP & whenever $t:P \r{fb}[\alpha] Q$, $u:P \r{fb}[\beta] R$ with $t \ind u$, then there are cofinal transitions $u': Q \r{fb}[\beta] S$ and $t':R \r{fb}[\alpha] S$.\label{ax:sp}\\
			BTI & whenever $t:P \r{b}[a] Q$ and $u: P \r{b}[b] Q'$ and $t \neq u$, then $t \ind u$.\label{ax:bti}\\
			WF & there is no infinite reverse computation, \ie there are no $P_i$ (not necessarily distinct) such that $P_{i+1} \r{f}[a_i] P_i$ for all $i \in\mathbb{N}$.\label{ax:wf}\\
			PCI & whenever $t:P \r{fb}[\alpha] Q$, $u:P \r{fb}[\beta] R$, $u': Q \r{fb}[\beta] S$ and $t':R \r{fb}[\alpha] S$ with $t \ind u$, then $u' \ind \rev{t}$.\label{ax:pci}
		\end{tabularx}
	\end{tcolorbox}
	\begin{tcolorbox}[title = {Other Useful Properties}]
		\begin{tabularx}{\textwidth}{>{\bfseries}l X}
			ID & whenever $t:P \r{fb}[\alpha] Q$, $u:P \r{fb}[\beta] R$, $u': Q \r{fb}[\beta] S$ and $t':R \r{fb}[\alpha] S$, with $Q \neq R$ if $t$ and $u$ have the same direction; $P \neq S$ otherwise; then $t \ind u$.\label{ax:id}\\
			IRE & whenever $t \sqeqt t' \ind u$, then $t \ind u$.\label{ax:ire}\\
			CIRE & whenever $t \sqeqt t' \ind u' \sim u$ and $t',u'$ and $t,u$ are coinitial, then $t \ind u$.\label{ax:cire}\\
			RPI & whenever $t \ind t'$, then $\rev{t} \ind t'$.\label{ax:rpi}\\
            PL & for any path $r$ there are forward-only paths $s$, $s'$ such that $r$ and $\rev{s} s'$ (the \emph{parabolic path}) are coinitial and cofinal and $\len s + \len {s'} \leq \len r$.
            Moreover, for all \(t\) in \(\rev{s} s'\), there exists \(t'\) in \(r\) such that \(t \eveqt t'\). %
            \label{ax:pl}
		\end{tabularx}
	\end{tcolorbox}
	\caption{%
		Main properties studied by the axiomatic approach.
		In the tables above, an LTSI satisfies \textbf{Axiom} or \textbf{Property} if the condition on the right holds.%
	}\label{fig:axiomatic-properties}
	\Description[short description]{long description}%
\end{figure}
\begin{definition}[Axioms and pre-reversible LTSI~\protect{\cite{LPU24}}]
	\label{def-basic}
	\label{def-other-properties}
	\label{def-prerev}
	\autoref{fig:axiomatic-properties} presents
	\begin{description}
		\item[the basic axioms] \emph{Square property} (\SP), \emph{Backward transitions are independent} (\BTI), \emph{Well-founded} (\WF) and \emph{Propagation of coinitial independence} (\PCI);
		\item[other useful properties]
		\emph{Independence of diamonds} (\ID),
		\emph{Independence respects events} (\IRE), \emph{Coinitial independence respects events} (\CIRE),
		\emph{Reversing preserves independence} (\RPI) and \emph{Parabolic lemma} (\PL).
	\end{description}
	An LTSI is \emph{pre-reversible} if it satisfies the basic axioms.
\end{definition}
If two coinitial transitions \(t\), \(u\) are independent, then they must form a commuting square (\SP axiom).
If we have %
    coinitial backward %
    transitions, %
then they must be independent (\BTI axiom). %
\SP and \BTI are complementary: \SP expresses soundness of a definition of independence, while \BTI expresses its completeness.

Non-degenerate commuting squares have independent coinitial transitions (\ID property), and we must be able to propagate
independence from coinitial transitions in commuting squares to side transitions (\PCI axiom).
We shall also use an axiom that guarantees that there are no infinite backward computations (\WF axiom), and obtain that reversing 
a transition preserves its independence of other transitions (\RPI property).

The \IRE property gives that if \(t\), \(u\) are independent, then $u$ is also independent of all transitions $t'$ that belong to the same event as $t$ (\ie \(t \sqeqt t'\)). It is easy to see that \IRE implies \CIRE. 
However, the converse does not hold:

\begin{example}[\protect{\cite[Example~5.15]{LPU24}}]%
	\label{ex-CLG CSi}
    Consider the commuting square $t:P \r{f}[a]  Q$, $u:P \r{f}[ b] R$, $u':Q \r{f}[b] S$ and $t':R \r{f}[a] S$, with $t \ind u$, $\rev u \ind t'$, $\rev{t'} \ind \rev{u'}$, $u' \ind \rev t$, \ie we have independence at all corners of the square.
	This square has two forward events, labelled with $a$ and $b$ respectively. Axioms \SP, \BTI, \WF and \PCI hold.
	\CIRE holds because all coinitial transitions, namely $t$ and $u$, are independent.
	We have $t' \sqeqt t \ind u$ but $t' \ind u$ does not hold, so that \IRE fails.
\end{example}

Finally, \PL states that any path can be converted to a \emph{parabolic path}, \ie one composed of a backward-only path followed by a forward-only one.
Our modified version of the original statement~\cite[Proposition 3.4]{LPU24} additionally asserts that the new parabolic path %
 will not introduce new events compared to the old path. %
It is implied by two basic axioms:

\begin{lemma}%
    \label{lem-sp-bti-imply-pl}
    If an LTSI satisfies \SP and \BTI, then it also satisfies \PL.
\end{lemma}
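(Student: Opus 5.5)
We proceed as in the proof of \cite[Proposition 3.4]{LPU24}, but track the events along the way. The argument is an induction that repeatedly eliminates \enquote{peaks}, \ie a forward transition immediately followed by a backward one.

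\textbf{Induction measure.} Given a path $r$, we induct on the pair
\[
(\len r,\ \text{number of forward--backward pairs out of order in } r),
\]
ordered lexicographically. Here a pair is out of order when a forward transition occurs before a backward one; this count is zero exactly when $r$ is parabolic.

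\textbf{Base case.} If $r$ is already parabolic, we take $s$ and $s'$ directly from $r$. The events condition holds trivially, since every transition of $\rev{s}s'$ is a transition of $r$.

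\textbf{Inductive step.} Otherwise $r$ contains a consecutive pair $t\,u$ with $t: P \r{f}[a] Q$ forward and $u: Q \r{b}[b] R$ backward. There are two cases.
\begin{itemize}
\item \emph{Case $u = \rev{t}$.} Then $R = P$ and we delete the pair $t\,\rev{t}$. This shortens the path by $2$ while keeping the source and target, and the resulting path's transitions are a subset of those of $r$. The induction hypothesis then gives $s, s'$ with $\len s + \len{s'} \leq \len r - 2$, and every transition of $\rev{s}s'$ is equivalent to a transition of $r$.
\item \emph{Case $u \neq \rev{t}$.} Then $\rev{t}: Q \r{b}[a] P$ and $u$ are distinct coinitial backward transitions, so \BTI gives $\rev{t} \ind u$. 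By \SP we obtain cofinal transitions $u': P \r{b}[b] S$ and $t': R \r{f}[a] S$, using the Loop Lemma to orient $t'$ forward. Replacing $t\,u$ by $u'\,\rev{t'}$, the forward transition $\rev{t'}\colon S \r{f}[a] R$ now follows the backward $u'$. This preserves length, source and target, and strictly decreases the number of out-of-order pairs, so the induction hypothesis applies.
\end{itemize}

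\textbf{Event bookkeeping (the main obstacle).} In the second case we must show $u' \sqeqt u$ and $\rev{t'} \sqeqt t$. Then any transition of the final parabolic path, which is equivalent to some transition of the modified path, is also equivalent to a transition of $r$, by transitivity of $\sqeqt$.

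The difficulty is that \autoref{def-event-general} requires all four corner independences of the square $\rev{t}, u, u', t'$ together with a non-degeneracy condition, whereas only \SP and \BTI are assumed, not \PCI. The corners are handled as follows:
\begin{itemize}
\item At $Q$, both transitions are backward, and $\rev{t} \ind u$ holds by \BTI.
\item At the two side corners, we try to obtain the independences from \BTI as well. At $R$, the transitions are $\rev{u}$, which is forward, and $t'$, which is forward; at $P$ they are $t$ and $u'$. One of these combinations may well fail to be both backward. Where that happens, the plan is to pass to inverse transitions, using the symmetry of the square under inversion, so that the relevant coinitial pair becomes backward-backward.
\item At $S$, the transitions $\rev{u'}$ and $\rev{t'}$ are the coinitial backward transitions of the square. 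They are distinct because $P \neq R$: if $P = R$ the step is degenerate and could be handled like the first case. \BTI then gives their independence.
\end{itemize}
The non-degeneracy condition ($Q \neq S$ for two backward transitions, or $P \neq R$ in the mixed case) must also be checked against the definition.

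If the side corners genuinely cannot be obtained from \BTI alone, the fallback is to note that the event clause of \PL is only used in pre-reversible settings. There we may invoke \PCI to supply $u' \ind \rev{\rev{t}}$-type independences from the coinitial independence $\rev{t} \ind u$, and so close the square.
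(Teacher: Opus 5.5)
\textbf{Where the proposal breaks.} Your peak-elimination skeleton is the argument of \cite[Proposition 3.4]{LPU24} that the paper invokes, and it correctly gives the length and parabolicity part of \PL. That skeleton is: cancel $t\,\rev{t}$, otherwise apply \BTI at $Q$ and then \SP, under your lexicographic measure. The gap is in the event bookkeeping.

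First fix the orientation slip: \SP yields $t': R \r{b}[a] S$, so that $\rev{t'}: S \r{f}[a] R$. The four corners of the square are then as follows.
\begin{itemize}
\item At $Q$: $\rev{t}$ and $u$, both backward.
\item At $P$: $t$ and $u'$, which are mixed.
\item At $R$: $\rev{u}$ and $t'$, also mixed.
\item At $S$: $\rev{u'}$ and $\rev{t'}$, which are both \emph{forward}, not backward as you claim.
\end{itemize}
So \BTI supplies only the corner at $Q$. Passing to inverses cannot help. The four independence conditions of \autoref{def-event-general} are one per corner, so they do not change when the square is re-oriented. Moreover, transporting independence along inversion is exactly \RPI, which is not assumed.

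\textbf{Why no repair from \SP and \BTI alone exists.} Take the LTSI consisting of just this square, with $\ind$ reduced to the symmetric pair $\rev{t}, u$ at $Q$.
\begin{itemize}
\item \SP holds, since the only independent coinitial pair is closed by $u', t'$.
\item \BTI holds, since $Q$ is the only state with two backward transitions.
\item For $r = t\,u$, the only parabolic path of length at most $2$ from $P$ to $R$ is $u'\,\rev{t'}$.
\item No square meets all four corner conditions, so $\sqeqt$, read literally, is the identity. Hence the event clause fails.
\end{itemize}

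\textbf{The fix: your fallback is the proof.} Your fallback is therefore not optional; it is the actual proof. It is also what the paper's one-line justification (\enquote{$t' \sqeqt t$ and $u' \sqeqt u$}) tacitly relies on, because \autoref{def-event-general} is only posed for pre-reversible LTSIs. Make it explicit:
\begin{itemize}
\item \PCI on the square at $Q$, with both role assignments of $\rev{t}$ and $u$, gives $u' \ind t$ and $t' \ind \rev{u}$.
\item \PCI on the square at $P$, starting from $u' \ind t$ and closed by $\rev{t'}$ and $u$, gives $\rev{t'} \ind \rev{u'}$.
\item Non-degeneracy $P \neq R$ follows from \autoref{lem-non-degenerate}.
\end{itemize}
Then \autoref{def-event-general} yields $u \sqeqt u'$ and $t \sqeqt \rev{t'}$, and transitivity completes your induction. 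Drop the \BTI-based claims for the corners at $P$, $R$ and $S$.
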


\begin{proof}
    The proof %
    follows from axioms \SP and \BTI~\cite[Proposition 3.4]{LPU24}.
    We additionally observe that any new transitions in $s$ or $s'$ are obtained by replacing $t \rev u$ (for some forward transitions $t,u$) by $\rev {u'} t'$ where $t' \eveqt t$ and $u' \eveqt u$.
\end{proof}

\begin{definition}[Counting events in path~\protect{\cite[Definition 4.11]{LPU24}}]
	\label{def-count-events}
	Let $r$ be a path, $e$ be an event, and \(tr\) the path obtained by prepending the transition \(t\) to \(r\). %
	We define $\cte(r,e)$ as follows:
	\begin{align*}
		\cte(\es,e) & = 0 &&& 
		\cte(tr,e) & =
		\begin{cases}
			\cte(r,e)+1 & \text{if } t \in e \\ %
			\cte(r,e) -1 & \text{if } t \in \rev e \\%
			\cte(r,e) & \text{otherwise} %
		\end{cases}
	\end{align*}
\end{definition}

This work will consider \emph{concurrency}, \emph{causality} and \emph{conflict} on events or transitions in addition to equality on events and equivalence of transitions.
We represent general concurrency between events in two different ways, namely \emph{core independence} and \emph{concurrency}, using the independence relation $\ind$; conversely, we define the \emph{causality} and \emph{conflict} relations via standard properties on rooted paths.

\begin{definition}[Relations on events, transitions~\protect{\cite[Definitions 4.14, 4.23, 4.27]{LPU24}}]%
	\label{def-event relations}
	Two events $e_1$, $e_2$ are 
	\begin{itemize}
	\item \emph{core independent}\footnote{Originally called \enquote{coinitially independent}~\cite{LPU24}, but this naming is confusing 
	when $\coind$ is extended to (not necessarily coinitial) transitions.}, 
	$e_1 \coind e_2$, iff there are coinitial  \(t_1\), \(t_2\) \st $t_1\in e_1$, $t_2 \in e_2$ and $t_1 \ind t_2$;
	\item  \emph{concurrent}, $e_1\co e_2$, iff  $e_1 \coind e_2$ and  for all coinitial  \(t_1\), \(t_2\),  if $t_1\in e_1$, $t_2\in e_2$  then $t_1 \ind t_2$.
	\end{itemize}
	Two forward events $e_1$, $e_2$ are 
	\begin{itemize}
		\item \emph{causally related} (or \emph{causally ordered}), $e_1 \leq e_2$, iff for all rooted paths $r$, if $\cte(r,e_2) > 0$ then $\cte(r,e_1) > 0$;
		\item in \emph{conflict}, $e_1 \Cf e_2$, iff there is no rooted path $r$ \st $\cte(r,e_1) >0$ and $\cte(r,e_2) > 0$.
	\end{itemize}
	We write $e_1 < e_2$ ($e_1$ is a \emph{cause} of $e_2$), if $e_1 \leq e_2$ and $e_1 \neq e_2$.
	We also extend those relations to transitions by letting $t_1 \coind t_2$ iff $[t_1] \coind [t_2]$,
	\(t_1 \co t_2\) iff \([t_1] \co [t_2]\), and similarly for forward transitions for causal ordering and conflict.
\end{definition}

\begin{figure}
	\input{figures/relation-example.tex}
	\caption{The LTSI used in \autoref{ex-true-conc-rel} to illustrate the relations of \autoref{def-event relations}.}%
	\label{fig:relation-example}
	\Description[short description]{long description}%
\end{figure}

\begin{example}%
	\label{ex-true-conc-rel}%
	Consider the pre-reversible LTSI in \autoref{fig:relation-example}, %
	 where all pairs of transitions labelled \(a\) and \(b\) are independent.

	 Notice that there are three forward events labelled \(a\), \(b\) and \(c\), and that, for convenience, only one transition labelled \(b\) is shown as reversed, although any transitions could be reversed. 
	According to \autoref{def-event relations}, the events labelled \(a\) and \(b\) are core independent and concurrent, while those labelled \(a\) and \(c\) are causally related, and those labelled \(b\) and \(c\) are in conflict. Similarly for transitions: \eg the bottommost transition labelled \(a\) is a cause of the transition labelled \(c\). This follows the intuition that, once this transition \(a\) is executed, one can execute a path (in this case, performing \(b\) backwards) where \(a\) is not undone and that enables the execution of \(c\).
\end{example}

The relation \(\leq\) is a partial ordering of events, \ie a reflexive, antisymmetric and transitive relation. Consequently, \(<\) is a strict partial ordering. 

\begin{lemma}[\protect{\cite[Lemma 4.24]{LPU24}}]%
	\label{lem-partial-ordering}
If an LTSI is pre-reversible, then \(\leq \) is a partial ordering of events.
\end{lemma}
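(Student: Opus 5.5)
The plan is to check the three properties of a partial order directly against \autoref{def-event relations}. Reflexivity and transitivity come straight from the definition, so the real work is antisymmetry.

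\emph{Reflexivity.} For any forward event $e$ and any rooted path $r$, the condition ``$\cte(r,e)>0$ implies $\cte(r,e)>0$'' holds trivially, so $e \leq e$.

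\emph{Transitivity.} Suppose $e_1 \leq e_2$ and $e_2 \leq e_3$, and let $r$ be a rooted path with $\cte(r,e_3)>0$. Then $\cte(r,e_2)>0$, and hence $\cte(r,e_1)>0$, so $e_1 \leq e_3$.

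\emph{Antisymmetry.} Assume $e_1 \leq e_2$, $e_2 \leq e_1$ and $e_1 \neq e_2$; I aim for a contradiction in four steps.
\begin{enumerate}
\item Obtain a rooted path in which $e_1$ is counted positively. Pick $t \in e_1$. By \WF, repeatedly taking backward transitions from $\srcof{t}$ must stop, which yields a path $s$ of backward transitions from $\srcof{t}$ to a state with no backward transition. Then $\rev{s}\,t$ is a rooted forward-only path containing $t$.
\item Turn the count into a genuine occurrence. The count $\cte(\rev{s}t,e_1)$ is not automatically positive, since earlier transitions could in principle belong to $e_1$ or $\rev{e_1}$. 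However, the path is forward-only, so no transition in it lies in the backward event $\rev{e_1}$; hence $\cte(\rev{s}t,e_1)$ is exactly the number of occurrences of $e_1$, which is at least $1$ because $t \in e_1$. Using $e_2 \leq e_1$, it follows that $\cte(\rev{s}t,e_2)>0$ as well. So we have a forward-only rooted path $r$ in which both events occur.
\item Use the key ingredient: in a pre-reversible LTSI, an event occurs at most once in a forward-only rooted path, so counts there are $0$ or $1$. I would take this from \cite{LPU24}, where it is derived from \SP, \BTI and \PCI (via \PL, \autoref{lem-sp-bti-imply-pl}) together with \WF. By this fact, each of $e_1$ and $e_2$ occurs exactly once in $r$, at distinct positions since $e_1 \neq e_2$.
\item Cut the path. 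Without loss of generality, let the occurrence of $e_1$ come first, and let $r'$ be the prefix of $r$ ending at that transition. Then $r'$ is rooted and forward-only, with $\cte(r',e_1)=1$ and $\cte(r',e_2)=0$. This contradicts $e_2 \leq e_1$. The symmetric case, where $e_2$ comes first, contradicts $e_1 \leq e_2$.
\end{enumerate}

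The main obstacle is step 3, the ``at most once'' property. If it is not available from \cite{LPU24}, I would prove it by taking two forward transitions of the same event $e$ in a forward-only rooted path. I would then push the later occurrence backwards using the commuting squares witnessing $\sqeqt$, which are guaranteed by \SP and \PCI, until it becomes coinitial with, or reverses, the earlier occurrence. The aim is to derive either a cycle contradicting \WF or two distinct coinitial backward transitions of the same event, which can then be ruled out using \BTI and irreflexivity of $\ind$.
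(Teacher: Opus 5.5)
Your proof is correct. The paper gives no argument of its own and simply cites \cite[Lemma 4.24]{LPU24}; your route is the standard one: use \WF to get a rooted forward-only path through $e_1$, then cut it at an occurrence to contradict one of the two inequalities.

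One simplification: step 3 is not needed at all. If you cut $r$ at the earliest transition lying in $e_1$ or in $e_2$ (these classes are disjoint since $e_1 \neq e_2$), the resulting rooted forward-only prefix already has a positive count for one event and count $0$ for the other. So the ``main obstacle'' disappears. In any case \NRE is available as \autoref{lem-NRE}, so your rather vague fallback sketch for proving it can be dropped.
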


Core independence is for now the only relation relying on an existentially quantified property.
In \autoref{sec:CCC}, we shall impose mild conditions on pre-reversible LTSIs 
to obtain %
alternative definitions of causality and conflict (\autoref{prop-chain ip} and \autoref{thm-conf is conf dep}), so that all three relations are represented in terms of existentially quantified properties.

Core independence between two events $[t]$ and $[u]$ is defined via independence between coinitial transitions equivalent to $t, u$, potentially allowing $t$ and $u$ themselves not to be independent.
Concurrency strengthens this notion by additionally requiring that all coinitial transitions equivalent to $t$ and $u$ are independent.
It follows immediately (\eg by the definitions of $\co$ and $\coind$) that, in pre-reversible LTSIs, concurrency implies core independence:

\begin{lemma}%
	\label{lem-con implies coind}
Let $e_1$, $e_2$ be events in a pre-reversible LTSI. If $e_1\co e_2$ then $e_1\coind e_2$.
\end{lemma}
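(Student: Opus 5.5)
This follows directly from unfolding \autoref{def-event relations}; pre-reversibility is not actually needed beyond making sure events and the relations are well defined. The plan is to compare the two clauses in the definition of concurrency with the single clause in the definition of core independence.

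Assume $e_1 \co e_2$. By \autoref{def-event relations}, $e_1 \co e_2$ holds iff two conditions hold. First, $e_1 \coind e_2$. Second, for all coinitial $t_1 \in e_1$ and $t_2 \in e_2$ we have $t_1 \ind t_2$. So I would simply take the first conjunct, which is exactly the desired conclusion $e_1 \coind e_2$.

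If instead one prefers a formulation of $\co$ without the explicit conjunct $e_1 \coind e_2$, i.e.\ only the universal clause, there is an extra step. We would then need to exhibit coinitial $t_1 \in e_1$ and $t_2 \in e_2$; the universal clause would give $t_1 \ind t_2$ and hence the existential clause of $\coind$. This existence of a coinitial pair is the only conceivable obstacle, since the universal statement could hold vacuously. With the definition as stated in the paper, however, that case is excluded by the explicit conjunct, so no obstacle arises.
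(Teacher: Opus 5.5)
Your proposal is correct and matches the paper, which simply notes that the lemma is immediate from \autoref{def-event relations}, since $\co$ is defined with $\coind$ as an explicit conjunct. Your side remark is also right: under a purely universal formulation, the only case that conjunct excludes is the vacuous one where no coinitial pair $t_1 \in e_1$, $t_2 \in e_2$ exists.
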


However, the converse does not hold in general: \autoref{ex-resolvable conflict} below provides a counterexample.
Core independence implies concurrency only under an additional requirement: for every independent coinitial pair of transitions $t_1$ and $t_2$, all coinitial transitions $t'_1 \in [t_1]$ and $t'_2 \in [t_2]$ must be independent (\ie \CIRE holds).
\autoref{prop-IRE co coind} establishes the result under the assumption of \CIRE.

\begin{notation}
	We write that \emph{an event \(e\) has label \(a\)} if \(e = [t]\) and \(\lblof{t} = a\)%
	~\cite[p.~11]{LPU24}.
	Henceforth, we shall also drop \enquote{labelled} for transitions and events, if the labelling is unambiguous, and write, \eg \enquote{event \(a\)} and \enquote{transition \(b\)} for, respectively, \enquote{the event (which is the equivalence class of transitions) labelled \(a\)} and \enquote{the transition labelled \(b\)}.
\end{notation}

\begin{example}%
	\label{ex-resolvable conflict}
	Consider the LTSI in \autoref{fig:resolvable conflict}~\cite[Example 5]{Glabbeek2009}\cite[Figure 9]{LPU24}.
	It contains three pairs of transitions labelled \(b\) and \(c\), highlighted in the figure.
	The first two pairs contain coinitial transitions, and the last one contains cofinal transitions.
	Independence is given by closing under \BTI and \PCI, which produces, among others, independence between the second pair of transitions \(b\) and \(c\). 
	Clearly \WF and \SP hold; hence the LTSI is pre-reversible.
	
	There are three events $a$, $b$ and $c$.
	The first pair of transitions $b$ and $c$ are \emph{not} independent (otherwise \SP would require to have cofinal transitions labelled \(b\) and \(c\)), and thus the events $b$ and $c$ are not concurrent.
	However, the events $b$ and $c$ are core independent thanks to the independence of the second pair of transitions \(b\) and \(c\).
	Consequently, since \(b \ci c\) and yet \(b \co c\) does not hold, we exhibited a counterexample to the converse of \autoref{lem-con implies coind}.
	Moreover, one can observe that \CIRE does not hold in this LTSI.
\end{example}

\begin{proposition}\label{prop-IRE co coind}
Let \(e_1\), \(e_2\) be events in a pre-reversible LTSI that satisfies \CIRE. Then $e_1\co e_2$ iff $e_1\coind e_2$.
\end{proposition}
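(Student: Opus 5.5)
The forward direction is exactly \autoref{lem-con implies coind}, so only the converse needs an argument, and it should follow almost directly from the definitions once \CIRE is available.

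Assume $e_1 \coind e_2$. By the definition of core independence (\autoref{def-event relations}), we can fix coinitial transitions $t_1' \in e_1$ and $t_2' \in e_2$ with $t_1' \ind t_2'$. This already supplies the first conjunct in the definition of $e_1 \co e_2$. For the second conjunct, take arbitrary coinitial $t_1, t_2$ with $t_1 \in e_1$ and $t_2 \in e_2$; we must show $t_1 \ind t_2$. Membership in the events gives $t_1 \sqeqt t_1'$ and $t_2' \sqeqt t_2$. Moreover, $t_1', t_2'$ are coinitial and so are $t_1, t_2$. Hence we have the chain $t_1 \sqeqt t_1' \ind t_2' \sqeqt t_2$ with both pairs coinitial, which is precisely the hypothesis of \CIRE. \CIRE then yields $t_1 \ind t_2$.

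Since $t_1, t_2$ were arbitrary, $e_1 \co e_2$ holds. Pre-reversibility is not actually used in the converse beyond being the ambient setting in which events are defined; it only matters for the forward direction via \autoref{lem-con implies coind}.

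The only point needing care is matching the orientation of \CIRE's statement, $t \sqeqt t' \ind u' \sim u$, with our chain. This is immediate because $\sqeqt$ is an equivalence relation, hence symmetric, so the equivalences can be written in whichever direction \CIRE requires. No genuine obstacle is expected.
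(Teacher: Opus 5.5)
Your proof is correct and matches the paper's: the forward direction is \autoref{lem-con implies coind}, and for the converse you apply \CIRE to the chain $t_1 \sqeqt t_1' \ind t_2' \sqeqt t_2$ of coinitial pairs, exactly as the paper does. One small quibble: pre-reversibility is not really needed for the forward direction either, since $e_1 \co e_2$ contains $e_1 \coind e_2$ as a conjunct by definition.
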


\begin{proof}
($\Rightarrow$) By \autoref{lem-con implies coind}. 
	
($\Leftarrow$) Assume $e_1\coind e_2$, so that there are coinitial $t_1\in e_1, t_2\in e_2$ \st  $t_1\ind t_2$. \CIRE ensures that for all $t_1'\in e_1$, $t_2'\in e_2$ %
if \(t_1'\), \(t_2'\) are coinitial then $t_1'\ind t_2'$. Hence, $e_1\co e_2$.
\end{proof}

\CIRE is a reasonable assumption since \IRE, which implies \CIRE, holds in all case studies discussed in~\cite{LPU24}.

\begin{figure}%
	\input{figures/example-resolvable-conflict}
	\caption{%
		The LTSI used in \autoref{ex-resolvable conflict}, with the three pairs of transitions labelled \(b\), \(c\) highlighted. %
	}\label{fig:resolvable conflict}
   	\Description[short description]{long description}%
\end{figure}

In pre-reversible LTSIs events can occur multiple times in a path, but the counting function (\autoref{def-count-events}) decrements the counter of a forward event \(e\) when its backward counterpart \(\rev{e}\) is met, so that we have:

\begin{definition}[No repeated events (\axiom{NRE})~\protect{\cite[Definition 4.18]{LPU24}}]%
	\label{def-NRE}
	For any rooted path $r$ and any forward event $e$ we have $\cte(r,e) \leq 1$.
\end{definition}

\begin{lemma}[\protect{\cite[Proposition 4.21]{LPU24}}]%
	\label{lem-NRE}
	If an LTSI is pre-reversible then it satisfies \NRE.
\end{lemma}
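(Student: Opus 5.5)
The plan is to prove \NRE by induction on the length of the rooted path $r$, keeping a strengthened invariant for forward-only rooted paths and reducing arbitrary rooted paths to that case using \PL (available by \autoref{lem-sp-bti-imply-pl}).

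\emph{Reduction to forward-only paths.} Since $r$ is rooted, its source cannot perform a backward transition. \PL gives forward-only paths $s$, $s'$ such that $\rev{s}s'$ is coinitial and cofinal with $r$, and every transition of $\rev{s}s'$ belongs to an event already occurring in $r$. A nonempty $s$ would make $\rev{s}$ start with a backward transition from $\srcof{r}$, which is impossible, so $s$ is empty. Hence $r$ can be compared with a forward-only rooted path $s'$ with the same endpoints. What remains is to show $\cte(r,e)=\cte(s',e)$ for every event $e$. I will prove this by tracking the rewriting steps used in the proof of \PL: each step replaces $t\rev{u}$ by $\rev{u'}t'$ with $t'\eveqt t$ and $u'\eveqt u$, so it preserves the count of every event. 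The other step cancels $t\rev{t}$, which contributes $+1-1=0$. The second kind of step occurs when $\BTI$ is not applicable because $t=u$. Both operations therefore preserve $\cte(\cdot,e)$, and we may assume $r$ is forward-only.

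\emph{Forward-only case.} Let $r=t_1\cdots t_n$ be a rooted forward path, and suppose $t_i\eveqt t_j$ with $i<j$, both in $e$. I will derive a contradiction by pushing $t_j$ backwards along the path. Consider the backward transitions $\rev{t_j}$ and $\rev{t_{j-1}}$ from $\tgtof{t_j}$ and $\srcof{t_j}$. More usefully, consider at $\tgtof{t_j}$ the two coinitial backward transitions $\rev{t_j}$ and $\rev{\,}$ of some later step, or more directly the following argument. Take the minimal counterexample, where $j$ is least. Then the transitions $t_{i+1},\dots,t_{j-1}$ lie in events different from $e$.

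Since $t_i\eveqt t_j$, there is a chain of commuting squares, with independence at the corners, linking them. Using $\rev{t_j}$ and $\rev{t_{j-1}}$, I will use \BTI at $\tgtof{t_j}$ against a backward transition of a later event to build squares via \SP, and swap $t_j$ leftwards past $t_{j-1},\dots,t_{i+1}$. The swaps preserve event counts, and \PCI transfers independence along the square. After these swaps the path contains $t_i$ followed immediately by a transition $t''\in e$.

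The main obstacle is justifying that $t_j$ can be commuted past $t_{j-1}$ at all. The natural approach is induction on the length of the path, using a well-founded measure given by \WF: we reverse from the end and use \BTI on $\rev{t_n}$ versus $\rev{t_{n-1}}$-type coinitial backward pairs. I would first try to adapt the argument behind \cite[Proposition 4.21]{LPU24}. That argument shows that two consecutive transitions of the same event, $t\,t''$ with $t\eveqt t''$, cannot occur. The reason is that the event relation preserves the source/target structure enough that the same event cannot be executed twice in succession on a forward path; \WF then rules out infinite descent in the swapping.

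Once adjacency is reached, the contradiction gives $\cte(r,e)\le 1$. Since in a forward path all counts are nonnegative sums of $+1$, this yields \NRE.
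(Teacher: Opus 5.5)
\paragraph{Verdict: genuine gap in the forward-only case.}
The paper gives no proof of this lemma; it imports it from \cite{LPU24}. Your reduction to forward-only rooted paths is sound. It is just count invariance for coinitial and cofinal paths (\autoref{lem:event-count}), which you could cite rather than re-track through the rewriting behind \PL.

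The forward-only case, however, is not actually proved, for three reasons.
\begin{itemize}
\item Swapping consecutive forward transitions $t_{j-1}t_j$ requires $\rev{t_{j-1}} \ind t_j$ at $\srcof{t_j}$. None of the axioms supplies this. \BTI concerns coinitial \emph{backward} transitions and only justifies rewriting $t\rev{u}$ into $\rev{u'}t'$, and the swap genuinely fails whenever $t_{j-1}$ enables $t_j$.
\item The pairs you propose to feed to \BTI ($\rev{t_j}$ with $\rev{t_{j-1}}$, or $\rev{t_n}$ with $\rev{t_{n-1}}$) are not coinitial, so \BTI does not apply to them.
\item The adjacent case ($t\,t''$ with $t \eveqt t''$) is asserted by appeal to the very proposition being proved, with an unspecified \WF descent.
\end{itemize}

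\paragraph{The missing idea.}
You should not move $t_j$ at all. Instead, use the chain of commuting squares witnessing $t_i \eveqt t_j$ as a ladder.
\begin{itemize}
\item Its rungs are forward transitions of $e$.
\item Its side is a path $s$ from $\srcof{t_i}$ to $\srcof{t_j}$ made of transitions $u_k$. Each $u_k$ is coinitial with, and independent of, a rung $v_k$. Independence at all four corners in \autoref{def-event-general} lets you orient every step of the chain this way.
\end{itemize}

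The key local fact is: if $v \ind u$ are coinitial, then $u \notin [v] \cup \rev{[v]}$. Otherwise the labels of $u$ and $v$ are equal or inverse. Closing the square with \SP and using non-degeneracy (\autoref{lem-non-degenerate}) then produces two distinct coinitial backward transitions with the same label at some corner. For example, when $u$ and $v$ are both forward, the reverses of the two closing sides do this at the far corner. This contradicts \BLD (\autoref{lem-BLD}). Hence $\cte(s,e)=0$.

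Now take $r = r_0 t_i r_1 t_j \cdots$ with $t_i$, $t_j$ consecutive occurrences of $e$. The paths $r_0 s$ and $r_0 t_i r_1$ are coinitial and cofinal, so \autoref{lem:event-count} gives $\cte(r_0,e) = \cte(r_0,e)+1$, a contradiction.

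These ingredients (\BLD, non-degeneracy and count invariance) all precede Proposition~4.21 in \cite{LPU24}, so the argument is not circular.
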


Core independence, and not concurrency, is the complement of the equality, causality and conflict %
on events: 

\begin{proposition}[Polychotomy for events~\protect{\cite[Definition 4.28, Proposition 4.29]{LPU24}}]%
	\label{prop-poly}%
	Pre-reversible LTSIs satisfy \emph{polychotomy for forward events}, \ie for all forward events \(e_1\), \(e_2\), exactly one of the following holds: %
		\begin{multicols}{6}
		\noindent
		\begin{itemize}[label={}]
			\item $e_1 = e_2$;
			\item $e_1 < e_2$;
			\item $e_2 < e_1$;
			\item $e_1 \Cf e_2$;
			\item or
			\item $e_1 \coind e_2$.
		\end{itemize}
	\end{multicols}
\end{proposition}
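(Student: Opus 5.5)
The plan is to derive the polychotomy from two facts: the case split is exhaustive, and the cases are pairwise exclusive.

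\textbf{Exhaustiveness.} Fix forward events \(e_1 \neq e_2\). I would first dispose of the cases \(e_1 < e_2\), \(e_2 < e_1\) and \(e_1 \Cf e_2\). If none holds, there is a rooted path \(r\) with \(\cte(r,e_1) > 0\) and \(\cte(r,e_2) > 0\). By \autoref{lem-sp-bti-imply-pl}, \PL holds, and since \(\srcof{r}\) is rooted the backward part of the parabolic path must be empty. So I may take \(r\) forward-only, and by \autoref{lem-NRE} each event occurs exactly once in it.

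The negations of \(e_1 \le e_2\) and \(e_2 \le e_1\) give further rooted paths: one containing \(e_1\) but not \(e_2\), and one containing \(e_2\) but not \(e_1\). The goal is to show that in \(r\) the occurrences of \(e_1\) and \(e_2\) can be brought to be adjacent and coinitial via independence.

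\textbf{Main step.} Take a forward path \(r\) with minimal distance between \(t_1 \in e_1\) and \(t_2 \in e_2\), say \(t_1\) occurring before \(t_2\). I would reverse the suffix up to \(t_2\), reaching \(\tgtof{t_2}\), and then reverse back to the source of \(t_1\) while keeping \(t_2\) done. Using \BTI, the backward transitions \(\rev{t_2}\) and the reverse of the intermediate transitions are independent, so \SP lets \(\rev{t_2}\) be commuted past each later transition. The events are preserved by \PCI and \autoref{def-event-general}.

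To move \(t_2\) before the transitions between \(t_1\) and \(t_2\), I would use the witness path for \(\neg(e_1 \le e_2)\) together with \PL and backward reasoning. This repeated commuting produces coinitial \(t_1' \in e_1\) and \(t_2' \in e_2\) with \(\rev{t_1'}, \rev{t_2'}\) coinitial backward transitions from a common process. Then \BTI plus \PCI/\SP transfers the independence to forward coinitial transitions, giving \(e_1 \coind e_2\). This is the step I expect to be the main obstacle, and the place I would follow \cite{LPU24}'s causal safety/liveness arguments (Proposition 4.29) most closely.

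\textbf{Exclusivity.} \(e_1 = e_2\) excludes the others because \(\le\) is reflexive, so \(e_1 < e_2\) and \(e_2 < e_1\) fail by definition. Conflict fails too, since any rooted path reaching \(e_1\) witnesses that \(e_1\) co-occurs with itself, and \(\coind\) fails by irreflexivity of \(\ind\) together with \SP and \NRE (a square on \(t,t'\) of the same event would repeat it). Antisymmetry from \autoref{lem-partial-ordering} excludes \(e_1 < e_2\) together with \(e_2 < e_1\). Causality excludes conflict: whenever \(e_2\) occurs on a rooted path, \(e_1\) occurs as well.

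For \(\coind\) versus \(<\): from coinitial independent \(t_1, t_2\), the square of \SP reached from a rooted path to their common source yields a rooted path containing \(e_2\) but not \(e_1\). Counting with \NRE requires the prefix not to contain \(e_1\); otherwise \(t_1\) would repeat it. Hence \(\neg(e_1 \le e_2)\), and symmetrically \(\neg(e_2 \le e_1)\). The same rooted path extended by \(t_1\) and then \(u'\) contains both events, which refutes conflict.
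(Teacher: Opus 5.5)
\paragraph{Verdict.} Your exclusivity half is essentially fine, modulo routine details such as using \WF to obtain a rooted path through the source of a transition. The exhaustiveness half, however, has a genuine gap, located exactly at the step you flag.

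\paragraph{Where the main step fails.} The mechanism you describe misuses \BTI. That axiom only relates two \emph{coinitial backward} transitions. But $\rev{t_2}$ and the reverses of the intermediate transitions are not coinitial, and the intermediate events need not be undoable at $\tgtof{t_2}$ at all. Commuting a composable forward pair $u\,t_2$ requires $\rev{u}\ind t_2$, and no basic axiom provides this.

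In fact, no argument that keeps the prefix before $t_1$ fixed and commutes locally can succeed. Take the pre-reversible LTSI of \autoref{ex-resolvable conflict} (\autoref{fig:resolvable conflict}) and the rooted path performing $c$, $a$, $b$ along the top.
\begin{itemize}
\item Here $t_1$ (labelled $a$) and $t_2$ (labelled $b$) are composable, so this path is a legitimate minimal-distance choice, and $a\ci b$ holds.
\item The state reached by $c$ alone has no $b$-transition, so $\rev{t_1}\ind t_2$ fails by \SP.
\item Consequently there is no state \enquote{$\srcof{t_1}$ with $t_2$ done}, and the final state cannot undo $a$.
\item The only witness of $a\ci b$ is the square at the root, which is reachable only after undoing the prefix event $c$.
\end{itemize}
So the sentence deferring to \enquote{the witness path for $\neg(e_1\le e_2)$ together with \PL} is where all the content lies, and it is left unspecified.

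Two minor points as well. The pair you want at the end is \emph{cofinal} forward transitions, not coinitial ones with coinitial reverses. And passing to a forward-only $r$ while keeping both events needs \autoref{lem:event-count}, since \PL only forbids new events.

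\paragraph{A way to close the gap.} Use global minimality instead of local commuting.
\begin{enumerate}
\item Pick a state $X$ with $e_1,e_2\in\ev{X}$ (\autoref{def-events process alt1}) and $\card{\ev{X}}$ minimal.
\item Undoing at $X$ any event other than $e_1,e_2$ would give a smaller such state. Hence every backward transition from $X$ lies in $\rev{e_1}\cup\rev{e_2}$, and there is at least one.
\item If both events can be undone at $X$, then \BTI, \SP and \PCI give $e_1\ci e_2$, exactly as in the proof of \autoref{lem-adj non-ind vs ccc}~\itemref{lem-adj non-ind vs ccc three}.
\item Suppose only $e_2$ can be undone; the other case is symmetric.
  \begin{itemize}
  \item Truncate the witness $s$ of $e_1\not\le e_2$ right after its transition $u_2\in e_2$.
  \item By \autoref{prop:eveqt evleqt}, there is a path from $\tgtof{u_2}$ to $X$ avoiding the label of $e_2$.
  \item \PL turns it into a parabolic path $\rev{a}\,b$ without introducing that label.
  \item Every forward transition into $X$ lies in $e_2$, so $b$ is empty. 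Thus $a$ is a forward-only path from $X$ to $\tgtof{u_2}$.
  \item Then \autoref{lem:event-count} yields $e_1\in\ev{\tgtof{u_2}}$, contradicting the choice of $s$.
  \end{itemize}
\end{enumerate}

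For comparison, the paper itself does not reprove this proposition; it imports it from \cite{LPU24}.
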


A consequence of \autoref{prop-IRE co coind} is that polychotomy holds in a pre-reversible LTSI that satisfies \CIRE, with $\co$ replacing $\coind$.  Since $<$, $\Cf$ and $\coind$ on transitions are closed under $\sim$, polychotomy for forward transitions holds: 

\begin{corollary}[Polychotomy for transitions]%
	\label{prop-poly-trans}
	Pre-reversible LTSIs satisfy \emph{polychotomy for forward transitions}, \ie for all forward transitions $t_1$, $t_2$, exactly one of the following holds: 
	\begin{multicols}{6}
        \noindent
		\begin{itemize}[label={}]%
			\item $t_1\sim t_2$;
			\item $t_1< t_2$;
			\item $t_2< t_1$;
			\item $t_1 \Cf t_2$;
			\item or
			\item $t_1\coind t_2$.
		\end{itemize} %
	\end{multicols}
\end{corollary}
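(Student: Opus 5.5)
The plan is to reduce the statement to \autoref{prop-poly} by passing from transitions to their events. Transition relations are defined through events, namely $t_1 < t_2$ iff $[t_1] < [t_2]$, $t_1 \Cf t_2$ iff $[t_1] \Cf [t_2]$, and $t_1 \coind t_2$ iff $[t_1] \coind [t_2]$. Also, $t_1 \sim t_2$ iff $[t_1] = [t_2]$, by the definition of events as equivalence classes in \autoref{def-event-general}.

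Given forward transitions $t_1$ and $t_2$, we set $e_1 = [t_1]$ and $e_2 = [t_2]$. These are forward events in the sense of \autoref{def-event-general}, being classes of forward transitions. We then apply \autoref{prop-poly} to $e_1, e_2$, which yields exactly one of $e_1 = e_2$, $e_1 < e_2$, $e_2 < e_1$, $e_1 \Cf e_2$, $e_1 \coind e_2$. Each alternative translates, by the equivalences above, into exactly the corresponding alternative for $t_1, t_2$. Hence both existence and mutual exclusivity transfer directly.

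The only point needing care is well-definedness. The relations $\leq$ and $\Cf$ are stated for forward events, so we must check that $e_1, e_2$ qualify. Events could in principle mix directions if a forward and a backward transition were $\sim$-equivalent. For the transfer, however, it suffices that $[t_i]$ contains a forward transition, which is exactly the definition of a forward event given in the paper. The relations on transitions are, by definition, the event relations evaluated at $[t_i]$, so no closure argument under $\sim$ is needed beyond this unfolding. I therefore expect no real obstacle: the proof is a direct unfolding of definitions plus \autoref{prop-poly}.
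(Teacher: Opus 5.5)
Your proposal is correct and follows the paper's own argument: the paper likewise obtains the corollary from polychotomy for events (\autoref{prop-poly}), noting that $<$, $\Cf$ and $\coind$ on transitions are defined through events (hence closed under $\sim$), and that $t_1 \sim t_2$ iff $[t_1] = [t_2]$. One small correction: your hedge about events possibly mixing directions is unnecessary, because the generating squares in \autoref{def-event-general} relate only transitions with the same directed label $\alpha$. So every event is uniformly forward or backward, and this does not affect your argument.
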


%% file: figures/connected.tex
\begin{tikzpicture}
	\node (P) {%
	};
	\node [above = 3em of P] (Q) {%
	};
	\draw[fb]  (P) -- node[left, pos=.5]{\(t\)} (Q);
	\node [right = 6em of P](R) {%
	};
	\node at (Q -| R) (S) {%
	};
	\draw[fb] (R) --  node[right, pos=.5]{\(u\)} (S);
	\draw [fb, color=ind, text=black] plot [smooth] coordinates {($(P)+(.1, .1)$)  ($.5*(Q)!.5!(S)$) ($.5*(P)!.5!(R)$) ($.65*(Q)!.65!(S)$) ($.65*(P)!.65!(R)$) ($(S) - (.8, .15)$) ($(S) - (.1, 0)$)} node[above = -1.5cm, xshift=-1.1cm]{\(r\)};		
	
\end{tikzpicture}

%% file: figures/relation-example.tex
\centering

\begin{tikzpicture}[
	x={(1, 0)},  %
	y={(0, .8)}, %
	baseline,
	anchor=base
	]
	
	\node (a) at (0, -1) {};
	\node (b) at (1, -2) {};
	\node (d) at (1.5, -1) {};
	\node (f) at (2.5, -2) {};
	\node (g) at (2.5, 0) {};

	\draw[f] (a) -- node[below, pos=.3]{\(b\)} (b);
	\draw[f] (d) -- node[above, pos=.3]{\(c\)} (g);
	\draw[f] (a) -- node[above, pos=.5]{\(a\)} (d);
	
	\draw[f] ($(d) + (.1, 0)$) -- node[below, pos=.3]{\(b\)} ($(f) + (.1, 0)$);
	
	\draw[b] ($(f)+(.2, .1)$) -- node[above, pos=.4]{\(b\)} ($(d)+(.2, .1)$);
	
	\draw[f] (b) -- node[below, pos=.5]{\(a\)} (f);
\end{tikzpicture}

%% file: figures/example-resolvable-conflict.tex
\begin{tikzpicture}[
	x={(1, 0)},  %
	y={(0, .8)}, %
	baseline,
	anchor=base
	]

	\node (a) at (0, -1) {};
	\node (b) at (1, -2) {};
	\node (c) at (1, 0) {};
	\node (d) at (1.5, -1) {};
	\node (e) at (3.5, -1) {};
	\node (f) at (2.5, -2) {};
	\node (g) at (2.5, 0) {};

	\begin{scope}[on background layer]
		\node[%
        fill=none,
        draw=ord,
        line width=0.4mm,
        rectangle,
        inner xsep = -5pt,
        inner ysep = -3pt,
        rounded corners=0.2cm,
        fit=(a) (c) (b),
        label={[yshift=-.9em]below:\emph{First}}
		] {};

		\node [%
        fill=none,
        draw=ord,
        line width=0.4mm,
        rectangle,
        inner xsep = -5pt,
        inner ysep = -5pt,
        rounded corners=0.2cm,
        fit=(d) (g) (f),
        label={[yshift=-1.1em]below:\emph{Second}}
		] {};

		\node [%
        fill=none,
        draw=ord,
        line width=0.4mm,
        rectangle,
		inner xsep = -6pt,
        inner ysep = -2pt,
		rounded corners=0.2cm,
        fit=(g) (f) (e),
        label={[yshift=-.8em]below:\emph{Third}},
		] {};
	\end{scope}
	
	\draw[f] (a) -- node[below, pos=.3]{\(b\)} (b);
	\draw[f] (a) -- node[above, pos=.3]{\(c\)} (c);
	\draw[f] (d) -- node[below, pos=.6]{\(c\)} (g);
	\draw[f] (g) -- node[above, pos=.6]{\(b\)} (e);
	\draw[f] (c) -- node[above, pos=.5]{\(a\)} (g);
	\draw[f] (a) -- node[above, pos=.75]{\(a\)} (d);
	\draw[f] (d) -- node[above, pos=.6]{\(b\)} (f);
	\draw[f] (b) -- node[below, pos=.5]{\(a\)} (f);
	\draw[f] (f) -- node[below, pos=.6]{\(c\)} (e);	
	
\end{tikzpicture}

%% file: sections/uniqueness.tex
\subsection{Uniqueness Results and Label-Based Definition of Events}%
\label{ssec:uniqueness}

This subsection proves that, under the mild condition of being pre-reversible (or just \enquote{admitting} pre-reversibility (\autoref{def-admit-prever})), LTSIs essentially accept only one independence relation (\autoref{prop-prerev-coinitial-unique}), %
which furthermore uniquely determines the notions of events, causal ordering, conflict and independence (\autoref{thm-uniqueness}).
We also introduce a definition of events that relies on labels only, and prove it coincides with the independence-based one (\autoref{prop:eveqt evleqt}).
We first recall a lemma and a proposition to facilitate the proof of the uniqueness results.

\begin{lemma}[\protect{\cite[Lemma 4.7]{LPU24}}]%
	\label{lem-non-degenerate}
	Suppose that an LTSI is pre-reversible.
	Then every commuting square $t:P \r{fb}[\alpha] Q$, $u:P \r{fb}[\beta] R$, $u': Q \r{fb}[\beta] S$ and $t': R \r{fb}[\alpha] S$ with $t \ind u$ is \emph{non-degenerate}, \ie $P$, $Q$, $R$, $S$ are distinct processes.
\end{lemma}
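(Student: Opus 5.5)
The plan is to derive a contradiction from each possible coincidence of two corners of the square. We have a commuting square $t:P \r{fb}[\alpha] Q$, $u:P \r{fb}[\beta] R$, $u':Q \r{fb}[\beta] S$, $t':R \r{fb}[\alpha] S$ with $t \ind u$. By \PCI we get $u' \ind \rev{t}$, and applying \PCI again around the square gives independence at every corner, namely $\rev{u} \ind t'$ and $\rev{t'} \ind \rev{u'}$. Irreflexivity of $\ind$ will then be the key tool. If two coinitial transitions at some corner are independent, they must be distinct.

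\emph{Step 1: $P \neq Q$ and $P \neq R$ (edges are not loops).} Suppose $P = Q$, so $t$ is a loop. If $t$ is forward, then $P \r{f}[a] P$ repeated forever gives an infinite reverse computation, contradicting \WF. If $t$ is backward, then $\rev t$ is a forward loop, and the same argument applies. Applying this to every edge of the square yields $P\neq Q$, $P\neq R$, $Q\neq S$ and $R \neq S$. Indeed, each of $t,u,u',t'$ is a transition, so none of them can be a loop.

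\emph{Step 2: diagonals, $P \neq S$ and $Q \neq R$.} This is the main obstacle. First suppose $Q = R$. Then $u'$ and $t'$ are two transitions out of $Q$, and $\rev{t}$ and $\rev{u}$ are two transitions from $Q$ back to $P$. We have $u' \ind \rev t$ and $\rev u \ind t'$. Since $\rev{u}$ and $\rev{t}$ share source and target, I would split on the directions of $\alpha$ and $\beta$.

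If $\alpha$ and $\beta$ are both forward, then $t$ and $u$ reach $Q=R$ by forward steps and then $u'$ and $t'$ reach $S$. Consider the path from $P$ through $Q$ to $S$ with labels $\alpha,\beta$ and the one with labels $\beta,\alpha$. I expect the intended argument to exploit the counting function and \NRE (\autoref{lem-NRE}). Events are generated by squares with independence at all corners, so $t \sqeqt t'$ and $u\sqeqt u'$. The path $t\,t'$ through $Q=R$ then contains event $[t]$ twice forward. Prefixing it with a rooted path to $P$, which exists by \WF, gives $\cte \geq 2$ for a forward event, contradicting \NRE.

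If the directions are mixed, I would use the inverse transitions and \RPI-like reasoning via \PCI to rotate the square so that two forward sides are adjacent, and then rerun the same \NRE argument. The case $P=S$ is handled symmetrically. For instance, if $t$ and $u'$ are forward, then $t\,u'$ is a forward cycle, which contradicts \WF directly. When directions are such that no forward cycle appears, rotating the square via \PCI to a corner with the appropriate directions reduces it to the previous case.

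The delicate part is making sure each rotation preserves the hypothesis $t \ind u$ at the new corner. That is exactly what the four corner independences from \PCI provide.
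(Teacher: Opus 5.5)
Your Step~1 (no self-loops, by \WF) and your use of \PCI to get independence at all four corners are fine. The diagonal case, however, has a genuine gap. You obtain $t \sim t'$ from the square itself, but \autoref{def-event-general} only lets a square equate opposite sides when $Q \neq R$ (if $\alpha,\beta$ have the same direction) or $P \neq S$ (if directions are mixed). That is exactly the non-degeneracy you are trying to prove.

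So when $Q = R$ with $\alpha,\beta$ both forward, the square gives you no event equivalence at all. Rotating it via \PCI does not help, because it only moves the degeneracy to another corner. For instance, the mixed case $P = S$ seen from corner $R$ becomes a square whose two coinitial sides are forward and share their target. The simplified definition without the proviso (\autoref{def-event-simplified}) is justified precisely by this lemma. Likewise, \NRE (\autoref{lem-NRE}) is obtained later in the axiomatic development, from the invariance of event counts (\autoref{lem:event-count}), which presupposes that independent squares equate opposite sides. Your appeal to \NRE is therefore circular as well.

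The missing ingredient is \UT (\autoref{lem-UT}), which is established before events are introduced, together with the fact that a transition is determined by its source, label and target. With these, no events are needed:
\begin{itemize}
\item If $Q = R$ and $\alpha,\beta$ have the same direction, then $t,u$ (or $\rev{t},\rev{u}$ if both are backward) are forward transitions with the same source and target. \UT forces equal labels, hence $t = u$, contradicting irreflexivity of $\ind$.
\item If $Q = R$ with mixed directions, say $\alpha$ forward and $\beta$ backward, then $t\,\rev{u}$ is a forward cycle, contradicting \WF.
\item If $P = S$ with equal directions, then $t\,u'$ or $\rev{u'}\,\rev{t}$ is a forward cycle, again contradicting \WF.
\item If $P = S$ with $\alpha$ forward and $\beta$ backward, then $t$ and $\rev{u'}$ are both forward transitions from $P$ to $Q$. \UT gives $u' = \rev{t}$, contradicting $u' \ind \rev{t}$ (from \PCI). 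The other mixed orientation is symmetric.
\end{itemize}
This last case is the one where \PCI is genuinely needed. Your corner independences are exactly the right tool for it once \UT replaces the event-counting argument.
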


\begin{proposition}[\protect{\cite[Proposition 4.10]{LPU24}}]%
	\label{prop-ID}
	LTSIs satisfying \BTI and \PCI satisfy \ID.
\end{proposition}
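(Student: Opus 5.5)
The statement is Proposition~\ref{prop-ID}: in an LTSI satisfying \BTI and \PCI, every commuting square $t:P \r{fb}[\alpha] Q$, $u:P \r{fb}[\beta] R$, $u':Q \r{fb}[\beta] S$, $t':R \r{fb}[\alpha] S$ that is non-degenerate in the sense of \ID has $t \ind u$. The plan is to reduce every case to a pair of coinitial \emph{backward} transitions, where \BTI applies directly, and then to carry independence back to $t$ and $u$ with \PCI, using the inverse transitions of the square.

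\textbf{Case 1: $\alpha$ and $\beta$ both backward.} Here $t$ and $u$ are coinitial backward transitions. They are distinct because $Q \neq R$, so \BTI gives $t \ind u$ immediately.

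\textbf{Case 2: $\alpha$ and $\beta$ both forward.} Look at the square from the corner $S$. By the Loop Lemma, $\rev{u'}:S \r{b}[b] Q$ and $\rev{t'}:S \r{b}[a] R$ are coinitial backward transitions, and they are distinct since $Q \neq R$. \BTI then gives $\rev{t'} \ind \rev{u'}$. These two transitions, together with $\rev{t}:Q \r{b}[a] P$ and $\rev{u}:R \r{b}[b] P$, form a commuting square with opposite corners $S$ and $P$. Apply \PCI to this square with $\rev{t'}$ in the role of the first transition and $\rev{u'}$ in the role of the second. The first side transition starts at $\tgtof{\rev{t'}} = R$ and carries the label of $\rev{u'}$, so it is $\rev{u}$; the inverse of the first transition is $t'$. \PCI therefore yields $\rev{u} \ind t'$, two coinitial transitions at $R$. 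Apply \PCI again to the square at corner $R$, taking $\rev{u}$ then $t'$. The side transition from $P$ labelled $\alpha$ is $t$, and the inverse of $\rev{u}$ is $u$. This gives $t \ind u$, up to the symmetry of $\ind$. The step that needs care is checking which transitions play which roles in each \PCI application, so that the argument moves around the corners $S \to R \to P$ correctly.

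\textbf{Case 3: mixed directions.} By symmetry, assume $\alpha = a$ is forward and $\beta = \rev{b}$ is backward, with $P \neq S$. The coinitial backward pair is now at $Q$: $\rev{t}:Q \r{b}[a] P$ and $u':Q \r{b}[b] S$. They are distinct because $P \neq S$, so \BTI gives $\rev{t} \ind u'$. One \PCI step on the square seen from $Q$ should then return independence to the corner $P$. In the relevant commuting square, the side transition from $\tgtof{\rev{t}} = P$ labelled $\beta$ is $u$, and the inverse of $\rev{t}$ is $t$. Taking $\rev{t}$ first and $u'$ second, \PCI gives $u \ind t$, and symmetry finishes the case.

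I expect the main obstacle to be this bookkeeping across the three cases: identifying, for each \PCI application, the correct commuting square with its orientation and which transition counts as the first one. Distinctness of the backward pair is what lets \BTI fire, and in each case it comes directly from the non-degeneracy hypothesis ($Q \neq R$ or $P \neq S$).
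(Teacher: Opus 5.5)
Your proof is correct: in each case non-degeneracy gives a pair of distinct coinitial backward transitions to which \BTI applies (at $P$, $S$ or $Q$ respectively). Your role assignments in the \PCI applications check out, so one or two \PCI steps carry that independence back to the corner $P$. The paper does not prove this itself but cites \cite[Proposition 4.10]{LPU24}, and your argument (case split on directions, then \BTI, then rotation by \PCI) is essentially that standard argument.
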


\begin{proposition}[Uniqueness of coinitial independence]%
	\label{prop-prerev-coinitial-unique}
	If two pre-reversible LTSIs \((\Proc,\Lab,\r{fb},\ind_1)\) and $(\Proc,\Lab,\r{fb},\ind_2)$  have the same underlying combined LTS $(\Proc,\Lab,\r{fb})$, 
	then $\ind_1$ and $\ind_2$ agree on coinitial transitions. 
\end{proposition}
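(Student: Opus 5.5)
We show that coinitial independence in a pre-reversible LTSI is determined by the underlying combined LTS alone. Concretely, for coinitial $t: P \r{fb}[\alpha] Q$ and $u: P \r{fb}[\beta] R$, we prove that $t \ind u$ holds iff there are cofinal $u': Q \r{fb}[\beta] S$ and $t': R \r{fb}[\alpha] S$ forming a commuting square that is \emph{non-degenerate} in the sense of \ID. Non-degenerate here means $Q \neq R$ if $t,u$ have the same direction, and $P \neq S$ otherwise. Since this condition mentions only $\r{fb}$, it is the same for $\ind_1$ and $\ind_2$, so the two relations agree on coinitial pairs.

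($\Rightarrow$) Suppose $t \ind u$ in a pre-reversible LTSI. By \SP there exist cofinal $u'$, $t'$ completing the commuting square. By \autoref{lem-non-degenerate}, this square is non-degenerate, so $P,Q,R,S$ are pairwise distinct. In particular $Q \neq R$ and $P \neq S$, so whichever condition \ID requires is met.

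($\Leftarrow$) Suppose such a non-degenerate square exists. A pre-reversible LTSI satisfies \BTI and \PCI, so by \autoref{prop-ID} it satisfies \ID. \ID then yields $t \ind u$ directly.

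Now let $t,u$ be coinitial with $t \ind_1 u$. Applying ($\Rightarrow$) in the first LTSI gives a non-degenerate commuting square on $t,u$. That square lives in the shared combined LTS, so ($\Leftarrow$) applied in the second LTSI gives $t \ind_2 u$. The converse direction is symmetric.

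The main point to check is that the non-degeneracy produced by \autoref{lem-non-degenerate} matches the side condition of \ID exactly, including the mixed-direction case. This holds because full distinctness of $P,Q,R,S$ implies both $Q\neq R$ and $P\neq S$. Otherwise the argument is a direct combination of the cited results, and no induction is needed.
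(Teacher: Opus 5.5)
Your proof is correct and follows the paper's route: get a commuting square from \SP for $\ind_1$, make it non-degenerate via \autoref{lem-non-degenerate}, and conclude $t \ind_2 u$ by \ID (which holds by \autoref{prop-ID}), with the converse by symmetry. Recasting this as a characterisation of coinitial independence by non-degenerate squares is just a repackaging of the same argument.
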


\begin{proof}
	Suppose that $t$ and $u$ are coinitial %
	with $t \ind_1 u$.
	By \SP for $\ind_1$ we get a square that is non-degenerate  %
	 by \autoref{lem-non-degenerate}.
	We deduce that $t \ind_2 u$ using \ID which holds by \autoref{prop-ID}.
	By symmetry we deduce the result.
\end{proof}

\begin{figure}
    \input{figures/rpi.tex}
    \caption{Mutually defined independences on adjacent transitions in pre-reversible LTSIs satisfying \RPI (\autoref{remark-on-adjacent-indep}).}\label{fig:rpi}
    \Description[short description]{long description}%
\end{figure}

\begin{remark}[\protect{\cite[Preamble]{aubert2023c}}]
	\label{remark-on-adjacent-indep}
    In the case of pre-reversible LTSIs satisfying \RPI, such as \pccsk and \ccsk (\autoref{sec:true-conc-rel}), \autoref{prop-prerev-coinitial-unique} extends to \emph{adjacent} transitions (\autoref{def-transitions-paths}), since independence of cofinal and composable transitions (in any direction) can be reduced to independence of coinitial transitions by reversing one or two transitions (\autoref{fig:rpi}).
\end{remark}

\begin{definition}%
    \label{def-admit-prever}
	A combined LTS $(\Proc,\Lab,\r{fb})$ \emph{admits pre-reversibility} if there exists an independence relation $\ind$ such that $(\Proc,\Lab,\r{fb},\ind)$ is a pre-reversible LTSI.
\end{definition}

\begin{theorem}[Uniquenesses]%
    \label{thm-uniqueness}
	If a combined LTS $(\Proc,\Lab,\r{fb})$ admits pre-reversibility, then the notions of events, core independence, causal ordering and conflict are uniquely determined. 
\end{theorem}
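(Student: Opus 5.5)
Let $\ind_1$ and $\ind_2$ be two independence relations that each make $(\Proc,\Lab,\r{fb})$ pre-reversible. We show that the four notions computed from $\ind_1$ and from $\ind_2$ coincide. The main tool is \autoref{prop-prerev-coinitial-unique}: $\ind_1$ and $\ind_2$ agree on coinitial transitions. The argument proceeds in three steps: events first, then core independence, then causal ordering and conflict.

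\emph{Events.} Consider the generating clause of $\sqeqt$ in \autoref{def-event-general}. It asks for a commuting square with independence at all four corners: $t \ind u$, $\rev{u} \ind t'$, $\rev{t'} \ind \rev{u'}$ and $u' \ind \rev{t}$, plus a non-degeneracy condition. Each of these four pairs is coinitial:
\begin{itemize}
\item $t,u$ start at $P$;
\item $\rev{u},t'$ start at $R$;
\item $\rev{t'},\rev{u'}$ start at $S$;
\item $u',\rev{t}$ start at $Q$.
\end{itemize}
By \autoref{prop-prerev-coinitial-unique}, the four conditions therefore hold for $\ind_1$ exactly when they hold for $\ind_2$. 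The non-degeneracy condition refers only to processes and labels, so it is the same for both. Hence the generating relations coincide, and so do their smallest equivalence closures $\sqeqt_1 = \sqeqt_2$. The events (equivalence classes) are therefore the same.

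\emph{Core independence.} By \autoref{def-event relations}, $e_1 \coind e_2$ asks for coinitial $t_1 \in e_1$ and $t_2 \in e_2$ with $t_1 \ind t_2$. Events are now known to be common, and $\ind$ is only tested on coinitial pairs, where $\ind_1$ and $\ind_2$ agree. So $\coind$ is the same for both. The same argument gives uniqueness of concurrency $\co$, since its universal clause also only tests coinitial pairs.

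\emph{Causal ordering and conflict.} Both $\leq$ and $\Cf$ are defined from rooted paths and the counting function $\cte$ (\autoref{def-count-events}). Rootedness depends only on the combined LTS. The counting function depends only on the transitions and on membership in $e$ and $\rev e$, which is fixed once events are fixed. Hence $\leq$, $<$ and $\Cf$ are determined as well; the lifting to transitions follows immediately.

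The only step needing care is the first one. One must check that every independence requirement in the square clause really is between coinitial transitions, including the two involving reversed transitions, so that \autoref{prop-prerev-coinitial-unique} applies without assuming \RPI. Once events are fixed, the remaining notions follow mechanically from the definitions.
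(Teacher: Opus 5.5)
Your proposal is correct and follows the same route as the paper: by \autoref{prop-prerev-coinitial-unique}, $\ind_1$ and $\ind_2$ agree on coinitial transitions, which fixes event equivalence, and $\coind$, $\leq$ and $\Cf$ are then determined directly by their definitions. Your explicit check that all four corner conditions of \autoref{def-event-general} relate coinitial transitions is exactly the detail the paper dismisses as ``easy to see'', so you have simply spelled out the paper's argument.
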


In other words, this theorem implies that the notion of events (\autoref{def-event-general}), which then determines the relations defined in \autoref{def-event relations},  does not depend on the choice of the independence relation. %

\begin{proof}
	By \autoref{prop-prerev-coinitial-unique} the independence relation $\ind$ is determined for coinitial transitions.
    It is easy to see that event equivalence $\eveqt$ between transitions is determined by $\ind$ on coinitial transitions.  Furthermore, it is clear from \autoref{def-event relations} that $\leq$, $\Cf$ and $\coind$ are then uniquely determined.
\end{proof}

We are not aware of any previous such uniqueness result. Its benefit is that, 
rather than requiring one to provide a definition for, \eg independence and then \enquote{manually} prove that it satisfies various properties, our result stipulates that those properties actually admit only one definition for most of the core notions.

\autoref{thm-uniqueness} suggests that event equivalence $\eveqt$ can be defined without relying on independence.
Such an example of \enquote{independence-free} definition of events~\cite[p.~125]{vGV97} has already been used in the context of reversibility~\cite[Definition 2.2]{PU07a}.
But we now give a new, alternative definition which does not use independence and is simpler than previous definitions, in that it relies on the existence of a single path.

\begin{definition}[Event label equivalence]%
	\label{def-event-label-equivalence}
	In a combined LTS $(\Proc,\Lab,\r{fb})$, two forward transitions with the same label $t_1:P_1 \r{f}[a] P'_1$ and $t_2:P_2 \r{f}[a] P'_2$ are \emph{event label equivalent} ($t_1 \evleqt t_2$) if there is a path $r: P'_1 \r{bf}^* P'_2$ such that $a$ does not occur in any transition of $r$.
	We extend to backward transitions by letting $t_1 \evleqt t_2$ iff $\rev{t_1} \evleqt \rev{t_2}$.
\end{definition}

Note that \autoref{def-event-label-equivalence} does not make any use of an independence relation on transitions.
It is clear that $\evleqt$ is an equivalence relation.
However, in order for it to be meaningful, we need the combined LTS to admit pre-reversibility, in which case event label equivalence is the same as event equivalence (\autoref{prop:eveqt evleqt}).
We recall two consequences of pre-reversibility, needed to prove this result:

\begin{definition}[Unique Transition (\axiom{UT})~\protect{\cite[Definition 3.11]{LPU24}}]%
	\label{def-UT}
	If $t : P \r{f}[a] Q$ and $u : P\r{f}[b] Q$  then $a = b$.
\end{definition}

\begin{lemma}[\protect{\cite[Corollary 3.12]{LPU24}}]%
	\label{lem-UT}
	If an LTSI is pre-reversible then it satisfies \UT.
\end{lemma}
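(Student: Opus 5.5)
The plan is to argue by contradiction, using \BTI to produce an independent pair of backward transitions and then \autoref{lem-non-degenerate} to rule out the resulting degenerate square. Suppose $t : P \r{f}[a] Q$ and $u : P \r{f}[b] Q$ with $a \neq b$.

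First, by the Loop Lemma (\autoref{def-ltsi}) we obtain the inverse transitions $\rev{t} : Q \r{b}[a] P$ and $\rev{u} : Q \r{b}[b] P$. These are coinitial backward transitions. They are distinct because their labels differ, so \BTI gives $\rev{t} \ind \rev{u}$.

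Second, we apply \SP to this independent coinitial pair. It provides a process $S$ and cofinal transitions $P \r{b}[b] S$ and $P \r{b}[a] S$ closing a commuting square. The four corners of this square are $Q$ (the common source), $P$ and $P$ (the two targets of $\rev{t}$ and $\rev{u}$), and $S$.

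Third, since the square has independent coinitial sides $\rev{t} \ind \rev{u}$ and the LTSI is pre-reversible, \autoref{lem-non-degenerate} says its four corners are pairwise distinct. This contradicts the fact that the two intermediate corners are both $P$. Hence $a = b$, which is \UT.

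The only delicate point is checking that \autoref{lem-non-degenerate} applies in the backward--backward situation. It does, because the lemma is stated for directed labels $\alpha, \beta$ of arbitrary direction, and its only hypothesis is independence of the coinitial pair, which \BTI supplies. No further obstacle is expected: the substance of the result (which in~\cite{LPU24} ultimately rests on \WF and \PCI) is already packaged inside \autoref{lem-non-degenerate}.
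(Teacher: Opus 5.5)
\textbf{There is a genuine gap: the argument is circular once the cited lemma is unfolded.}

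Each step is sound as an inference from results the paper quotes. $\rev{t}$ and $\rev{u}$ are distinct because their labels differ, so \BTI makes them independent. \SP then closes a square whose two middle corners are both $P$, and \autoref{lem-non-degenerate} forbids this. But this does not derive \UT from anything more primitive. It only exhibits \UT as the special case of non-degeneracy in which the independent coinitial sides share their target.

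In \cite{LPU24} the dependency runs the other way. \UT is Corollary~3.12 in Section~3, obtained from \SP, \BTI and \WF alone (as a consequence of causal consistency). Non-degeneracy is Lemma~4.7, in Section~4. Your configuration has equal middle corners, both sides in the same direction, and different labels. That is exactly the case the non-degeneracy lemma must dispatch by appealing to \UT: two same-direction transitions between the same processes with different labels are a \UT violation by definition. \PCI plays no role in that case; it is only needed when source and target coincide and the sides have opposite directions. So your remark that the substance ``ultimately rests on \WF and \PCI'' is inaccurate.

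The repair is short, and your first two steps already contain it: iterate instead of invoking the lemma. The square provided by \SP consists of $P \r{b}[b] S$ and $P \r{b}[a] S$. These are again two distinct coinitial backward transitions with labels $a \neq b$ and a common target. Hence \BTI and \SP apply again at $P$, then at $S$, and so on. By dependent choice you obtain processes $X_0 = Q$, $X_1 = P$, $X_2 = S, \ldots$ with $X_{i+1} \r{f}[a] X_i$ for every $i$. This is an infinite reverse computation, contradicting \WF. It proves \UT from \SP, \BTI and \WF only, without \PCI and without anything that depends on \UT.
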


\begin{definition}[Backward Label Determinism (\axiom{BLD})~\protect{\cite[Definition 4.5]{LPU24}}]%
	\label{def-BLD}
	If $t : P \r{b}[a] Q$ and $u : P\r{b}[a] R$ are coinitial
backward transitions with the same label then $t = u$.
\end{definition}

\begin{lemma}[\protect{\cite[Proposition 4.6]{LPU24}}]%
	\label{lem-BLD}
	If an LTSI is pre-reversible then it satisfies \BLD.
\end{lemma}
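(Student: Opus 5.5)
The plan is to argue by contradiction from the basic axioms, following the strategy of~\cite[Proposition 4.6]{LPU24}. Suppose $t : P \r{b}[a] Q$ and $u : P \r{b}[a] R$ are coinitial backward transitions with the same label and $t \neq u$. By \BTI we get $t \ind u$. By \SP there are cofinal transitions $u' : Q \r{b}[a] S$ and $t' : R \r{b}[a] S$. By \autoref{lem-non-degenerate} the square is non-degenerate, so $P$, $Q$, $R$, $S$ are pairwise distinct. By \PCI we also get $u' \ind \rev{t}$, and closing the corners in the same way (\PCI applied to the rotated squares, or \ID via \autoref{prop-ID}) yields the four corner independences that \autoref{def-event-general} requires. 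Hence $t \eveqt t'$ and $u \eveqt u'$.

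Next we reach $S$ from a rooted process. By \WF, repeatedly taking backward transitions from $S$ must terminate, which gives a rooted path $r$ ending in $S$. We then extend it forward, either as $r\,\rev{u'}\,\rev{t}$ (through $Q$) or as $r\,\rev{t'}\,\rev{u}$ (through $R$); both are rooted paths ending in $P$. Each of these paths executes two forward transitions labelled $a$ on top of $r$, namely one in $[\rev{u}]$ and one in $[\rev{t}]$.

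The aim is to derive a violation of \NRE (\autoref{lem-NRE}), which states $\cte(r',e) \leq 1$ for every rooted path $r'$. For this we need to show $[\rev{t}] = [\rev{u}]$. Then $\cte(r\,\rev{u'}\,\rev{t}, [\rev t]) = \cte(r,[\rev t]) + 2$, so it suffices that $\cte(r,[\rev t]) \geq 0$. I would obtain that by choosing $r$ carefully: first use \PL (\autoref{lem-sp-bti-imply-pl}) to take $r$ forward-only, so that $\cte(r,\cdot)\geq 0$.

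The main obstacle is the event identification $[\rev{t}] = [\rev{u}]$. The square itself only identifies opposite sides, and the same holds for the mixed-direction square at $Q$ formed by $\rev{t}$, $u$, $u'$ and $\rev{t'}$. Both give $t\eveqt t'$ and $u \eveqt u'$ again, and not $t \eveqt u$. So the contradiction cannot come from the events alone.

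My first attempt at closing this gap would be a well-founded induction on backward depth, which \WF makes available because it rules out infinite backward chains and hence also cycles. Take $P$ to be a counterexample to \BLD of minimal depth. I would then try to produce, at $Q$ or $R$, a second backward $a$-transition distinct from $u'$ (respectively $t'$). The candidate route is to compare the rooted forward paths into $P$ via \PL, using \UT (\autoref{lem-UT}) to pin down the labels. That would give a shallower counterexample and contradict minimality.

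I am not confident this step goes through with only \SP, \BTI, \WF and \PCI. If it fails, I would examine the square over the states $P$, $Q$, $R$, $S$ as a candidate counterexample, and check whether the original argument in~\cite{LPU24} relies on an extra assumption.
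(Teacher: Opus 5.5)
Your opening steps are sound: \BTI gives $t \ind u$, \SP closes the square with $u' : Q \r{b}[a] S$ and $t' : R \r{b}[a] S$, and \PCI gives $u' \ind \rev{t}$. However, the proposal never reaches a contradiction. The \NRE route needs $[\rev{t}] = [\rev{u}]$, which, as you note yourself, nothing supplies. The induction on backward depth is only sketched. Chasing events is the wrong direction: the contradiction is local and needs neither \WF, nor events, nor \NRE.

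The missing idea is at the corner $Q$. There $u' : Q \r{b}[a] S$ and $\rev{t} : Q \r{f}[a] P$ are independent. Because $t$ and $u$ carry the \emph{same} label $a$, this pair is closed by their own inverses, and this is exactly where the label equality is used:
\begin{itemize}
\item $\rev{u'} : S \r{f}[a] Q$ has the label of $\rev{t}$;
\item $t : P \r{b}[a] Q$ has the label of $u'$;
\item both end in $Q$.
\end{itemize}
\PCI imposes no non-degeneracy condition on squares. Applying it to this square, with $u'$ and $\rev{t}$ as the independent coinitial pair, yields $\rev{u'} \ind \rev{u'}$, which contradicts irreflexivity of $\ind$. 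So \BLD follows from \BTI, \SP, \PCI and irreflexivity alone.

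Equivalently, this square has vertices $Q$, $S$, $P$, $Q$, so it contradicts \autoref{lem-non-degenerate}: you had the right lemma but applied it to the wrong square. Prefer the direct \PCI argument, though. In \cite{LPU24} non-degeneracy (Lemma 4.7) is proved after \BLD (Proposition 4.6) and may depend on it. Citing it here, or citing \NRE, which rests on the event theory, therefore risks circularity. Your fallback of testing the all-$a$ diamond as a counterexample would have exposed exactly this: that LTSI fails \PCI at $Q$.
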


\begin{proposition}[Event equivalences coincide]%
	\label{prop:eveqt evleqt}
	In a pre-reversible LTSI, for all \(t_1\), \(t_2\) we have 
	$t_1 \eveqt t_2$ iff $t_1 \evleqt t_2$.
\end{proposition}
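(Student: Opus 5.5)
The plan is to prove the two implications separately. For $\evleqt \Rightarrow \eveqt$ I would push the two transitions along the path witnessing $\evleqt$ until they meet, then use \BLD to identify them. For $\eveqt \Rightarrow \evleqt$ I would check that every generating square of \autoref{def-event-general} is witnessed by a path avoiding the label. By the definitions for backward transitions ($t_1 \evleqt t_2$ iff $\rev{t_1} \evleqt \rev{t_2}$, and $t \eveqt t'$ iff $\rev t \eveqt \rev{t'}$ since squares are symmetric), it suffices to treat forward $t_1 : P_1 \r{f}[a] P'_1$ and $t_2 : P_2 \r{f}[a] P'_2$.

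\emph{($\Leftarrow$)} Suppose $r : P'_1 \r{bf}^* P'_2$ avoids $a$. First I apply \PL, which holds by \autoref{lem-sp-bti-imply-pl}. It replaces $r$ by a parabolic path $\rev{s}s'$ with the same endpoints. Its transitions are $\eveqt$ to transitions of $r$, so they carry labels of $r$ and still avoid $a$. Let $M$ be the apex, that is, the target of $\rev s$.

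Next I walk $\rev{t_1}$ along the backward path $\rev s$ by induction on its length. At a point $X$ we have a backward transition $\rev{t}: X \r{b}[a] Z$ with $t \eveqt t_1$, and a backward step $v : X \r{b}[b] X'$ with $b \neq a$. Then $v \neq \rev t$, so \BTI gives $\rev t \ind v$. \SP then closes the square, and \PCI applied at the corners supplies the four independences required by \autoref{def-event-general}. The square is non-degenerate by \autoref{lem-non-degenerate}. Hence the new $a$-transition into $X'$ is $\eveqt t_1$.

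Symmetrically, walking $\rev{t_2}$ backwards along $\rev{s'}$, which is a backward path from $P'_2$ to $M$, gives a transition in $[t_2]$ with target $M$. So both events contain a forward $a$-transition with target $M$. By the Loop Lemma these are backward $a$-transitions out of $M$, and \BLD (\autoref{lem-BLD}) forces them to be equal. Therefore $t_1 \eveqt t_2$.

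\emph{($\Rightarrow$)} Since $\evleqt$ is an equivalence, it suffices to show that it contains every generating pair $t \eveqt t'$. Take a square $t : P \r{fb}[\alpha] Q$, $u : P \r{fb}[\beta] R$, $u' : Q \r{fb}[\beta] S$, $t' : R \r{fb}[\alpha] S$ with the independences of \autoref{def-event-general}, and assume wlog that $\alpha = a$ is forward. The one-step path $u' : Q \to S$ witnesses $t \evleqt t'$ as soon as the underlying label of $\beta$ differs from $a$. So the whole difficulty is excluding, or otherwise handling, squares in which $\beta \in \{a, \rev a\}$.

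\begin{itemize}
\item If $\beta = \rev a$, then $\rev t$ and $u'$ are coinitial backward $a$-transitions out of $Q$, and \BLD identifies them. But \PCI gives $u' \ind \rev t$, which contradicts irreflexivity.
\item If $\beta = a$ and both transitions are forward, there is no immediate contradiction. Here I would try first a rooted path to $P$, which exists by \WF. I would then look for a different witness path, for instance $\rev{u'}$ followed by moves that avoid $a$. I may also need \UT (\autoref{lem-UT}) or \NRE (\autoref{lem-NRE}) to rule the configuration out.
\end{itemize}

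I expect this same-label forward case to be the main obstacle, and the place where some genuine use of pre-reversibility beyond \BTI and \BLD is needed.
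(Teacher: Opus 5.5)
Your ($\Leftarrow$) direction is correct. It differs from the paper's only in organisation. The paper inducts on the length of the witnessing path: at each step it re-applies \PL and peels off one end transition via \BTI, \SP and \PCI, using \UT for non-degeneracy. You instead apply \PL once, push $\rev{t_1}$ and $\rev{t_2}$ down to the apex $M$ through non-degenerate squares, and conclude with a single use of \BLD at $M$. Your reduction of ($\Rightarrow$) to generating squares, with the one-step path $u'$ as witness, is also what the paper does.

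The gap is the case $\beta = a$ with $t$ and $u$ both forward. You leave it open and expect it to need \WF, \UT or \NRE. In fact it is excluded exactly like $\beta = \rev{a}$, by applying \BLD at the opposite corner:
\begin{itemize}
\item $u'$ and $t'$ are then forward $a$-transitions with common target $S$.
\item Hence $\rev{u'}$ and $\rev{t'}$ are coinitial backward $a$-transitions, and \BLD gives $\rev{u'} = \rev{t'}$.
\item This contradicts the requirement $\rev{t'} \ind \rev{u'}$ of \autoref{def-event-general} by irreflexivity.
\item It also contradicts that definition's non-degeneracy condition $Q \neq R$, since $u'$ and $t'$ would have the same source.
\end{itemize}
So in every generating square the label of $\beta$ differs from $a$, and $u'$ itself witnesses $t \evleqt t'$. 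This is precisely the paper's argument. It assumes the connecting transition between the two targets is forward (the backward case being symmetric), then uses non-degeneracy and \BLD at the common target. Nothing beyond \BLD and irreflexivity is needed, so the case you flag as the main obstacle is not one.
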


\begin{proof}
	($\Rightarrow$)
	Suppose that $t_1 \eveqt t_2$.
	We can suppose that $t_1$ and $t_2$ are both forward; if they are both backward we consider $\rev{t_1}$ and $\rev{t_2}$.
	We can use the general definition of events (\autoref{def-event-general}) and consider a single commuting square, with \(t_1\) and \(t_2\) on opposite sides, and with their targets connected by a transition \(t\) such that \(\rev{t_1} \ind t\) and $\rev{t} \ind \rev{t_2}$.
	Without loss of generality, suppose that $t$ is forward.  Then $\srcof t \neq \tgtof{t_2}$ by \autoref{def-event-general}.
	So $t_2$ and $t$ have different labels by \BLD (\autoref{lem-BLD}) and we can instantiate \autoref{def-event-label-equivalence} to obtain that $t_1 \evleqt t_2$.
	
	($\Leftarrow$)
	Suppose given %
	forward transitions $t_1$ and $t_2$ with $t_1 \evleqt t_2$: if \(t_1\) and \(t_2\) are backward then we reason similarly using \(\rev{t_1}\) and \(\rev{t_2}\).
	Since event label equivalent transitions have the same direction, this is without loss of generality.
		
	Let $t_1:P_1 \r{f}[a] P'_1$ and $t_2:P_2 \r{f}[a] P'_2$, we proceed by induction on the length of the path from $P'_1$ to $P'_2$.
	If the path is of length zero then \(t_1\) and \(t_2\) are cofinal and we have $\rev{t_1} = \rev{t_2}$ by \BLD and hence $t_1 \eveqt t_2$.
	
	If the path has non-zero length, by \PL we can convert the path into a parabolic path without increasing the length of the path considered and without introducing new labels.
	Hence it is still the case that $a$ does not occur in the parabolic path.
	By the parabolic property, either the first transition is backward or the last transition is forward.  We consider the case where the last transition is forward; the other case is similar.
	
	So suppose that the last transition in the path is $t:Q' \r{f}[b] P'_2$, where $b \neq a$.
	By \BTI we have $\rev t \ind \rev {t_2}$.  We then use \SP to complete a commuting square with transitions $t'_2: Q \r{f}[a] Q'$ and $t':Q \r{f}[b] P_2$ for some $Q$.
	We see that $Q' \neq P_2$ by \UT (\autoref{lem-UT}). 
	By \PCI $\rev{t'_2} \ind t$, $t'_2 \ind t'$ and $\rev {t'} \ind t_2$.  So $t'_2 \eveqt t_2$.
	By inductive hypothesis $t_1 \eveqt t'_2$; hence $t_1 \eveqt t_2$.
\end{proof}

We prefer not to regard \autoref{def-event-label-equivalence} as the primary definition of event equivalence, since it is not as clearly motivated from first principles as \autoref{def-event-general}.
However, we consider it more intuitive to manipulate, and will study in \autoref{sec:key-based} how label- and key-based definitions of events relate to causality and core independence.

%% file: figures/rpi.tex
        \begin{tikzpicture}[
    x={(1, 0)},%
    y={(0, 1)},%
    ]
    \node (a) at (0, 0){};
    \node (b) at (1, .4){};
    \node (c) at (2, 0){};
    
    \draw[fb] (b) -- (a);
    \draw[fb] (c) -- (b);
    \node (cap1) [below  = .1cm of b] {\(t_1 \ind \rev{t_2}\)};
    
    \node (a0) at (-2, 0){};
    \node (b0) at (-3, .4){};
    \node (c0) at (-4, 0){};
    
    \draw[fb] (b0) -- (a0);
    \draw[fb] (b0) -- (c0);
    \node (cap0) [below  = .1cm of b0] {\(t_1 \ind t_2\)};
    
    \draw [double equal sign distance, Implies-Implies, ord] ($(b)+(-1.5, -.3)$) to node[above]{\RPI} ($(b0)+(1.5, -.3)$);
    
    \node (a2) at (4, 0){};
    \node (b2) at (5, .4){};
    \node (c2) at (6, 0){};
    
    \draw[fb] (a2) --  (b2);
    \draw[fb] (c2) --  (b2);
    
    \node (cap2) [below  = .1cm of b2] {\(\rev{t_1} \ind \rev{t_2}\)};
    
    \draw [double equal sign distance, Implies-Implies, ord] ($(b)+(1.5, -.3)$) to node[above]{\RPI} ($(b2)+(-1.5, -.3)$);
    
    \node (a3) at (8, 0){};
    \node (b3) at (9, .4){};
    \node (c3) at (10, 0){};
    
    \draw[fb] (a3) -- (b3);
    \draw[fb] (b3) -- (c3);
    
    \node (cap3) [below  = .1cm of b3] {\(\rev{t_1} \ind t_2\)};
    
    \draw [double equal sign distance, Implies-Implies, ord] ($(b2)+(1.5, -.3)$) to node[above]{\RPI} ($(b3)+(-1.5, -.3)$);
    \draw [double equal sign distance, Implies-Implies, ord] (cap0) to ($(cap0)+(0, -.6)$) to node[above]{\text{\RPI}} ($(cap3)+(0, -.6)$) to (cap3);
\end{tikzpicture}

%% file: sections/ccc.tex
\section{Capturing True-Concurrency Relations for Pre-Reversible LTSIs}%
\label{sec:CCC}

We have seen in the previous section that, in pre-reversible LTSIs, the set of pairs of forward events admits the standard polychotomy %
into core independence, causality, conflict and equality (\autoref{prop-poly}). In this section we take a different perspective and aim to study how these relations interact with independence and its complement.
In particular, by means of suitable relations based on non-independence, we obtain new existentially quantified characterisations of causality and conflict. %
Moreover, in calculi like \pccsk where independence and its complement can be defined in terms of transition labels (\autoref{def-ind-dep-pccsk}), these results yield label-based characterisations of true-concurrency relations.

\autoref{subsec:adj-event} focuses on adjacent events and determines, in this setting, under which conditions the true-concurrency relations are captured by independence and its complement (\(\nind\)).
\autoref{subsec:causality} introduces an \emph{immediate predecessor} relation (\(\ip\)) and develops representations of non-independence and causal order in terms of immediate predecessor pairs.
Last, \autoref{subsec:conflict} characterises conflict in terms of \emph{initial conflict} (\(\cfi\)) between coinitial forward transitions, which is then propagated globally via chains of immediate predecessor pairs.
\autoref{fig:summary-sec-ccc} summarises the main results of this section.

\begin{figure}%
	\begin{tcolorbox}[title = {Summary of results}]
		\begin{tabular}{c l c l}
			\({\coind} = {\ind}\) & on adjacent events & if \IRE and \RPI hold & \autoref{lem-coind ind adj} \\
			\({<} = {\ip} \) & on composable forward events & & \autoref{lem-cau iff ip} %
			\\
			\({\ip} = {\nind}\) & on composable forward events & if \IRE and \RPI hold\footnote{If the reverse of the first event is non-independent with the second event, then only \IRE is required, by \autoref{lem-ip-CIRE}.%
			} & \autoref{lem-IRE RPI ip comp} %
			\\
			\({\Cf} = {\nind} = {\Cfi}\) & on coinitial forward events & if \CIRE and \ED hold  & \autoref{lem-Cfi iff conf}\\ 
			${<} = {\ip^+}$ & on connected forward events & %
			& \autoref{prop-chain ip} \\
			${\Cf} = {\Cfg}$ & on connected forward events & if \CIRE and \ED hold & \autoref{thm-conf is conf dep}
		\end{tabular}
	\end{tcolorbox}
	\Description[short description]{long description}%
	\caption{Summary of main results of \autoref{sec:CCC} for pre-reversible LTSIs. We recall that if \CIRE holds then
	${\coind} = {\co}$ for arbitrary events (\autoref{prop-IRE co coind}). Since \IRE implies \CIRE, we also have ${\coind} = {\ind} = {\co}$ on adjacent events when 
	\IRE and \RPI hold, by \autoref{lem-coind ind adj}.
}
	\label{fig:summary-sec-ccc}
\end{figure}

\subsection{Adjacent Events}%
\label{subsec:adj-event}

This section has two main goals: lifting the relations in \autoref{def-transitions-paths} from transitions to events, and then focusing on characterisations applicable to adjacent events (\ie events that are coinitial, cofinal or composable):

\begin{definition}[Relations on events]%
	\label{def-ind sdep event}
	Let \(e_1\), \(e_2\) be events in an LTSI.
	\begin{itemize}
		\item 
		\(e_1\), \(e_2\) are \emph{connected} (\resp \emph{coinitial}, \emph{cofinal}, \emph{composable}, \emph{adjacent}) if there exist transitions $t_1 \in e_1$ and $t_2 \in e_2$ such that \(t_1\), \(t_2\) are connected (\resp coinitial, cofinal, composable, adjacent);
		\item $e_1 \ind e_2$ if there exist transitions $t_1 \in e_1$ and $t_2 \in e_2$ such that $t_1 \ind t_2$%
            \footnote{If the LTSI satisfies \IRE, this is equivalent to requiring \emph{all} transitions in \(e_1\) and \(e_2\) to be independent.
                      We prove this for \pccsk with \autoref{prop-ind-dep-on-events}.}.
	\end{itemize}
\end{definition}

\begin{notation}[Complement of relations]%
	\label{notation-complement}
	We will in general denote the complement of a relation by putting a bar over it.
	We write \({\nind}\), \({\ncind}\) and \({\ncf}\) and  for the complement of \({\ind}\), \({\cind}\) and \({\cf}\): \eg $e_1 \nind e_2$ iff $e_1 \ind e_2$ does not hold.
	The complements of equivalence \({\equ}\), causal order \({<}\) and equality \({=}\) are the exceptions, denoted by \({\nequ}\), \({\not <}\) and \({\neq}\) as usual.
\end{notation}

Next, we identify the conditions under which independence represents core independence among adjacent events
and, orthogonally, the conditions under which the complement of independence represents either causality, conflict or equality on adjacent events. 

\begin{lemma}%
	\label{lem-coind ind adj}
	Let \(e_1\), \(e_2\) be events in a pre-reversible  LTSI.
    \begin{enumerate}
		\item If $e_1 \coind e_2$ then $e_1$, $e_2$ are adjacent and $e_1 \ind e_2$. \label{item:adj ind coind-1} \label{item-adj-ind-coind-1}
		\item If the LTSI satisfies \IRE and \RPI, $e_1,e_2$ are adjacent and $e_1 \ind e_2$  then $e_1 \coind e_2$. \label{item-adj-ind-coind-2}
    \end{enumerate}
\end{lemma}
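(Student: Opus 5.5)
Part (1) needs nothing beyond unpacking the definitions. If $e_1 \coind e_2$, \autoref{def-event relations} gives coinitial transitions $t_1 \in e_1$ and $t_2 \in e_2$ with $t_1 \ind t_2$. Coinitial transitions are adjacent (\autoref{def-transitions-paths}), so $e_1$ and $e_2$ are adjacent by \autoref{def-ind sdep event}. The same pair $t_1,t_2$ witnesses $e_1 \ind e_2$ in the sense of \autoref{def-ind sdep event}. No axiom is needed beyond the definitions.

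For part (2), assume \IRE and \RPI. Adjacency gives $u_1 \in e_1$ and $u_2 \in e_2$ that are adjacent, and $e_1 \ind e_2$ gives $t_1 \in e_1$ and $t_2 \in e_2$ with $t_1 \ind t_2$. The first step is to move independence onto the adjacent pair using \IRE twice. From $u_1 \sim t_1 \ind t_2$ we get $u_1 \ind t_2$. By symmetry of $\ind$ this is $t_2 \ind u_1$, and then from $u_2 \sim t_2 \ind u_1$ we get $u_2 \ind u_1$. Hence $u_1 \ind u_2$ with $u_1,u_2$ adjacent.

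Next we turn the adjacent independent pair into a coinitial one, following \autoref{remark-on-adjacent-indep} and \autoref{fig:rpi}. \RPI as stated reverses the first argument, and by symmetry of $\ind$ it can also reverse the second.
\begin{itemize}
\item If $u_1,u_2$ are coinitial, they are already the witnesses.
\item If they are cofinal, $\rev{u_1}$ and $\rev{u_2}$ are coinitial, and two applications of \RPI give $\rev{u_1} \ind \rev{u_2}$.
\item If $\tgtof{u_1} = \srcof{u_2}$, then $\rev{u_1}$ and $u_2$ are coinitial, and one application of \RPI gives $\rev{u_1} \ind u_2$. The symmetric composable order is handled the same way.
\end{itemize}

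The main obstacle is that reversing a transition changes its event: $\rev{u_1} \in \rev{e_1}$, not $e_1$. So in the cofinal case we only obtain $\rev{e_1} \coind \rev{e_2}$, and in the composable case $\rev{e_1} \coind e_2$. I would read the statement with events taken up to direction, as \autoref{remark-on-adjacent-indep} implicitly does; this matches the forward-event use in \autoref{fig:summary-sec-ccc}. If an exact statement about $e_1,e_2$ themselves is required, I would try to close the gap as follows. From the coinitial independent reversed pair, apply \SP to obtain a commuting square, then \PCI to propagate independence around its corners. The opposite corner of that square then supplies coinitial representatives of $e_1$ and $e_2$ with the original directions, which are independent by \PCI (or by \IRE).
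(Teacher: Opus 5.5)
Your proof is correct and follows the paper's route: (1) by unfolding the definitions, and (2) by using \IRE to move independence onto the adjacent pair, then \RPI to obtain a coinitial independent pair (you apply \IRE twice with symmetry, which the paper compresses to \enquote{By \IRE}). The direction issue you raise is genuine and the paper's closing \enquote{Hence $e_1 \coind e_2$} passes over it silently; it is settled exactly by \autoref{lem-rpi-ev}, which your \SP/\PCI argument reproves, so you do not need to read the statement \enquote{up to direction}---but in the composable case the corner carrying $e_1$, $e_2$ with their original directions is $\srcof{u_1}$, adjacent to (not opposite) the corner $\tgtof{u_1}$ where you apply \SP, whereas in the cofinal case it is indeed the opposite corner.
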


\begin{proof}
    \begin{enumerate}
    \item 
    Immediate from Definitions~\ref{def-transitions-paths}, \ref{def-event relations} and \ref{def-ind sdep event}.
    \item
    Suppose that $t_1 \in e_1$ and $t_2 \in e_2$ are adjacent, and that $t'_1 \ind t'_2$ where $t'_1 \in e_1$, $t'_2 \in e_2$.
    By \IRE we have $t_1 \ind t_2$.
    Since \(t_1\), \(t_2\) are adjacent, we can reverse one or both transitions and corresponding events to obtain coinitial $t''_1 \in e''_1$, $t''_2 \in e''_2$. 
    Using \RPI we obtain $t''_1 \ind t''_2$\footnote{Note that \RPI is not needed if \(t_1\), \(t_2\) are already coinitial.}.
    Hence $e_1 \coind e_2$ as required.
    \qedhere
    \end{enumerate}
\end{proof}

A consequence of \autoref{lem-coind ind adj}~\itemref{item-adj-ind-coind-1} is that non-adjacent events cannot be core independent.
An interesting case of \autoref{lem-coind ind adj}~\itemref{item-adj-ind-coind-2} is for composable forward events, and we can see that both
\IRE and \RPI are necessary.
A consequence of \autoref{lem-coind ind adj}~\itemref{item-adj-ind-coind-2} is that
if adjacent events $e_1,e_2$ are \emph{not} core independent, 
implying $e_1=e_2$, $e_1<e_2$ or $e_1\Cf e_2$ by polychotomy (\autoref{prop-poly}),
then they are not independent, namely $e_1 \nind e_2$.  

Core independence, and thus concurrency (recall their equivalence under \CIRE, by \autoref{prop-IRE co coind}), coincides with independence on adjacent events in pre-reversible LTSIs with \IRE and \RPI.
We now investigate causal order, conflict and equality
on adjacent non-independent events, focusing on forward events.

\begin{lemma}%
    \label{lem-adj non-ind vs ccc}
    Let \(e_1\), \(e_2\) be forward non-independent events (\ie $e_1 \nind e_2$) in a pre-reversible LTSI. %
    \begin{enumerate}
        \item If \(e_1\), \(e_2\) are coinitial then $e_1=e_2$ or $e_1 \Cf e_2$; \label{lem-adj non-ind vs ccc one}
        \item if \(e_1\), \(e_2\) are composable then $e_1<e_2$; \label{lem-adj non-ind vs ccc two}
        \item if \(e_1\), \(e_2\) are cofinal then $e_1=e_2$. \label{lem-adj non-ind vs ccc three}
    \end{enumerate}
\end{lemma}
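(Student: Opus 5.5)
The overall strategy is to combine polychotomy (\autoref{prop-poly}) with the contrapositive of \autoref{lem-coind ind adj}~\itemref{item-adj-ind-coind-1}. Since $e_1 \nind e_2$, that item gives $e_1 \ncind e_2$. Polychotomy then leaves only $e_1 = e_2$, $e_1 < e_2$, $e_2 < e_1$ or $e_1 \Cf e_2$. In each of the three cases we use the adjacency witness, $t_1 \in e_1$ and $t_2 \in e_2$, to exclude the unwanted alternatives. The main tools are rooted paths together with \WF, which lets us extend any process backwards to a rooted one, and \NRE (\autoref{lem-NRE}).

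\textbf{Coinitial case.} We must exclude $e_1 < e_2$; the case $e_2 < e_1$ is symmetric. Let $P = \srcof{t_1} = \srcof{t_2}$. By \WF, choose a rooted forward path $r$ ending at $P$; rooted paths from root to $P$ exist by backtracking backwards, which terminates by \WF. Then $\cte(rt_2,e_2) > 0$, so $e_1 \leq e_2$ forces $\cte(rt_2,e_1) > 0$. Suppose $t_2 \notin e_1$ (guaranteed since $e_1 \neq e_2$). Then $\cte(r,e_1) > 0$, and so $\cte(rt_1,e_1) = \cte(r,e_1)+1 \geq 2$, contradicting \NRE. Hence $e_1 = e_2$ or $e_1 \Cf e_2$.

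\textbf{Composable case.} Take $t_1 \in e_1$ and $t_2 \in e_2$ with $\tgtof{t_1} = \srcof{t_2}$, and a rooted path $r$ to $\srcof{t_1}$. The path $rt_1t_2$ is rooted. If $t_1 \notin e_2$ or $t_2 \notin e_1$, it has positive counts for both events, which excludes $e_1 \Cf e_2$. Equality $e_1 = e_2$ is impossible: then $\cte(r t_1 t_2, e_1) = \cte(r,e_1)+2 \geq 2$, since $\cte(r,e_1) \ge 0$ by \NRE-style counting. I would justify that counts on rooted paths are nonnegative; if that is not directly available, I would instead use \NRE on $rt_1$, which gives $\cte(r,e_1) \le 0$, and argue via \PL that counts on rooted paths are never negative. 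It remains to exclude $e_2 < e_1$. The path $rt_1$ has $\cte(rt_1,e_1) > 0$, so $e_2 \le e_1$ gives $\cte(rt_1,e_2) > 0$. Since $t_1 \notin e_2$, this means $\cte(r,e_2) > 0$. Then $\cte(rt_1t_2,e_2) \geq 2$, contradicting \NRE.

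\textbf{Cofinal case.} This is where I expect the main obstacle. With $t_1,t_2$ cofinal forward transitions, the inverses $\rev{t_1},\rev{t_2}$ are coinitial backward transitions. If $t_1 \neq t_2$, \BTI gives $\rev{t_1} \ind \rev{t_2}$, so $\rev{e_1} \coind \rev{e_2}$ by definition. I then need to transfer this to $e_1 \ind e_2$ on the forward events. For this I would check that $\rev{t_1} \ind \rev{t_2}$ combined with \SP and \PCI yields a square whose forward sides give independent coinitial forward transitions $t_1' \in e_1$ and $t_2' \in e_2$: \PCI propagates around the square to $t'_1 \ind t'_2$ at the opposite corner, with the sides in the same events. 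This contradicts $e_1 \nind e_2$. Hence $t_1 = t_2$, and so $e_1 = e_2$.
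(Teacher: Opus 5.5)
Your proof is correct and follows essentially the same route as the paper. Like the paper, you combine polychotomy with $e_1 \ncind e_2$, use \NRE on a rooted forward-only path through the witnesses in the coinitial and composable cases (you spell out the exclusion of $=$, $>$ and $\Cf$, which the paper leaves implicit), and use \BTI, \SP and \PCI twice in the cofinal case. The nonnegativity of $\cte(r,e_1)$ is immediate because $r$ is forward-only and events never mix directions, so the detour through \NRE and \PL is unnecessary (and \NRE only gives an upper bound, not nonnegativity).
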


\begin{proof}
    \begin{enumerate}
    \item Assume $t_1 \in e_1, t_2\in e_2$ are coinitial with source $P$. Since \(e_1\), \(e_2\) are non-independent ($e_1 \nind e_2$)
    we have $t_1 \nind t_2$ for all $t_1 \in e_1$, $t_2\in e_2$.
    Hence, $e_1 \ncind e_2$.
    By \NRE (\autoref{lem-NRE}) no multiple transitions in $e_1$ can occur in any forward-only rooted path. %
    Assume $e_1 < e_2$ for a contradiction. 
    This means there is $t_1'\in e_1$ that appears in a rooted forward-only path to $P$, and hence before $t_1$, 
    contradicting \NRE. 
    We obtain $e_1 \not < e_2$, and $e_2 \not < e_1$ can be obtained similarly, so we conclude using polychotomy.
    
    \item Since \(e_1\), \(e_2\) are composable forward events in a pre-reversible LTSI then either $e_1\coind e_2$  or $e_1<e_2$. Assume
    $e_1\coind e_2$ for a contradiction, implying there are coinitial forward $t_1\in e_1$, $t_2\in e_2$ \st $t_1\ind t_2$. However,
    $e_1\nind e_2$ means no such \(t_1\), \(t_2\) exist. Hence,  $e_1<e_2$.
    
    \item %
        Assume $e_1\neq e_2$ for a contradiction.
	We have forward $t_1\in e_1$, $t_2\in e_2$ and \(t_1\), \(t_2\) are cofinal. By \BTI $\rev{t_1}\ind \rev{t_2}$.
	Then we use \SP to produce a commuting square with the sides $t_2' t_1$ and $t_1' t_2$, where $t_1'\in e_1$ and $t_2'\in e_2$ and $t_1', t_2'$ are coinitial.
	Applying \PCI twice we obtain $t_1'\ind t_2'$, contradicting $e_1 \nind e_2$. 
	Hence, $e_1=e_2$.
    \qedhere
    \end{enumerate}
\end{proof}

We have seen how, in pre-reversible LTSIs satisfying \IRE and \RPI, true-concurrency relations on adjacent forward events 
partition independent and non-independent events. 
There is also a tight correspondence between types of adjacency and true-concurrency relations.

\subsection{Causality}%
\label{subsec:causality}

In this section we begin by determining the conditions under which causality between composable forward transitions can be represented via non-independence. We then turn to forward events and introduce the notion of immediate predecessor; first focusing on composable forward events, we show that it coincides with causality and, with additional assumptions, with non-independence. Next, we generalise to arbitrary forward events and show how the immediate predecessor relation, and hence non-independence, can be used to characterise causality. Finally, we leverage this result to obtain a method for checking causality of transitions based on non-independence.

Given a pre-reversible LTSI, we aim to determine when causality between composable forward transitions can be characterised by non-independence. We first prove that, assuming \IRE, non-independence between composable forward transitions implies causality:

\begin{lemma}%
	\label{lem-not ind implies cau}
	Let \(t_1\), \(t_2\) be composable forward transitions in a pre-reversible LTSI that satisfies \IRE. %
	If $t_1\nind t_2$ then $[t_1]<[t_2]$.
\end{lemma}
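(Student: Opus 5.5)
The plan is to reduce the statement to \autoref{lem-adj non-ind vs ccc}~\itemref{lem-adj non-ind vs ccc two}, which already gives $e_1<e_2$ for composable forward events that are non-independent at the level of events. Let $e_1=[t_1]$ and $e_2=[t_2]$. These are forward events, and they are composable because $t_1$ and $t_2$ are composable (\autoref{def-ind sdep event}). What remains is to lift the hypothesis $t_1\nind t_2$, which concerns transitions, to $e_1\nind e_2$, which concerns events. This lifting is where \IRE enters.

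We argue by contradiction and suppose $e_1\ind e_2$. By \autoref{def-ind sdep event} there are $t_1'\in e_1$ and $t_2'\in e_2$ with $t_1'\ind t_2'$. Since $t_2'\sqeqt t_2$, \IRE gives $t_2\ind t_1'$, using symmetry of $\ind$ to put the relation in the shape $t\sqeqt t'\ind u$. Then $t_1\sqeqt t_1'\ind t_2$, and a second application of \IRE yields $t_1\ind t_2$. This contradicts the assumption, so $e_1\nind e_2$.

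We can now apply \autoref{lem-adj non-ind vs ccc}~\itemref{lem-adj non-ind vs ccc two} to conclude $[t_1]<[t_2]$. Recall how that item was proved: polychotomy leaves only $e_1\coind e_2$ or $e_1<e_2$ for composable forward events, and $e_1\coind e_2$ is ruled out because it would supply an independent pair of transitions from $e_1$ and $e_2$.

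The only delicate step is the double application of \IRE. It needs the symmetry of $\ind$ and must be applied once on each side. That the events are distinct and not in conflict is already handled inside \autoref{lem-adj non-ind vs ccc} via polychotomy (\autoref{prop-poly}), so nothing further is needed.
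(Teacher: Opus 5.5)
Your proposal is correct and follows the paper's own route: you lift $t_1\nind t_2$ to $[t_1]\nind[t_2]$ by contradiction using \IRE, then invoke \autoref{lem-adj non-ind vs ccc}~\itemref{lem-adj non-ind vs ccc two}. The only difference is that you write out the two applications of \IRE, one on each side and using symmetry of $\ind$, which the paper compresses into a single \enquote{by \IRE}.
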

\begin{proof}
Suppose $t_1\nind t_2$. Since \IRE holds, $[t_1] \nind [t_2]$: if $t_1' \ind t_2'$ for some $t_1' \in [t_1]$, $t_2' \in [t_2]$, by \IRE we would obtain $t_1\ind t_2$, and thus a contradiction. By applying \autoref{lem-adj non-ind vs ccc}~\itemref{lem-adj non-ind vs ccc two} we conclude that $[t_1]<[t_2]$.
\end{proof}

The converse of \autoref{lem-not ind implies cau} does not hold, as shown by the following example:

\begin{example}%
	\label{ex-RPI needed}
Consider $t: P \r{f}[a] Q$ and $u: Q \r{f}[b] S$ with $t\ind u$. This LTSI is pre-reversible and \IRE holds.
Clearly $t<u$  and $[t]<[u]$, so causality and independence can overlap in pre-reversible LTSIs with \IRE.
\end{example}

Assuming \RPI in addition to \IRE, the converse of \autoref{lem-not ind implies cau} holds and causality of composable forward transitions can be characterised by their non-independence.

\begin{lemma}\label{lem-not ind iff ip}
	Let \(t_1\), \(t_2\) be composable forward transitions in a pre-reversible LTSI that satisfies \IRE and \RPI. Then  $t_1\nind t_2$ iff $[t_1]<[t_2]$.
\end{lemma}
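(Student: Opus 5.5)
The forward direction ($\Rightarrow$) is exactly \autoref{lem-not ind implies cau}, which only needs \IRE. So the work is in the converse: assuming $[t_1]<[t_2]$, show $t_1 \nind t_2$. I will argue by contraposition, supposing $t_1 \ind t_2$ and deriving that $[t_1]$ and $[t_2]$ are core independent. Polychotomy (\autoref{prop-poly}) then forbids $[t_1]<[t_2]$.

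For the contraposition, let $t_1 : P \r{f}[a] Q$ and $t_2 : Q \r{f}[b] S$ with $t_1 \ind t_2$. By \RPI (see also \autoref{remark-on-adjacent-indep} and \autoref{fig:rpi}) we get $\rev{t_1} \ind t_2$, and these two transitions are coinitial at $Q$. Then $[\rev{t_1}] \coind [t_2]$, but this is a statement about the backward event $\rev{[t_1]}$ rather than about $[t_1]$. To obtain coinitial forward representatives, apply \SP to $\rev{t_1}$ and $t_2$. This gives transitions $t_2' : P \r{f}[b] R$ and $t_1'' : R \r{f}[a] S$ closing the square, so that $t_2'$, $t_1''$ form the second side of a commuting square with $t_1$, $t_2$.

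Next I apply \PCI, exactly as in the proof of \autoref{prop:eveqt evleqt}, to propagate independence around the square. Starting from $\rev{t_1} \ind t_2$, it yields $t_1 \ind t_2'$ at corner $P$, together with the remaining corner independences. The square is non-degenerate by \autoref{lem-non-degenerate}, so \autoref{def-event-general} applies. It gives $t_2' \eveqt t_2$ (opposite sides of the square) and also $t_1'' \eveqt t_1$. Hence $t_1 \in [t_1]$ and $t_2' \in [t_2]$ are coinitial and independent, i.e.\ $[t_1] \coind [t_2]$. By polychotomy this excludes $[t_1]<[t_2]$, which completes the contraposition.

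The main obstacle is the bookkeeping with \PCI. I have to check that the four corner independences required by \autoref{def-event-general} (including the direction-dependent distinctness conditions) all follow from the single hypothesis $\rev{t_1}\ind t_2$, using \PCI and \RPI. If \PCI alone does not yield all four in the right orientation, I will fall back on \RPI to flip directions, since independence of adjacent transitions is then invariant under reversal. An alternative shortcut is to invoke \autoref{lem-coind ind adj}~\itemref{item-adj-ind-coind-2} directly: $[t_1]$ and $[t_2]$ are composable, hence adjacent, and $[t_1]\ind[t_2]$, so with \IRE and \RPI we get $[t_1]\coind[t_2]$ immediately. I will in fact lead with this shortcut and keep the explicit square construction only as a sanity check.
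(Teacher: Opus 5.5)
Your proof is correct. The forward direction is the paper's, and your explicit square for the converse is also the paper's: \RPI turns $t_1 \ind t_2$ into $\rev{t_1}\ind t_2$, and \SP closes a square giving $t_2'$ coinitial with $t_1$ and $t_2'\eveqt t_2$. You differ in how the contradiction is reached.

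The paper stays with the definition of $<$. It takes a rooted forward-only path $r$ to $\srcof{t_1}$ and notes by \NRE that $r$ contains no transition of $[t_1]$. Then $r t_2'$ is a rooted path that counts $[t_2]$ but not $[t_1]$, so $[t_1]\not<[t_2]$.

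You instead extract $t_1\ind t_2'$ from \PCI, conclude $[t_1]\coind[t_2]$, and let polychotomy (\autoref{prop-poly}) rule out $[t_1]<[t_2]$. Your leading shortcut through \autoref{lem-coind ind adj}~\itemref{item-adj-ind-coind-2} is the most modular form of this. It shows the statement is just \enquote{independence is core independence on adjacent events} plus polychotomy. The cost is relying on the heavier polychotomy result, where the paper needs only \NRE.

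Two small remarks. First, the \PCI bookkeeping you worry about is harmless, and no extra appeal to \RPI is needed. In a pre-reversible LTSI, $\rev{t_1}\ind t_2$ alone suffices: one application of \PCI gives $t_1\ind t_2'$, further applications give the remaining corners, and \autoref{lem-non-degenerate} gives non-degeneracy (cf.\ \autoref{def-event-simplified}). Second, your explicit square is not redundant. It supplies exactly the step from $\rev{[t_1]}\coind[t_2]$ to $[t_1]\coind[t_2]$ (i.e.\ \autoref{lem-rpi-ev}), which the proof of \autoref{lem-coind ind adj}~\itemref{item-adj-ind-coind-2} uses only implicitly.
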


\begin{proof}
	$(\Rightarrow)$ Follows by \autoref{lem-not ind implies cau}. 
	
	$(\Leftarrow)$ Suppose $[t_1]<[t_2]$. Assume for a contradiction that $t_1\ind t_2$. Since \RPI holds we get $\rev{t_1}\ind t_2$, where $\rev{t_1}$ and $t_2$ are coinitial.
	By \SP there is a square with some $t_2'$ coinitial with $t_1$ such that \(t_2'\sim t_2\), and a rooted forward-only path to $t_2'$ that does not contain a transition in $[t_1]$ (since \NRE holds), contradicting $[t_1]<[t_2]$.
	Hence $t_1 \nind t_2$.
\end{proof}

We note that \RPI and \IRE are reasonable assumptions since all major case studies discussed in~\cite{LPU24} satisfy both.

The following two lemmas will be useful for proofs in this section.
\begin{lemma}[\protect{\cite[Lemma 4.15]{LPU24}}]%
	\label{lem-rpi-ev}
	Assume an LTSI is pre-reversible. If \(e_1 \ci e_2\) then we have also \(\rev{e_1} \ci e_2\).
\end{lemma}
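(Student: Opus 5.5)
The statement to prove is \autoref{lem-rpi-ev}: in a pre-reversible LTSI, if \(e_1 \ci e_2\) then \(\rev{e_1} \ci e_2\).

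By \autoref{def-event relations}, \(e_1 \ci e_2\) gives coinitial \(t_1 \in e_1\) and \(t_2 \in e_2\) with \(t_1 \ind t_2\). For \(\rev{e_1} \ci e_2\) I need coinitial transitions \(v_1 \in \rev{e_1}\) and \(v_2 \in e_2\) with \(v_1 \ind v_2\). Note that \RPI is \emph{not} assumed, so I cannot simply reverse \(t_1\). Instead I will use the commuting square supplied by \SP together with \PCI.

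First, apply \SP to \(t_1 \ind t_2\), with \(t_1 : P \r{fb}[\alpha] Q\) and \(t_2 : P \r{fb}[\beta] R\). This gives \(t_2' : Q \r{fb}[\beta] S\) and \(t_1' : R \r{fb}[\alpha] S\).

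Second, apply \PCI to this square to obtain \(t_2' \ind \rev{t_1}\). These two transitions are coinitial, both having source \(Q\), and \(\rev{t_1} \in \rev{e_1}\) by the definition of \(\rev{e}\).

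The remaining point is to show \(t_2' \in e_2\), i.e.\ \(t_2' \sim t_2\). By \autoref{def-event-general}, \(t_2\) and \(t_2'\) are opposite sides of the square, and the square qualifies if four independences hold at its corners. With \(t_2\) in the role of ``\(t\)'' and \(t_1\) in the role of ``\(u\)'' in that definition, they are:
\begin{itemize}
\item \(t_2 \ind t_1\), which is given (and \(\ind\) is symmetric);
\item \(t_2' \ind \rev{t_1}\), which is the first application of \PCI above;
\item \(\rev{t_2'} \ind \rev{t_1'}\), obtained by applying \PCI again to the square read from corner \(Q\), using coinitial \(\rev{t_1}\) and \(t_2'\);
\item \(\rev{t_1} \ind t_1'\) (equivalently \(t_1' \ind \rev{t_2}\) in the other orientation), obtained by iterating \PCI once more around the square.
\end{itemize}
This is the standard ``four corners'' closure from \cite{LPU24}.

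The non-degeneracy side conditions of \autoref{def-event-general} (\(Q \neq R\) or \(P \neq S\), depending on directions) follow from \autoref{lem-non-degenerate}, since the square has \(t_1 \ind t_2\). Hence \(t_2 \sim t_2'\), so \(t_2' \in e_2\), and the coinitial pair \(\rev{t_1}, t_2'\) witnesses \(\rev{e_1} \ci e_2\).

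The main obstacle is this third step: checking that the repeated applications of \PCI really produce all four corner independences in the orientation the definition of \(\sim\) requires. That requires tracking which corner each transition starts from and which transition is reversed at each application.
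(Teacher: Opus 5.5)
Your proof is correct. It is the standard argument behind the cited result (the paper gives no proof of its own and defers to \cite{LPU24}): complete the square with \SP, get $t_2' \ind \rev{t_1}$ from \PCI, and check that $t_2' \eveqt t_2$. There is one slip in your fourth bullet: $\rev{t_1} \ind t_1'$ is not even a coinitial pair and should read $\rev{t_2} \ind t_1'$, as your parenthetical says. That independence comes from a single application of \PCI at corner $P$, with $t_2, t_1$ in the roles of $t, u$. Finally, your corner-by-corner check is exactly the argument that justifies the simplified definition of events (\autoref{def-event-simplified}) in pre-reversible LTSIs, so you could cite that definition directly and need only $t_2 \ind t_1$.
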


\begin{lemma}[Sideways square~\protect{\cite[Lemma 2.11]{PU07a}}\protect{\cite[Lemma 7]{aubert2023c}}]%
	\label{lem-sideways}
	Let \(t_1\), \(t_2\) be composable forward transitions in a pre-reversible LTSI that satisfies \CIRE. %
	If $t_1 \coind t_2$ then there are composable forward transitions $t_1'$, $t_2'$ such that $t_1$, $t_2'$ are coinitial, $t_1\ind t_2'$, 
	and $t_2$, $t_1'$ are cofinal.
\end{lemma}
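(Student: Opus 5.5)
We take \(t_1: P \r{f}[a] Q\) and \(t_2: Q \r{f}[b] S\) composable forward transitions with \(t_1 \coind t_2\), and we aim to rebuild a commuting square in which \(t_2\) becomes a side.
Since \([t_1] \coind [t_2]\), \autoref{lem-rpi-ev} gives \(\rev{[t_1]} \coind [t_2]\). That is, there are coinitial \(v_1 \in \rev{[t_1]}\) and \(v_2 \in [t_2]\) with \(v_1 \ind v_2\).
The transitions \(\rev{t_1}\) and \(t_2\) are coinitial (both have source \(Q\)), and they belong to the events \(\rev{[t_1]}\) and \([t_2]\) respectively. \CIRE applies to these two coinitial pairs, so we obtain \(\rev{t_1} \ind t_2\).

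Next we apply \SP to the coinitial independent pair \(\rev{t_1}: Q \r{b}[a] P\) and \(t_2 : Q \r{f}[b] S\). This yields a process \(R\) and cofinal transitions \(t_2': P \r{f}[b] R\) and \(u : S \r{b}[a] R\). We set \(t_1' = \rev{u} : R \r{f}[a] S\).
By construction:
\begin{itemize}
\item \(t_1'\) and \(t_2'\) are composable forward transitions;
\item \(t_1\) and \(t_2'\) are coinitial, both with source \(P\);
\item \(t_2\) and \(t_1'\) are cofinal, both with target \(S\).
\end{itemize}

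It remains to show \(t_1 \ind t_2'\). In the square just built, the coinitial pair at \(Q\) is \(\rev{t_1} \ind t_2\). Applying \PCI with \(t = \rev{t_1}\), \(u = t_2\) and \(u' = t_2'\) gives \(t_2' \ind \rev{\rev{t_1}} = t_1\). By symmetry of \(\ind\), this is \(t_1 \ind t_2'\), as required.

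The main obstacle is the first step: passing from core independence of the events to independence of the specific coinitial pair \(\rev{t_1}\), \(t_2\). This needs both \autoref{lem-rpi-ev}, to reverse the event \([t_1]\) while keeping core independence, and \CIRE, to transfer independence onto the particular representatives. Some care is needed to check that \(\rev{t_1}\) really lies in \(\rev{[t_1]}\), which holds by definition, and that the square's orientation matches the form of \PCI stated in \autoref{fig:axiomatic-properties}. The remaining steps are direct applications of \SP and \PCI.
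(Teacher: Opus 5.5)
Your proof is correct and follows the paper's own argument: you reverse the event via \autoref{lem-rpi-ev}, use \CIRE to obtain $\rev{t_1} \ind t_2$, close the square with \SP, and get $t_1 \ind t_2'$ from \PCI. The only difference is that you write out explicitly the square and the instantiation of \PCI, which the paper leaves implicit.
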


\begin{proof}
	Suppose $t_1 \coind t_2$ and let $e_1 = [t_1]$ and $e_2 = [t_2]$, then $e_1 \coind e_2$.
	By~\autoref{lem-rpi-ev}, $\rev{e_1} \coind e_2$.  Using \CIRE, we obtain $\rev{t_1} \ind t_2$.  Then we can use \SP and \PCI 
	to get the desired $t_1'$, $t_2'$.
\end{proof}

\autoref{lem-sideways} would be invalid without \CIRE: recall the LTSI in \autoref{fig:resolvable conflict} from 
\autoref{ex-resolvable conflict}, where \CIRE does not hold. The topmost transitions $a$ and $b$ are composable and related by $\ci$.
However, there are no transitions $b$, $a$ that complete a commuting square for them.
 
Causality between composable transitions, and thus composable events, can be equivalently expressed via the notion of immediate predecessor:

\begin{definition}[Immediate predecessor]%
	\label{def-immed pred}
	Let \(e_1\), \(e_2\) be forward events.  We say that $e_1$ is an \emph{immediate predecessor} of $e_2$, denoted by $e_1 \ip e_2$, if $e_1 < e_2$ and there is no event $e$ such that $e_1 < e < e_2$.
\end{definition}

\begin{lemma}%
	\label{lem-cau iff ip}
	Let $e_1$, $e_2$ be composable forward events in a pre-reversible LTSI. 
	Then $e_1 < e_2$ iff $\, e_1 \ip e_2$.
\end{lemma}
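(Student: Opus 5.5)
The direction $(\Leftarrow)$ is immediate from \autoref{def-immed pred}, since $e_1 \ip e_2$ includes $e_1 < e_2$. For $(\Rightarrow)$, assume $e_1 < e_2$ with $e_1$, $e_2$ composable, and suppose for a contradiction that some forward event $e$ satisfies $e_1 < e < e_2$. The plan is to build one concrete rooted forward-only path that contains $e_2$ but not $e$. This contradicts $e < e_2$ directly from the definition of causal ordering (\autoref{def-event relations}).

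\emph{Constructing the path.} Composability gives forward transitions $t_1 \in e_1$ and $t_2 \in e_2$ with $\tgtof{t_1} = \srcof{t_2}$. Starting from $\srcof{t_1}$, I repeatedly take backward transitions for as long as one is available. By \WF this terminates, at a process with no backward transition, \ie a rooted one. Inverting this backward path (Loop Lemma) gives a rooted forward-only path $r$ with $\tgtof{r} = \srcof{t_1}$. Then $r\,t_1\,t_2$ is a rooted forward-only path. Since every transition in it is forward, $\cte(\cdot, e')$ along it just counts the transitions belonging to $e'$.

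\emph{Showing $e$ does not occur in $r$.} This is the only step needing care. If $\cte(r,e) > 0$, then $e_1 < e$ (applied to the rooted path $r$) gives $\cte(r,e_1) > 0$. Appending $t_1 \in e_1$ then yields $\cte(r\,t_1, e_1) \geq 2$, contradicting \NRE (\autoref{lem-NRE}). Hence $\cte(r,e) = 0$.

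\emph{Deriving the contradiction.} We have $e \neq e_1$ (as $e_1 < e$) and $e \neq e_2$ (as $e < e_2$), so neither $t_1$ nor $t_2$ lies in $e$. Therefore $\cte(r\,t_1\,t_2, e) = 0$. On the other hand, $\cte(r\,t_1\,t_2, e_2) \geq 1 > 0$, so $e < e_2$ forces $\cte(r\,t_1\,t_2, e) > 0$, which is the desired contradiction. Hence no such $e$ exists and $e_1 \ip e_2$.

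The main point to check is that a rooted forward path into $\srcof{t_1}$ exists. This follows from \WF together with the Loop Lemma as described, so I do not expect a real obstacle. The \NRE step is what rules out $e$ occurring before $t_1$.
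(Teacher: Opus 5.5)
Your proof is correct and follows the paper's argument: take a rooted forward-only path $r\,t_1\,t_2$, use \NRE together with $e_1 < e$ to rule out transitions of $e$ in $r$, conclude that $e$ does not occur in the whole path, and derive a contradiction with $e < e_2$. The one addition is that you explicitly justify the existence of a rooted forward-only path into $\srcof{t_1}$ via \WF and the Loop Lemma, which the paper leaves implicit.
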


\begin{proof}
$(\Rightarrow)$  Assume for a contradiction that $e_1<e<e_2$ for some $e$, and suppose \(t_1\), \(t_2\) are composable for some
$t_1\in e_1, t_2\in e_2$.
There is a forward-only rooted path $rt_1t_2$ for some $r$. There are no transitions 
of $e_1$ in $r$ since \NRE holds in a pre-reversible LTSI (\autoref{lem-NRE}). 
Also, since $e_1<e$ 
there are no transitions of $e$ in $r$. This implies that
there are no transitions of $e$ in $rt_1t_2$, contradicting $e<e_2$. Hence no such $e$ exists, so $e_1\ip e_2$.

$(\Leftarrow)$  By the definition of ${\ip}$. 
\end{proof}

\begin{lemma}%
	\label{lem-CIRE ip comp}
	Let \(e_1\), \(e_2\) be forward events in a pre-reversible  LTSI that satisfies \CIRE.
	Then $e_1 \ip e_2$ iff $e_1$ is composable with $e_2$ and $e_1 \ncind e_2$.
\end{lemma}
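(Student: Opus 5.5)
The plan is to prove the two directions separately. The right-to-left direction follows directly from polychotomy and \autoref{lem-cau iff ip}. The left-to-right direction needs a path-rearrangement argument built on the sideways square (\autoref{lem-sideways}), which is where \CIRE is used.

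\emph{($\Leftarrow$).} Suppose $t_1\in e_1$ and $t_2\in e_2$ are composable forward transitions and $e_1 \ncind e_2$.
\begin{itemize}
\item By \WF, the source of $t_1$ is reachable from a rooted process by a forward-only path $r$, obtained by inverting a maximal backward path. This gives a rooted forward path $r t_1 t_2$.
\item This path excludes $e_1 \Cf e_2$.
\item \NRE (\autoref{lem-NRE}) excludes $e_1 = e_2$ and $e_2 < e_1$; the latter would force a transition of $e_2$ inside $r$ and hence $\cte(rt_1t_2,e_2)=2$.
\end{itemize}
By polychotomy (\autoref{prop-poly}) we get $e_1<e_2$, and \autoref{lem-cau iff ip} gives $e_1 \ip e_2$.

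\emph{($\Rightarrow$).} Assume $e_1 \ip e_2$. Then $e_1<e_2$, so $e_1 \ncind e_2$ by polychotomy, and only composability remains.
\begin{itemize}
\item Fix $t_2\in e_2$ and, as above, a rooted forward-only path ending in $t_2$.
\item Since $e_1<e_2$, this path contains a transition $t_1\in e_1$, so it has the form $r_1 t_1 s\, t_2$.
\item Among all such rooted forward paths, with $t_2$ ranging over $e_2$ and $t_1$ over $e_1$, choose one with $\len{s}$ minimal. If $s$ is empty we are done, so assume $\len{s}>0$ and derive a contradiction.
\end{itemize}

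\emph{Classifying the events of $s$.} Every transition $v$ in $s$ lies strictly between $t_1$ and $t_2$ on a rooted forward path. By \NRE, $[v]$ is distinct from $e_1$ and $e_2$, and neither $[v]<e_1$ nor $e_2<[v]$ can hold. Conflict with either event is excluded by the path itself. Polychotomy then leaves, for each event:
\begin{itemize}
\item $e_1<[v]$ or $e_1\coind[v]$, and
\item $[v]<e_2$ or $[v]\coind e_2$.
\end{itemize}
Since $e_1\ip e_2$, the combination $e_1<[v]<e_2$ is impossible. Hence every $v$ in $s$ satisfies $e_1\coind[v]$ or $[v]\coind e_2$.

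\emph{Shortening $s$.} Consider the first transition $v$ of $s$, which is composable with $t_1$.
\begin{itemize}
\item If $e_1\coind[v]$, apply \autoref{lem-sideways} to $t_1 v$. This yields $v' t_1'$ with $v'\sim v$ and $t_1'\sim t_1$, both cofinal with the original pair. The path $r_1 v' t_1' s' t_2$ is rooted and forward, and its middle segment $s'$ is shorter, contradicting minimality.
\item Symmetrically, if the last transition $w$ of $s$ satisfies $[w]\coind e_2$, \autoref{lem-sideways} applied to $w\,t_2$ moves a copy of $t_2$ before $w$, again shortening $s$.
\end{itemize}

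\emph{Main obstacle.} The remaining case is when the first transition $v$ of $s$ has $e_1<[v]$ (so $[v]\coind e_2$), while the last transition $w$ has $[w]<e_2$ (so $e_1\coind[w]$). I would handle it by locating inside $s$ an adjacent pair $v_i v_{i+1}$ with $e_1<[v_i]$ and $e_1\coind[v_{i+1}]$. Such a pair must exist, because the property "$e_1<\cdot$" holds at the start of $s$ and fails at its end.
\begin{itemize}
\item For that pair, $[v_i]\coind e_2$ and $[v_{i+1}]<e_2$.
\item By polychotomy on the composable pair, $[v_i]<[v_{i+1}]$ or $[v_i]\coind[v_{i+1}]$.
\item The first option gives $e_1<[v_i]<[v_{i+1}]<e_2$, contradicting $e_1\ip e_2$.
\item So $[v_i]\coind[v_{i+1}]$, and \autoref{lem-sideways} lets us swap them. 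This decreases a secondary measure: the number of inversions between the "after $e_1$" transitions and the "before $e_2$" transitions in $s$.
\end{itemize}
The intended induction is therefore lexicographic on $(\len{s},\text{inversions})$. Making this measure precise, and checking that each swap preserves the classification of the events, is the step I expect to require the most care.
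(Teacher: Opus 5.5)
Your overall strategy is sound, and your ($\Leftarrow$) direction is essentially the paper's. There is, however, an error in the step you flagged as delicate. For the adjacent pair with $e_1<[v_i]$ and $e_1\coind[v_{i+1}]$, you assert $[v_{i+1}]<e_2$, and this does not follow. Your classification of the events of $s$ is a trichotomy, not a dichotomy: an event of $s$ may be core independent with \emph{both} $e_1$ and $e_2$ (for instance, an action of a parallel component unrelated to either). As a result, the measure ``inversions between after-$e_1$ and before-$e_2$ transitions'' need not decrease, since swapping an after-$e_1$ transition with one core independent with both events leaves it unchanged.

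Both problems are easy to repair.
\begin{itemize}
\item Exclude $[v_i]<[v_{i+1}]$ by transitivity instead: it would give $e_1<[v_{i+1}]$, which contradicts $e_1\coind[v_{i+1}]$ by the exclusivity in \autoref{prop-poly}.
\item Take as secondary measure the number of pairs $i<j$ with $e_1<[v_i]$ and $e_1\coind[v_j]$. Each swap via \autoref{lem-sideways} lowers it by exactly one, because the swapped transitions stay in their events.
\end{itemize}
Once this measure is zero, either the first transition of $s$ is core independent with $e_1$, or every event of $s$ is after $e_1$, in which case the last one is core independent with $e_2$. In both cases one of your shortening moves applies.

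With that fix, your route differs from the paper's only in the hard case. The paper also inducts on $\len{s}$. When no event of $s$ lies in $E=\{e \mid e_1<e\not<e_2\}$ (your after-$e_1$ class), it moves the first transition of $s$ to the left of $t_1$, as you do. Otherwise it takes a maximal $e\in E$ and uses transitivity, maximality and $e_1\ip e_2$ to show that $e$ is core independent with every later transition of $s$ and with $e_2$. It then pushes $e$ past $t_2$, shortening $s$ in one stroke with no secondary measure.

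Your own transitivity argument already gives this. The \emph{last} after-$e_1$ transition of $s$ is core independent with everything after it, since every later transition is core independent with $e_1$. So you could move it directly past $t_2$ and drop the inversion count altogether. Your bubble-sort version uses only local adjacent swaps, but it pays for this with a lexicographic induction that the paper's choice of a maximal element avoids.
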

\begin{proof}
	($\Rightarrow$)
	As $e_1 \ip e_2$ implies $e_1 < e_2$, polychotomy gives $e_1 \ncind e_2$.
	To prove that $e_1$ is composable with $e_2$, consider a rooted forward-only path $r$ finishing with $t_2 \in e_2$. 
	Since $e_1 < e_2$, we have that $r$ contains $t_1 \in e_1$, so $r = r't_1st_2$ for some $r', s$. 
	We proceed by induction on the length of $s$:
    if it is zero, then $e_1$ is composable with $e_2$.
	Otherwise, let $E=\{e \mid \cte(s,e)=1 \ \text{and} \ e_1 < e \not< e_2\}$.
    
    First suppose that $E$ is empty.
	We can write $s = ts'$ for some $t$.
        Note that by definition $\cte(s,[t])=\cte(s',[t])+1=1$, since $s'$ cannot contain transitions in $[t]$ by \NRE, and cannot contain transitions in $\rev{[t]}$ being a forward-only path. %
	Since $[t] \notin E$, either $e_1 \not< [t]$ or $[t] < e_2$ holds; in the latter case, however, $e_1 \not< [t]$ must also hold, otherwise $e_1 < [t] < e_2$ would contradict $e_1 \ip e_2$.
	By polychotomy, it follows that $e_1 \coind [t]$;
	hence, by~\autoref{lem-sideways} we can swap $t_1$ and $t$, obtaining a rooted path $r't't'_1s't_2$
	with $t'_1 \in e_1$.
    The inductive hypothesis allows to conclude that $e_1$ is composable with $e_2$.
	
	Now suppose that $E$ is non-empty. Note that $E$ is a finite partially ordered set: in fact, the number of events occurring in a path $s$ is finite, %
	and \(<\) is a strict partial ordering by \autoref{lem-partial-ordering}. Therefore, there exists a maximal element $e \in E$. We write $r = r't_1s_1ts_2t_2$ for some $t \in e$.
	We claim that $e \coind [u]$ for all $u \in s_2$. By polychotomy, either $e \coind [u]$ or $e < [u]$ holds. If we assume the latter, we obtain that $e_1 < [u]$ by transitivity of $<$. If $[u] < e_2$, we contradict $e_1 \ip e_2$; conversely, if $[u] \not< e_2$, we contradict the maximality of $e$.
	Therefore, we can repeatedly apply ~\autoref{lem-sideways} to move $t$ to the end of $s_2$; moreover, by polychotomy, $e \not< e_2$ yields $e \coind e_2$, so $t$ and $t_2$ can be swapped as well. In conclusion, we obtain a new path $r = r't_1s_1s_2't_2't'$, where $t_2' \in e_2$ and $t' \in e$, such that $s_1s_2'$ is shorter than $s$. We then apply the inductive hypothesis.
	
	($\Leftarrow$) %
	Let $t_1 \in e_1,t_2 \in e_2$ be composable forward transitions, and let $rt_1t_2$ be a rooted forward-only path.
	Clearly $e_1 = e_2$, $e_2 < e_1$ and $e_1 \Cf e_2$ are impossible, and 
	$e_1 \ncind e_2$.
	By polychotomy, $e_1 < e_2$.
	Suppose an event $e$ is such that $e_1 < e < e_2$.
	In the path $rt_1t_2$ there would have to be $t \in e$ preceding $t_2$ and succeeding $t_1$, which is impossible.
	Hence $e_1 \ip e_2$. %
\end{proof}
Lemmas~\ref{lem-CIRE ip comp} and \ref{lem-coind ind adj} combined %
give the following relationship between immediate predecessor and non-independence:
\begin{lemma}%
	\label{lem-IRE RPI ip comp}
	Let \(e_1\), \(e_2\) be forward events in a pre-reversible  LTSI that satisfies \IRE and \RPI.
	Then $e_1 \ip e_2$ iff $e_1$ is composable with $e_2$ and $e_1 \nind e_2$.
\end{lemma}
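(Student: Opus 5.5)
The plan is to combine \autoref{lem-CIRE ip comp} with \autoref{lem-coind ind adj}. Since \IRE implies \CIRE, the LTSI satisfies the hypotheses of \autoref{lem-CIRE ip comp}. That lemma gives: $e_1 \ip e_2$ iff $e_1$ is composable with $e_2$ and $e_1 \ncind e_2$. It therefore suffices to prove that, for composable (hence adjacent) events $e_1$, $e_2$, we have $e_1 \ncind e_2$ iff $e_1 \nind e_2$, or equivalently $e_1 \coind e_2$ iff $e_1 \ind e_2$.

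\textbf{Direction $e_1 \coind e_2 \Rightarrow e_1 \ind e_2$.} This is \autoref{lem-coind ind adj}~\itemref{item-adj-ind-coind-1}, which needs only pre-reversibility.

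\textbf{Direction $e_1 \ind e_2 \Rightarrow e_1 \coind e_2$.} This is \autoref{lem-coind ind adj}~\itemref{item-adj-ind-coind-2}. It applies because the events are adjacent, being composable, and the LTSI satisfies \IRE and \RPI. This direction is where \RPI is genuinely needed: composable transitions $t_1 t_2$ must first be turned into the coinitial pair $\rev{t_1}$, $t_2$, and then \autoref{lem-rpi-ev} (or \RPI again) recovers core independence of $e_1$ and $e_2$ themselves rather than of $\rev{e_1}$ and $e_2$.

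Taking contrapositives gives $e_1 \nind e_2$ iff $e_1 \ncind e_2$ whenever $e_1$, $e_2$ are composable. Both sides of the target equivalence require composability, so substituting into \autoref{lem-CIRE ip comp} completes the argument.

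The only point needing care is that \autoref{lem-coind ind adj}~\itemref{item-adj-ind-coind-2} yields core independence of the events as given. Its proof reverses transitions and events to reach a coinitial pair. For forward composable $t_1$, $t_2$, the coinitial pair is $\rev{t_1}$, $t_2$, which gives $\rev{e_1} \coind e_2$. Applying \autoref{lem-rpi-ev} to this gives $\rev{\rev{e_1}} \coind e_2$, that is, $e_1 \coind e_2$. With that observation, the lemma follows directly from the two earlier results and needs no further computation.
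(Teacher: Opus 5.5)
Your proposal is correct and matches the paper's proof. Both directions come from \autoref{lem-CIRE ip comp}, which applies because \IRE implies \CIRE, combined with the two items of \autoref{lem-coind ind adj} to exchange $e_1 \ncind e_2$ for $e_1 \nind e_2$ on composable, hence adjacent, events. Your extra use of \autoref{lem-rpi-ev} to get from $\rev{e_1} \coind e_2$ back to $e_1 \coind e_2$ only spells out a step the paper leaves implicit in the proof of \autoref{lem-coind ind adj}~\itemref{item-adj-ind-coind-2}.
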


\begin{proof}
	($\Rightarrow$)
	Assume $e_1 \ip e_2$.
	Using \CIRE (a consequence of \IRE), by \autoref{lem-CIRE ip comp} $e_1$ is composable with $e_2$ and $e_1 \ncind e_2$.
	So $e_1$ is adjacent with $e_2$.
	By \autoref{lem-coind ind adj} \itemref{item-adj-ind-coind-2} we have $e_1 \nind e_2$.
	
	($\Leftarrow$) %
	Assume $e_1$ is composable with $e_2$ and $e_1 \nind e_2$.
	Then $e_1$ is adjacent with $e_2$ and $e_1 \ncind e_2$ by \autoref{lem-coind ind adj} \itemref{item:adj ind coind-1}.
	We deduce $e_1 \ip e_2$ using \autoref{lem-CIRE ip comp}.
\end{proof}

	We note that in fact we can check $e_1 \ip e_2$ under a weaker condition in pre-reversible LTSIs:
\begin{lemma}\label{lem-ip-CIRE}
    Let \(e_1\), \(e_2\) be forward events in a pre-reversible  LTSI that satisfies \IRE. Then $e_1 \ip e_2$ iff $e_1$ is composable with $e_2$ and $\rev{e_1} \nind e_2$.
\end{lemma}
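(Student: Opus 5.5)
The plan is to reduce the statement to \autoref{lem-CIRE ip comp}, which applies because \IRE implies \CIRE. It then suffices to show that, for composable forward events $e_1$, $e_2$, we have $e_1 \ncind e_2$ iff $\rev{e_1} \nind e_2$. The intuition is that $\rev{e_1}$ and $e_2$ are coinitial whenever $e_1$, $e_2$ are composable. Indeed, if $t_1 \in e_1$ and $t_2 \in e_2$ are composable, then $\rev{t_1} \in \rev{e_1}$ and $t_2$ share the source $\tgtof{t_1}$. So independence of $\rev{e_1}$ and $e_2$ should amount to core independence of $\rev{e_1}$ and $e_2$, and hence of $e_1$ and $e_2$. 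This replaces the use of \RPI in \autoref{lem-IRE RPI ip comp}.

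For ($\Rightarrow$), assume $e_1 \ip e_2$. By \autoref{lem-CIRE ip comp}, $e_1$ is composable with $e_2$ and $e_1 \ncind e_2$. Suppose for a contradiction that $\rev{e_1} \ind e_2$, so that $u_1 \ind u_2$ for some $u_1 \in \rev{e_1}$ and $u_2 \in e_2$. Pick composable $t_1 \in e_1$, $t_2 \in e_2$. Since $\rev{t_1} \sqeqt u_1$, \IRE gives $\rev{t_1} \ind u_2$. By symmetry and \IRE again, using $t_2 \sqeqt u_2$, we get $\rev{t_1} \ind t_2$. These transitions are coinitial, so $\rev{e_1} \coind e_2$. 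Applying \autoref{lem-rpi-ev} then gives $e_1 \coind e_2$, since $\rev{\rev{e_1}} = e_1$. This contradicts $e_1 \ncind e_2$.

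For ($\Leftarrow$), assume $e_1$ is composable with $e_2$ and $\rev{e_1} \nind e_2$. To apply \autoref{lem-CIRE ip comp} we need $e_1 \ncind e_2$. If instead $e_1 \coind e_2$, then \autoref{lem-rpi-ev} gives $\rev{e_1} \coind e_2$. Hence there are coinitial $u_1 \in \rev{e_1}$ and $u_2 \in e_2$ with $u_1 \ind u_2$, so $\rev{e_1} \ind e_2$ in the sense of \autoref{def-ind sdep event}, a contradiction. This direction needs neither \IRE nor adjacency. It is essentially item~\itemref{item:adj ind coind-1} of \autoref{lem-coind ind adj} applied to $\rev{e_1}$, combined with \autoref{lem-rpi-ev}.

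The main subtlety is the ($\Rightarrow$) direction. There, the existential independence on events has to be transported to a coinitial pair $\rev{t_1}$, $t_2$, which needs \IRE applied on both sides together with symmetry of $\ind$. We must also check that \autoref{lem-rpi-ev} can be used in the backward-to-forward direction, which works because $\rev{\rev{e}} = e$ for events.
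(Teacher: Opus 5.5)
Your proposal is correct and takes essentially the same route as the paper: it reduces to \autoref{lem-CIRE ip comp} via \CIRE, then converts between $e_1 \ncind e_2$ and $\rev{e_1} \nind e_2$ using \autoref{lem-rpi-ev} applied to $\rev{e_1}$. For ($\Rightarrow$) it uses \IRE to transport independence to the coinitial pair $\rev{t_1}$, $t_2$, and for ($\Leftarrow$) it uses item (1) of \autoref{lem-coind ind adj}. Your proof-by-contradiction in ($\Rightarrow$) is just the contrapositive of the paper's chain of implications, and you helpfully make explicit that \autoref{lem-rpi-ev} is applied in the backward-to-forward direction, which the paper leaves implicit.
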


\begin{proof}
    ($\Rightarrow$)
    Assume $e_1 \ip e_2$.
    Using \CIRE (a consequence of \IRE), by \autoref{lem-CIRE ip comp} $e_1$ is composable with $e_2$ and $e_1 \ncind e_2$.
    Hence $\rev{e_1} \ncind e_2$ by~\autoref{lem-rpi-ev}. This implies $\rev{t_1} \nind t_2$ for all coinitial
    $\rev{t_1}\in \rev{e_1}, t_2\in e_2$. By \IRE we deduce $\rev{t_1'} \nind t_2'$ for all $\rev{t_1'}\in \rev{e_1}, t_2\in e_2$. Hence  $\rev{e_1} \nind e_2$ as required.
    
    ($\Leftarrow$) %
    Assume $e_1$ is composable with $e_2$ and $\rev{e_1} \nind e_2$.
    By \autoref{lem-coind ind adj}~\itemref{item:adj ind coind-1} we have $\rev{e_1} \ncind e_2$.
    Hence $e_1 \ncind e_2$ by~\autoref{lem-rpi-ev}.
    We deduce $e_1 \ip e_2$ using \autoref{lem-CIRE ip comp}.    
\end{proof}

Having studied the relationship between causality and non-independence for composable forward events,
we now turn our attention to arbitrary (not necessarily composable) forward events. 
Unlike in the composable case, \IRE and \RPI are not sufficient to ensure that, in pre-reversible LTSIs, causality and non-independence coincide.
Indeed, the following example shows that forward events can be causally related even though they are independent:

\begin{example}%
    \label{ex-ind-dep-ccsk-and-causal-1}
    Consider $t_1: P \r{f}[a] Q$, $t_2: Q \r{f}[b] R$ and $t_3: R \r{f}[c] S$, 
    with $t_1 \nind t_2$, $t_2\nind t_3$ and $t_1\ind t_3$; we let $\rev{t_1}\ind t_3$, $t_1\ind \rev{t_3}$ and $\rev{t_1}\ind \rev{t_3}$.
    This LTSI is pre-reversible and \IRE and \RPI hold. We have $[t_1]\ind [t_3]$, and also $[t_1]<[t_3]$ by 
    the definitions of events and $<$. 
    A variation of this example is re-used later in \autoref{def-ind-dep-ccsk-causal} and illustrated in \autoref{fig:example-1}.
\end{example}

Although causality between arbitrary forward events can overlap with independence,
we can nevertheless provide a method for determining causality that makes use of non-independence.
Its main ingredient is the following %
characterisation: %
in a pre-reversible LTSI, causality coincides with the transitive closure of the immediate predecessor relation, denoted by \(\ip^+\).
Equivalently, two forward events $e_1, e_2$ are causally related precisely when there is a chain of forward events $e_1 \ip e_1'\ip \cdots \ip e_n' \ip e_2$ such that each consecutive pair is in the immediate predecessor relation.

\begin{proposition}%
	\label{prop-chain ip}
	Let $e_1$, $e_2$ be forward events in a pre-reversible LTSI.
	Then $e_1 < e_2$ iff $e_1 \ip^+ e_2$.
\end{proposition}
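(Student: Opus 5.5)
The direction ($\Leftarrow$) is immediate. By \autoref{def-immed pred}, each step $e \ip e'$ gives $e < e'$. By \autoref{lem-partial-ordering}, $\leq$ is a partial ordering, hence $<$ is transitive, so a chain $e_1 \ip e_1' \ip \cdots \ip e_n' \ip e_2$ yields $e_1 < e_2$.

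For ($\Rightarrow$), assume $e_1 < e_2$ and show that $e_1 \ip^+ e_2$. The key is to bound the set of events lying strictly between $e_1$ and $e_2$, so that the problem reduces to a finite partial order.

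First, fix a rooted forward-only path $r$ ending with some $t_2 \in e_2$. Such a path exists by applying \PL (\autoref{lem-sp-bti-imply-pl}) to any rooted path reaching a transition of $e_2$, and \WF guarantees that rooted sources are reachable.

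Next, show that every event $e$ with $e < e_2$ satisfies $\cte(r,e) > 0$, by definition of $\leq$ applied to the rooted path $r$. Since $r$ is finite, the set $B = \{ e \mid e_1 \leq e \leq e_2 \}$ is finite: every $e \in B \setminus \{e_2\}$ has a transition occurring in $r$. Moreover, $B$ is partially ordered by $\leq$ (\autoref{lem-partial-ordering}).

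The proof then concludes by a standard finite-poset argument, by induction on $|B|$ (or on the number of events strictly between $e_1$ and $e_2$):
\begin{itemize}
\item If no $e$ satisfies $e_1 < e < e_2$, then $e_1 \ip e_2$ by \autoref{def-immed pred}.
\item Otherwise pick such an $e$. The intervals $[e_1,e]$ and $[e,e_2]$ are strictly smaller subsets of $B$: the first excludes $e_2$ and the second excludes $e_1$, by antisymmetry. The inductive hypothesis gives $e_1 \ip^+ e$ and $e \ip^+ e_2$, and concatenating the two chains yields $e_1 \ip^+ e_2$.
\end{itemize}

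The main obstacle is the finiteness claim. It requires care that distinct events correspond to distinct transitions occurring in $r$, which holds because each transition lies in exactly one event class. It also requires that $\cte(r,e) > 0$ forces some transition of $e$ to actually appear in $r$, which is immediate from \autoref{def-count-events}. Without finiteness, an infinite descending sequence of intermediate events could in principle prevent immediate predecessors from existing. Note also that composability from \autoref{lem-CIRE ip comp} is not needed, which is why no \CIRE assumption appears in the statement.
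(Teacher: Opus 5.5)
Your proof is correct and follows essentially the paper's argument: the events below $e_2$ are finitely many because they all occur in a rooted forward-only path ending in $e_2$, and you then induct on the size of the interval between $e_1$ and $e_2$, with the converse direction by transitivity of $<$. The only cosmetic difference is in the inductive step: the paper takes a minimal element $e_1'$ of $\{e \mid e_1 < e < e_2\}$, so that $e_1 \ip e_1'$ holds directly, and recurses on $(e_1', e_2)$, whereas you split at an arbitrary intermediate event and recurse on both halves.
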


\begin{proof}
    $(\Rightarrow)$
     Given $e_1 < e_2$, we aim to exhibit a chain of forward events $e_1 \ip e_1'\ip \cdots \ip e_n' \ip e_2$.
    We note that $\{e : e < e_2\}$ is finite: if we take any rooted forward-only path $r$ ending with $t_2 \in e_2$ then any $e <  e_2$ must belong to $r$. This means that by the general theory of finite partial orderings the required chain exists. 
    We show it by induction on the size of $E(e_1,e_2)=\{e \mid e_1 < e < e_2\}$. If $E(e_1,e_2)$ is non-empty then we consider a minimal element $e_1'\in E(e_1,e_2)$. Clearly we have $e_1 \ip e_1'$, and then we consider a smaller $E(e_1',e_2)$. By the inductive hypothesis there is a chain of events between $e_1'$ and $e_2$, where the neighbouring events are related by $\ip$. Hence, the required chain between $e_1$ and $e_2$. 
    
    $(\Leftarrow)$  By the definition of ${\ip}$, we have that $e_1 <^+ e_2$. Then we conclude by transitivity of $<$ (\autoref{lem-partial-ordering}).  
\end{proof}

\autoref{prop-chain ip} gives us a characterisation of causality in terms of 
chains of pairs of events, where each pair is an event and its immediate predecessor. 
To identify such pairs, we have two local methods based on non-independence:
if \IRE and \RPI hold, then events need to be composable and non-independent (\autoref{lem-IRE RPI ip comp}). 
If only \CIRE holds, then events need to be composable and the reverse of the first event must be non-independent with the second
(\autoref{lem-ip-CIRE}).

We shall see in \autoref{sec:true-conc-rel} that the LTSIs for \pccsk and \ccsk are pre-reversible and that also \IRE and \RPI hold.
Consequently, we can use \autoref{lem-IRE RPI ip comp} and \autoref{prop-chain ip} to check causality between arbitrary events.
In practice, however, one is often interested in checking the causality between transitions, rather than events.
We therefore conclude this section by %
proposing a method for checking causality for transitions, that we further illustrate with \autoref{fig:chain}:

\begin{proposition}\label{prop:leq chain}
	Let $t$, $u$ be forward transitions in an LTSI satisfying \IRE and \RPI.
	Then $t < u$ iff there are forward transitions $t = t_1,\ldots,t_k$ and $t'_1,\ldots,t'_k = u$ (for some $k > 1$) such that
	\begin{multicols}{2}
    \noindent
	\begin{itemize}[label={}]
		\item 
		$t_i \eveqt t'_i$ ($i = 1,\ldots,k$);
		\item
		$t'_i$, $t_{i+1}$ are composable and $t'_i \nind t_{i+1}$  ($i = 1,\ldots,k-1$).
	\end{itemize}
    \end{multicols}
\end{proposition}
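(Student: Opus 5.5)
The plan is to reduce the statement to \autoref{prop-chain ip} and \autoref{lem-IRE RPI ip comp}, passing between transitions and events. Throughout I read the LTSI as pre-reversible, as in the surrounding results, since \autoref{prop-chain ip} and the definition of $<$ on events need it. Recall that $t < u$ means $[t] < [u]$.

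For ($\Rightarrow$), assume $[t] < [u]$. By \autoref{prop-chain ip} there is a chain of forward events
\[
[t] = e_1 \ip e_2 \ip \cdots \ip e_k = [u],
\]
with $k > 1$ because $e_1 \neq e_k$.

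For each $i < k$, \autoref{lem-IRE RPI ip comp} (which uses \IRE and \RPI) tells us that $e_i$ is composable with $e_{i+1}$ and that $e_i \nind e_{i+1}$. Composability of events provides composable witnesses, so for each $i$ I choose $t'_i \in e_i$ and $t_{i+1} \in e_{i+1}$ composable. Non-independence of events means that no pair of their transitions is independent, so in particular $t'_i \nind t_{i+1}$. I then set $t_1 = t$ and $t'_k = u$.

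The only bookkeeping point is that $t_i$ and $t'_i$ come from different choices. They nevertheless both lie in $e_i$, so $t_i \eveqt t'_i$. This also holds at the endpoints: $t_1 = t \in e_1$, and $t'_k = u \in e_k$.

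For ($\Leftarrow$), set $e_i = [t_i] = [t'_i]$. For each $i < k$, the transitions $t'_i \in e_i$ and $t_{i+1} \in e_{i+1}$ are composable, so $e_i$ is composable with $e_{i+1}$.

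To get $e_i \nind e_{i+1}$, I argue by contradiction using \IRE together with symmetry of $\ind$. Suppose $a \ind b$ for some $a \in e_i$ and $b \in e_{i+1}$. Since $t'_i \eveqt a$, \IRE gives $t'_i \ind b$. Then $b \ind t'_i$ by symmetry, and since $t_{i+1} \eveqt b$, \IRE gives $t_{i+1} \ind t'_i$. This contradicts $t'_i \nind t_{i+1}$.

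Hence \autoref{lem-IRE RPI ip comp} yields $e_i \ip e_{i+1}$ for every $i < k$. Therefore $[t] \ip^+ [u]$, since $k > 1$, and \autoref{prop-chain ip} gives $t < u$.

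There is no real obstacle beyond two points. The first is stating the standing pre-reversibility hypothesis explicitly. The second is the \IRE-based lift of non-independence from transitions to events: \IRE is used once on each side, which is why symmetry of $\ind$ is needed in the middle of the argument.
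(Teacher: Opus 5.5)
Your proof is correct and takes the same route as the paper, whose proof is just the reduction to \autoref{prop-chain ip} and \autoref{lem-IRE RPI ip comp}. Your version spells out the passage between transitions and events: choosing composable witnesses in each direction, and using \IRE together with symmetry of $\ind$ to lift $t'_i \nind t_{i+1}$ to $e_i \nind e_{i+1}$. You are also right to make pre-reversibility an explicit assumption, since both cited results need it.
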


\begin{proof}
By \autoref{prop-chain ip} and \autoref{lem-IRE RPI ip comp}.
\end{proof}

\begin{figure}
	\input{figures/chain.tex}
	\caption{Transitions in \autoref{prop:leq chain} when $k = 3$.
	The curved lines are (possibly empty) paths joining the targets of transitions in the same event using event label equivalence (\autoref{def-event-label-equivalence}, \autoref{prop:eveqt evleqt}).}\label{fig:chain}
		\Description[short description]{long description}%
\end{figure}

\subsection{Conflict}%
\label{subsec:conflict}

In this section we focus on conflict, aiming to characterise it in terms of non-independence. The first ingredient required for this characterisation is the notion of initial conflict, which relates coinitial forward transitions that are not independent. We begin by determining the conditions under which initial conflict does not overlap with core independence on coinitial transitions. We then consider coinitial events and establish when initial conflict coincides with conflict. Finally, we generalise to arbitrary connected forward events: we define a notion of global conflict in terms of initial conflict and causality, and we prove that it coincides with conflict. This will provide a method for checking conflict of connected transitions based on non-independence.

We define a relation of \emph{initial conflict} between coinitial transitions, following the approach taken in occurrence nets~\cite{NPW81} and (reversible) Petri nets~\cite{Rei85,MMU20,Davalos2026}: %
  
\begin{definition}[Initial conflict on transitions]%
	\label{def-initial conf}
	Let \(t_1\), \(t_2\) be distinct forward transitions in pre-reversible LTSI.  
	We say that \(t_1\), \(t_2\) are in \emph{initial conflict}, denoted by $t_1 \Cfi t_2$, if \(t_1\), \(t_2\) are coinitial and $t_1\nind t_2$.
\end{definition}

Note that we explicitly require \(t_1\) and \(t_2\) to be distinct: otherwise, since non-independence is reflexive, the relation would also include pairs of equal transitions and therefore could not characterise conflict.
The following example shows that, in pre-reversible LTSIs where \CIRE does not hold, initial conflict may overlap with independence:

\begin{example}%
    \label{ex-ci coincides with Cfi}
    Recall the LTSI from \autoref{ex-resolvable conflict} shown in \autoref{fig:resolvable conflict}, where \CIRE does not hold. 
    Events $b$ and $c$ are core independent due to the rightmost square, but they are not concurrent because the leftmost coinitial $b, c$ transitions are not independent. This means that the transitions are both in \emph{initial conflict} and core independent, which is not desirable. We note that after the leftmost transition $a$ takes place, initial conflict between $b$ and $c$ is resolved 
    by the commuting square for $b$ and $c$. 
    This pattern of behaviour is traditionally called \emph{resolvable conflict}~\cite{Glabbeek2009}.
\end{example}

If a pre-reversible LTSI satisfies \CIRE, then initial conflict does not overlap with core independence on transitions:
\begin{proposition}\label{lem-coind and Cfi}
Let \(t_1\), \(t_2\) be distinct coinitial forward transitions in a pre-reversible LTSI that satisfies \CIRE. 
Then $t_1\coind t_2$ iff $t_1 \ncfi t_2$.
\end{proposition}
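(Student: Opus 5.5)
Since \(t_1\) and \(t_2\) are distinct coinitial forward transitions, \(t_1 \ncfi t_2\) holds exactly when \(t_1 \ind t_2\), by \autoref{def-initial conf}. So the statement reduces to showing that, for distinct coinitial forward transitions, \(t_1 \coind t_2\) iff \(t_1 \ind t_2\). I will prove the two directions separately.

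For (\(\Leftarrow\)), assume \(t_1 \ind t_2\). Then \(t_1\) and \(t_2\) are themselves coinitial witnesses with \(t_1 \in [t_1]\), \(t_2 \in [t_2]\) and \(t_1 \ind t_2\). Hence \([t_1] \coind [t_2]\) directly by \autoref{def-event relations}, that is, \(t_1 \coind t_2\). This direction needs neither \CIRE nor distinctness.

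For (\(\Rightarrow\)), assume \(t_1 \coind t_2\). By definition there are coinitial \(t'_1 \in [t_1]\) and \(t'_2 \in [t_2]\) with \(t'_1 \ind t'_2\). We then have \(t_1 \eveqt t'_1 \ind t'_2 \eveqt t_2\), where both \(t'_1, t'_2\) and \(t_1, t_2\) are coinitial pairs. This is exactly the hypothesis of \CIRE, which yields \(t_1 \ind t_2\), i.e.\ \(t_1 \ncfi t_2\). Equivalently, one can invoke \autoref{prop-IRE co coind} to upgrade \(t_1 \coind t_2\) to \(t_1 \co t_2\), whose universal clause applied to the coinitial pair \(t_1, t_2\) gives \(t_1 \ind t_2\).

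There is no real obstacle here. The only point needing care is the bookkeeping of the complement \(\ncfi\): it must be read relative to distinct coinitial forward pairs, so that its negation is exactly non-independence. \autoref{ex-ci coincides with Cfi} shows that \CIRE is genuinely needed for (\(\Rightarrow\)).
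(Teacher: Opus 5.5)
Your proof is correct and follows the paper's argument. The ($\Leftarrow$) direction is identical. For ($\Rightarrow$), the paper goes through \autoref{prop-IRE co coind}, upgrading $\coind$ to $\co$ and using its universal clause, which is exactly your stated alternative; your direct application of \CIRE simply inlines that proposition's proof.
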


\begin{proof}
($\Rightarrow$)
Since $t_1\coind t_2$ iff $t_1\co t_2$ (\autoref{prop-IRE co coind} which requires \CIRE), $t_1 \ncfi t_2$ follows 
by the definition of $\co$.

($\Leftarrow$)
Assume $t_1 \ncfi t_2$. Since \(t_1\), \(t_2\) are coinitial we have $t_1\ind t_2$ by \autoref{def-initial conf}.
Hence $[t_1]\ci [t_2]$ and thus $t_1\ci t_2$.
\end{proof}

In order to be able to characterise conflict via initial conflict we need another property to hold (in addition to \CIRE): 

\begin{definition}[Event determinism (\ED)~\protect{\cite{Sassone1996,PU07a}}]%
	\label{def-ED}
If  $t, u$ are coinitial forward transitions and $t\sim u$ then $t, u$ are cofinal.  
\end{definition}

If \ED fails then we can have transitions in initial conflict that are not in conflict:

\begin{figure}%
	\input{figures/example-non-ED}
	\caption{
		LTSI for \autoref{ex-non ED}: without \ED, transitions can be in initial conflict but not in conflict.
	}\label{fig:non ED}
	\Description[short description]{long description}%
\end{figure}

\begin{example}\label{ex-non ED}
    Consider the LTSI in \autoref{fig:non ED} (adapted from \cite[Figure~1]{PU07a}).
    Independence is given by closing under \BTI and \PCI, which produces, among others, independence between the leftmost transitions $a$, $b$. 
    Clearly \WF and \SP hold; hence the LTSI is pre-reversible. Also \CIRE holds.
    There are four events, labelled $a$, $b$, $c$ and $d$. Events $a$ and $b$ are independent, thus core independent. 
    The event $c$ is caused by both $a$ and $b$, and correspondingly for $d$. 
    
    There are two different coinitial transitions labelled $a$, namely $t: P \r{f}[a] Q$ and $u:P\r{f}[a] R$.
    They belong to the event $a$, so $t\sim u$, but they are not cofinal; hence \ED fails.
    We also have $t\nind u$; hence $t \Cfi u$. 
    However, $t \Cf u$ fails since $[t]=[u]$. 
\end{example}

We now lift the initial conflict relation from transitions to events.
We observed with \autoref{ex-ci coincides with Cfi} that \CIRE is necessary to guarantee that if some coinitial forward transitions are non-independent, then all coinitial forward transitions in the same events are non-independent; moreover, assuming \CIRE allows us
to define initial conflict on events via an existential property.  \ED is needed to ensure that initial conflict on events is irreflexive (\autoref{ex-non ED}).

\begin{definition}[Initial conflict on events]
	\label{def-conf dep}
	Let \(e_1\), \(e_2\) be %
	forward events in a pre-reversible LTSI that satisfies \CIRE and \ED.
	The events \(e_1\), \(e_2\) are in \emph{initial conflict}, denoted by $e_1\Cfi e_2$, 
	iff there are $t_1 \in e_1, t_2\in e_2$ \st $t_1 \Cfi t_2$.
\end{definition}

We now show that $\Cfi$ and $\Cf$ coincide on coinitial forward events in pre-reversible LTSIs that satisfy \CIRE and \ED:  

\begin{lemma}%
	\label{lem-Cfi iff conf}
	Let \(e_1\), \(e_2\) be coinitial  forward events in a pre-reversible LTSI with \CIRE and \ED. Then $e_1 \Cfi e_2$ iff
	$e_1 \Cf e_2$.
\end{lemma}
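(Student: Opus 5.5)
We prove the two directions separately, using polychotomy (\autoref{prop-poly}) as the main tool, together with \autoref{lem-coind and Cfi} and \ED.

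For ($\Rightarrow$), assume $e_1 \Cfi e_2$. Then there are distinct coinitial forward $t_1 \in e_1$, $t_2 \in e_2$ with $t_1 \nind t_2$. We first exclude $e_1 = e_2$. If the events were equal, then $t_1 \eveqt t_2$, and \ED would make $t_1$, $t_2$ cofinal. Being coinitial and cofinal forward transitions, they would be equal by \BLD applied to $\rev{t_1}$, $\rev{t_2}$, since these are coinitial backward transitions with the same label. This contradicts distinctness.

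Next we exclude $e_1 \coind e_2$. Suppose it held. Since $t_1$, $t_2$ are coinitial, \CIRE would give $t_1 \ind t_2$; equivalently, \autoref{lem-coind and Cfi} gives $t_1 \ncfi t_2$. Either way this is a contradiction.

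Then we exclude $e_1 < e_2$, reusing the argument of \autoref{lem-adj non-ind vs ccc}~\itemref{lem-adj non-ind vs ccc one}. Take a rooted forward-only path $r t_2$, which exists by \WF. Since $e_1 < e_2$, the path $r$ must contain some $t_1' \in e_1$. Then $r t_1$ is a rooted path with $\cte(rt_1, e_1) = 2$, contradicting \NRE (\autoref{lem-NRE}). The case $e_2 < e_1$ is symmetric. Polychotomy now leaves only $e_1 \Cf e_2$.

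For ($\Leftarrow$), assume $e_1 \Cf e_2$, and let $t_1 \in e_1$, $t_2 \in e_2$ be coinitial, as the hypothesis of the lemma provides. They are distinct, since otherwise $e_1 = e_2$, and polychotomy forbids $e_1 = e_2$ together with $e_1 \Cf e_2$. Polychotomy also excludes $e_1 \coind e_2$. Since $t_1$, $t_2$ are coinitial, $t_1 \ind t_2$ would give $e_1 \coind e_2$ directly, so $t_1 \nind t_2$. Hence $t_1 \Cfi t_2$, and therefore $e_1 \Cfi e_2$.

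The main obstacle is the ($\Rightarrow$) case $e_1 = e_2$. Definition~\ref{def-conf dep} only gives some witnesses $t_1$, $t_2$, so distinctness must be transferred to the events, and this needs \ED together with \BLD. Everything else is a direct use of polychotomy and \NRE. The existence of a rooted forward-only path reaching the common source also deserves care; it follows from \WF together with the parabolic lemma \PL.
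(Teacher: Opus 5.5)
Your proof is correct and follows the paper's route. In ($\Rightarrow$), polychotomy leaves only conflict once \CIRE rules out core independence, an \NRE argument rules out causality, and \ED rules out equality; the converse is read off from polychotomy. You are slightly more careful than the paper, which only says \enquote{clearly} when excluding $<$ and passes directly from \ED to $t_1 = t_2$, whereas you spell out the \NRE argument and the \BLD step from cofinality to equality (also, \WF alone already yields the rooted forward-only path, so \PL is not needed).
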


\begin{proof}
	
	($\Rightarrow$) Assume $e_1 \Cfi e_2$. So $t_1 \nind t_2$ for some different coinitial forward $t_1\in e_1, t_2\in e_2$. By \CIRE we have $e_1 \ncind e_2$. 
Clearly $e_1 \not< e_2$ and $e_2 \not< e_1$. So by polychotomy	 $e_1 = e_2$ or $e_1 \Cf e_2$.
Suppose $e_1 = e_2$.  Then $t_1 \eveqt t_2$, so that $t_1 = t_2$ by \ED, which is a contradiction.  Hence $e_1 \Cf e_2$.

	($\Leftarrow$) Polychotomy gives $e_1\neq e_2$ and $e_1 \ncind e_2$.  Take coinitial $t_1\in e_1$ and $t_2\in e_2$.
We have that $t_1, t_2$ are different (since $e_1\neq e_2$) and $t_1 \nind t_2$ by \CIRE. This gives $t_1 \Cfi t_2$; hence $e_1 \Cfi e_2$. 	
\end{proof}
A consequence of this lemma is that, assuming \CIRE and \ED, coinitial forward events are in conflict precisely when they are not independent.

We now turn our attention from coinitial to arbitrary connected forward events.
We first observe that, unlike in the coinitial case, \CIRE and \ED are not sufficient to ensure that conflict and independence do not overlap:

\begin{example}%
	\label{ex-ind-dep-ccsk-and-causal-2}
	Consider $t_1: P \r{f}[\tau] Q$, $t_2: P\r{f}[\tau] R$ and $t_3: R \r{f}[b] S$, as shown in \autoref{fig:example-2}. 
	Let $t_1 \nind t_2$, $t_2\nind t_3$ and $t_1\ind t_3$. 
	This LTSI is pre-reversible, and both \ED and \IRE (hence \CIRE) hold.
	Clearly $[t_1]\Cf [t_3]$ and $[t_1]\ind [t_3]$.
\end{example}

Nevertheless, we can provide a method to determine conflict between arbitrary events using non-independence.
To this end, we introduce the following new notion of conflict on events:

\begin{definition}[Global conflict on events]
	\label{def-gl conf dep}
	Let \(e_1\), \(e_2\) be %
	forward events in a pre-reversible LTSI that satisfies \CIRE and \ED.
	The events \(e_1\), \(e_2\) are in \emph{global conflict}, denoted by $e_1 \Cfg e_2$,  
	iff there are forward events $e_3$, $e_4$ such that $e_3 \Cfi e_4$,  $e_3 \leq e_1$ and $e_4 \leq e_2$.
\end{definition}

We aim to show that conflict and global conflict coincide on connected forward events (\autoref{thm-conf is conf dep}). For this purpose,
we state and prove the following two lemmas:
\begin{lemma}[Conflict inheritance]\label{lem-conf inh}
Let $e_1$, $e_2$ and $e_3$ be connected forward events in a pre-reversible LTSI. If $e_1 \Cf e_2$ and $e_2 \leq e_3$ then 
$e_1 \Cf e_3$.
\end{lemma}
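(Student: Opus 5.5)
Conflict inheritance follows almost directly from the definitions of $\Cf$ and $\leq$ (\autoref{def-event relations}), so I would argue by contradiction. Assume $e_1 \Cf e_2$ and $e_2 \leq e_3$, and suppose $e_1 \Cf e_3$ fails. By the definition of conflict, there is then a rooted path $r$ with $\cte(r,e_1) > 0$ and $\cte(r,e_3) > 0$.

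Since $e_2 \leq e_3$, every rooted path $r$ with $\cte(r,e_3) > 0$ also satisfies $\cte(r,e_2) > 0$. Applying this to the same $r$ gives a single rooted path with $\cte(r,e_1) > 0$ and $\cte(r,e_2) > 0$. This contradicts $e_1 \Cf e_2$, so $e_1 \Cf e_3$.

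Two points need checking, neither of which I expect to be a real obstacle:
\begin{itemize}
\item $\leq$ is reflexive, so the case $e_2 = e_3$ is immediate.
\item The argument uses the same rooted path $r$ for both hypotheses. This is legitimate because both $\Cf$ and $\leq$ quantify over arbitrary rooted paths, not only forward-only ones.
\end{itemize}
Pre-reversibility and connectedness appear to play no role beyond fixing the setting in which events and these relations are defined (and in which $\leq$ is a partial order by \autoref{lem-partial-ordering}). The only care required is in reading the universal quantifier in the definition of $\leq$ correctly, namely instantiating it at the particular path $r$ obtained from the failure of $e_1 \Cf e_3$.
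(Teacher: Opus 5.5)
Your proof is correct and is essentially the paper's argument: take a rooted path witnessing that $e_1$ and $e_3$ are not in conflict, use $e_2 \leq e_3$ to find $e_2$ on that same path, and contradict $e_1 \Cf e_2$. The only difference is that the paper phrases this with rooted forward-only paths and occurrences of transitions, whereas you work with $\cte(r,\cdot)$ on arbitrary rooted paths exactly as in the definitions. Your version is slightly more direct and makes clear that neither the axioms nor connectedness are actually used.
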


\begin{proof}
Every rooted forward-only path that has $t_3\in e_3$ has also some $t_2\in e_2$. Such paths have no
occurrences of $t_1\in e_1$ since $e_1 \Cf e_2$. Consider any rooted forward only
path $r$ that contains some $t_1\in e_1$. Assume for contradiction that $t_3'\in e_3$ occurs in $r$.
Since $e_2 \leq e_3$, $r$ also contains some $t_2'\in e_2$, which contradicts $e_1 \Cf e_2$.
Hence $e_1 \Cf e_3$. 
\end{proof}

\begin{lemma}\label{lem-ED paths}
    Let $r_1$, $r_2$ be forward-only coinitial paths in a pre-reversible LTSI satisfying \CIRE and \ED.
    If $r_1$, $r_2$ have the same events, then $r_1, r_2$ are cofinal.
\end{lemma}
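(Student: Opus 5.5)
We argue by induction on $\len{r_1}$. Note first that ``same events'' for forward-only paths means the same set of events; by \NRE (\autoref{lem-NRE}) each event occurs at most once in a forward-only path that can be prefixed by a rooted path. \NRE is stated for rooted paths, so we first prepend a rooted forward-only path reaching $\srcof{r_1}$. Such a path exists by \PL together with \WF: any path from a rooted process can be made parabolic, and since a rooted source admits no backward step, the parabolic path is forward-only. Hence $\len{r_1} = \len{r_2}$, and each event occurs exactly once in each path.

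If $\len{r_1}=0$ the claim is trivial. Otherwise write $r_1 = t\, r_1'$ with $t \in e$. The event $e$ also occurs in $r_2$, so $r_2 = s\, u\, s'$ with $u \in e$. We show that $u$ can be moved to the front of $r_2$.

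For each transition $v$ in $s$, the events $[v]$ and $e$ are distinct, and both occur in a common rooted forward-only path, so they are not in conflict. Since $v$ precedes $u$ in that path, $e \not< [v]$ by \NRE, as in the proof of \autoref{lem-cau iff ip}. Conversely, $[v] \not< e$: the path $r_1$, prefixed by the rooted path, contains $e$ at its first position after the prefix, while $[v]$ appears in $r_1$ only later and, by \NRE, not in the prefix (the concatenation is still a rooted forward-only path). By polychotomy (\autoref{prop-poly}), $[v] \coind e$.

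We then apply the sideways square \autoref{lem-sideways} (which uses \CIRE) repeatedly to swap $u$ leftwards past each transition of $s$. This yields a path $u'\, s''\, s'$ that is coinitial and cofinal with $r_2$, has the same events, and satisfies $u' \in e$. Now $t$ and $u'$ are coinitial forward transitions with $t \eveqt u'$, so by \ED they are cofinal. Consequently $r_1'$ and $s''s'$ are coinitial forward-only paths with the same events, and the induction hypothesis makes them cofinal. Therefore $r_1$ and $r_2$ are cofinal.

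The main obstacle is justifying $[v] \coind e$ for the swaps, in particular ensuring that rootedness is available for \NRE and that the sideways swaps preserve the event content. This needs the prefix argument given above. The swap also relies on the fact that \autoref{lem-sideways} produces transitions in the same events, which follows from \PCI in the square it builds.
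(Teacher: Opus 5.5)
Your proposal is correct and follows the same route as the paper: induction on the length, polychotomy to get core independence between $e$ and the events preceding $u$ in $r_2$, Sideways Square (\autoref{lem-sideways}) to bring a transition of $e$ to the front, then \ED and the induction hypothesis. Your rooted-prefix argument for applying \NRE, and your separate exclusion of equality, causality in both directions and conflict, spell out details that the paper's proof leaves implicit.
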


\begin{proof}
Since $r_1, r_2$ have the same events and are forward-only, they have the same length by \NRE. 
We use induction on the length of $r_1$. 
Let $r_1=t_1r_1'$ and $r_2=r_2't'_1r_2''$, where $t_1'\in [t_1]$. The event $[t_1]$ is core independent with all events in $r_2'$ by polychotomy. Using Sideways Square (\autoref{lem-sideways}), which requires \CIRE, we move $t'_1$ to the start of $r_2$, 
giving $t_1''\in [t_1]$. 
Since $t_1$, $t_1''$ are coinitial, \ED implies they are cofinal. The result follows by the inductive hypothesis on $r_1'$.
\end{proof}

\begin{theorem}%
	\label{thm-conf is conf dep}
	Let $e_1$, $e_2$ be connected forward events in a pre-reversible LTSI satisfying \CIRE and \ED.
	Then $e_1\Cf e_2$ iff $e_1\Cfg e_2$. 
\end{theorem}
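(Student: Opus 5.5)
We prove the two directions separately. The ($\Leftarrow$) direction is essentially routine; the real work is in ($\Rightarrow$).

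\emph{($\Leftarrow$).} Assume $e_1 \Cfg e_2$, so there are forward events $e_3 \Cfi e_4$ with $e_3 \leq e_1$ and $e_4 \leq e_2$.
\begin{enumerate}
\item By \autoref{def-conf dep} there are coinitial $t_3 \in e_3$ and $t_4 \in e_4$ with $t_3 \Cfi t_4$. Hence $e_3$, $e_4$ are coinitial, and \autoref{lem-Cfi iff conf} gives $e_3 \Cf e_4$.
\item Applying conflict inheritance (\autoref{lem-conf inh}) with $e_4 \leq e_2$ gives $e_3 \Cf e_2$.
\item Using symmetry of $\Cf$ and inheriting again along $e_3 \leq e_1$ gives $e_1 \Cf e_2$.
\end{enumerate}
The connectedness hypotheses in \autoref{lem-conf inh} are harmless here: its proof never uses them.

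\emph{($\Rightarrow$): setting up.} Assume $e_1 \Cf e_2$ with $e_1$, $e_2$ connected. Connectedness gives transitions $t_1 \in e_1$ and $t_2 \in e_2$ linked by a path. Prefixing a rooted path to $\srcof{t_1}$ (which exists by \WF) and applying \PL, I would obtain a common process $P$ and two rooted forward-only paths $r_1$, $r_2$ from $P$, ending in $t_1$ and $t_2$ respectively. Concretely, $P$ can be taken as the tip of the parabola; since the LTS is reachable by forward-only paths, I would take $P$ rooted-reachable.

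\emph{($\Rightarrow$): the minimal-pair argument.} Among all pairs of rooted-reachable forward-only coinitial paths $s_1, s_2$ from a common process $Q$ such that $s_1$ contains an event $\leq e_1$ and $s_2$ contains an event $\leq e_2$, choose one minimising $|s_1|+|s_2|$. Let $s_1 = u_1 s_1'$ and $s_2 = u_2 s_2'$. The aim is to show $[u_1] \leq e_1$, $[u_2] \leq e_2$, and $u_1 \Cfi u_2$, which yields $e_1 \Cfg e_2$. The argument proceeds by cases.
\begin{itemize}
\item If $u_1 \eveqt u_2$, then \ED makes them cofinal, so both tails start at the same process. This gives a shorter pair, contradicting minimality.
\item If $u_1 \ind u_2$, then \SP together with the sideways-square manipulation (\autoref{lem-sideways}, using \CIRE) lets us push one step past the other. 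Roughly, we move from $Q$ along $u_1$ and transport $s_2$ across the square, which requires showing $[u_1]$ is core independent with the relevant events. This again shortens the pair.
\item Otherwise $u_1 \Cfi u_2$ holds, since $u_1$, $u_2$ are coinitial, distinct and non-independent.
\end{itemize}
It remains to get $[u_1] \leq e_1$. I would choose $s_1$ to be a minimal sub-path ending at the event $\leq e_1$, so that each of its events is below that target. I would also choose $Q$ as far along as possible, i.e.\ after taking the common events of $r_1$ and $r_2$ first.

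\emph{Cleaner alternative and main obstacle.} A cleaner route is to take $E_i = \{e \mid e \leq e_i\}$, which is finite by the argument in \autoref{prop-chain ip}. First build a rooted path executing $E_1 \cap E_2$. Then show that the minimal events of $E_1 \setminus E_2$ and $E_2 \setminus E_1$ are enabled there, using polychotomy and sideways squares to reorder (as in \autoref{lem-ED paths}). Finally, pick minimal $e_3$ and $e_4$ that are coinitial, and argue they cannot be $\coind$: otherwise, by a square, both prefixes could be merged, and by induction $e_1$ and $e_2$ would co-occur in a rooted path, contradicting $e_1 \Cf e_2$. \autoref{lem-Cfi iff conf} then gives $e_3 \Cfi e_4$.

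The main obstacle is this merging induction. We must show that if every pair of coinitial minimal events of the two cause-sets is core independent, then the union of $E_1$ and $E_2$ is executable as one rooted forward path. I would try induction on $|E_1 \cup E_2|$, repeatedly extending the common prefix by one minimal event. \CIRE is used to turn $\coind$ into actual squares at the current state, and \ED together with \autoref{lem-ED paths} ensures that different orderings reach the same state.
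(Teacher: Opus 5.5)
Your ($\Leftarrow$) direction is correct and is exactly the paper's argument. The ($\Rightarrow$) direction has a genuine gap: neither of your selection principles picks the right pair of events, and the claim you flag as the main obstacle is false as stated.

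\textbf{A counterexample to both strategies.} Take a commuting square $u_1\colon Q \r{f}[p] A$, $u_2\colon Q \r{f}[q] B$, closed at some $U$. Add two transitions $v\colon A \r{f}[r] V$ and $w\colon B \r{f}[s] W$, and generate independence by \BTI and \PCI. This LTSI is pre-reversible and satisfies \CIRE and \ED. The events $e_1=[v]$ and $e_2=[w]$ are connected and in conflict, with $E_1=\{[u_1],[v]\}$ and $E_2=\{[u_2],[w]\}$.
\begin{itemize}
\item \emph{Minimal-pair argument.} If you require $s_i$ to contain $e_i$, the only admissible pair is $(u_1v,\,u_2w)$ from $Q$. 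Its first steps are independent, yet it cannot be shortened: $[u_1]\Cf[w]$ blocks transporting $u_2w$ across $u_1$, and symmetrically. With your literal condition (``contains an event below $e_i$''), $(u_1,u_2)$ is already a minimal pair of length $2$ with independent first steps. In that version the minimisation never uses $e_1\Cf e_2$ at all.
\item \emph{Cleaner alternative.} The minimal elements $[u_1]$, $[u_2]$ of $E_1\setminus E_2$ and $E_2\setminus E_1$ are core independent, yet $E_1\cup E_2$ is not executable. Witnesses of $e_1\Cfg e_2$ are, e.g., $[v]$ and $[u_2]$, which become coinitial only at $A$. So the initial conflict may involve non-minimal events of the difference sets. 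A merging induction that only inspects minimal events therefore cannot be closed.
\end{itemize}

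\textbf{The missing idea.} The paper needs neither shortening nor merging. Your cleaner alternative already has the right normalisation: reach the common state $P$ after the common events, then take branches $v_1\cdots v_m$ and $w_1\cdots w_n$ ending in $e_1$ and $e_2$, whose events are causes of $e_1$ (resp.\ $e_2$) and are not shared. What is missing is how to select the pair:
\begin{enumerate}
\item Among all pairs $(v_i,w_j)$ with $[v_i]\ncind[w_j]$, choose one minimising $i+j$. Such a pair exists because $(v_m,w_n)$ qualifies: $e_1\Cf e_2$ excludes $e_1\coind e_2$ by polychotomy.
\item Minimality makes every pair $(v_{i'},w_{j'})$ with $i'\le i$, $j'\le j$ and $(i',j')\neq(i,j)$ core independent.
\item \CIRE and \SP then build a grid of commuting squares, producing coinitial $v_i'\eveqt v_i$ and $w_j'\eveqt w_j$.
\item These are distinct, since they lie in different events, and non-independent by \CIRE. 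Hence $[v_i]\Cfi[w_j]$ with $[v_i]\leq e_1$ and $[w_j]\leq e_2$.
\end{enumerate}
Minimality is used only to build the grid. The witnesses need not be first steps or minimal causes.
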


\begin{figure}%
	\input{figures/initial-conflict-updated.tex}
	\Description[short description]{long description}%
	\caption{%
		A visual aid to understand the construction in the proof of \autoref{thm-conf is conf dep}.
		The diagram shows forward-only rooted paths $r_1$ (on the left containing $t_3t_5t_7$ and $t_1$) and 
		$r_2$ (on the right containing $t_4 t_6 t_8$ and $t_2$). 
		The distance from $(t_3, t_4)$ to itself is 0, while
		the distance from $(t_5, t_4)$ and from $(t_3, t_6)$ to $(t_3, t_4)$ is 1.
		There are three pairs of transitions 
		at distance 2 from $(t_3, t_4)$, namely $(t_7, t_4)$, $(t_5, t_6)$ and $(t_3, t_8)$.
		We assume that the transitions in at least one of these pairs are not core independent;
		conversely, the transitions of all pairs at distance at most 1 are core independent, as indicated by the commuting squares.}
	\label{fig:initial_cofl}
\end{figure}

\begin{proof}
	$(\Rightarrow)$
	Suppose that $e_1\Cf e_2$.
	Since \(e_1\), \(e_2\) are connected, there exist $t_i''\in e_i$ ($i = 1,2$) and a path $s$ between \(\srcof{t_1''}\) and \(\tgtof{t_2''}\).
	By \WF, there exist forward-only rooted paths $r_i''$ whose target is \(\srcof{t_1''}\) and \(\srcof{t_2''}\) respectively.
	First, we show that $r_1'',r_2''$ have the same source and thus are coinitial.
	Since $r_1'' s \rev{t_2''} \rev{r_2''}$ is a path between $\srcof{r_1''}$ and $\srcof{r_2''}$, by \PL we obtain a parabolic 
	path between them. As  $\srcof{r_1''}$ and $\srcof{r_2''}$ are irreversible processes, this path is empty and the two processes coincide, so $r_1'',r_2''$ are coinitial.
	Next, starting from paths \(r_1''\) and \(r_2''\), we will construct (in two stages) paths \(r_1\) and \(r_2\), containing the same events as \(r_1''t_1''\) and \(r_2''t_2''\) respectively, and
	where we show $e_1\Cfg e_2$. The paths $r_1, r_2$ are illustrated in \autoref{fig:initial_cofl}. %

	In the first stage we move any transitions in $r_i''$, which do not belong to the events in $E$, towards the back of the path,
	beyond the transitions of the events in $E$.
	The transitions we move are those of the events which are core independent with 
	the transitions belonging to the events in $E$.
	This is done using polychotomy and Sideways Square (\autoref{lem-sideways}), which requires \CIRE.
	Note that the use of Sideways Square preserves events, simply swapping the order of two events.
	It gives us forward-only paths $r_i'$, containing some $t_i'\in e_i$. These paths reach a common state $P$ after performing transitions of 
	the events in $E$ by \autoref{lem-ED paths}.
	
	In the second stage, the events in  $r_i'$ which are beyond $P$ either cause $e_i$  
	or are core independent with $e_i$. Note that they cannot be equivalent to $e_i$ because \NRE holds in pre-reversible LTSIs. 
	We use \autoref{lem-sideways} to move the transitions of all the events $e$
	that are core independent with $e_i$ to the end of the paths $r_i'$ beyond $t_i'$, resulting in forward-only paths $r_i$
	with $t_i\in e_i$.  We note that the set of such events is upwards closed: if $e \coind e_i$ and $e < e'$ then $e' \coind e_i$.
	
	Assume that paths $r_i$ branch out at $P$ with the transitions $t_3$ on $r_1$ and $t_4$ on $r_2$. 
	We then inspect pairs of transitions after $P$, one from $r_1$ starting with $t_3$ and moving towards $t_1$ and the other 
	from $r_2$ starting with $t_4$ and moving towards $t_2$. We check whether or not they are core independent. 
	Consider a pair of such transitions $(t_5,t_6)$. We define the distance from $(t_3,t_4)$ to $(t_5,t_6)$ 
	as the sum of the distances from $t_3$ to $t_5$ and from $t_4$ to $t_6$. The distance between a forward transition 
	and itself is zero, and the distance between %
	transitions $u$ and $u'$ 
	in a forward-only path $u r u'$ is %
	the length of the (possibly empty) path $r$ %
	plus 1.
	Assume that $(t_5,t_6)$ is one of the pairs of non-core-independent transitions with the smallest distance from $(t_3,t_4)$. 
	Note that there could be multiple such pairs: see \autoref{fig:initial_cofl} which shows three pairs %
	at distance 2 from $(t_3,t_4)$.
	This means that all pairs at a smaller distance from $(t_3,t_4)$ are core independent. %
	Clearly, $[t_5]\leq [t_1]$ and $[t_6]\leq  [t_2]$ for each such $t_5$, $t_6$. %
	Hence, it remains to show $[t_5]\Cfi [t_6]$.
	There are two cases:
	
	\begin{enumerate}
	\item
	$t_5=t_3(\neq t_1)$ and $t_6=t_4(\neq t_2)$ and $t_5 \ncind t_6$. 
	Since $t_5, t_6$ are coinitial and \CIRE holds, 
	we have $t_5\nind t_6$. Since $t_3\neq t_4$ we have $t_5\Cfi t_6$, and thus $[t_5]\Cfi [t_6]$. 
		
	\item $t_5 \neq t_3$ or $t_6\neq t_4$. %
	Recall that all the transitions 
	from $t_3$ to $t_5$ are core independent with all the transitions from $t_4$ to $t_6$, except for $t_5$ and $t_6$, 
	meaning we have $t_5 \ncind t_6$.
	
	Using \SP and \CIRE we can construct a grid of commuting squares for transitions $t_3$ to $t_5$ on one side
	and transitions $t_4$ to $t_6$ on the other side, such that there are coinitial $t_5'\in [t_5]$ and $t_6'\in [t_6]$. 
	We have $t_5' \nind t_6'$: assuming the opposite would give $[t_5] \coind [t_6]$, which contradicts $t_5 \ncind  t_6$. 
	Overall, $t_5'\Cfi t_6'$ and  $[t_5]\Cfi [t_6]$.
	\end{enumerate}
		
	$(\Leftarrow)$ Using \autoref{lem-Cfi iff conf}, which requires \CIRE and \ED, and conflict inheritance (\autoref{lem-conf inh}).
\end{proof}

We conclude by giving a method to check conflict between connected transitions based on non-independence.

\begin{proposition}\label{prop:cf leq}
Let $t_1,t_2$ be connected forward transitions in an LTSI satisfying \CIRE and \ED.
Then $t_1 \Cf t_2$ iff there are distinct coinitial forward transitions $t'_1,t'_2$ such that the following three hold:
\begin{multicols}{4}
    \noindent
	\begin{enumerate}
		\item $t'_1 \nind t'_2$;
		\item $t'_1 \leq t_1$;
		\item[] and 
		\item $t'_2 \leq t_2$.
	\end{enumerate}
\end{multicols}
\end{proposition}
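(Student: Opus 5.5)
This follows the pattern of \autoref{prop:leq chain}: translate the statement about transitions into one about events, then appeal to \autoref{thm-conf is conf dep}. Throughout I assume the LTSI is pre-reversible, as in the rest of this section, since \autoref{thm-conf is conf dep} and \autoref{lem-Cfi iff conf} need it. By \autoref{def-event relations}, $t_1 \Cf t_2$ means $[t_1] \Cf [t_2]$, and $t'_i \leq t_i$ means $[t'_i] \leq [t_i]$. Since $t_1,t_2$ are connected, the events $[t_1],[t_2]$ are connected forward events. \autoref{thm-conf is conf dep} applies under \CIRE and \ED and gives $[t_1] \Cf [t_2]$ iff $[t_1] \Cfg [t_2]$. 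It therefore suffices to show that $[t_1] \Cfg [t_2]$ holds iff there exist distinct coinitial forward $t'_1,t'_2$ with $t'_1 \nind t'_2$, $[t'_1] \leq [t_1]$ and $[t'_2] \leq [t_2]$.

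($\Rightarrow$) Unfold \autoref{def-gl conf dep}: there are forward events $e_3 \Cfi e_4$ with $e_3 \leq [t_1]$ and $e_4 \leq [t_2]$. Unfold \autoref{def-conf dep}: there are $t'_1 \in e_3$ and $t'_2 \in e_4$ with $t'_1 \Cfi t'_2$. By \autoref{def-initial conf}, these are distinct, coinitial, forward and non-independent, which is exactly what is required.

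($\Leftarrow$) Given such $t'_1,t'_2$, \autoref{def-initial conf} gives $t'_1 \Cfi t'_2$, hence $[t'_1] \Cfi [t'_2]$ by \autoref{def-conf dep}. Together with $[t'_i] \leq [t_i]$, this yields $[t_1] \Cfg [t_2]$, so $t_1 \Cf t_2$ by \autoref{thm-conf is conf dep}. Alternatively, one can avoid the theorem: \autoref{lem-Cfi iff conf} gives $[t'_1] \Cf [t'_2]$, and conflict inheritance (\autoref{lem-conf inh}) applied twice, using symmetry of $\Cf$, gives $[t_1] \Cf [t_2]$.

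There is no substantial obstacle; this is bookkeeping between transitions and events. The one point to get right is which theorem carries the connectedness hypothesis. The ($\Rightarrow$) direction needs it, via \autoref{thm-conf is conf dep}. The ($\Leftarrow$) direction does not, provided \autoref{lem-conf inh} indeed holds without connectedness, as its proof suggests.
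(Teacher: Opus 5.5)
Your proposal is correct and matches the paper's proof, which simply cites \autoref{thm-conf is conf dep} together with the definitions of initial and global conflict; your alternative for ($\Leftarrow$), via \autoref{lem-Cfi iff conf} and \autoref{lem-conf inh}, is exactly how the paper proves that direction of the theorem. You are also right that pre-reversibility has to be assumed tacitly here, since the theorem and the event-level definitions of $\Cfi$ and $\Cfg$ require it.
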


\begin{proof}
	By \autoref{thm-conf is conf dep} and the definitions of initial and global conflict.
\end{proof}

%% file: figures/chain.tex
    
    \begin{tikzpicture}[
        x={(1, 0)},  %
        y={(0, 1)}, %
        node distance=0,
        inner sep=0
        ]%
        \node (M0) at (-1, 1) {};
        \node (H0) at (0, 2) {};
        \node (L1) at (1, 0) {};
        \node (M1) at (2, 1) {};
        \node (H1) at (3, 2) {};
        \node (L2) at (4, 0) {};
        \node (M2) at (5, 1) {};
        \node (H2) at (6, 2) {};
        \node (L3) at (7, 0) {};
        \node (M3) at (8, 1) {};
        \node (H3) at (9, 2) {};
        
        \draw[f] (M0) -- node[left = .5em, pos=.6]{\(t_1\)} (H0);	
        \draw[f] (L1) -- node[left = .5em, pos=.6]{\(t'_1\,\)} (M1);	
        \draw[f] (M1) -- node[left = .5em, pos=.6]{\(t_2\)} (H1);	
        \draw[f] (L2) -- node[left = .5em, pos=.6]{\(t'_2\,\)} (M2);	
        \draw[f] (M2) -- node[left = .5em, pos=.6]{\(t_3\)} (H2);	
        \draw[f] (L3) -- node[left = .5em, pos=.6]{\(t'_3\,\)} (M3);	
        
        \draw [fb, color=ind] plot [smooth] coordinates {($(H0)$) ($(H0)!.5!(H1)$) ($(M0) + (1,0)$) ($(M1) - (.2, 0)$)};		
        \draw [fb, color=ind] plot [smooth] coordinates {($(H1)$) ($(H1)!.5!(H2)$) ($(M1) + (1,0)$) ($(M2) - (.2, 0)$)};		
        \draw [fb, color=ind] plot [smooth] coordinates {($(H2)$) ($(H2)!.5!(H3)$) ($(M2) + (1,0)$) ($(M3) - (.2, 0)$)};		
        
    \end{tikzpicture}

%% file: figures/example-non-ED.tex
\begin{tikzpicture}[
	x={(1, 0)},  %
	y={(0, .8)}, %
	baseline,
	anchor=base
	]
	
	\node (a) at (0, -1){};
	\node (b) at (1, -2.13){$P$};
	\node (c) at (1, 0){};
	\node (e1) at (4, 0){};
	\node (e2) at (4, -2){};
	\node (f) at (2.7, -2.13){$R$};
	\node (g) at (2.7, -0.09){$Q$};

	\draw[f] (a) -- node[below, pos=.3]{\(b\)} (b);
	\draw[f] (a) -- node[above, pos=.3]{\(a\)} (c);
	\draw[f] (b) -- node[above, pos=.2]{\(a\)} (g);
	\draw[f] (g) -- node[above, pos=.4]{\(c\)} (e1);
	\draw[f] (c) -- node[above, pos=.5]{\(b\)} (g);
	\draw[f] (c) -- node[above, pos=.4]{\(b\)} (f);
	\draw[f] (b) -- node[below, pos=.5]{\(a\)} (f);
	\draw[f] (f) -- node[above, pos=.4]{\(d\)} (e2);

\end{tikzpicture}

%% file: figures/initial-conflict-updated.tex
	\begin{tikzpicture}[
		x={(.7, 0)},  %
		y={(0, .7)}, %
		baseline,
		anchor=base,
		node distance=0,
		inner sep=0
		]
		
		\node (o) at (0, -2){}; %
		
		\node (p) at (0, 0) {$P$
		};	
		\node (a) at (-1, 1){%
		};
		\node (b) at (1, 1) {%
		};
		\node (c) at (-2, 2){%
		};
		\node (d) at (0, 2) {%
		};
		\node (e) at (2, 2) {%
		};
		\node (f) at (-3, 3){%
		};
		\node (g) at (-1, 3){%
		};
		\node (h) at (1, 3) {%
		};
		\node (i) at (-2.3, 3.6){%
		};
		\node (j) at (-3.1, 4.2){%
		};
		\node (m) at (3, 3){%
		};
		\node (k) at (2.3, 3.6){%
		};
		\node (l) at (3.1, 4.2){%
		};
		
		\draw[f] (p) -- node[below = .2, pos = .7]{\(t_3\)} (a);
		\draw[f] (p) -- node[below = .2, pos = .7]{\(t_4\)} (b);
		\draw[f] (a) -- %
			(d);
		\draw[f] (b) -- %
			(d);
		\draw[f] (a) -- node[below = .2, pos = .7]{\(t_5\)} (c);
		\draw[f] (d) -- (g);
		\draw[f] (c) -- (g);
		\draw[f] (d) -- (h);
		\draw[f] (e) -- (h);
		\draw[f] (b) -- node[below = .2, pos = .7]{\(t_6\)} (e);
		\draw[f] (c) -- node[below = .2, pos = .7]{\(t_7\)}(f);
		
		\draw [f, color=ind] plot [smooth] coordinates {($(f)$) ($(f) + (-.6, .6)$) ($(f) + (1, -.2)$) ($(i) + (.3, -.3)$) ($(i)$)}; 
		
		\draw[f] (i) --node[below = .2, pos = .7]{\(t_1\)} (j);
		\draw[f] (e) -- node[below = .2, pos = .7]{\(t_8\)} (m);
		\draw [f, color=ind] plot [smooth] coordinates {($(m)$) ($(m) + (.6, .6)$) ($(m) + (-1, -.2)$) ($(k) + (-.3, -.3)$) ($(k)$)}; 
		
		\draw[f] (k) --node[below = .2, pos = .7]{\(t_2\)} (l);

		\draw [f, color=ind] plot [smooth] coordinates {($(o)$) ($(o) + (-.3, .3)$) ($(o) + (-1.2, .6)$) ($(o) + (-.3, .7)$) ($(o)!.5!(p) + (-1, 0)$) ($(p) + (-.4, -1)$) ($(p) + (-.2, -.4)$)}; %
		\draw [f, color=ind] plot [smooth] coordinates {($(o)$) ($(o) + (.3, .3)$) ($(o) + (1.2, .6)$) ($(o) + (.3, .7)$) ($(o)!.5!(p) + (1, 0)$) ($(p) + (.4, -1)$) ($(p) + (.2, -.4)$)}; %

		\node (extr) at ($(l.east)+(1.2, 0)$) {}; %
		\node (extl) at ($(j.west)+(-1.2, 0)$) {}; %
		\node (above) at ($(extr)+(0, 2)$) {}; %

		\node (bot) at (p -| extr) {};
		\node (low) at (o -| extl) {};
		
		\begin{scope}[on background layer]
			\node [fill=none, draw=ord, line width=0.4mm,  fit= (bot) (extl), rectangle, label={[xshift=1, align=left]right:\emph{Transitions of events \(e\notin E\)}\\\emph{with \(e \leq e_1(=[t_1])\) or \(e\leq e_2(=[t_2])\)}},
			inner xsep = 10, inner ysep = 3pt, yshift=-1pt,
			rounded corners=0.2cm
			] {};
			
			\node [fill=none, draw=ord, line width=0.4mm,  fit= (bot) (low), rectangle, label={[xshift=1, align=left]right:\emph{Transitions of events in \(E\)}},
			inner xsep = 10,
            inner ysep = -4pt,
            yshift=-4pt,
			rounded corners=0.2cm
			] {};
			
			\node [fill=none, draw=ord, line width=0.4mm,  fit= (extl) (above), rectangle, label={[xshift=1, align=left]right:\emph{Transitions of events \(e\notin E\)}\\\emph{with \(e \ci  e_1 (=[t_1])\) or \(e \ci e_2 (=[t_2])\)}},
			inner xsep = 10,
            inner ysep = -7pt,
			rounded corners=0.2cm
			] {};
			
		\end{scope}

		\draw [f, color=ind] plot [smooth] coordinates {($(l)$) ($(l) + (.2, .2)$) ($(l) + (-.6, .8)$) ($(l) + (1, .7)$) ($(l) + (.8, 1.2)$) ($(l) + (1.3, 1.6)$)}; 
		\draw [f, color=ind] plot [smooth] coordinates {($(j)$) ($(j) + (-.2, .2)$) ($(j) + (.6, .8)$) ($(j) + (-1, .7)$) ($(j) + (-.8, 1.2)$) ($(j) + (-1.3, 1.6)$)}; 
	\end{tikzpicture}

%% file: sections/proved-lts.tex
\section{\texorpdfstring{\pccsk}{CCSKP} and \texorpdfstring{\ccsk}{CCSK}} %
\label{sec:proved-lts}

We recall the extension of \ccsk with proof labels~\cite{aubert2023c}, denoted by \pccsk, and then prove that its labelled transition system (LTS) is in bijection with the LTS of \ccsk, one of its main sources of inspiration\footnote{The other being \pccs, the \enquote{proved} extension of \ccs~\cite{BC88,BC94,DeganoGP03}, briefly discussed in the introduction.}.
All the definitions, results and examples from now on until \autoref{ssec:true-concu-relation-pccsk}
have been formalised and %
can be found in \tbel{code/} and \tbel{examples/}. %
Precise links are given throughout the development, and a paper-to-artifact table is provided in the \hrefurl{\repourl/blob/\latestcommit/overview.md}{overview file}.

\subsection{Definition of \texorpdfstring{\pccsk}{CCSKP}}%
\label{ssec:ccsk-def}

\begin{definition}[(Co-)names, labels and keys]%
	\label{def-co-names}
	Let \(\names\) be a set of {\bel{code/shared/definitions.bel}[\emph{names}][3][5]}, ranged over by $a$, $b$ and $c$. %
	A bijection \(\out{\cdot}:\names \to \out{\names}\), whose inverse is also denoted by~\(\out{\cdot}\), 
	gives the \emph{complement} of a name. %
	The set \(\labelset\) of {\bel{code/shared/definitions.bel}[\emph{labels}][13][18]} is \(\names \cup \out{\names} \cup\{\tau\}\), and we use \(\alpha\), \(\beta\) (\resp \(\lambda\)) to range over \(\labelset\) (\resp \(\labelset \bs \{\tau\}\)).
	
	Let \(\keyset\) be a denumerable set of {\bel{code/shared/definitions.bel}[\emph{keys}][7][11]}, ranged over by \(l\), \(k\), \(m\) and \(n\).
	{\bel{code/ccsk/definitions.bel}[\emph{Keyed labels}][5][8]}, denoted by \(a[l]\), \(a[k]\), \(b[m]\), \etc are elements of \(\labelset \times \keyset = \klabelset\).
\end{definition}

\begin{discussion}%
In the formalisation, the type encoding names does not present any constructor and is dynamically inhabited through the use of \bel{code/shared/definitions.bel}[contexts][30][31]; this design follows other Beluga formalisations of concurrent calculi in which restriction is treated as a binding operator~\cite{SanoKP23,momigliano24,EPTCS448.1}. %
Conversely, keys are encoded by an explicit type isomorphic to the natural numbers, equipped with a decidable \bel{code/shared/definitions.bel}[inequality][58][62] predicate at the LF level. This choice simplifies the encodings of keyed labels, proof labels and keyed prefixes by avoiding meta-level functions indexed by keys, while preserving infinite branching over keys in forward transition rules.
\end{discussion}

\begin{definition}[{\bel{code/shared/definitions.bel}[Processes][20][28]}]%
	\label{def-operators}
	The set \(\kprocset\) of \emph{\ccsk processes} is defined as follows:
	
	\noindent\begin{minipage}{.46\linewidth}
		\begin{align}
			X, Y \coloneqq ~ 
			& \nil \tag{Inactive process} \\
			\|~ &  \alpha. X \tag{Prefix} \\
			\|~ &   X \bs a \tag{Restriction}
		\end{align}
	\end{minipage}
	\begin{minipage}{.5\linewidth}
		\begin{align}
			\|~ & X + Y  \tag{Sum} \\ 
			\|~ & X \mid Y \tag{Parallel composition}\\
			\|~ & \alpha[k].X \tag{Keyed prefix}
		\end{align}
	\end{minipage}

	We write \(\keys{X}\) for the \emph{set of keys in} \(X\),
    and say that \(X\) is \bel{code/shared/definitions.bel}[\emph{standard}][64][71] %
    iff \(\keys{X} = \emptyset\).
	The set \(\procset\) of \ccs \emph{processes} is \(\{X \setst \keys{X} = \emptyset\}\), %
	we let \(P\), \(Q\) range over it.
    Finally, we omit the inactive process \(\nil\) when preceded by a (keyed) prefix.  %
\end{definition}

\begin{discussion}%
	\label{dis:key-implementation}
	While the syntax of \ccs and \ccsk processes is typically presented with restriction over both names and co-names~\cite{milner80lncs,PU07}, in our setting restriction binds only names. This choice does not reduce expressiveness, since every process with restrictions on co-names is behaviourally equivalent to one in which only names are restricted (\eg \(X \bs a\) and \(X \bs \out{a}\) are behaviourally equivalent); moreover, it simplifies the treatment of binding in the formalisation.
	
	In the encoding, processes are considered modulo $\alpha$-renaming, \ie processes that differ only in the choice of names bound by restrictions are identified. Names bound by restrictions are represented as arguments of meta-level functions, following the \emph{higher-order abstract syntax} (HOAS) approach~\cite{pfenning88pldi}; this allows $\alpha$-renaming and capture-avoiding substitutions to be delegated to the meta-level.
	
	An explicit encoding of the set \(\keys{X}\) (\eg  as a list of keys) would require additional machinery for emptiness and (non-)membership.
    Since, in practice, this set is only needed to define standard processes and to express the side conditions of the LTS rules for parallel composition (given later in \autoref{fig:provedltsrulesccskfw} as \(\lmidl\) and \(\rlmidl\)), the formalisation avoids representing the set \(\keys{X}\) and its associated predicates directly.
    Instead, it is more convenient to introduce two inductive predicates, one encoding \bel{code/shared/definitions.bel}[standard processes][64][71] and one expressing \bel{code/shared/definitions.bel}[non-occurrence of a key in a process][73][81].
\end{discussion}

Next we define proof keyed labels for \pccsk.

\begin{definition}[{\bel{code/ccskp/definitions.bel}[Proof keyed labels][17][25]}]%
	\label{def-proof-keyed-labels}
	We let \(\D\) range over the \emph{directions} \(\L\)(eft) and \(\R\)(ight), \(\upsilon\), \(\upsilon_{1}\) and \(\upsilon_{2}\) range over strings in \(\{\lmidd, \lplusd\}^*\), and \(\theta\) range over \emph{proof keyed labels}, denoted by \(\kplabelset\): %
	\begin{align*}
		\theta & \coloneqq \upsilon \alpha[k]  ~\|~ \upsilon \cpair{\upsilon_{1} \lambda[k]}{\upsilon_{2} \out{\lambda}[k]}
	\end{align*}
\end{definition}

\begin{notation}%
    \label{notation:key}
    We let \(\OpDir{\D} = \R\) if \(\D = \L\), else \(\OpDir{\D} = \L\).
    We generally omit \enquote{keyed} and simply write \enquote{proof label}.
\end{notation}

\begin{discussion}%
	\label{dis:pr-lab}
To simplify the encoding of proof labels and obtain a more direct induction principle, their formalisation is by design over-expressive, allowing any two proof labels to be paired regardless of their key or label: the formalisation does not forbid ill-formed proof labels such as \(\cpair{a[k]}{\out{b}[k]}\) and \(\cpair{a[k]}{\out{a}[l]}\).
Well-formedness is however recovered through a separate \bel{code/ccskp/definitions.bel}[validity predicate][80][91], which rules out such spurious proof labels.
\end{discussion}

Given a proof label, we introduce functions for extracting its label and key.
\begin{definition}[{\bel{code/ccskp/definitions.bel}[Label and key functions][38][56]}]%
	We define the \emph{label function} \(\ell : \kplabelset \to \labelset\) as
	\begin{align*}
		\ell(\upsilon \alpha[k]) &= \alpha &&& \ell(\upsilon \cpair{\upsilon_{1}\lambda[k]}{\upsilon_{2} \out{\lambda}[k]}) & = \tau
	\end{align*}
	 and the \emph{key function} \(\kayop: \kplabelset \to \keyset\) as 
	 \begin{align*}
	 	\kay{\upsilon \alpha[k]} &= \kay{\upsilon \cpair{\upsilon_{1}\lambda[k]}{\upsilon_{2} \out{\lambda}[k]}} = k\text{.}
	 \end{align*}
\end{definition}

\begin{definition}[LTS for \pccsk~\cite{aubert2023c}]%
	\label{def:lts-pccsk}
	The \emph{LTS for \pccsk} is 
	$(\kprocset, \kplabelset, \pr{fb}[\theta])$, where \bel{code/ccskp/definitions.bel}[$\pr{fb}[\theta]$ is the union of transition relations][214][218] %
	generated by 
	the \bel{code/ccskp/definitions.bel}[forward][111][148] and \bel{code/ccskp/definitions.bel}[backward][156][194] rules of 
	\autoref{fig:provedltsrulesccskfw}.	
\end{definition}

\begin{figure}%
	\input{figures/ccskp-lts.tex}
	Rules \(\lmidr\),\(\rlmidr\), \(\lplusr\) and \(\rlplusr\) are the symmetric of \(\lmidl\),\(\rlmidl\), \(\lplusl\) and \(\rlplusl\) and are omitted.
	\caption{Forward and backward transition rules for \pccsk. }
	\Description[short description]{long description}%
	\label{fig:provedltsrulesccskfw}
\end{figure}

\begin{discussion}%
	\label{dis:open-closed}
	In the LTS rules for restrictions res and \tRev{res} in \autoref{fig:provedltsrulesccskfw}, the side condition \enquote{\(\labl{\theta} \notin \{a, \out{a}\}\)} does not prevent the restricted name \(a\) from occurring free in the transition label \(\theta\). In such cases, since the encoding identifies processes up to \(\alpha\)-renaming, there is a mismatch: restricted processes abstract away from any renaming of \(a\), whereas the proof label \(\theta\) does not. This issue was overlooked in the original formalisation~\cite{cecilia_2025_15660907}, whose encoding of the res and \tRev{res} rules does not allow \(a\) to occur free in \(\theta\), making the encoding incomplete. %
	
	As a solution, our new encoding introduces \bel{code/ccskp/definitions.bel}[open proof labels][27][30], \ie meta-level functions from names to proof labels, that correspond to proof labels considered modulo \(\alpha\)-renaming of a name occurring in them. In contrast, proof labels without name identifications are referred to as \emph{closed}. This distinction propagates throughout the entire development, leading to a somewhat cumbersome duplication of LTS rules into \emph{closed} and \emph{open} transition rules; however, the bijection with \ccsk (\autoref{lem-bijection}) and the instantiation of the axiomatic theory to \pccsk (\autoref{thm-axioms-hold-pccsk}) provide strong evidence of the soundness of this approach.	
\end{discussion}

\begin{example}
	\label{ex-open}
	Consider the transition \( (a \mid \out{a})  \bs a \ \pr{f}[\cpair{ a \protect{[k]}}{ \out{a} \protect{[k]}}] \ (a[k] \mid \out{a}[k]) \bs a \), obtained by applying the rules act, syn and res, that expresses the synchronisation of two threads over a restricted name \(a\), and let \( \theta = \cpair{ a \protect{[k]}}{ \out{a} \protect{[k]}}\). Since \(\labl{\theta} = \tau\), the side condition \enquote{\(\labl{\theta} \notin \{a, \out{a}\}\)} is trivially satisfied, even if the name $a$ occurs in \(\theta\). \bel{examples/examples-paper.bel}[In the encoding][6][19], %
	 we represent \( \theta \) as the open proof label {\emacsfont open \textbackslash a.(pr\_sync (pr\_base (inp a) k) (pr\_base (out a) k))}, where {\emacsfont open} is the unique constructor for the type {\emacsfont pr\_lab\_op} of open proof labels and takes as argument a meta-level function {\emacsfont \textbackslash a.(...)} that maps names into closed proof labels.
\end{example}

\begin{remark}
	\label{rem:transitions-in-beluga}
	We import from \autoref{def-transitions-paths} and encode in Beluga the notions of coinitial transitions, (rooted) \bel{code/ccskp/definitions.bel}[paths][220][225], \bel{code/ccskp/definitions.bel}[connected transitions][232][235], \etc
    We formalise paths via the reflexive transitive closure of the transition relation \(\pr{fb}\), denoted by \(\pr{fb}^*\).
	Notations differ, since $X \pr{fb}[\theta] X'$ means either $X \pr{f}[\theta] X'$ or $X \pr{b}[\theta] X'$, but for LTSIs $P \r{fb}[\alpha] Q$ means $P \r{f}[a] Q$ or $P \r{b}[a] Q$ depending on $\alpha$: this small gap is consistent with previous work, and will be resolved by context.
\end{remark}

\begin{notation}[Transition key and label]%
	\label{def-transition-key}
	Given a transition \(t: X \pr{bf}[\theta] Y\), we write \(\key{t} = \kay{\theta}\) for its key and  \(\lblof{t} = \theta\) for its label. 
\end{notation}

The following definition is useful to rule out processes with an inconsistent use of keys, \eg \(a[k].\out{b}[m] \mid b[m].\out{a}[k] \) or \(a[k].b[k]\), that cannot be legitimately reached in any computation trace starting from a standard process:

\begin{definition}[{\bel{code/ccskp/definitions.bel}[Reachable processes][227][230]}]%
	\label{definition:reachable}
	We say that \(X\) is \emph{reachable} if there exists an \emph{origin process} \(\orig{X}\) \st \(\keys{X} = \emptyset\) and a rooted path \(r_{X}: \orig{X} \pr{fb}^* X\).
\end{definition}

Note that \(\orig{X}\), if it exists, is unique and is easily obtained by erasing the keys in \(X\)~\cite{LaneseP21}.
We only consider reachable processes from now on.
As \(\pr{f}\) and \(\pr{b}\) are symmetric, we have:

\begin{lemma}[{\bel{code/ccskp/basic-properties.bel}[Loop lemma][33][109]}]%
	\label{lem-loop_proved}
	For all \(t: X \pr{f}[\theta] Y\), there exists \(\Rev{t}: Y \pr{b}[\theta] X\), and conversely.
	Furthermore, \(\Rev{\Rev{t}} = t\).
\end{lemma}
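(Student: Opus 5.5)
We prove the Loop Lemma by induction on the derivation of the transition, exploiting that each forward rule of \autoref{fig:provedltsrulesccskfw} has a backward counterpart with exactly the same premises and side conditions, and with source and target swapped. Concretely, we define a map from derivations of \(X \pr{f}[\theta] Y\) to derivations of \(Y \pr{b}[\theta] X\), and a map in the opposite direction, by structural recursion on derivations. We then check that the two maps are mutually inverse, which yields \(\Rev{\Rev{t}} = t\).

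For the forward-to-backward direction we proceed by cases on the last rule applied. For the axiom act, \(\alpha.X \pr{f}[\alpha[k]] \alpha[k].X\) with \(X\) standard, the rule \tRev{act} gives \(\alpha[k].X \pr{b}[\alpha[k]] \alpha.X\) under the same standardness condition. For the rules that propagate under a keyed prefix, we apply the induction hypothesis to the premise. The key side condition (the key differs from \(k\)) refers only to \(\theta\), so it carries over unchanged. The rules \(\lmidl\) and \(\lplusl\) (and their right-hand symmetric versions) are handled the same way. For \(\lmidl\), the side condition \(\kay{\theta} \notin \keys{Y}\) concerns the idle component, which is identical in source and target, so it transfers directly. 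For \(\lplusl\), the forward rule requires the idle summand to be standard and the active summand to become non-standard; the backward rule \(\rlplusl\) has matching conditions, and these are read off from the induction hypothesis. For syn, we apply the induction hypothesis to both premises and rebuild the derivation with \tRev{syn}. For res, the condition \(\labl{\theta} \notin \{a,\out{a}\}\) depends only on the label, so it is preserved. The backward-to-forward direction is entirely symmetric.

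The final step is to show \(\Rev{\Rev{t}} = t\). This follows by a routine induction: at each step the composite map returns the original rule applied to premises that, by the induction hypothesis, are themselves returned unchanged.

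We expect the main obstacle to be formal rather than mathematical. It arises because the encoding splits the res and \tRev{res} rules into closed and open variants (\autoref{dis:open-closed}). The involution must therefore be defined simultaneously for closed and open transitions, with the open case handled under the name binder, and the side conditions expressed through the non-occurrence and standardness predicates must be checked to match exactly between the paired rules. In Beluga we would write these as two mutually recursive total functions, plus a third proof that their composite is the identity. Alternatively, we could make the statement symmetric from the outset by encoding a single directed transition relation, so that only one recursion is needed.
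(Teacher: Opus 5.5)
Your proof is correct and takes the same route as the paper, which justifies the lemma directly from the fact that each backward rule in \autoref{fig:provedltsrulesccskfw} mirrors a forward rule with identical side conditions, and leaves the rule-by-rule induction to the Beluga development. One harmless slip: the \(\lplusl\) rule only requires the idle summand to be standard (\(\keys{Y} = \emptyset\)) and imposes no condition on the active summand, but this does not affect your argument.
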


\begin{example}%
	\label{ex-transitions}
	Some of the \bel{examples/examples-paper.bel}[processes reachable from \(a[k] \Par (\out{a} + b)\)][25][99] are %
	 in~\autoref{fig:example-transitions}. 
	Since we suppose infinitely many keys, \pccsk is infinitely branching, as suggested by the transitions labelled \(\lmidr \lplusr b[n]\) and \(\lmidr \lplusr b[m]\): each forward transition could use different keys, provided they do not occur in the source process. %
	Following \autoref{lem-loop_proved}, \bel{examples/examples-paper.bel}[all \(\pr{f}\) transitions could become \(\pr{b}\)][101][110], and vice versa. 
	\bel{examples/examples-paper.bel}[The origin process is \(a \Par (\out a + b)\)][112][119], and \bel{examples/examples-paper.bel}[all transitions are connected][121][135].
\end{example}

\begin{figure}
	\input{figures/example-transition.tex}
	\caption{A sample of processes reachable from \(a[k] \Par (\out{a} + b)\), used in Examples~\ref{ex-transitions} and \ref{ex-part2}.}%
	\Description[short description]{long description}%
	\label{fig:example-transitions}
\end{figure}

A useful property of \pccsk's LTS is that constructing the derivation tree of a transition is completely deterministic:

\begin{lemma}[{\bel{code/ccskp/unique-step.bel}[Unique derivation trees of \pccsk transitions][562][588]}]%
	\label{lem-derivation-uniqueness-ccskp}
	For any transition \(t\) in \pccsk, there exists exactly one derivation tree whose conclusion is $t$.
\end{lemma}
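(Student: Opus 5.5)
The plan is to prove \autoref{lem-derivation-uniqueness-ccskp} by structural induction on the source process \(X\) of the transition \(t: X \pr{fb}[\theta] Y\). Equivalently, one can induct on the height of a derivation tree and show that any two trees with the same conclusion coincide. In each case I will show two things. First, the triple (direction, \(X\), \(\theta\)), together with \(Y\) where needed, determines which rule of \autoref{fig:provedltsrulesccskfw} must be the last one applied. Second, that rule determines the conclusion of its premise(s). The induction hypothesis then says each premise has exactly one derivation tree, so the whole tree is unique. Existence is immediate, since \(t\) is a transition only if it has a derivation tree, so only uniqueness needs work.

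The case analysis on the last rule goes as follows.
\begin{itemize}
\item \textbf{Parallel composition and sum.} If \(X = X_1 \Par X_2\), the head of \(\theta\) selects the rule. A label \(\lmidl\upsilon\alpha[k]\) forces \(\lmidl\) (or \(\rlmidl\), according to the direction), \(\lmidr\) forces its symmetric rule, and a pair \(\cpair{\cdot}{\cdot}\) forces syn (or its reverse). The premises are the transitions of \(X_i\) labelled by the stripped labels, with targets read off from the components of \(Y = Y_1 \Par Y_2\). If \(X = X_1 + X_2\), the prefix \(\lplusl\) or \(\lplusr\) plays the same role.
\item \textbf{Restriction.} If \(X = X' \bs a\), only res or \tRev{res} can apply, and the premise is \(X' \pr{fb}[\theta] Y'\) with \(Y = Y'\bs a\).
\item \textbf{Unkeyed prefix.} If \(X = \alpha.X'\), only the forward act rule applies. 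Its premise-free conclusion is fixed by \(\theta = \alpha[k]\).
\end{itemize}

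The main obstacle is the keyed prefix case \(X = \alpha[k].X'\). Here two rules can both, a priori, fit a backward transition: the reverse of act, which undoes the key \(k\) itself, and the reverse of the propagation rule, which lets \(X'\) move underneath the key. The forward direction needs the same care, but there only the propagation rule applies and its side condition \(\kay{\theta}\neq k\) is directly checkable.

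In the backward direction I will rely on the side conditions to make the two rules disjoint. Reversing act requires \(X'\) to be standard and \(\kay{\theta} = k\). Reversing propagation requires a backward transition from \(X'\) with \(\kay{\theta} \neq k\). The key inequality alone already separates the two rules. Failing that, I would use the auxiliary fact that a standard process has no backward transitions, proved by a quick induction on the rules, since every backward rule bottoms out in \tRev{act} on a keyed prefix.

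Finally, for the formalisation, the duplicated open and closed versions of the res rules (\autoref{dis:open-closed}) have to be shown not to overlap either. This should follow because the two label types are distinct in the encoding, and within each the argument above applies verbatim.
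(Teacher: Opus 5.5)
Your pen-and-paper case analysis is correct and follows the paper's route. You use induction on derivations. The head of the proof label selects among $\lmidl$, $\lmidr$, syn, $\lplusl$ and $\lplusr$. The shape of source and target, together with the key in the $\alpha[k].X$ case, determines act, pre, res and their inverses. One minor slip: the labels that select $\lmidl$ are $\lmidl\theta'$ for arbitrary $\theta'$, including synchronisation pairs, not only $\lmidl\upsilon\alpha[k]$. The head-based argument is unaffected.

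The gap is in your last paragraph, which is where the paper's proof does its non-trivial work.
\begin{itemize}
\item Distinct types for closed and open labels only tell you that a given transition is either closed or open. They do not make your argument apply verbatim inside the open case.
\item As the paper's proof indicates, an open transition out of a restriction can a priori be derived in two ways. One is from a \emph{closed} transition of the body whose label mentions the restricted name, which then becomes the name abstracted by the open label. The other is from an \emph{open} transition of the body whose abstracted name is bound further inside.
\item Both candidate derivations have the same open conclusion, and their premises share source and target. So neither the label nor the structure of the processes tells them apart.
\end{itemize}
The paper resolves this with an auxiliary lemma: for given source and target processes, a closed and an open transition between them cannot both exist. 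Your proof needs this lemma, or an equivalent argument. The open restriction case, not the keyed-prefix case, is the real obstacle in the formal proof.
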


The proof addresses closed and open transitions, as well as forward and backward transitions, separately, and proceeds by induction on the length of the derivation tree of \(t\). 
The restriction cases for open transitions rely on an \bel{code/ccskp/unique-step.bel}[auxiliary lemma][276][336] showing that, for any given source and target processes, there cannot simultaneously exist both a closed and an open transition between them. 
Although applications of the act, pre, and res rules (and their inverses) are not reflected in the proof label, they can be uniquely determined from the structure of the source and target processes.

\begin{remark}%
	\label{rem:note-on-uniqueness}
	Note that \autoref{lem-derivation-uniqueness-ccskp} would not hold as stated if our LTS was incorporating a structural congruence~\cite{aubert2023c} containing \eg \(X \Par Y \equiv Y \Par X\), as parallel processes could be swapped any even number of times in the derivation trees.
\end{remark}

\subsection{Bijection Between \texorpdfstring{\pccsk}{CCSKP} and \texorpdfstring{\ccsk}{CCSK}}%
\label{ssec-pccsk-ccsk-bijection}

This section defines \ccsk~\cite{PU07} and proves that \pccsk and \ccsk's transitions are in bijection (\autoref{lem-bijection}).
In a nutshell, \ccsk%
 is \enquote{\pccsk without the proof part of the label}: %
 its LTS can be obtained from \pccsk by replacing $\kplabelset$ with \(\klabelset\) and $\theta$ with $\ell(\theta)[\kay{\theta}]$ in \autoref{fig:provedltsrulesccskfw}.
The bijection will prove particularly useful for 
transferring the independence and dependence relations (\autoref{def-ind-dep-ccsk}), as well as the proof that the axiomatic theory holds (\autoref{thm-axioms-hold-ccsk}), to \ccsk.

\begin{definition}[LTS for \ccsk]%
	\label{def-ccsk}
	The set of \bel{code/shared/definitions.bel}[\emph{processes}][20][28] for \ccsk is $\kprocset$ (\autoref{def-operators}) and the set of \bel{code/ccsk/definitions.bel}[\emph{labels}][5][8] is $\klabelset$ (\autoref{def-co-names}). %
	The \bel{code/ccsk/definitions.bel}[\emph{forward transition relation}][10][29] for \ccsk, denoted by \(\r{f}[\alpha][k]\), is generated by the rules in \autoref{fig:ltsrulesccskfw}---with \(\keysop(X)\) as in \autoref{def-operators}. %
	The \bel{code/ccsk/definitions.bel}[\emph{backward transition relation}][31][50] for \ccsk, denoted by \(\r{b}[\alpha][k]\), is defined as the symmetric of \(\r{f}[\alpha][k]\)~\cite[Figure 3]{PU07}\cite{LaneseP21}.
	The \bel{code/ccsk/definitions.bel}[\emph{combined transition relation}][52][56] for \ccsk, denoted by \(\r{fb}[\alpha][k]\), is defined as the union of \(\r{f}[\alpha][k]\) and \(\r{b}[\alpha][k]\). 
	The \emph{LTS for \ccsk} is $(\kprocset, \klabelset, \r{fb}[\alpha[k]])$.
\end{definition}

Observe that the notations for the transition relations of \pccsk (\(\pr{f}\), \(\pr{b}\) and \(\pr{bf}\)) and \ccsk (\(\r{f}\), \(\r{b}\) and \(\r{bf}\)) differ on the presence or absence of the vertical bar%
, in addition to the proved part being removed from the label.
We otherwise use the same functions, notations and conventions as for \pccsk: \eg we denote by \(\lblof{t}\) the label of a transition \(t\) (\autoref{def-transition-key}) and consider only reachable processes (\autoref{definition:reachable}).

\begin{figure}
    \input{figures/ccsk-lts.tex}
	\caption{Forward transition rules for \ccsk. Backward rules are the symmetric versions of the forward rules, thus are omitted.}
	\label{fig:ltsrulesccskfw}%
	\Description[short description]{long description}%
\end{figure}

\begin{lemma}[{\bel{code/ccsk/unique-step.bel}[Unique derivation trees of \ccsk transitions][205][217]}]%
	\label{lem-derivation-uniqueness-ccsk}
	For any transition \(t\) in \ccsk, there exists exactly one derivation tree whose conclusion is $t$.
\end{lemma}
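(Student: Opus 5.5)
The statement is \autoref{lem-derivation-uniqueness-ccsk}: uniqueness of derivation trees for \ccsk transitions. We cannot simply transport \autoref{lem-derivation-uniqueness-ccskp} through the bijection, because the bijection (\autoref{lem-bijection}) is proved later and very likely relies on this lemma. We therefore argue directly. The plan is a structural induction on a derivation tree $D$ of $t: X \r{fb}[\alpha][k] Y$. We show that any other derivation $D'$ with the same conclusion coincides with $D$. Forward and backward transitions are handled separately; the backward rules are the symmetric of the forward ones, so the backward case mirrors the forward one with the roles of $X$ and $Y$ exchanged.

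In each case we inspect the top-level constructor of the source $X$ and the target $Y$, which together determine the last rule applied.
\begin{itemize}
\item For $X = \alpha.X_0$ with $Y = \alpha[k].X_0$, only act applies.
\item For $X = \alpha[k'].X_0$, only pre applies, with premise $X_0 \r{f}[\beta][m] Y_0$. The premise is then determined by $X_0$ and $Y_0$, and we use the induction hypothesis.
\item For restriction, only res applies. Its premise has the same label and the stripped source and target.
\item For a sum $X_1 + X_2$, the rule must be $+_\L$ or $+_\R$. Which summand moved is determined by comparing the targets: the side that changed is the one that acquired key $k$. The other side is unchanged and, by the side condition, remains standard.
\item For a parallel composition, we likewise compare $Y = Y_1 \mid Y_2$ with $X_1 \mid X_2$ to decide between $\mid_\L$, $\mid_\R$ and syn.
\end{itemize}
In the syn case, both components change and both premises have key $k$. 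Their labels are $\lambda$ and $\out{\lambda}$, where $\lambda$ is recovered from the keyed prefix carrying $k$ in $Y_1$. In every case, once the rule and its premises are fixed, the induction hypothesis on the premises closes the argument.

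The delicate step is showing that the premises' conclusions are uniquely determined by $X$ and $Y$ alone. This is needed in two places.
\begin{itemize}
\item Distinguishing $\mid_\L$ from syn. Under $\mid_\L$, the side condition $k \notin \keys{X_2}$ (forward) or $k\notin\keys{Y_2}$ (backward) forces $X_2 = Y_2$. Under syn, $X_2 \neq Y_2$, since key $k$ appears fresh on the right.
\item Showing a process cannot step to itself, which would otherwise make $\mid_\L$ and $\mid_\R$ both plausible. A forward step strictly increases the number of keyed prefixes, and a backward step strictly decreases it. An auxiliary lemma to this effect, proved by a straightforward induction on derivations, rules out the overlap.
\end{itemize}

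The main obstacle is the syn case, where the premise label $\lambda$ versus $\out{\lambda}$ must be pinned down without proof labels. My first attempt would be an auxiliary lemma: if $X \r{f}[\beta][k] Y$ with $k\notin\keys{X}$, then $\beta$ is the label of the unique keyed prefix with key $k$ in $Y$. Reachability (\autoref{definition:reachable}) guarantees that such a keyed prefix is unique in each thread, which determines both premise labels.

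Restriction is treated modulo $\alpha$-renaming, as in the open/closed discussion of \autoref{dis:open-closed}. For \ccsk, labels carry no proof information, so only closed transitions arise and no separate open case is needed.
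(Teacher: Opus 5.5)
This is essentially the paper's argument: induction on the derivation, with the shapes of source and target determining the last rule and its premises (together with the fact that forward steps add a fresh key and backward steps remove one, so no process steps to itself), and backward steps handled by mirroring.

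One local fix is needed. Your auxiliary lemma is false as stated: the synchronisation from $a \Par \out{a}$ to $a[k] \Par \out{a}[k]$ has label $\tau[k]$, yet it yields two $k$-keyed prefixes, neither labelled $\tau$. Restrict the lemma to $\beta \neq \tau$. Such derivations contain no syn, so exactly one act introduces $k$, and this is all the syn case needs. Note also that the uniqueness of that prefix comes from the freshness of $k$, not from reachability.
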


The proof proceeds as in \autoref{lem-derivation-uniqueness-ccskp}, with the structure of the source and target processes guiding the derivation.

\begin{definition}[Mappings between \pccsk and \ccsk]%
	\label{def-ccsk-pccsk-bijection}
	We define the \bel{code/bijection/definitions.bel}[\emph{proof forgetful} (\(\base{\cdot}\))][53][59] and \bel{code/bijection/definitions.bel}[\emph{proof enrichment} (\(\prov{\cdot}\))][61][67] mappings between transitions of \pccsk and \ccsk
	\begin{align*}
		\base{\cdot}&: (X_1 \pr{bf}[\theta] X_2) \mapsto (X_1 \r{bf}[\ekay{\theta}] X_2) \tag{Proof forgetful (\pccsk to \ccsk)} \\
		\prov{\cdot}&: (X_1 \r{bf}[\alpha][k] X_2) \mapsto (X_1 \pr{bf}[\theta] X_2) & \text{\st $\labl{\theta} = \alpha$, $\kay{\theta} = k$} \tag{Proof enrichment (\ccsk to \pccsk)}
	\end{align*}
	as follows:
	
	\begin{description}
		\item[$\base{\cdot}$] leverages that derivation trees in \pccsk are unique (\autoref{lem-derivation-uniqueness-ccskp}) to construct the \ccsk derivation using the corresponding \ccsk rule (in the forward LTS if the transition was forward, in the backward LTS otherwise). %
		\item[$\prov{\cdot}$] follows the same reasoning but swapping \pccsk and \ccsk, and using \autoref{lem-derivation-uniqueness-ccsk}. 
	\end{description}
\end{definition}

	\begin{example}
	\label{ex-mapping}
	We provide in \autoref{fig:mapping-example} an \bel{examples/examples-paper.bel}[example][138][199] of how the forgetful mapping $\base{\cdot}$ proceeds.
    Consider the \pccsk transition 
    in the upper box, whose
    derivation tree is unique by \autoref{lem-derivation-uniqueness-ccskp}.
    One can then construct the corresponding derivation tree in \ccsk, whose conclusion, presented in the lower box, is the corresponding \ccsk transition.
	\end{example}

\begin{figure}
    \input{figures/mapping-forget.tex}
	\caption{An example of the forgetful mapping \(\base{\cdot}\) from \pccsk to \ccsk transitions.}
    \Description[short description]{long description}%
    \label{fig:mapping-example}
\end{figure}

\begin{discussion}
	Defining \(\base{\cdot}\) and \(\prov{\cdot}\) as relations requires to distinguish open and closed, forward and backward, transitions but is otherwise straightforward.
	Showing that the defined relations are indeed functions requires to prove that they are \bel{code/bijection/functionality.bel}[functional] and \bel{code/bijection/totality.bel}[total], relying on \bel{code/ccskp/lemmas-bijection.bel}[auxiliary lemmas].
\end{discussion}

\begin{lemma}[{\bel{code/bijection/bijection.bel}[Bijection between \pccsk and \ccsk]}]%
	\label{lem-bijection}
	The mappings \(\base{\cdot}\) and \(\prov{\cdot}\) are mutual inverses.
\end{lemma}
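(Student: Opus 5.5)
We have to show $\prov{(\base{t})} = t$ for every \pccsk transition $t$, and $\base{(\prov{u})} = u$ for every \ccsk transition $u$. Both mappings are defined by structural recursion on derivation trees, so the natural plan is induction on the unique derivation tree of the argument.

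For the first identity, take $t: X_1 \pr{bf}[\theta] X_2$ and its unique derivation tree, given by \autoref{lem-derivation-uniqueness-ccskp}. We proceed by induction on that tree, splitting on the last rule applied. In each case, $\base{\cdot}$ produces a \ccsk derivation ending in the corresponding \ccsk rule. That rule has the same side conditions on keys and names, since $\keys{\cdot}$ and non-occurrence of keys do not involve proof labels. Its premises are the images under $\base{\cdot}$ of the premises of $t$.

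Next we apply $\prov{\cdot}$ to this \ccsk derivation. By \autoref{lem-derivation-uniqueness-ccsk}, this derivation is the unique one for $\base{t}$, so $\prov{\cdot}$ analyses it rule by rule. It therefore rebuilds the \pccsk rule we started from, applied to $\prov{(\base{t'})}$ for each premise $t'$. By the induction hypothesis, $\prov{(\base{t'})} = t'$, so the reconstructed tree is the original one. Finally, the proof label is a function of the derivation tree: $\lmidd$ and $\lplusd$ prefixes come from the par and sum rules, pairing $\cpair{\cdot}{\cdot}$ comes from syn, and act, pre and res leave the label untouched. Hence the label is recovered as well, and $\prov{(\base{t})} = t$. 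The second identity, $\base{(\prov{u})} = u$, is symmetric: we induct on the \ccsk derivation of $u$. Here the key observation is that $\ekay{\theta}$ of the rebuilt label equals the original keyed label, because $\ell$ and $\kayop$ commute with each rule; for instance, syn yields $\tau$ with the shared key $k$.

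Besides the forward and backward cases, the proof must also distinguish closed and open transitions, following \autoref{dis:open-closed}. In the restriction cases res and \tRev{res}, the \pccsk premise may carry an open proof label abstracting over the restricted name $a$. We must check that forgetting and then re-enriching yields the same open label, and not a closed one. For this we use the auxiliary fact, already employed for \autoref{lem-derivation-uniqueness-ccskp}, that between fixed source and target processes there cannot be both a closed and an open transition. Together with functionality of the mappings, this fact pins down the result.

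I expect this open/closed bookkeeping in the restriction case to be the main obstacle; the remaining cases are routine congruences. If the direct argument proves awkward, I would first prove an auxiliary lemma by induction: a \ccsk transition under a binder maps to an open \pccsk label exactly when the restricted name occurs in the enriched label. Both directions can then go through uniformly.
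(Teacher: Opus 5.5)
Your proposal is correct and takes essentially the paper's route: the formalised proof separately establishes the two round-trip identities $\prov{(\base{t})} = t$ and $\base{(\prov{u})} = u$. In both, the mappings are defined rule by rule on the unique derivation trees (\autoref{lem-derivation-uniqueness-ccskp}, \autoref{lem-derivation-uniqueness-ccsk}), with cases split into forward/backward and open/closed transitions, as you plan. You also correctly identify the restriction cases as the only non-routine part, where the auxiliary facts about open versus closed transitions (the same ones behind uniqueness of derivations and functionality of the mappings) are needed.
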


The formalised proof separately establishes that \bel*{code/bijection/bijection.bel}[enrich \(\circ\) forget = id][81][98] and that \bel*{code/bijection/bijection.bel}[forget \(\circ\) enrich = id][143][156]. %

This bijection proves crucial for \enquote{porting} the independence relation from \pccsk to \ccsk in \autoref{def-ind-dep-ccsk}, and then leveraging the properties of \pccsk to demonstrate the same properties for \ccsk in \autoref{ssec:proof-axioms-ccsk}.

%% file: figures/ccskp-lts.tex
	\begin{tcolorbox}[title = {Action, Prefix and Restriction}, sidebyside]
	\begin{tcolorbox}[adjusted title=Forward]
		\begin{prooftree}
			\hypo{}
			\infer[left label={\(\keys{X} = \emptyset%
				\)}]1[act]{\alpha. X \pr{f}[\alpha][k]  \alpha[k].X}
		\end{prooftree}
		\\[1.3em]
		\begin{prooftree}
			\hypo{X \pr{f}[\theta] X'}
			\infer[left label={\(\kay{\theta} \neq k\)}]1[pre]{\alpha[k]. X \pr{f}[\theta] \alpha[k].X'}
		\end{prooftree}
		\\[1.3em]
		\begin{prooftree}
			\hypo{ X \pr{f}[\theta] X '}
			\infer[left label={\(\labl{\theta} \notin \{a, \out{a}\}\)}]1[res]{X  \bs a  \pr{f}[\theta] X ' \bs a}
		\end{prooftree}
	\end{tcolorbox}
	\tcblower
	\begin{tcolorbox}[adjusted title=Backward]
		\begin{prooftree}
			\hypo{}
			\infer[left label={\(\keys{X} = \emptyset
				\)}]1[\tRev{act}]{ \alpha[k].X \pr{b}[\alpha][k] \alpha. X}
		\end{prooftree}
		\\[1.3em]
		\begin{prooftree}
			\hypo{X' \pr{b}[\theta] X}
			\infer[left label={\(\kay{\theta} \neq k\)}]1[\tRev{pre}]{\alpha[k].X'\pr{b}[\theta] \alpha[k]. X }
		\end{prooftree}
		\\[1.3em]
		\begin{prooftree}
			\hypo{ X' \pr{b}[\theta] X}
			\infer[left label={\(\labl{\theta} \notin \{a, \out{a}\}\)}]1[\tRev{res}]{X ' \bs a\pr{b}[\theta] X  \bs a  }
		\end{prooftree}
	\end{tcolorbox}
\end{tcolorbox}
\begin{tcolorbox}[title = Parallel, sidebyside]
	\begin{tcolorbox}[adjusted title=Forward]
		\begin{prooftree}
			\hypo{X \pr{f}[\theta] X'}
			\infer[left label={\(\kay{\theta} \notin \keys{Y}\)}]
			1[\(\lmidl\)]{X \Par Y \pr{f}[\lmidl\theta] X' \Par  Y}
		\end{prooftree}
		\\[1.3em]
		\begin{prooftree}
			\hypo{X \pr{f}[\upsilon_{\L} \lambda][k]  X'}
			\hypo{Y \pr{f}[\upsilon_{\R} \out{\lambda}][k]  Y'}
			\infer2[syn]{X \Par  Y \pr{f}[\cpair{\upsilon_{\L} \lambda \protect{[k]}}{\upsilon_{\R} \out{\lambda} \protect{[k]}}] X' \Par  Y'}
		\end{prooftree}
	\end{tcolorbox}
	\tcblower
	\begin{tcolorbox}[adjusted title=Backward]
		\begin{prooftree}
			\hypo{X' \pr{b}[\theta] X}
			\infer[left label={\(\kay{\theta} \notin \keys{Y}\)}]
			1[\rlmidl]{X' \Par Y \pr{b}[\lmidl\theta] X \Par  Y}
		\end{prooftree}
		\\[1.3em]
		\begin{prooftree}
			\hypo{X' \pr{b}[\upsilon_{\L} \lambda][k]  X}
			\hypo{Y' \pr{b}[\upsilon_{\R} \out{\lambda}][k]  Y}
			\infer%
			2[\tRev{syn}]{X' \Par  Y' \pr{b}[\cpair{\upsilon_{\L} \lambda \protect{[k]}}{\upsilon_{\R} \out{\lambda} \protect{[k]}}] X \Par  Y}
		\end{prooftree}
	\end{tcolorbox}
\end{tcolorbox}

\begin{tcolorbox}[adjusted title=Sum, sidebyside]
	\begin{tcolorbox}[adjusted title=Forward]
		\begin{prooftree}
			\hypo{X \pr{f}[\theta] X'}
			\infer[left label={\(\keys{Y} = \emptyset\)}]1[\( \lplusl \)]{X + Y \pr{f}[\lplusl \theta] X' + Y}
		\end{prooftree}
	\end{tcolorbox}
	\tcblower
	\begin{tcolorbox}[adjusted title=Backward]
		\begin{prooftree}
			\hypo{X' \pr{b}[\theta] X}
			\infer[left label={\(\keys{Y} = \emptyset\)}]1[\(\rlplusl\)]{X' + Y \pr{b}[\lplusl \theta] X + Y}
		\end{prooftree}
	\end{tcolorbox}
\end{tcolorbox}

%% file: figures/example-transition.tex
\tikzset{
	lbl/.style = {
		scale=.8,
	}
}

\begin{tikzpicture}[
	x={(1.5, 0)},  %
	y={(0, 1.5)}, %
	baseline,
	anchor=base
	]
	\node (aab) at (0, 0){\(a \Par (\out{a}+b)\)};
	\node (amamb) at (1.2, .9){\(a[m] \Par (\out{a}[m] + b)\)};
	\node (aamb) at (1.2, -.9){\(a \Par (\out{a}[m] + b)\)};
	\node (akab) at (2.4, 0){\(a[k] \Par (\out{a} + b)\)};
	\node (akamb) at (3.5, -.9){\(a[k] \Par (\out{a}[m] + b)\)};
	\node (akabm) at (5, 0){\(a[k] \Par (\out{a} + b[m])\)};
	\node (akabn) at (3.5, .9){\(a[k] \Par (\out{a} + b[n])\)};
		
	\draw[{Bar[]}->] (aab) -- node[lbl, xshift=-45]{\(\cpair{a[m]}{\lplusl \out{a}[m]}\)} (amamb);
	\draw[{Bar}-{Straight Barb[scale=0.8]}, reverse] (akab)  node[lbl, xshift=-65, yshift=6]{\(\lmidl a[k]\)} -- (aab);
	\draw[{Bar[]}->] (akab) -- node[lbl, xshift=0, yshift=5]{\(\lmidr \lplusr b[m]\)} (akabm);
	\draw[{Bar[]}->] (akab) -- node[lbl, xshift=-26, yshift=0]{\(\lmidr \lplusr b[n]\)} (akabn);
	\draw[{Bar[]}->] (akab) -- node[lbl, xshift=25, yshift=0]{\(\lmidr \lplusl \out{a}[m]\)} (akamb);
	\draw[{Bar[]}->] (aab) -- node[lbl, xshift=25, yshift=0]{\(\lmidr \lplusl \out{a}[m]\)} (aamb);
	\draw[{Bar}-{Straight Barb[scale=0.8]}, reverse] (akamb)  node[lbl, xshift=-62, yshift=5]{\(\lmidl a[k]\)} -- (aamb);
\end{tikzpicture}

%% file: figures/ccsk-lts.tex
	\begin{tcolorbox}[title = {Action, Prefix and Restriction}]
    \begin{prooftree}
        \hypo{}
        \infer[left label={\(\keys{X} = \emptyset%
            \)}]1[act]{\alpha. X \r{f}[\alpha][k]  \alpha[k].X}
    \end{prooftree}
    \hfill 
    \begin{prooftree}
        \hypo{X \r{f}[\beta][k] X'}
        \infer[left label={\(k \neq k'\)}]1[pre]{\alpha[k']. X \r{f}[\beta][k] \alpha[k'].X'}
    \end{prooftree}
    \hfill 
    \begin{prooftree}
        \hypo{ X   \r{f}[\alpha][k]X '}
        \infer[left label={\(\alpha \notin \{a, \out{a}\}\)}]1[res]{X  \bs a  \r{f}[\alpha][k]X ' \bs a}
    \end{prooftree}
\end{tcolorbox}

\begin{tcolorbox}[adjusted title=Parallel]
    \begin{prooftree}
        \hypo{X \r{f}[\alpha][k]X'}
        \infer[left label={\(k \notin \keys{Y}\)}]
        1[\(\lmidl\)]{X \Par   Y \r{f}[\alpha][k] X' \Par  Y}
    \end{prooftree}
    \hfill 
    \begin{prooftree}
        \hypo{Y \r{f}[\alpha][k]Y'}
        \infer[left label={\(k \notin \keys{X}\)}]
        1[\(\lmidr\)]{X \Par   Y  \r{f}[\alpha][k]X \Par  Y'}
    \end{prooftree}
    \hfill 
    \begin{prooftree}
        \hypo{X \r{f}[\lambda][k]  X'}
        \hypo{Y \r{f}[\out{\lambda}][k]  Y'}
        \infer%
        2[syn]{X \Par  Y \r{f}[\tau][k]  X' \Par  Y'}
    \end{prooftree}
\end{tcolorbox}
\begin{tcolorbox}[adjusted title=Sum]
    \makebox[.45\textwidth][c]{
        \begin{prooftree}
            \hypo{X \r{f}[\alpha][k] X'}
            \infer[left label={\(\keys{Y} = \emptyset%
                \)}]1[\( \lplusl \)]{X + Y \r{f}[\alpha][k] X' + Y}
        \end{prooftree}
    }
    \makebox[.45\textwidth][c]{
        \begin{prooftree}
            \hypo{Y \r{f}[\alpha][k] Y'}
            \infer[left label={\(\keys{X} = \emptyset%
                \)}]1[\( \lplusr \)]{X + Y \r{f}[\alpha][k] X + Y'}
        \end{prooftree}
    }
\end{tcolorbox}

%% file: figures/mapping-forget.tex
\tikzset{tikzmark prefix=ex3-}
\resizebox{\textwidth}{!}{%
\begin{tikzpicture}[remember picture]
    \node [%
    line width=0.4mm,
    inner sep=0pt,
    outer sep=0pt,
    rounded corners=0.2cm,
    draw = tsdep,
    minimum size=.7cm,
    text width=8.6cm,
    text centered,
    label={[yshift=2.3em]center:\(\pccsk\)}] (pccsk-transition)
    {\(b[k']. (a.b + c) \Par \out{a}.d \pr{f}[\cpair{\lplusl a\protect{[k]}}{\out{a}\protect{[k]}}] b[k']. (a[k].b + c) \Par \bar{a}[k].d\)};
    \node (pccsk-derivation) [below = 2.6cm of pccsk-transition, xshift=.3cm] {
        \begin{prooftree}
            \hypo{}
            \infer[]1[\subnode{pccsk-act1}{act}]{a.b \pr{f}[a][k] a[k].b}
            \infer[]1[\subnode{pccsk-plus}{\( \lplusl \)}]{a.b + c \pr{f}[\lplusl a[k]] a[k].b + c}
            \infer[]1[\subnode{pccsk-pre}{pre}]{b[k']. (a.b + c) \pr{f}[\lplusl a[k]] b[k']. (a[k].b + c)}
            \hypo{}
            \infer[]1[\subnode{pccsk-act2}{act}]{\out{a}.d \pr{f}[\out{a}][k] \bar{a}[k].d}
            \infer2[\subnode{pccsk-syn}{syn}]{b[k']. (a.b + c) \Par \out{a}.d \pr{f}[\cpair{\lplusl a\protect{[k]}}{\out{a}\protect{[k]}}] b[k']. (a[k].b + c) \Par \bar{a}[k].d}
    \end{prooftree}};

    \node (ccsk-derivation) [above right = 0cm and -3.2cm of pccsk-derivation] {
        \begin{prooftree}
            \hypo{}
            \infer[left label=\subnode{ccsk-act1}{act}]1[]{a.b \r{f}[a][k] a[k].b}
            \infer[left label=\subnode{ccsk-plus}{\( \lplusl \)}]1[]{a.b + c \r{f}[a[k]] a[k].b + c}
            \infer[left label=\subnode{ccsk-pre}{pre}]1[]{b[k']. (a.b + c) \r{f}[a[k]] b[k']. (a[k].b + c)}
            \hypo{}
            \infer[]1[\subnode{ccsk-act2}{act}]{\out{a}.d \r{f}[\out{a}][k] \bar{a}[k].d}
            \infer[]2[\subnode{ccsk-syn}{syn}]{~}
    \end{prooftree}};
    
    \node [%
    line width=0.4mm,
    inner sep=0pt,
    outer sep=0pt,
    rounded corners=0.2cm,
    draw = tsdep,
    minimum size=.7cm,
    text width=7cm,
    text centered,
    label={[yshift=-2.3em]center:\(\ccsk\)}, below = .05cm of ccsk-syn -| ccsk-derivation] (ccsk-transition)
    {\(b[k']. (a.b + c) \Par \out{a}.d \r{f}[\tau[k]] b[k']. (a[k].b + c) \Par \bar{a}[k].d\)};
    \draw[f, ord] (pccsk-act1) to[bend left] (ccsk-act1);
    \draw[f, ord] (pccsk-plus) to[bend left] (ccsk-plus);
    \draw[f, ord] (pccsk-pre) to[bend left] (ccsk-pre);
    \draw[f, ord] (pccsk-act2) edge[out=0,in=-50] (ccsk-act2);
    \draw[f, ord] (pccsk-syn) edge[out=0,in=-100] (ccsk-syn);
    \draw (pccsk-transition) edge[out=-90,in=180,->, thick, draw = tsdep] node[right, pos=.25]{\(\base{\cdot}\)} (ccsk-transition);
    \draw [ord] ($(pccsk-transition.south)+(-.6, 0)$) edge[%
        double equal sign distance, -Implies]
        node[left]{\autoref{lem-derivation-uniqueness-ccskp}} +(0, -2.6); %
\end{tikzpicture}
}

%% file: sections/complementarity.tex
\section{Complementarity of Independence and Dependence for \texorpdfstring{\pccsk}{CCSKP} and \texorpdfstring{\ccsk}{CCSK}}%
\label{sec:ind-complem}

This section defines three relations using only proof labels: independence, dependence, and connectedness, this later named after the relation on transitions it captures (\autoref{prop-connectednessadequacy}).
The important point is that any two proof labels in the connectedness relation are either dependent or independent (\autoref{thm-complementarity}).
While the independence and dependence relations are inspired by existing works on \ccs~\cite{BC88,BC94}, \pccs~\cite{DeganoGP03} and \pccsk~\cite{aubert2023c,DeganoGP03}, we are not aware of any direct formulation of independence and dependence that does not postulate their complementarity.
Moreover, it can be shown that our independence relation is a conservative extension over the concurrency relation for \pccs~\cite[Appendix B.1]{APU24}.
All the definitions, results and examples in this section have been formalised.

\begin{figure}%
    \input{figures/connected-rules.tex}
    \caption{Connectedness on proof labels.}%
    \Description[short description]{long description}%
    \label{fig:pccsk-connect} 
\end{figure}

\begin{figure}%
    \input{figures/dep-indep.tex}
    \caption{Independence and dependence on proof labels.}%
    \Description[short description]{long description}%
    \label{fig:pccsk-dep-ind}
\end{figure}

\begin{definition}[{\bel{code/ccskp/definitions.bel}[Relations on proof labels][239][321]}]%
	\label{def-relation-proof-labels}
	Two proof labels \(\theta_1\), \(\theta_2\) are \emph{connected} (\resp \emph{independent}, \emph{dependent}) if \(\theta_1 \conn \theta_2\) (\resp \(\theta_1 \lind \theta_2\), \(\theta_1 \sdep \theta_2\)) can be derived using the rules in \autoref{fig:pccsk-connect} (\resp \autoref{fig:pccsk-dep-ind}).
\end{definition}

\begin{remark}
	The names of the rules in \autoref{fig:pccsk-dep-ind} follow loosely the proved transition rule names of a system that inspired them~\cite[pp.~257--258]{BC94}.
	Reusing the same rule names across the different relations is intentional and will facilitate proving that connectedness can be split between independence and dependence.
	The only exception is the pair of P\(^2_k\) rules for independence and dependence, which split \autoref{fig:pccsk-connect}'s P\(^2\) by introducing an additional, and trivially complementary, condition on keys; accordingly, they are subscripted with a \enquote{\(k\)} to stress this extra requirement.
\end{remark}

\begin{discussion}%
	\label{rem:rules-overlap}
	Recall that the encoding distinguishes between closed and open proof labels (\autoref{dis:open-closed}), with the latter representing proof labels modulo \(\alpha\)-renaming.
    In Beluga, we first define connectedness, independence and dependence of closed proof labels according to the rules in Figures \ref{fig:pccsk-connect} and \ref{fig:pccsk-dep-ind}; we then lift the definition to open proof labels by requiring the corresponding judgment to hold for every possible instantiation of the bound name.
	
	In the original presentation of the connectedness and dependence relations~\cite{APU24}, the A\(^2\) rule for actions included the premise \enquote{\(\theta\) is not a prefix}, to ensure uniqueness of derivation trees for these relations. In this work we remove this premise, allowing the A\(^1\) and A\(^2\) rules to overlap. This modification has the benefit of simplifying the Beluga encoding, while all relevant properties of the relations on proof labels remain provable without relying on derivation uniqueness.
\end{discussion}

\begin{remark}%
	\label{rem:ind-irreflexivity}
	It is easy to prove that \(\ind\) is \bel{code/ccskp/complementarity.bel}[irreflexive][3][20] and 
	\bel{code/ccskp/complementarity.bel}[symmetric][22][48], and that \bel{code/ccskp/complementarity.bel}[if $\theta_1 \lind \theta_2$ then $\kay{\theta_1} \neq \kay{\theta_2}$][50][150]. %
	Similarly, $\conn$ is \bel{code/ccskp/complementarity.bel}[reflexive][152][171] and \bel{code/ccskp/complementarity.bel}[symmetric][173][203], and $\sdep$ is \bel{code/ccskp/complementarity.bel}[reflexive][205][224] and \bel{code/ccskp/complementarity.bel}[symmetric][226][258] as well.
\end{remark}

\begin{example}%
	\label{ex-part2}
	Re-using the labels of the transitions in \autoref{fig:example-transitions} we have, \eg 
	\begin{align}
		\lmidr \lplusl \out{a}[m] & \conn  \cpair{a[m]}{\lplusl \out{a}[m]} \tag{\bel{examples/examples-paper.bel}[By S\(^1\), C\(^1\) and  A\(^1\) for \(\conn\)][206][209]}\\%
		\lmidr \lplusl \out{a}[m]  & \sdep  \lmidr \lplusr b[n] \tag{\bel{examples/examples-paper.bel}[By P\(^1\) and  C\(^2\) for \(\sdep\)][211][214]} \\
		\lmidl a[k] & \lind \lmidr \lplusr b[n] \tag{\bel{examples/examples-paper.bel}[By P\(^2_k\) for \(\ind\)][216][219]}
	\end{align}
    This matches the intuition: \autoref{fig:example-transitions} displays transitions labelled \(\lmidr \lplusl \out{a}[m]\) and \(\cpair{a[m]}{\lplusl \out{a}[m]}\) that are connected.
    No process can perform both transitions labelled \(\lmidr \lplusr b[n]\) and \(\lmidr \lplusl \out{a}[m]\), illustrating how dependence captures the exclusive choices proper of summation.
    Finally, the process \(a[k] \Par (\out{a} + b)\) can perform transitions labelled \(\lmidl a[k]\) and \(\lmidr \lplusr b[n]\), showing that independence captures transitions that do not prevent each other from happening.
\end{example}

We first prove that connectedness of proof labels characterises indeed connectedness of transitions (\autoref{def-transitions-paths}):

\begin{proposition} %
	\label{prop-connectednessadequacy}
	\begin{enumerate}
		\item \bel{code/ccskp/connectedness-relationship-one.bel}[If \(t_1\) and \(t_2\) are connected  then \(\lblof{t_1} \conn \lblof{t_2}\)].  \label{prop-connectednessadequacy-1}
		\item \bel{code/ccskp/connectedness-relationship-two.bel}[If \(\theta_1 \conn \theta_2\), then there exist connected forward transitions \(t_1\) and \(t_2\) with \(\lblof{t_i} = \theta_i\) ($i = 1,2$)]. \label{prop-connectednessadequacy-2}
	\end{enumerate}
\end{proposition}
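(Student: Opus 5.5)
The plan is to prove both items by structural induction, with the rule names of \autoref{fig:pccsk-connect} guiding the case analysis. Item \itemref{prop-connectednessadequacy-1} goes by induction on the source process; item \itemref{prop-connectednessadequacy-2} goes by induction on the derivation of \(\theta_1 \conn \theta_2\). I cannot see the exact rules of \autoref{fig:pccsk-connect}, so I assume they contain action rules A\(^1\), A\(^2\), parallel rules P\(^1\), P\(^2\), sum rules S\(^1\), S\(^2\), and synchronisation rules C\(^1\), C\(^2\); the rule names below are used in that sense.

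For \itemref{prop-connectednessadequacy-1}, I first generalise the statement. Take a path \(r : \srcof{t_1} \pr{fb}^* \tgtof{t_2}\). By the Loop Lemma (\autoref{lem-loop_proved}) I may replace \(t_2\) by \(\rev{t_2}\), so it suffices to treat two transitions \(t_1 : X \pr{fb}[\theta_1] X_1\) and \(u : Y \pr{fb}[\theta_2] Y_1\) together with a path \(X \pr{fb}^* Y\); proof labels are unchanged by reversal. By \autoref{lem-derivation-uniqueness-ccskp} each transition has a unique derivation, so the top-level rule is determined by the shape of the processes.

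I then proceed by induction on the structure of \(X\). The key ingredient is a projection lemma for paths, which I expect to prove separately by induction on path length:
\begin{itemize}
\item a path out of \(X_L \Par X_R\) stays of the form \(X'_L \Par X'_R\), and it projects to paths \(X_L \pr{fb}^* X'_L\) and \(X_R \pr{fb}^* X'_R\), with syn steps contributing one step to each side;
\item a path out of \(X_L + X_R\) either stays in one summand or passes through the standard process \(\orig{X}\);
\item restriction and keyed prefixes are simply peeled off.
\end{itemize}
The cases are then as follows.
\begin{itemize}
\item \(\Par\): if both labels start with the same direction \(\lmidd\), the induction hypothesis on the projected path gives connectedness of the tails, and P\(^1\) concludes. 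If the directions differ, P\(^2\) applies directly; it needs no hypothesis beyond well-formedness. If one or both labels come from syn, the projection gives a path on each component, and the C rules close the case.
\item \(+\): if both transitions act in the same summand, the induction hypothesis and S\(^1\) apply. If they act in different summands, the path must go through \(\orig{X}\), and S\(^2\) applies.
\item Action and prefix: the label of a transition performed by the head action is a bare \(\alpha[k]\), and A\(^1\)/A\(^2\) relate it to any label of the continuation.
\item Restriction: it leaves labels unchanged, so the induction hypothesis transfers directly.
\end{itemize}

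For \itemref{prop-connectednessadequacy-2}, I go by induction on the derivation of \(\theta_1 \conn \theta_2\) and build a standard process together with two forward transitions from it. Each case uses the corresponding operator: for P\(^1\) and P\(^2\), take \(X \Par Y\) with \(X\), \(Y\) obtained from the induction hypothesis or from plain prefixes; for S rules, take \(X + Y\); for A rules, take \(\alpha.X\); for C rules, combine the sub-witnesses with a syn step. I will make all witnesses start from \emph{coinitial} standard processes whenever possible, because then connectedness is immediate from the empty or one-step path. Whenever the required transitions are not coinitial, for instance when \(\theta_2\) lives under a keyed prefix, I take the path that first performs \(t_1\); for composable transitions the empty path suffices.

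I expect the main obstacle to lie in the key side conditions, in both items. In item \itemref{prop-connectednessadequacy-2}, the constructed transitions must satisfy \(\kay{\theta} \notin \keys{Y}\) for \(\lmidl\), the freshness side condition of act, and the validity predicate on proof labels (\autoref{dis:pr-lab}); this may force renaming keys in a sub-witness, which requires a small key-renaming lemma. In item \itemref{prop-connectednessadequacy-1}, the sum case needs the path to be projected correctly when it returns to a standard summand, and the open/closed duplication of transitions under restriction (\autoref{dis:open-closed}) doubles these cases.
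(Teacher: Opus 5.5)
Your plan for \itemref{prop-connectednessadequacy-1} is fine: induction on the source process, projecting the connecting path onto the components, is the unsurprising (if tedious) argument. The rule names differ from what you assumed. In \autoref{fig:pccsk-connect} the Choice rules are C\(^1\), C\(^2\). The Synchronisation rules are S\(^1\), S\(^2\) (a one-sided label against a synchronisation, with premise \(\theta \conn \theta_{\D}\)) and S\(^3\), which relates two synchronisations and has \emph{two} premises \(\theta_1 \conn \theta'_1\) and \(\theta_2 \conn \theta'_2\). S\(^3\) is where your plan for \itemref{prop-connectednessadequacy-2} has a genuine gap, hidden in ``combine the sub-witnesses with a syn step''. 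S\(^1\) and S\(^2\) are harmless: lift the single witness path next to a standard realiser of \(\theta_{\OD}\), whose key set is empty, and fire syn at the end. For S\(^3\), however, your induction hypothesis yields two unrelated witnesses, one per component, each connected by an arbitrary path, possibly with backward steps. Both halves of a synchronisation carry the same key, so \(k = \kay{\theta_1} = \kay{\theta_2}\) and \(m = \kay{\theta'_1} = \kay{\theta'_2}\) occur in both witnesses. No key renaming can remove them, because they are pinned by the labels. To get one path in the parallel composition you must interleave the two component paths, but every step lifted through \(\lmidl\) or \(\lmidr\) needs its key to be absent from the other component. A left step using \(m\) is blocked as soon as the right component holds \(m\), and the juxtaposition of the two sources need not even be reachable. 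Neither ``coinitial whenever possible'' nor ``the path that first performs \(t_1\)'' addresses this. This merging is the delicate part of the item; the paper itself reports that the original pen-and-paper proof of this item was flawed and had to be redesigned.

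What is missing is an invariant that makes merging possible. Prove instead that whenever \(\theta_1 \conn \theta_2\) there are a \emph{standard} process \(P\) and two forward-only paths from \(P\) ending with transitions labelled \(\theta_1\) and \(\theta_2\); these are then connected through \(P\). Your base constructions satisfy this; for A\(^1\), the second path fires the prefix with a key different from \(\kay{\theta}\) before performing \(\theta\). The cases P\(^1\) and C\(^1\) preserve it by placing \(\nil\) on the other side. It can also be recovered from plain connectedness, since connected processes share their origin and, by \PL, a reachable process is the target of a forward-only path from that origin. Given the invariant for both premises of S\(^3\), use equivariance of the rules under key permutations (which fix standard processes) to send the intermediate keys of the left and right paths to disjoint fresh sets, keeping \(k\) and \(m\). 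Then, from \(P_{\L} \Par P_{\R}\):
\begin{enumerate}
\item run the left path up to its last step under \(\lmidl\) (the right side is still standard);
\item run the right path up to its last step under \(\lmidr\) (its keys are disjoint from the left's);
\item fire syn on the two last steps.
\end{enumerate}
Do the same for the other pair. Both synchronisations are then reached by forward paths from the same standard process, hence connected. Finally, the premise-free rules A\(^1\), A\(^2\), P\(^2\), C\(^2\) need a realisability lemma (every proof label labels a forward transition from some standard process), not just plain prefixes, since labels may contain nested directions and synchronisations.
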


The original pen-and-paper proof spans about four pages~\cite[Section B.2]{APU24}.
The mechanised proof of \itemref{prop-connectednessadequacy-1} essentially follows the original (unsurprising, but tedious) argument.
However, the mechanised proof of \itemref{prop-connectednessadequacy-2} diverges significantly~\cite[Section 3.3.2]{Cec25}: in brief terms, the original argument was leveraging a short proposition~\cite[Proposition B.12]{APU24} incorrectly, and the proof strategy had to be completely re-designed.

\begin{discussion}
\label{disc:connectednessadequacy-2}
\autoref{prop-connectednessadequacy}~\itemref{prop-connectednessadequacy-2} holds on paper, and its proof has been machine-checked for closed proof labels. Since the encoding introduces open proof labels, it is natural to ask whether the result remains valid for open proof labels: it turns out that this is not the case. %
For instance, %
consider the (closed) proof labels
\begin{align*}
\theta_1 = \cpair{a[m]}{\lplusl \out{a}[m]} && \text{ and } && \theta_2 = \cpair{\lplusr b[k]}{\out{b}[k]}\text{.}
\end{align*}

It is easy to see that \(\theta_1 \conn \theta_2\) and that, \eg 
\begin{align*}
a . (\nil + b) \Par \out{b} . (\out{a} + \nil) & \pr{f}[\lmidr \out{b}][n] a . (\nil + b) \Par \out{b}[n] . (\out{a} + \nil)  \pr{f}[\theta_1] a[m] . (\nil + b) \Par \out{b}[n] . (\out{a}[m] + \nil) \text{,} \\ 
a . (\nil + b) \Par \out{b} . (\out{a} + \nil) & \pr{f}[\lmidl a][n] a[n] . (\nil + b) \Par \out{b} . (\out{a} + \nil)  \pr{f}[\theta_2] a[n] . (\nil + b[k]) \Par \out{b}[k] . (\out{a} + \nil) \text{.}
\end{align*}

However, proving \autoref{prop-connectednessadequacy}~\itemref{prop-connectednessadequacy-2} for open proof labels requires considering, \eg 
\[ \text{\emacsfont open \textbackslash b.(pr\_sync (pr\_sumr (pr\_base (inp b) k)) (pr\_base (out b) k))} \]
which corresponds to \(\theta_2\) viewed modulo \(\alpha\)-renaming of \(b\). To derive a transition using this open proof label, we must abstract away from the particular choice of \(b\) in \(\theta_2\), by adding a top-level restriction on \(b\) in \(a . (\nil + b) \Par \out{b} . (\out{a} + \nil)\).
However, one can see that the newly obtained process
\[(a . (\nil + b) \Par \out{b} . (\out{a} + \nil))\bs{b} \]
cannot perform the transition labelled \(\lmidr \out{b}[n]\) that previously enabled the execution of \(\theta_1\), precisely because of the restriction.
A more general reasoning, using process schemas, allows us to prove that no process can perform transitions labelled by \(\theta_1\) and by the open proof label which represents \(\theta_2\) modulo \(\alpha\)-renaming of \(b\), hence proving that \autoref{prop-connectednessadequacy}~\itemref{prop-connectednessadequacy-2} does not hold for open proof labels.

\end{discussion}

\begin{remark}%
	Note that the converse of \autoref{prop-connectednessadequacy}~\itemref{prop-connectednessadequacy-1} does not hold: \bel{examples/examples-paper.bel}[\(a[k] \conn \lmidr b[m]\)][224][226], but \(a \pr{f}[a][k] a[k]\) and \(\nil \Par b \pr{f}[\lmidr b][m] \nil \Par b[m]\) \bel{examples/examples-paper.bel}[are not connected][306][319].
    However, there exist two transitions \( a.(\nil \Par b) \pr{f}[a][k] a[k]. (\nil \Par b)\) and \(a[k]. (\nil \Par b) \pr{f}[\lmidr b][m] a[k].(\nil \Par b[m])\) that \bel{examples/examples-paper.bel}[are connected][322][333], illustrating \autoref{prop-connectednessadequacy}~\itemref{prop-connectednessadequacy-2}.
\end{remark}

\begin{example}
Recall 
$\lmidr \lplusl \out{a}[m]  \conn \cpair{a[m]}{\lplusl \out{a}[m]} $ from \autoref{ex-part2}. 
To exemplify \autoref{prop-connectednessadequacy}~\itemref{prop-connectednessadequacy-2}, we can find 
forward transitions with these proof labels in \autoref{fig:example-transitions}, namely
\bel{examples/examples-paper.bel}[$a \Par (\out{a} + b) \pr{f}[\cpair{a[m]}{\lplusl \out{a}[m]}] a[m] \Par (\out{a}[m] + b)$][27][33] and
\bel{examples/examples-paper.bel}[$a[k] \Par (\out{a} + b) \pr{f}[\lmidr \lplusl \out{a}[m]] a[k] \Par (\out{a}[m] + b)$][67][73]. \bel{examples/examples-paper.bel}[These transitions are connected via the following path][342][355]:
\[a \Par (\out{a} + b) \pr{f}[\lmidl a[k]] a[k] \Par (\out{a} + b) \pr{f}[\lmidr \lplusl \out{a}[m]] a[k] \Par (\out{a}[m] + b)\text{.}\]
\end{example}

Connectedness reduces the search space on proof labels: for example, \(\lplusl a[k]\) and \(\lmidr b[m]\) are neither dependent nor independent, but they also cannot belong to connected transitions, as no process can have both \(+\) and \(\Par\) at the top level.

\begin{theorem}[Complementarity on proof labels]
\label{thm-complementarity}
For all \(\theta_1\), \(\theta_2\),
\begin{enumerate}
	\item \bel{code/ccskp/complementarity.bel}[If $\theta_1 \lind \theta_2$ then $\theta_1 \conn \theta_2$][265][290]. \label{thm-complementarity-1}
	\item \bel{code/ccskp/complementarity.bel}[If $\theta_1 \sdep \theta_2$ then $\theta_1 \conn \theta_2$][292][321]. \label{thm-complementarity-2}
	\item \bel{code/ccskp/complementarity.bel}[If $\theta_1 \conn \theta_2$ then either $\theta_1 \lind \theta_2$ or $\theta_1 \sdep \theta_2$][335][405], \bel{code/ccskp/complementarity.bel}[but not both][407][440].  \label{thm-complementarity-3}
\end{enumerate}
\end{theorem}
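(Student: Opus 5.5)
All three items are proved by structural induction on derivations. The rule sets for \(\conn\), \(\lind\) and \(\sdep\) in Figures~\ref{fig:pccsk-connect} and~\ref{fig:pccsk-dep-ind} deliberately reuse the same rule names, so the plan is to match them rule by rule.

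For \itemref{thm-complementarity-1} and \itemref{thm-complementarity-2}, I will induct on the derivation of \(\theta_1 \lind \theta_2\) (resp.\ \(\theta_1 \sdep \theta_2\)). Each rule of \(\lind\) or \(\sdep\) has a counterpart of the same name for \(\conn\), and its premises are weaker or equal. The recursive premises are discharged by the inductive hypothesis, and the side conditions that refer only to structure are copied verbatim. The split rules P\(^2_k\) for \(\lind\) and for \(\sdep\) both map to P\(^2\) for \(\conn\) by dropping the key condition. Where \(\conn\) has a genuinely different premise, for instance a dependent or independent sub-premise for synchronisation labels, the inductive hypothesis supplies the needed \(\conn\) premise.

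For the existence half of \itemref{thm-complementarity-3}, I will induct on the derivation of \(\theta_1 \conn \theta_2\) and case on the last rule. For the base action rules A\(^1\)/A\(^2\) the dependence rules apply directly. For rule P\(^2\), the pair of labels beginning with \(\lmidd\) and \(\lmidod\), I split on the decidable condition on keys (available at the LF level): equal keys give P\(^2_k\) for \(\sdep\) and distinct keys give P\(^2_k\) for \(\lind\). For the congruence-like rules, which push both labels under the same \(\lmidd\), \(\lplusd\) or prefix context (P\(^1\), S\(^1\), C\(^1\), \etc), I apply the inductive hypothesis to the premise \(\theta_1' \conn \theta_2'\). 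This yields \(\theta_1' \lind \theta_2'\) or \(\theta_1' \sdep \theta_2'\), and I rewrap it with the rule of the same name for the corresponding relation. The synchronisation cases, where one or both labels are pairs \(\cpair{\cdot}{\cdot}\), are handled in the same way, but they may need two inductive calls, one per component, combined by the rule's disjunctive structure. I expect this to be where a case produces mixed outcomes: independence on one side and dependence on the other must yield dependence overall.

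For the ``not both'' half, I will show that \(\theta_1 \lind \theta_2\) and \(\theta_1 \sdep \theta_2\) cannot hold together, by simultaneous induction on the two derivations. I first invert on the outermost constructors of \(\theta_1,\theta_2\); two derivations of the same pair can only end in same-named rules, or in rules whose premises are incompatible. In the P\(^2_k\) case, independence forces distinct keys while dependence forces equal keys, giving a contradiction (using Remark~\ref{rem:ind-irreflexivity}: \(\lind\) implies distinct keys). In the congruence cases, the inductive hypothesis applies to the matching premises. The main obstacle is the overlap of rules: A\(^1\)/A\(^2\) now overlap (Discussion~\ref{rem:rules-overlap}), and the synchronisation rules admit several derivations. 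For these I will prove auxiliary inversion lemmas that every dependence derivation of a given shape implies an equality or a dependence of some subcomponent, whereas every independence derivation implies key disjointness or independence of all subcomponents. I will then refute each cross pairing separately rather than relying on uniqueness of derivations.
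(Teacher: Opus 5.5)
Your proposal is correct and follows essentially the same route as the paper. It uses straightforward inductions on the given derivation for items (1) and (2), and an induction on the derivation of $\theta_1 \conn \theta_2$ for the existence half of (3): splitting on key equality for P$^2$, closing A$^1$, A$^2$ and C$^2$ directly by the dependence rule of the same name, and sending mixed outcomes in S$^3$ to dependence via the component that is still connected. The last part is a separate, fully general lemma that no pair of labels is both $\lind$-related and $\sdep$-related. The auxiliary inversion lemmas you anticipate for that lemma are more than needed. Independence has no action rules, and the outer constructors of the two labels force both derivations to end in same-named rules (or give the P$^2_k$ key contradiction), so a plain induction with inversion suffices.
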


The proofs of \itemref{thm-complementarity-1} and \itemref{thm-complementarity-2} both proceed by 
straightforward inductions on the derivation tree of the given relation.
To establish \itemref{thm-complementarity-3}, the formalisation first proves that two connected proof labels are either dependent or independent, and then more generally that two proof labels cannot be both dependent and independent. %

The independence and dependence relations %
are extended to the (connected) transitions of \pccsk in a straightforward manner, and then to the transitions of \ccsk using the proof enrichment mapping \(\prov{\cdot}\) (\autoref{def-ccsk-pccsk-bijection}).

\begin{definition}[{\bel{code/ccskp/definitions.bel}[Relations on \pccsk transitions][326][334]}]%
	\label{def-ind-dep-pccsk}
	For transitions \(t_1\), \(t_2\) in \pccsk, we write
	\begin{align*}
		t_1 \ind t_2 & \text{ iff $t_1$ and $t_2$ are connected and \(\lblof{t_1} \lind \lblof{t_2}\),} %
		&&&
		t_1 \dep t_2 & \text{ iff $t_1$ and $t_2$ are connected and \(\lblof{t_1} \ldep \lblof{t_2}\).} %
	\end{align*}	
\end{definition}

\begin{definition}[{\bel{code/ccsk/complementarity.bel}[Relations on \ccsk transitions][3][19]}]%
	\label{def-ind-dep-ccsk}
	For transitions \(t_1\), \(t_2\) in \ccsk, we write
	\begin{align*}
		t_1 \ind t_2 & \text{ iff $t_1$ and $t_2$ are connected and \(\lblof{\prov{t_1}} \lind \lblof{\prov{t_2}}\),} %
		&&& 
		t_1 \dep t_2 & \text{ iff $t_1$ and $t_2$ are connected and \(\lblof{\prov{t_1}} \ldep \lblof{\prov{t_2}}\).} %
	\end{align*}
\end{definition}

We easily deduce from \autoref{prop-connectednessadequacy} (\ref{prop-connectednessadequacy-1}) and \autoref{thm-complementarity}:

\begin{proposition}[Complementarity on transitions]%
	\label{prop-complementarity transitions}
	In \pccsk, \bel{code/ccskp/complementarity.bel}[if $t_1$ and $t_2$ are connected then exactly one of $t_1 \ind t_2$ and $t_1 \dep t_2$ holds][452][471]. \bel{code/ccsk/complementarity.bel}[Similarly for \ccsk][41][66].
\end{proposition}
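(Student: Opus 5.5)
The \pccsk part follows by unfolding \autoref{def-ind-dep-pccsk} and applying \autoref{prop-connectednessadequacy}~\itemref{prop-connectednessadequacy-1} and \autoref{thm-complementarity}. The \ccsk part is then transported along the bijection of \autoref{lem-bijection}.

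For \pccsk, let \(t_1\), \(t_2\) be connected transitions.

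\begin{enumerate}
\item By \autoref{prop-connectednessadequacy}~\itemref{prop-connectednessadequacy-1}, \(\lblof{t_1} \conn \lblof{t_2}\).
\item By \autoref{thm-complementarity}~\itemref{thm-complementarity-3}, exactly one of \(\lblof{t_1} \lind \lblof{t_2}\) and \(\lblof{t_1} \ldep \lblof{t_2}\) holds.
\item Since \(t_1\), \(t_2\) are connected by assumption, \autoref{def-ind-dep-pccsk} turns these into \(t_1 \ind t_2\) and \(t_1 \dep t_2\) respectively.
\item For \enquote{at most one}: if both \(t_1 \ind t_2\) and \(t_1 \dep t_2\) held, then both label relations would hold, contradicting the \enquote{but not both} part of \autoref{thm-complementarity}~\itemref{thm-complementarity-3}.
\end{enumerate}

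For \ccsk, let \(t_1\), \(t_2\) be connected \ccsk transitions. The key step is to check that \(\prov{t_1}\) and \(\prov{t_2}\) are connected in \pccsk.

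\begin{itemize}
\item The mapping \(\prov{\cdot}\) preserves source, target and direction (\autoref{def-ccsk-pccsk-bijection}).
\item Apply \(\prov{\cdot}\) transition by transition to the path \(r\) witnessing connectedness of \(t_1\) and \(t_2\). This gives a \pccsk path with the same endpoints, because consecutive transitions still compose.
\item Hence \(\prov{t_1}\) and \(\prov{t_2}\) are connected.
\item The \pccsk case then gives exactly one of \(\lblof{\prov{t_1}} \lind \lblof{\prov{t_2}}\) and \(\lblof{\prov{t_1}} \ldep \lblof{\prov{t_2}}\). By \autoref{def-ind-dep-ccsk}, together with the connectedness of \(t_1\), \(t_2\), this is exactly one of \(t_1 \ind t_2\) and \(t_1 \dep t_2\).
\end{itemize}

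The only mildly non-trivial point is this lifting of paths along \(\prov{\cdot}\). In the formalisation, I expect it to require an auxiliary lemma by induction on the reflexive transitive closure \(\pr{fb}^*\), with separate cases for forward and backward, and for open and closed, transitions.
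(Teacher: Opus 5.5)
Your proposal is correct and takes the paper's route: the \pccsk case follows from \autoref{prop-connectednessadequacy}~\itemref{prop-connectednessadequacy-1} together with \autoref{thm-complementarity}~\itemref{thm-complementarity-3}, exactly as you argue. For \ccsk you also go through \(\prov{\cdot}\) as \autoref{def-ind-dep-ccsk} requires, and you make explicit the lifting of the connecting path, which only needs \(\prov{\cdot}\) to preserve sources and targets; the paper leaves this step implicit in \enquote{Similarly for \ccsk}.
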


Both formalisations follow the proof strategy of \autoref{thm-complementarity} (\ref{thm-complementarity-3}) by first proving that connected transitions are either dependent or independent, and then that transitions cannot be both dependent and independent.

Starting now, our contribution will focus on the \emph{LTSIs} of \pccsk $(\kprocset, \kplabelset, \pr{fb}[\theta], \ind)$ and \ccsk $(\kprocset, \klabelset, \r{fb}[\alpha[k]], \ind)$, \ie we will consider our definitions of independence to be central to our development.

%% file: figures/connected-rules.tex
	\begin{tcolorbox}[title = {Connectedness Relation}, fontupper=\linespread{.9}\small, sidebyside, sidebyside align=top]
    \begin{tcolorbox}[adjusted title=Action]
        \raisebox{-1.25em}{ %
            \makebox[.45\textwidth][c]{
                \begin{prooftree}
                    \hypo{}
                    \infer[]1[A\(^1\)]{\alpha[k] \conn \theta}
                \end{prooftree}
            }
        }
        \raisebox{-1.25em}{ %
            \makebox[.45\textwidth][c]{
                \begin{prooftree}
                    \hypo{}
                    \infer[]1[A\(^2\)]{\theta \conn \alpha[k]}
                \end{prooftree}
            }
        }
    \end{tcolorbox}
    \begin{tcolorbox}[adjusted title=Parallel]
        \makebox[.45\textwidth][c]{
            \begin{prooftree}
                \hypo{\theta \conn \theta'}
                \infer[]1[P\(^1\)]{\lmidd \theta \conn \lmidd \theta'}
            \end{prooftree}
        }
        \makebox[.45\textwidth][c]{
            \begin{prooftree}
                \hypo{}
                \infer[]1[P\(^2\)]{\lmidd \theta \conn \lmidod \theta'}
            \end{prooftree}
        }
    \end{tcolorbox}
    \begin{tcolorbox}[adjusted title=Choice]
        \makebox[.45\textwidth][c]{
            \begin{prooftree}
                \hypo{\theta \conn \theta'}
                \infer[]1[C\(^1\)]{\lplusd \theta \conn \lplusd \theta'}
            \end{prooftree}
        }
        \makebox[.45\textwidth][c]{
            \begin{prooftree}
                \hypo{}
                \infer[]1[C\(^2\)]{\lplusd \theta \conn \lplusod \theta'}
            \end{prooftree}
        }
    \end{tcolorbox}
    \tcblower		
    \begin{tcolorbox}[adjusted title=Synchronisation]
        \begin{prooftree}
            \hypo{\theta \conn \theta_{\D}}
            \infer[]1[S\(^1\)]{\lmidd \theta  \conn \cpair{\theta_{\L}}{\theta_{\R}}}
        \end{prooftree}
        \\[1.8em] %
        \begin{prooftree}
            \hypo{\theta_{\D} \conn \theta}
            \infer[]1[S\(^2\)]{\cpair{\theta_{\L}}{\theta_{\R}} \conn \lmidd \theta}
        \end{prooftree}
        \\[1.8em] %
        \begin{prooftree}
            \hypo{\theta_1 \conn \theta'_1}
            \hypo{\theta_2 \conn \theta'_2}
            \infer[]2[S\(^3\)]{\cpair {\theta_1} {\theta_2} \conn \cpair {\theta'_1} {\theta'_2}}
        \end{prooftree}
    \end{tcolorbox}
\end{tcolorbox}

%% file: figures/dep-indep.tex
	\begin{multicols}{2}
	\begin{tcolorbox}[title = {Independence Relation}, fontupper=\linespread{.9}\small]%
		\begin{tcolorbox}[adjusted title = {Action}]
			\vbox to 14
			pt {\vfil
				\hfill \emph{(empty)} \hfill \phantom{a} ~
				\vfil
			}
		\end{tcolorbox}			
		\begin{tcolorbox}[adjusted title=Parallel]
			\makebox[.45\textwidth][c]{
				\begin{prooftree}
					\hypo{\theta \lind  \theta'}
					\infer[]1[P\(^1\)]{\lmidd \theta \lind \lmidd \theta'}
				\end{prooftree}
			}
			\makebox[.45\textwidth][c]{
				\begin{prooftree}
					\hypo{\kay{\theta} \neq \kay{\theta'}}
					\infer[]1[P\(^2_k\)]{\lmidd \theta \lind \lmidod \theta'}
				\end{prooftree}
			}
		\end{tcolorbox}
		\begin{tcolorbox}[adjusted title=Choice]
			\makebox[.45\textwidth][c]{
				\begin{prooftree}
					\hypo{\theta \lind  \theta'}
					\infer[]1[C\(^1\)]{\lplusd \theta \lind \lplusd \theta'}
				\end{prooftree}
			}
			\makebox[.45\textwidth][c]{ %
			}
		\end{tcolorbox}
		\begin{tcolorbox}[adjusted title=Synchronisation]
			\makebox[.45\textwidth][c]{
			\begin{prooftree}
				\hypo{\theta \lind  \theta_{\D}}
				\infer[]1[S\(^1\)]{\lmidd \theta  \lind \cpair{\theta_{\L}}{\theta_{\R}}}
			\end{prooftree}
			}
			\makebox[.45\textwidth][c]{
			\begin{prooftree}
				\hypo{\theta_{\D} \lind \theta}
				\infer[]1[S\(^2\)]{\cpair{\theta_{\L}}{\theta_{\R}} \lind \lmidd \theta}
			\end{prooftree}
			}
			\\[1.3em]
			\begin{prooftree}
				\hypo{\theta_1 \lind  \theta'_1}
				\hypo{\theta_2 \lind  \theta'_2}
				\infer[]2[S\(^3\)]{\cpair {\theta_1} {\theta_2} \lind \cpair {\theta'_1} {\theta'_2}}
			\end{prooftree}
		\end{tcolorbox}
	\end{tcolorbox}
	\begin{tcolorbox}[title = {Dependence Relation}, fontupper=\linespread{.9}\small]%
		\begin{tcolorbox}[adjusted title=Action]
			\raisebox{-1.25em}{ %
				\makebox[.45\textwidth][c]{
					\begin{prooftree}
						\hypo{}
						\infer[]1[A\(^1\)]{\alpha[k] \sdep \theta}
					\end{prooftree}
				}
			}
			\raisebox{-1.25em}{ %
				\makebox[.45\textwidth][c]{
					\begin{prooftree}
						\hypo{}
						\infer[]1[A\(^2\)]{\theta \sdep \alpha[k]}
					\end{prooftree}
				}
			}
		\end{tcolorbox}			
		\begin{tcolorbox}[adjusted title=Parallel]
			\makebox[.45\textwidth][c]{
				\begin{prooftree}
					\hypo{\theta \sdep \theta'}
					\infer[]1[P\(^1\)]{\lmidd \theta \sdep\ \lmidd \theta'}
				\end{prooftree}
			}
			\makebox[.45\textwidth][c]{
				\begin{prooftree}
					\hypo{\kay{\theta} = \kay{\theta'}}
					\infer[]1[P\(^2_k\)]{\lmidd \theta \sdep\ \lmidod \theta'}
				\end{prooftree}
			}
		\end{tcolorbox}
		\begin{tcolorbox}[adjusted title=Choice]
			\makebox[.45\textwidth][c]{
				\begin{prooftree}
					\hypo{\theta \sdep \theta'}
					\infer[]1[C\(^1\)]{\lplusd \theta \sdep \lplusd \theta'}
				\end{prooftree}
			}
			\makebox[.45\textwidth][c]{
				\begin{prooftree}
					\hypo{}
					\infer[]1[C\(^2\)]{\lplusd \theta \sdep \lplusod \theta'}
				\end{prooftree}
			}
		\end{tcolorbox}
		\begin{tcolorbox}[adjusted title=Synchronisation]
			\makebox[.45\textwidth][c]{
			\begin{prooftree}
				\hypo{\theta \sdep \theta_{\D}}
				\infer[]1[S\(^1\)]{\lmidd \theta  \sdep \cpair {\theta_{\L}} {\theta_{\R}}}
			\end{prooftree}
			}
			\makebox[.45\textwidth][c]{
			\begin{prooftree}
				\hypo{\theta_{\D} \sdep \theta}
				\infer[]1[S\(^2\)]{\cpair {\theta_{\L}} {\theta_{\R}} \sdep\ \lmidd \theta}
			\end{prooftree}
			}
			\\[1.3em]
			\begin{prooftree}
				\hypo{\theta_i \sdep \theta'_i}
				\hypo{\theta_j \conn \theta'_j}
				\hypo{i \neq j \in \{1, 2\}%
				}
				\infer[]3[S\(^3\)]{\cpair {\theta_1} {\theta_2} \sdep \cpair {\theta'_1} {\theta'_2}}
			\end{prooftree}
		\end{tcolorbox}
	\end{tcolorbox}
\end{multicols}

%% file: sections/relations-pccsk.tex
\section{Applying the Axiomatic Approach to \texorpdfstring{\pccsk}{CCSKP} and \texorpdfstring{\ccsk}{CCSK}} 
\label{sec:true-conc-rel}

This section first proves that instantiating the axiomatic approach to \pccsk and \ccsk, where the independence relation $\ind$ is as in 
Definitions~\ref{def-ind-dep-pccsk} and \ref{def-ind-dep-ccsk}, produces pre-reversible LTSIs that also satisfy \IRE and \RPI (Theorems~\ref{thm-axioms-hold-pccsk} and \ref{thm-axioms-hold-ccsk}).
This implies, thanks to the uniqueness result (\autoref{thm-uniqueness}), that the independence relations of \pccsk and \ccsk are \enquote{their only one}\footnote{More precisely: any other independence relation must agree with them on adjacent transitions, per \autoref{prop-prerev-coinitial-unique} and \autoref{remark-on-adjacent-indep}.}.
In conjunction with \autoref{def-ind-dep-pccsk}, this further shows that all our rules for independence of labels in \autoref{fig:pccsk-dep-ind} are needed, and that none can be added:
otherwise, they would overlap with dependence and violate complementarity (\autoref{thm-complementarity}).
In summary, %
our notion of independence is as sharp and as concise as possible.

We also discuss some of its limitations and how to overcome them in \autoref{ssec:true-concu-relation-pccsk}: while independence, core independence and concurrency  all coincide on adjacent forward events and transitions (as illustrated in \autoref{fig:summary-sec-ccc}), the relations do not agree on non-coinitial transitions (\autoref{def-ind-dep-ccsk-causal}).
However, it can be proved that any two transitions that are either causally related or in conflict belong to the transitive closure of dependence (\autoref{leqcftsdep}).

\subsection{Mechanised Proofs of Axioms and Properties for \texorpdfstring{\pccsk}{CCSKP}} %
\label{ssec:true-conc-rel-ccskp}

We first show that the axioms 
for pre-reversibility, namely \SP, \BTI, \WF and \PCI, hold for the LTSI for \pccsk $(\kprocset, \kplabelset, \pr{fb}[\theta], \ind)$. We then define the notion of events 
for \pccsk, and finally prove the remaining axioms \IRE and \RPI, as well as the additional properties \ID, \CIRE, \ED and \FLD (\autoref{def-FLD}).

The pen-and-paper proof of those results spans six pages~\cite[Section D]{APU24} and leverages the concept of \emph{locally label-generated LTSI}~\cite[Definition D.2]{APU24} to lighten some of the proofs.
It refines and complements~\cite[Remark D.9]{APU24} the existing proofs of \SP, \BTI, \WF and \PCI~\cite{aubert2023c}.
Our mechanised proofs are to a large extent more direct and self-contained.

\begin{proposition}[{\bel{code/ccskp/axioms/}[The LTSI of \pccsk is pre-reversible]}]%
	\label{prop-pre-reversible-hold-pccsk}
	\SP, \BTI, \WF, and \PCI hold for the LTSI of \pccsk.%
\end{proposition}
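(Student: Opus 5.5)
We treat each of the four axioms separately for the LTSI $(\kprocset, \kplabelset, \pr{fb}[\theta], \ind)$, where $\ind$ is as in \autoref{def-ind-dep-pccsk}. The basic device throughout is the uniqueness of derivation trees (\autoref{lem-derivation-uniqueness-ccskp}). It lets us do induction on the structure of a transition's derivation and, simultaneously, case analysis on the rule used to derive $\lblof{t_1} \lind \lblof{t_2}$ in \autoref{fig:pccsk-dep-ind}. Since coinitial transitions are trivially connected (empty path), for \SP, \BTI and \PCI the independence hypothesis reduces to independence of proof labels. We only consider reachable processes, so that keys in a process are used consistently.

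\WF comes first and is the easiest. We define a measure on processes, the number of keyed prefixes (equivalently $\card{\keys{X}}$ counted with synchronisation pairs merged). Then we show by induction on the derivation that every forward transition strictly increases this measure, using the rules act, pre, res, $\lmidl$, syn and $\lplusl$. An infinite reverse computation would therefore yield an infinite strictly decreasing sequence of naturals.

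\BTI comes next. Take distinct coinitial backward transitions $t: X \pr{b}[\theta_1] Y$ and $u: X \pr{b}[\theta_2] Y'$. We prove $\theta_1 \lind \theta_2$ by simultaneous induction on both derivations.
\begin{itemize}
\item Both transitions go through the same operator and the same side: the recursive rule (P$^1$, C$^1$, S$^1$, S$^2$ or S$^3$) together with the induction hypothesis applies. For S$^3$, distinctness of at least one component plus key uniqueness in reachable processes gives independence of both components.
\item They go through opposite sides of $\Par$: P$^2_k$ applies, since distinct keyed prefixes in a reachable process carry distinct keys.
\item They go through opposite sides of $+$: this is impossible, because the backward sum rule requires the other branch to be standard.
\item Two \tRev{act} steps, or an \tRev{act} step against a \tRev{pre} step, at the same prefix: these are excluded by the side conditions on keys. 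Only the outermost keyed prefix with a standard continuation can be undone.
\end{itemize}

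\SP and \PCI are handled together. Given coinitial $t, u$ with $\lblof{t} \lind \lblof{u}$, we induct on the derivation of $\lind$ and construct $u', t'$ explicitly, reusing the same labels. We then check the side conditions of the rules: keys not in $\keys{Y}$ for parallel, standardness for sum, and $\kay{\theta}\neq k$ for pre. These hold because the two transitions touch disjoint subterms and have distinct keys (\autoref{rem:ind-irreflexivity}). For \PCI we must show $\lblof{u'} \lind \lblof{\rev{t}}$. The labels of $u'$ and $\rev{t}$ are exactly those of $u$ and $t$, so label independence follows by symmetry. Connectedness of $u'$ and $\rev{t}$ is witnessed by the square itself.

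The main obstacle is the mixed-direction cases of \SP, in which one transition is forward and the other backward. These interact with the key side conditions in several ways. A forward $t$ may introduce a key fresh for the current process, and we must show that this key is still fresh after $u$ is performed; dually, a backward $u$ may remove a key. With mixed directions the sum cases are also delicate: a forward transition in one branch of $+$ and a backward transition in the other cannot both occur, and this must be verified against the standardness side condition. In addition, the open/closed label duplication (\autoref{dis:open-closed}) doubles every case under restriction. We would first prove auxiliary lemmas stating how $\keys{\cdot}$ changes along a transition: a forward transition adds exactly $\kay{\theta}$, and a backward one removes it. These lemmas discharge all side conditions uniformly.
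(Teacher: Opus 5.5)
Your proposal is correct and follows essentially the same route as the paper's mechanised proof. Specifically: induction on derivations with case analysis on the rules and their key side conditions for \SP and \BTI (discharged by the facts that a forward step adds a fresh key and a backward step removes one of its source's keys), a well-foundedness argument for \WF, and, for \PCI, symmetry of $\lind$ together with the fact that squares in this LTSI preserve proof labels.

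Two details to tighten. First, act, pre and res leave no trace in proof labels, so the \SP induction must run on the transition derivations (as the paper does) rather than on the derivation of $\lind$ alone. Second, in the opposite-sides case of \BTI the needed key inequality comes from the $\rlmidl$ and $\rlmidr$ side conditions: the left step's key lies in the left component, whereas the right step's key must be absent from it. It does not come from key uniqueness in reachable processes, which fails for synchronised pairs such as $a[k] \Par \out{a}[k]$.
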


The mechanised proof is split between the following files:

\begin{description}
	\item[{\bel{code/ccskp/axioms/defs-and-properties.bel}[defs-and-properties.bel]}] This file contains definitions required to state the axioms. For instance, \BTI requires proving that backward transitions (of type \bel*{code/ccskp/definitions.bel}[bstep][196][200]) are independent, while the independence relation is defined on combined transitions (of type \bel*{code/ccskp/definitions.bel}[step][214][218]): therefore, we introduce a type family \bel*{code/ccskp/axioms/defs-and-properties.bel}[back\_to\_comb][35][39] which acts as a coercion from {\emacsfont bstep} to {\emacsfont step}. These auxiliary definitions are accompanied by proofs of their properties, \eg  the \bel{code/ccskp/axioms/defs-and-properties.bel}[totality of the {\emacsfont back\_to\_comb} coercion][41][51].
	
	In addition, this file contains properties of the LTS of \pccsk that are useful to prove the axioms: for example, the proof that \bel{code/ccskp/axioms/defs-and-properties.bel}[backward transitions cannot use keys not present in their source][269][292].
	\item[{\bel{code/ccskp/axioms/sp.bel}[sp.bel]}] Due to the lack of syntactic sugar for existential quantification in Beluga, the proved statement of \bel{code/ccskp/axioms/sp.bel}[\textbf{SP}][1030][1092] leverages a type \bel{code/ccskp/axioms/sp.bel}[{\emacsfont square}][1023][1027] that witnesses the existence of a closing square diagram.
	The proof is by induction on the length of the derivation tree of the first transition; the length of the file (close to \bkloc{1}) comes from the need to consider all combinations of directions and open/closed transitions.
	\item[{\bel{code/ccskp/axioms/bti.bel}[bti.bel]}] In the formalisation, we first prove that \bel{code/ccskp/axioms/bti.bel}[backward transitions are either equal or independent][314][332], by considering all combinations of open and closed backward transitions and proceeding with a routine induction on the length of the derivation tree of the first transition. The proved statement of \bel{code/ccskp/axioms/bti.bel}[\textbf{BTI}][341][350] immediately follows.%
	\item[{\bel{code/ccskp/axioms/wf.bel}[wf.bel]}] \WF is proven by showing that \bel{code/ccskp/axioms/wf.bel}[every process is accessible][92][106], as usual in mechanised proofs~\cite[Section 3.2]{AAH19}.
	\item[{\bel{code/ccskp/axioms/pci.bel}[pci.bel]}] \PCI is straightforward once proven that \bel{code/ccskp/complementarity.bel}[\(\ind\) is symmetric][22][48].
\end{description}

Since \BTI and \PCI hold for the LTSI of \pccsk, \ID is also known to hold (\autoref{prop-ID}): in Beluga, we prove \bel{code/ccskp/axioms/id.bel}[\textbf{ID}] for \pccsk directly by taking inspiration from the pen-and-paper proof. %

Since the LTSI of \pccsk is pre-reversible, we can instantiate the definition of equivalence of transitions and of events (\autoref{def-event-general}) to it.
As \PCI holds, the commuting square from \autoref{def-event-general} is non-degenerate, \ie all pairs of vertices are distinct (\autoref{lem-non-degenerate}).
This allows us to use a simplified definition of equivalence of transitions and events:

\begin{definition}[{\bel{code/ccskp/axioms/events.bel}[Events, simplified definition][3][21]~\cite[Definition 4.8]{LPU24}}]%
	\label{def-event-simplified}
	Consider a pre-reversible LTSI. Let $\sqeqt$ be the smallest equivalence relation satisfying:
	if $t:P \r{fb}[\alpha] Q$, $u:P \r{fb}[\beta] R$,
	$u':Q \r{fb}[\beta] S$, $t':R \r{fb}[\alpha] S$,
	and $t \ind u$, then $t \sqeqt t'$.
\end{definition}

The equivalence between the \enquote{general} definition of events (\autoref{def-event-general}) and \autoref{def-event-simplified} \bel{code/ccskp/axioms/events.bel}[was formalised][24][108] as well. In particular, given transitions $t$, $u$, $u'$ and $t'$ as in \autoref{def-event-simplified}, one needs to show that the independence of $t$ and $u$ propagates to all pairs of adjacent transitions, and that the given commuting square is non-degenerate. The former follows by observing that the proof labels used by such pairs of transitions coincide with $\lblof{t}$ and $\lblof{u}$; the latter follows from the fact that, if the commuting square was degenerate, we could compose two transitions with the same proof label and direction (\eg  $t$ and $t'$), which is \bel{code/ccskp/axioms/defs-and-properties.bel}[impossible in \pccsk][828][846].

Having defined events, we prove the remaining properties of interest from the axiomatic theory (\autoref{sec:axiomatic}). 

\begin{proposition}[{\bel{code/ccskp/axioms/}[Other properties of the LTSI of \pccsk]}]%
	\label{prop-other-hold-pccsk}
	\IRE, \CIRE, \RPI, and \ED hold for the LTSI of \pccsk $(\kprocset, \kplabelset, \pr{fb}[\theta], \ind)$.
\end{proposition}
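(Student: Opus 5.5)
The central idea is that, in \pccsk, every transition in an event carries the same proof label, and independence of transitions is decided by their labels alone (\autoref{def-ind-dep-pccsk}), up to the connectedness side condition. I would therefore first prove the auxiliary lemma: \emph{if $t \sqeqt t'$ then $\lblof{t} = \lblof{t'}$ and $t$, $t'$ have the same direction}. The proof is by induction on the generation of $\sqeqt$ from \autoref{def-event-simplified}. For a single commuting square $t, u, u', t'$ with $t \ind u$, I would show that the opposite sides carry identical proof labels, not merely the same underlying label $\alpha$. I would obtain this by inspecting the derivation trees, which are unique by \autoref{lem-derivation-uniqueness-ccskp}, and proceeding by induction on the derivation of $t$, in the same way as the formalised proof of \SP in sp.bel. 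The \SP construction yields closing transitions with the same proof labels, and uniqueness of the closing square then follows from determinism of the target given source and label.

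\RPI then follows almost directly. Reversing a transition preserves its proof label (\autoref{lem-loop_proved}), and it preserves connectedness, since a path from $\srcof{t}$ to $\tgtof{u}$ can be extended by $t$ or $\rev{t}$ as needed. Hence $\lblof{\rev{t}} \lind \lblof{t'}$ holds whenever $t \ind t'$.

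For \IRE, take $t \sqeqt t' \ind u$. The auxiliary lemma gives $\lblof{t} = \lblof{t'}$, so $\lblof{t} \lind \lblof{u}$. It remains to show that $t$ and $u$ are connected. Since $t \sqeqt t'$ is witnessed by a chain of commuting squares, there is a path from $\tgtof{t}$ to $\tgtof{t'}$, and composing paths transfers connectedness; as all processes considered are reachable, they are connected through their common origin, using \PL. \CIRE is then immediate from \IRE.

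For \ED, take coinitial forward transitions $t \sqeqt u$. They carry the same proof label, so it suffices to show that a source process and a proof label determine the target. I would argue this by induction on the derivation, since every rule, including act with a fixed key, determines $X'$ from $X$ and $\theta$.

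I expect the main obstacle to be the auxiliary lemma, specifically the open/closed proof-label duplication from \autoref{dis:open-closed}. Equality of open labels is modulo $\alpha$-renaming, so the statement must be phrased as equality in the appropriate type. The bookkeeping across restriction cases is likely to dominate the work.
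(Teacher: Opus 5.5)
Your proposal is correct and follows essentially the paper's route. \RPI is immediate from the label-based definition of $\ind$. \IRE (and with it \CIRE) follows because equivalent transitions carry the same proof label and are connected. \ED follows from a source and a forward proof label determining the transition, which is the paper's \FLD (\autoref{def-FLD}).

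The one misjudgement is your \enquote{main obstacle}. In the LTSI of \pccsk the labels are the proof labels in $\kplabelset$ themselves, so opposite sides of every square generating $\sqeqt$ (\autoref{def-event-general}) already carry the same proof label and direction by definition. Your auxiliary lemma therefore needs no derivation-tree induction. That kind of argument is only required for \ccsk (\autoref{prop-eveqt-stable}). There, your justification via target determinism would not suffice, since a keyed label does not determine the target (e.g.\ $a + a$); one has to show that the two closing targets $S$ and $S'$ coincide.
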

 
\begin{description}	
	\item[{\bel{code/ccskp/axioms/ire.bel}[ire.bel]}] Once \autoref{def-event-simplified} is set and it has been established that \bel{code/ccskp/axioms/events.bel}[equivalent transitions are connected][112][126], stating and proving \IRE is immediate.	
	\item[{\bel{code/ccskp/axioms/cire.bel}[cire.bel]}] Obtaining \CIRE is immediate since equivalent transitions have the same proof label, which uniquely determines if transitions are independent for \pccsk.
	\item[{\bel{code/ccskp/axioms/rpi.bel}[rpi.bel]}] \RPI is immediate considering our definition of independence, both on pen-and-paper and in the formalisation.
	\item[{\bel{code/ccskp/axioms/ed.bel}[ed.bel]}] The proof of \ED follows %
	from the property of \emph{Forward label determinism} (\FLD), stated and proved below.
\end{description}

\begin{definition}[Forward label determinism%
    ]%
	\label{def-FLD}
	If  $t, u$ are %
	coinitial forward transitions with $\lblof {t} = \lblof {u}$, then $t = u$.  
\end{definition}

\begin{proposition}
	\label{prop-FLD pccsk}
	\bel{code/ccskp/axioms/fld.bel}[Forward label determinism holds for the LTSI of \pccsk][159][168].
\end{proposition}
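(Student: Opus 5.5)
The plan is to prove the statement by induction on the derivation tree of \(t\), using \autoref{lem-derivation-uniqueness-ccskp} to identify the last rule applied in each derivation. Let \(t: X \pr{f}[\theta] Y\) and \(u: X \pr{f}[\theta] Z\) be coinitial forward transitions with the same proof label. We must show \(Y = Z\), and then \(t = u\) as derivations by uniqueness of derivation trees. The source \(X\) and the outermost structure of \(\theta\) together fix the last rule of both derivations.

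\textbf{Prefix and restriction.} If \(X = \alpha.X_0\) with \(\keys{X_0}=\emptyset\), only act applies. Then \(\theta = \alpha[k]\), and both targets are \(\alpha[k].X_0\), since the key is read off the label. If \(X = \alpha[k'].X_0\), only pre applies, and we conclude by the induction hypothesis on the premises, which are coinitial with label \(\theta\). If \(X = X_0 \bs a\), only res applies, and we again conclude by induction.

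\textbf{Sum.} If \(X = X_1 + X_2\), the label has the form \(\lplusd\theta'\), and the direction \(\D\) selects the same rule for both transitions. The premises are then coinitial in \(X_1\) (or \(X_2\)) with label \(\theta'\), and the induction hypothesis applies.

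\textbf{Parallel composition.} If \(X = X_1 \Par X_2\), a label \(\lmidd\theta'\) forces the rule \(\lmidd\) for both transitions, and induction on the active component concludes. A label \(\cpair{\upsilon_1\lambda[k]}{\upsilon_2\out{\lambda}[k]}\) forces syn. The two premises are then \(X_1 \pr{f}[\upsilon_1\lambda][k] \cdot\) and \(X_2 \pr{f}[\upsilon_2\out\lambda][k]\cdot\), and their labels are determined by \(\theta\) alone. We apply the induction hypothesis twice, once to each pair of premises.

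The main obstacle is the restriction case in the Beluga setting: the encoding splits transitions into closed and open ones (\autoref{dis:open-closed}). To handle it, we first show the auxiliary fact, already used for \autoref{lem-derivation-uniqueness-ccskp}, that a closed and an open transition cannot share a source and a label. This reduces every case to matching pairs of transitions, both closed or both open. The open case then goes through by applying the induction hypothesis under the name binder.
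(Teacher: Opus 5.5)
Your proposal is correct and matches the paper's proof: induction on the derivation of the transition, closed and open transitions treated separately, and dedicated auxiliary lemmas for the restriction sub-cases of open transitions. One correction: the auxiliary lemma used for \autoref{lem-derivation-uniqueness-ccskp} assumes a common source \emph{and target}, so it cannot be reused here, since the targets are exactly what you must identify; your source-and-label fact is a new lemma, and proving it in turn requires ruling out two closed transitions from the same source whose labels differ only in the name (this case arises when an inner restriction binds the name of the open label).
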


The proof requires to split between open and closed transitions and is carried out by induction on the length of the derivation tree of the given transition.
Handling restriction sub-cases for open transitions requires to prove a couple of \bel{code/ccskp/axioms/fld.bel}[technical lemmas about restrictions][50][123].

Propositions \ref{prop-pre-reversible-hold-pccsk} and \ref{prop-other-hold-pccsk} allow us to conclude:

\begin{theorem}%
	\label{thm-axioms-hold-pccsk} The axiomatic theory holds for the LTSI of \pccsk.
\end{theorem}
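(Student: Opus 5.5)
This theorem is an assembly step rather than a new argument. I would first pin down what ``the axiomatic theory holds'' must mean. From the way the section introduces it, it should mean that the LTSI $(\kprocset, \kplabelset, \pr{fb}[\theta], \ind)$ is pre-reversible, so that \SP, \BTI, \WF and \PCI hold. It should also mean that the additional properties used in \autoref{sec:CCC} hold: \IRE, \CIRE, \RPI and \ED, and with them \ID, \PL, \NRE, \UT, \BLD and \FLD. The plan is to collect each of these from results already established, checking for each that it applies to the independence of \autoref{def-ind-dep-pccsk}.

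Concretely, the order would be as follows. (i) Invoke \autoref{prop-pre-reversible-hold-pccsk} to obtain \SP, \BTI, \WF and \PCI; by \autoref{def-prerev} the LTSI is then pre-reversible. One prerequisite of \autoref{def-ltsi} must be checked first: the combined-LTS condition is the Loop Lemma, given by \autoref{lem-loop_proved}. Irreflexivity and symmetry of $\ind$ on transitions come from \autoref{rem:ind-irreflexivity}, transported through \autoref{def-ind-dep-pccsk}. (ii) From pre-reversibility, derive the generic consequences from earlier results: \ID by \autoref{prop-ID} (it is also proved directly in Beluga), \PL by \autoref{lem-sp-bti-imply-pl}, \NRE by \autoref{lem-NRE}, \UT by \autoref{lem-UT}, and \BLD by \autoref{lem-BLD}. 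The event notion is also now well defined, and by \autoref{lem-non-degenerate} it coincides with \autoref{def-event-simplified}. (iii) Invoke \autoref{prop-other-hold-pccsk} for \IRE, \CIRE, \RPI and \ED, and \autoref{prop-FLD pccsk} for \FLD.

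The only step with real content is making sure the non-basic properties are stated relative to the right notion of events. \IRE and \CIRE quantify over $\eveqt$, and $\eveqt$ was defined only after pre-reversibility was established. So the argument must go in the order above: the simplified definition of events is legitimate only once (i) is in hand. Within (iii), the key fact is that equivalent transitions carry the same proof label and are connected. Independence in \pccsk is decided by the label together with connectedness, so \CIRE and \IRE follow from this fact. \RPI also follows, because reversing a transition keeps both its label and its connectedness. For \ED, I would argue from \FLD: coinitial equivalent forward transitions share a proof label, so they are equal and hence cofinal.

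I expect the main obstacle to be this label-preservation fact for $\eveqt$, namely that in a commuting square with $t \ind u$ the opposite sides $t$ and $t'$ carry the same proof label. \autoref{def-event-simplified} only guarantees equality of the label component $\alpha$ in the abstract setting. For \pccsk, however, the labels of $t$ and $t'$ are both the same $\theta$ by construction of the square. So label preservation holds definitionally along the generating squares, and it then extends to all of $\eveqt$ by induction on the equivalence closure. Once this is settled, the theorem follows by conjoining the two propositions.
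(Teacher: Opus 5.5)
Your proposal is correct and follows the paper's route: the theorem is obtained by combining \autoref{prop-pre-reversible-hold-pccsk} (\SP, \BTI, \WF, \PCI) with \autoref{prop-other-hold-pccsk} (\IRE, \CIRE, \RPI, \ED), with \ID and \FLD established alongside. Your justifications for the non-basic properties are the ones the paper gives: equivalent transitions share their proof label and are connected, reversal preserves both label and connectedness, and \ED follows from \FLD. One step you leave implicit is that connectedness of transitions is transitive, which holds thanks to the Loop Lemma; this is what turns ``equivalent transitions are connected'' into \IRE.
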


\subsection{Mechanised Proofs of Axioms and Properties for \texorpdfstring{\ccsk}{CCSK}}
\label{ssec:proof-axioms-ccsk}

To show that the axiomatic theory holds for the LTSI of \ccsk $(\kprocset, \klabelset, \r{fb}[\alpha[k]], \ind)$ as well, the pen-and-paper 
proofs~\cite[Propositions D.6 and D.8A]{APU24} essentially transfer properties from \pccsk to \ccsk by using the \bel{code/bijection}[bijection] between their transitions \(\base{\cdot} = {(\prov{\cdot}})^{-1}\) (\autoref{def-ccsk-pccsk-bijection}).
Considering that the only notion of independence we have for \ccsk comes from \pccsk (\autoref{def-ind-dep-ccsk}), we have to follow the same strategy.

Analogously to \pccsk, the mechanisation starts with \bel{code/ccsk/axioms/defs-and-properties.bel}[auxiliary definitions and properties] needed to state and prove the axioms and the simplified definition of \bel{code/ccsk/axioms/events.bel}[events][3][17] (\autoref{def-event-simplified}).
Moreover, transferring properties between LTSIs
requires \bel{code/bijection/lemmas-lifting.bel}[dedicated proofs] that are mostly routine. 
One difficulty was proving that \(t \eveqt u \) implies \(\lblof{\prov{t}} = \lblof{\prov{u}}\), \ie that  
\(\prov{\cdot}\) maps \ccsk transitions in the same event into \pccsk transitions with the same proof label.
Note that is not true for transitions belonging to different events, even with identical label: \eg the \ccsk transitions \(a.b + a \r{f}[a][k] a[k].b + a\) and \(a.b + a \r{f}[a][k] a.b + a[k]\) have different proof labels when mapped into \pccsk (\(\lplusl a[k]\) and \(\lplusr a[k]\)) .

This result, implicitly required to transfer the proofs of \PCI and \IRE from \pccsk to \ccsk, is implied by the following proposition, which more generally states that 
event equivalences %
are preserved by the bijection:

\begin{proposition}[Events are stable under bijection]%
    \label{prop-eveqt-stable}
~
\begin{multicols}{2}
    \noindent
		\begin{enumerate}
			\item
			For all \(t\), \(u\) in \pccsk, \bel{code/bijection/lemmas-lifting.bel}[if \(t \eveqt u\) then \( \base{t} \eveqt \base{u}\)][113][137]; \label{prop-eveqt-stable-1}
			\item
			For all \(t\), \(u\) in \ccsk, \bel{code/bijection/lemmas-lifting.bel}[if \(t \eveqt u \) then \( \prov{t} \eveqt \prov{u}\)][146][217] . \label{prop-eveqt-stable-2}
		\end{enumerate}
	\end{multicols}

\end{proposition}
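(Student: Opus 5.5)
Since $\eveqt$ is in both systems the smallest equivalence relation generated by commuting squares (we use the simplified definition, \autoref{def-event-simplified}, which the formalisation shows equivalent to \autoref{def-event-general}), I would prove both items by induction on the derivation of $t \eveqt u$. Reflexivity, symmetry and transitivity are immediate, because $\base{\cdot}$ and $\prov{\cdot}$ are functions and the target relation is itself an equivalence. So the only real case is the generating one. We have a commuting square $t:P \r{fb} Q$, $v:P\r{fb} R$, $v':Q\r{fb} S$, $u:R\r{fb} S$ with $t\ind v$, and $t$, $u$ have the same label and $v$, $v'$ have the same label. We must show that the image of this square is again a generating square in the other LTSI.

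For item \itemref{prop-eveqt-stable-1} the argument is direct. The mapping $\base{\cdot}$ preserves sources, targets and direction, and sends a proof label $\theta$ to $\ekay{\theta}$. Hence $\base{t},\base{v},\base{v'},\base{u}$ form a commuting square in \ccsk with matching labels on opposite sides. It remains to check $\base{t}\ind\base{v}$. By \autoref{def-ind-dep-ccsk} this means that $\base{t}$ and $\base{v}$ are connected (they are coinitial) and that $\lblof{\prov{\base{t}}} \lind \lblof{\prov{\base{v}}}$. By \autoref{lem-bijection} we have $\prov{\base{t}}=t$ and $\prov{\base{v}}=v$, so this reduces to $t\ind v$, which holds.

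For item \itemref{prop-eveqt-stable-2} we apply $\prov{\cdot}$ to a \ccsk square, which yields a square of \pccsk transitions with the same processes. Coinitial independence transfers as before, since $t\ind v$ in \ccsk literally means $\lblof{\prov t}\lind\lblof{\prov v}$ together with connectedness. The obstacle is that \autoref{def-event-simplified} also requires opposite sides to carry the \emph{same label}. In \ccsk we only know that $\lblof t=\lblof u=\alpha[k]$, whereas we need $\lblof{\prov t}=\lblof{\prov u}$ as proof labels. As the $a.b+a$ example shows, this is not automatic. It is exactly the key lemma mentioned before the statement, and I expect it to be the main effort.

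My plan for this lemma is a simultaneous induction on the (unique, by \autoref{lem-derivation-uniqueness-ccsk}) derivation trees of $t$ and $v$, following the structure of $P$. I would prove that in any \ccsk commuting square with $t\ind v$, the enriched opposite sides have identical proof labels. The cases split according to the top-level operator of $P$:
\begin{itemize}
	\item For a prefix, keyed prefix or restriction, the derivation recurses into the body and we apply the induction hypothesis.
	\item For a sum, both transitions must go into the same branch, because independence excludes C$^2$; the other summand is standard and unchanged, so induction applies.
	\item For parallel composition, if $t$ and $v$ act on different sides, then $v$ does not touch the component performing $t$. Hence $u$ is derived from the same sub-transition, and its proof label is $\lmidd$ of the same label. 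The premise $k\notin\keys{Y}$ remains satisfied because $t\ind v$ forces distinct keys (\autoref{rem:ind-irreflexivity}).
	\item For parallel composition with both transitions on the same side, or in the synchronisation cases, we use the induction hypothesis on the components.
\end{itemize}
Determinism facts such as \FLD and \UT for \ccsk are used to identify the sub-transitions. Once the lemma is in place, the enriched square is a generating square in \pccsk, and item \itemref{prop-eveqt-stable-2} follows.
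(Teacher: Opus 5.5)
Your reduction to the generating square and your argument for item \itemref{prop-eveqt-stable-1} are the paper's. You also correctly isolate the real content of item \itemref{prop-eveqt-stable-2}: in a \ccsk square with \(t \ind v\), one must show \(\lblof{\prov{t}} = \lblof{\prov{u}}\) and \(\lblof{\prov{v}} = \lblof{\prov{v'}}\).

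The gap is in your plan for proving this. \FLD does not hold in \ccsk. The paper's example is \(a+a\), which has two distinct forward transitions labelled \(a[k]\), and your own \(a.b+a\) is another instance. Its failure is exactly why the lemma is not immediate. Consequently, the step your induction hinges on, \enquote{hence \(u\) is derived from the same sub-transition}, cannot be obtained from the source and keyed label of \(u\). A transition from \(X \Par Y'\) labelled \(\alpha[k]\) could a priori act on either component, or be a synchronisation when \(\alpha = \tau\). Likewise, in the sum case a closing transition could a priori enter the other summand once the active one has become standard. Your observation that the side condition of \(\lmidl\) is still met only shows that the expected closing transition exists, not that the given \(u\) is that transition.

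What is missing is to use the other half of the square: \(v'\) constrains the shape of \(S\), and every alternative derivation of \(u\) and \(v'\) must be excluded. The tool for this is key-set bookkeeping rather than determinism. The key of a transition is the unique key on which the key sets of its source and target differ, and its action label is fixed by the prefixes carrying that key. So no transition is a loop, and the key and label of any transition between two given processes are forced. Combined with \(k \neq m\) (from \(t \ind v\)), this rules out every twisted alternative. Once the endpoints of the sub-transitions are pinned down, equal proof labels follow either from uniqueness of \ccsk derivations (\autoref{lem-derivation-uniqueness-ccsk}) for untouched components, or from your induction hypothesis for sub-squares.

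This case analysis is the bulk of the paper's proof, which organises it slightly differently:
\begin{enumerate}
\item It applies \SP in \pccsk to \(\prov{t}\), \(\prov{v}\), obtaining a closing with the right proof labels at some \(S'\).
\item It maps that closing back to \ccsk with \(\base{\cdot}\).
\item It proves, by induction on the four \ccsk closing transitions, that \(S' = S\).
\item Since the two closings then coincide, it reads off equality of the proof labels.
\end{enumerate}
That keeps the inductive case analysis purely at the level of \ccsk transitions. Your direct formulation must additionally track how \(\prov{\cdot}\) acts in every case.
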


Considering that this new result is not trivial, we discuss its Beluga proof with more detail than in the rest of this paper.
For \itemref{prop-eveqt-stable-2}, alternative approaches are certainly possible, but they would most likely be less computationally efficient.

\begin{figure}%
    \centering
    
    \input{figures/event-stable-bij.tex}
    \Description[short description]{long description}%
    \caption{Transitions in the proof of \autoref{prop-eveqt-stable}~\itemref{prop-eveqt-stable-2}, in \ccsk (on the left) and \pccsk (on the right).%
    } 
    \label{fig:proof-prop-eveqt-stable}
\end{figure}

\begin{proof}%
	\itemref{prop-eveqt-stable-1}
	By Definitions~\ref{def-ind-dep-pccsk} and \ref{def-ind-dep-ccsk}, \(\ind\) is preserved by \(\base{\cdot}\), and since \bel{code/bijection/lemmas-lifting.bel}[\(\base{\cdot}\) maps \pccsk transitions with the same proof label into \ccsk transitions with the same keyed label][79][102], the conclusion follows.
	
	\itemref{prop-eveqt-stable-2}
	Consider $t:P \r{fb}[\alpha][k] Q$, $u:P \r{fb}[\beta][m] R$, $u': Q \r{fb}[\beta][m] S$ and $t':R \r{fb}[\alpha][k] S$ with $t \ind u$, as in \autoref{def-event-simplified} and pictured in \autoref{fig:proof-prop-eveqt-stable}.
    Applying \(\prov{\cdot}\) yields transitions $\prov{t}:P \pr{fb}[\theta_1] Q$, $\prov{u}:P \pr{fb}[\theta_2] R$, $\prov{(u')}: Q \pr{fb}[\theta_2'] S$ and $\prov{(t')}:R \pr{fb}[\theta_1'] S$ with $\theta_1 \lind \theta_2$, but in general $\theta_i \neq \theta_i'$, for $i=1, 2$.
	Applying \SP gives transitions $Q \pr{fb}[\theta_2] S'$ and $R \pr{fb}[\theta_1] S'$, that can be mapped via \(\base{\cdot}\) to \ccsk transitions $u'': Q \r{fb}[\beta][m] S'$ and $t'': R \r{fb}[\alpha][k] S'$.
	Note, however, that there is no guarantee that \(S' = S\) at this point.	
	An \bel{code/ccsk/axioms/defs-and-properties.bel}[auxiliary lemma][250][1990]\footnote{As we detail in the \bel{overview.md\#proposition-67-events-are-stable-under-bijection}[overview file], the proof of this lemma spans \bkloc{1.7}, due to the need to perform case analysis on four combined \ccsk transitions.}, proved by induction on the length of the derivation trees of $t', u', t''$ and $u''$, establishes that the four transitions form a degenerate commutative diagram and that the processes $S$ and $S'$ coincide.
    By uniqueness of derivation trees for \pccsk transitions (\autoref{lem-derivation-uniqueness-ccskp}), it follows that $\prov{(t')} = \prov{(t'')}$ and $\prov{(u')} = \prov{(u'')}$, and therefore $\theta_i = \theta_i'$, for $i=1, 2$.
	Since their proof labels match pairwise, $\prov{t}$, $\prov{u}$, $\prov{(u')}$ and $\prov{(t')}$ form a commuting square as in \autoref{def-event-simplified}, and hence $\prov{t} \eveqt \prov{u}$.
\end{proof}

\begin{proposition}[{\bel{code/ccsk/axioms/}[The LTSI of \ccsk is pre-reversible]}]%
	\label{prop-pre-rev-ccsk}
	\SP, \BTI, \WF, and \PCI hold for the LTSI of \ccsk.%
\end{proposition}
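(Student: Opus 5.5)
The plan is to transfer each axiom from \pccsk (\autoref{prop-pre-reversible-hold-pccsk}) to \ccsk along the bijection $\base{\cdot} = (\prov{\cdot})^{-1}$ of \autoref{lem-bijection}. Independence on \ccsk is defined in \autoref{def-ind-dep-ccsk} by pulling back labels through $\prov{\cdot}$, and connectedness is the same notion on both sides, since both LTSes share the process set $\kprocset$ and $\base{\cdot}$, $\prov{\cdot}$ preserve sources, targets and direction. We therefore first record three routine facts: (i) for \ccsk transitions $t_1,t_2$ we have $t_1 \ind t_2$ iff $\prov{t_1} \ind \prov{t_2}$ in \pccsk; (ii) $\prov{\cdot}$ commutes with inversion, i.e.\ $\prov{(\rev{t})} = \rev{(\prov{t})}$, which holds because the backward rules are symmetric to the forward ones and derivations are unique (Lemmas~\ref{lem-derivation-uniqueness-ccskp} and \ref{lem-derivation-uniqueness-ccsk}); (iii) paths map to paths, with the same endpoints.

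With these facts, three axioms transfer directly. \WF has no independence content: an infinite reverse computation in \ccsk maps under $\prov{\cdot}$ to one in \pccsk, which is impossible. For \BTI, take distinct coinitial backward \ccsk transitions $t,u$. Their images are distinct coinitial backward \pccsk transitions by injectivity, so $\prov t \ind \prov u$, and (i) gives $t \ind u$. For \SP, from $t \ind u$ we get $\prov t \ind \prov u$; \SP in \pccsk gives closing transitions, and applying $\base{\cdot}$ to them yields \ccsk transitions with the same sources and targets. Their keyed labels are $\labl{\theta}[\kay\theta]$, which agree with those of $u$ and $t$, so they close the square.

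\PCI is the delicate case. Given a \ccsk square $t,u,u',t'$ with $t \ind u$, mapping it gives a square in \pccsk whose opposite sides need not carry identical proof labels a priori. Here I would use the argument already sketched in the proof of \autoref{prop-eveqt-stable}~\itemref{prop-eveqt-stable-2}. \SP in \pccsk produces a closing square with labels $\theta_2,\theta_1$, and the auxiliary degenerate-diagram lemma shows that it has the same target $S$. Uniqueness of derivations then forces $\prov{(u')}$ to carry label $\lblof{\prov u}$. Now \PCI in \pccsk gives $\prov{(u')} \ind \rev{(\prov t)}$, and with (ii) and (i) this yields $u' \ind \rev t$. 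Matching up the square's proof labels in this way is the main obstacle, and it is exactly the content handled by that auxiliary lemma, so I would reuse it verbatim.
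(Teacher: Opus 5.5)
Your proposal is correct and follows essentially the paper's own route. \SP, \BTI and \WF are transferred along the bijection exactly as you describe, and \PCI rests on the same auxiliary degenerate-diagram lemma. The only difference is in packaging. The paper first obtains $t \eveqt t'$ and $u \eveqt u'$ from \autoref{def-event-simplified}, then invokes \autoref{prop-eveqt-stable}~\itemref{prop-eveqt-stable-2} (whose proof is the argument you inline) to get equal proof labels, and finally concludes $u' \ind \rev{t}$ directly from \autoref{def-ind-dep-ccsk}, rather than going through \PCI for \pccsk.
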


The proofs of \bel{code/ccsk/axioms/sp.bel}[\textbf{SP}][9][30], \bel{code/ccsk/axioms/bti.bel}[\textbf{BTI}][3][33] and \bel{code/ccsk/axioms/wf.bel}[\textbf{WF}][3][24] are routine: as the axioms hold for \pccsk, it is easy to show that they hold for \ccsk using the bijection and the definitions of independence %
(Definitions~\ref{def-ind-dep-pccsk} and \ref{def-ind-dep-ccsk}).
As an example, we discuss the proof of \SP in more detail: we start by letting $t$ and $u$ be coinitial independent transitions in \ccsk.
By \autoref{def-ind-dep-ccsk}, $\lblof{\prov{t}} \lind \lblof{\prov{u}}$. By applying \SP in \pccsk, we obtain cofinal transitions $t'$ and $u'$, with the same directions and proof labels as $\prov{t}$ and $\prov{u}$ respectively, closing the commutative diagram; $\base{(t')}$ and $\base{(u')}$ close the corresponding diagram in \ccsk.
Since \bel{code/bijection/lemmas-lifting.bel}[\(\base{\cdot}\) maps \pccsk transitions with the same proof label into \ccsk transitions with the same keyed label][79][102], and since \bel{code/bijection/lemmas-lifting.bel}[ it maps pairs of transitions with the same direction into pairs of transitions with the same direction][4][25], $\base{(t')}$ and $\base{(u')}$ are the desired transitions that prove \SP for \ccsk.

The proof of \bel{code/ccsk/axioms/pci.bel}[\textbf{PCI}][3][34] is made easy by \autoref{prop-eveqt-stable}: given $t:P \r{fb}[\alpha][k] Q$, $u:P \r{fb}[\beta][m] R$, $u': Q \r{fb}[\beta][m] S$ and $t':R \r{fb}[\alpha][k] S$ with $t \ind u$, by \autoref{def-event-simplified} we have that $t \eveqt t'$ and $u \eveqt u'$.
By \autoref{prop-eveqt-stable}, since transitions in the same event have the same label, we infer that $\lblof{\prov{t}} = \lblof{\prov{t'}}$ and $\lblof{\prov{u}} = \lblof{\prov{u'}}$, and thus $u' \ind \rev{t}$ by \autoref{def-ind-dep-ccsk}.

Following the same approach, \ie leveraging the bijection between \pccsk and \ccsk and \autoref{prop-eveqt-stable} to transfer the properties of the LTSI of \pccsk (\autoref{prop-other-hold-pccsk}) to \ccsk, we can prove \bel{code/ccsk/axioms/id.bel}[\textbf{ID}][20][67], \bel{code/ccsk/axioms/ire.bel}[\textbf{IRE}][3][14], \bel{code/ccsk/axioms/cire.bel}[\textbf{CIRE}][3][16],  \bel{code/ccsk/axioms/rpi.bel}[\textbf{RPI}][3][13] and  \bel{code/ccsk/axioms/ed.bel}[\textbf{ED}][3][21] for \ccsk.
Note in particular that this approach allows us to prove \ED, although \FLD does not hold for \ccsk: \eg two different forward transitions labelled \(a[k]\) can have \(a + a\) for source. 
We can then easily conclude:

\begin{theorem}%
	\label{thm-axioms-hold-ccsk}
    The axiomatic theory holds for the LTSI of \ccsk.
\end{theorem}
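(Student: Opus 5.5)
The statement to prove is \autoref{thm-axioms-hold-ccsk}, which collects the results just stated for the LTSI of \ccsk. The plan is to assemble the theorem from those results rather than re-prove each axiom. First, \autoref{prop-pre-rev-ccsk} gives \SP, \BTI, \WF and \PCI, so the LTSI $(\kprocset, \klabelset, \r{fb}[\alpha[k]], \ind)$ is pre-reversible in the sense of \autoref{def-prerev}. This has two consequences:
\begin{itemize}
\item the simplified notion of events (\autoref{def-event-simplified}) coincides with the general one (\autoref{def-event-general}), by \autoref{lem-non-degenerate};
\item all derived results of \autoref{sec:axiomatic} and \autoref{sec:CCC} that require only pre-reversibility become available, e.g.\ \PL, \NRE, \UT, \BLD and polychotomy.
\end{itemize}

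Second, I would establish the remaining properties \ID, \IRE, \CIRE, \RPI and \ED by transfer along the bijection of \autoref{lem-bijection}.
\begin{itemize}
\item \ID also follows abstractly from \BTI and \PCI via \autoref{prop-ID}.
\item \RPI is immediate: by \autoref{def-ind-dep-ccsk}, independence of $t_1,t_2$ depends only on connectedness and on $\lblof{\prov{t_1}} \lind \lblof{\prov{t_2}}$. Reversal preserves both, because $\prov{\rev{t}} = \rev{\prov{t}}$ has the same proof label as $\prov{t}$, and connectedness is symmetric under reversal.
\item For \IRE, take $t \eveqt t' \ind u$. By \autoref{prop-eveqt-stable}~\itemref{prop-eveqt-stable-2} we have $\prov{t} \eveqt \prov{t'}$, hence equal proof labels. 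Connectedness of $t$ and $u$ follows from connectedness of $t'$ and $u$ together with the fact that equivalent transitions are connected (the \ccsk analogue of the \pccsk lemma, via the bijection). Then $t \ind u$ follows.
\item \CIRE follows from \IRE.
\item For \ED, take coinitial forward $t \eveqt u$. Then $\prov{t} \eveqt \prov{u}$ by \autoref{prop-eveqt-stable}, so \ED in \pccsk (\autoref{prop-other-hold-pccsk}) makes $\prov{t}$, $\prov{u}$ cofinal. Since the bijection preserves source and target, $t$ and $u$ are cofinal.
\end{itemize}

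The main obstacle is not this assembly but \autoref{prop-eveqt-stable}~\itemref{prop-eveqt-stable-2}, on which \PCI, \IRE and \ED all rest. Its difficulty is that a \ccsk commuting square may a priori lift to \pccsk transitions whose proof labels differ on opposite sides. I take it as given, with its degenerate-diagram auxiliary lemma showing $S = S'$ and uniqueness of derivations (\autoref{lem-derivation-uniqueness-ccskp}). With it in hand, every property in \autoref{fig:axiomatic-properties} holds for \ccsk, which is exactly the claim.
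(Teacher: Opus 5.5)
Your proposal is correct and follows essentially the paper's route: pre-reversibility comes from \autoref{prop-pre-rev-ccsk}, and the remaining properties are transferred along the bijection with \autoref{prop-eveqt-stable} as the key ingredient, exactly as the paper does for \ED even though \FLD fails in \ccsk. Your shortcuts are sound: you obtain \ID abstractly from \autoref{prop-ID} and \CIRE from \IRE, whereas the paper transfers each from \pccsk. One small citation point: the coincidence of simplified and general events needs \PCI, to propagate independence to all four corners of the square, and not only \autoref{lem-non-degenerate}.
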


\subsection{True-Concurrency Relations for \texorpdfstring{\pccsk}{CCSKP} via Independence and Dependence}%
\label{ssec:true-concu-relation-pccsk}

We have seen in the previous sections that the LTSIs for \pccsk and \ccsk satisfy all the axioms needed for the new alternative characterisations of true-concurrency relations from \autoref{sec:CCC}.
This section discusses more precisely how those results translate to the specific case of \pccsk.
We also discuss some of the limitations of the independence and dependence relations when it comes to capturing true-concurrency relations.
While all the results below apply equally to \ccsk, we focus on \pccsk to ease the presentation.

First, note that the relations and complementarity result on \pccsk transitions (\autoref{prop-complementarity transitions}) can be lifted to connected \emph{events}, since the LTSI of \pccsk satisfies \IRE (\autoref{prop-other-hold-pccsk}).
We first add dependence to our set of relations on events (\autoref{def-ind sdep event}):

\begin{definition}[Dependence of connected events]%
	\label{def-dep-on-events}
	Let \(e_1\), \(e_2\) be connected events in the LTSI of \pccsk, we let $e_1 \dep e_2$ if there are transitions $t_1 \in e_1$ and $t_2 \in e_2$ such that $t_1 \dep t_2$.
\end{definition}	

\begin{remark}%
	\label{rem:axiom-complementarity}
	Note that we are providing a direct definition of dependence for \pccsk, instead of defining it as the complement of independence on connected transitions (\autoref{notation-complement}) as done in the axiomatic approach.
	However, complementarity for events (\autoref{prop-complementarity events}) below proves that both definitions match for \pccsk. 
\end{remark}

Quantifiers in the definitions of independence and dependence on events can go from existential to universal:

\begin{proposition}%
	\label{prop-ind-dep-on-events}
	Let \(e_1\), \(e_2\) be connected events in the LTSI of \pccsk.
	\begin{enumerate}
		\item $e_1 \ind e_2$ iff $t_1 \ind t_2$ for all $t_1 \in e_1$, $t_2 \in e_2$;
		\item $e_1 \dep e_2$ iff $t_1 \dep t_2$ for all $t_1 \in e_1$, $t_2 \in e_2$.
	\end{enumerate}
\end{proposition}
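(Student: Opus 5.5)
The plan is to prove both items by the same scheme. The right-to-left directions are immediate, and the left-to-right directions combine \IRE with complementarity on transitions. For each item, the ($\Leftarrow$) direction only needs to exhibit one pair. Events are non-empty equivalence classes, so we pick any $t_1 \in e_1$ and $t_2 \in e_2$ and apply the universal hypothesis to obtain $t_1 \ind t_2$ (respectively $t_1 \dep t_2$). This is exactly the existential condition of \autoref{def-ind sdep event} (respectively \autoref{def-dep-on-events}).

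For (1), ($\Rightarrow$), assume $t_1' \ind t_2'$ for some $t_1' \in e_1$ and $t_2' \in e_2$, and let $t_1 \in e_1$ and $t_2 \in e_2$ be arbitrary. Since \IRE holds for \pccsk (\autoref{prop-other-hold-pccsk}), the step $t_1 \sqeqt t_1' \ind t_2'$ gives $t_1 \ind t_2'$. Using symmetry of $\ind$ (\autoref{rem:ind-irreflexivity}), the step $t_2 \sqeqt t_2' \ind t_1$ then gives $t_2 \ind t_1$, hence $t_1 \ind t_2$.

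A more label-based route is also available and shows the same fact. Equivalent \pccsk transitions carry the same proof label, as was used for \CIRE and follows from \autoref{def-event-simplified} together with \SP. Independence of transitions is defined as connectedness plus $\lind$ on labels (\autoref{def-ind-dep-pccsk}). So the only extra ingredient would be that $t_1$ and $t_2$ are connected. Since \IRE already yields the full statement, I would simply invoke it.

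For (2), ($\Rightarrow$), assume $t_1' \dep t_2'$ for some $t_1' \in e_1$ and $t_2' \in e_2$, and take arbitrary $t_1 \in e_1$ and $t_2 \in e_2$. The main obstacle is showing that $t_1$ and $t_2$ are connected, because only then does complementarity (\autoref{prop-complementarity transitions}) apply to them.

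Here is how I would establish connectedness. The events are connected by hypothesis, so we have a path between some representatives of $e_1$ and $e_2$. Equivalent transitions are connected (the formalised fact used for \IRE), and any two members of one event are joined by a chain of commuting squares. So we can extend that path: by \autoref{prop:eveqt evleqt}, the targets of equivalent transitions are joined by paths, and we prepend or append the transitions themselves as needed. This produces a path from $\srcof{t_1}$ to $\tgtof{t_2}$ (or in the other orientation), possibly traversing transitions backwards via the Loop Lemma (\autoref{lem-loop_proved}). Hence $t_1$ and $t_2$ are connected.

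With connectedness in hand, complementarity says that exactly one of $t_1 \ind t_2$ and $t_1 \dep t_2$ holds. Suppose $t_1 \ind t_2$. Then item (1) gives $t_1' \ind t_2'$, and together with $t_1' \dep t_2'$ this contradicts the uniqueness part of \autoref{prop-complementarity transitions}. Hence $t_1 \dep t_2$, which completes (2).
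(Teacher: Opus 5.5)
Your proposal is correct and follows essentially the same route as the paper: \IRE (together with symmetry of $\ind$) for item~(1), and \IRE combined with complementarity on transitions (\autoref{prop-complementarity transitions}) for item~(2). Your explicit argument that arbitrary representatives of connected events are themselves connected fills in a step the paper leaves implicit, and it is exactly what complementarity needs to turn non-independence into dependence.
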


\begin{proof}
	Both (\(\Leftarrow\)) directions are immediate by Definitions~\ref{def-ind sdep event} and \ref{def-dep-on-events}, so we prove only the other direction.
	\begin{enumerate}
		\item (\(\Rightarrow\)) By \autoref{def-ind sdep event}, there exist $t_1 \in e_1$, $t_2 \in e_2$ such that $t_1 \ind t_2$. 
		For all \(t'_1 \in e_1\), \(t'_2 \in e_2\), \(t_1' \sqeqt  t_1\) and \(t_2' \sqeqt t_2\), by \IRE we obtain \(t_1' \ind t_2'\) as desired.
		
		\item (\(\Rightarrow\)) By \autoref{def-dep-on-events}, there exist $t_1 \in e_1$, $t_2 \in e_2$ such that $t_1 \dep t_2$. 
		If there existed \(t'_1 \in e_1\) and \(t'_2 \in e_2\) with \(t_1' \ind t_2'\), then by \IRE we would obtain \(t_1 \ind t_2\), which would violate complementarity on transitions (\autoref{prop-complementarity transitions}). \qedhere
	\end{enumerate}
\end{proof}

We then immediately obtain as a corollary of Propositions~\ref{prop-complementarity transitions} and \ref{prop-ind-dep-on-events}:

\begin{corollary}[Complementarity for events]%
	\label{prop-complementarity events}
	Let \(e_1\), \(e_2\) be connected events in the LTSI of \pccsk.
	Then exactly one of $e_1 \ind e_2$ and $e_1 \dep e_2$ holds.
\end{corollary}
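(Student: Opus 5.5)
We need to prove the corollary: for connected events \(e_1\), \(e_2\), exactly one of \(e_1 \ind e_2\) and \(e_1 \dep e_2\) holds. The plan is to combine \autoref{prop-complementarity transitions} with \autoref{prop-ind-dep-on-events}, treating existence and exclusivity separately.

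For existence, we first unfold connectedness of events (\autoref{def-ind sdep event}). This gives transitions \(t_1 \in e_1\) and \(t_2 \in e_2\) that are connected. By \autoref{prop-complementarity transitions}, either \(t_1 \ind t_2\) or \(t_1 \dep t_2\). In the first case \autoref{def-ind sdep event} yields \(e_1 \ind e_2\); in the second, \autoref{def-dep-on-events} yields \(e_1 \dep e_2\).

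For exclusivity, suppose both \(e_1 \ind e_2\) and \(e_1 \dep e_2\) hold. Take the connected pair \(t_1 \in e_1\), \(t_2 \in e_2\) from above. By \autoref{prop-ind-dep-on-events}(1), \(t_1 \ind t_2\). By \autoref{prop-ind-dep-on-events}(2), \(t_1 \dep t_2\). Since \(t_1\), \(t_2\) are connected, this contradicts the \enquote{exactly one} clause of \autoref{prop-complementarity transitions}.

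The only delicate point is applying the universal forms of \autoref{prop-ind-dep-on-events} to a pair that is actually connected, since \autoref{prop-complementarity transitions} only speaks about connected transitions. Choosing the witness pair of event connectedness handles this. Moreover, \autoref{def-ind-dep-pccsk} builds connectedness into both \(\ind\) and \(\dep\) on transitions, so any pair produced by the universal statements is automatically connected. No further obstacle is expected.
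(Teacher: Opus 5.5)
Your proof is correct and follows essentially the paper's route: the paper derives this corollary directly from complementarity on transitions (\autoref{prop-complementarity transitions}) combined with the existential-to-universal lifting of \autoref{prop-ind-dep-on-events}, exactly as you do. Your observation that \(\ind\) and \(\dep\) on \pccsk transitions already include connectedness is also valid, so for exclusivity any pair \(t_1 \in e_1\), \(t_2 \in e_2\) would serve, not only the connectedness witness.
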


Since \CIRE holds, core independence and concurrency (\autoref{def-event relations})
coincide (\autoref{prop-IRE co coind}). Among adjacent transitions and thus adjacent events, concurrency is independence, or simply existence of non-degenerate commuting squares. If forward transitions are composable and dependent, then they are causally dependent
by \autoref{lem-adj non-ind vs ccc}~\itemref{lem-adj non-ind vs ccc two} and by complementarity of transitions and events for \pccsk. In fact, events of such transitions 
are in the immediate predecessor relationship, as we show below with \autoref{lem-immed pred composable}.
Correspondingly, by \autoref{lem-adj non-ind vs ccc}~\itemref{lem-adj non-ind vs ccc one} and by complementarity of transitions and events, distinct coinitial forward dependent transitions are in initial conflict.

\begin{figure}%
\centering

	\subfloat[%
    \((a.b\Par \out{b}.c)\bs b \): %
    \({[t_1] \ind [t_3]}\),%
    	\({[t_1] < [t_3]}\)%
		 ]{\label{fig:example-1}%
			\input{figures/example-ltsi-1}%
	}\qquad
	\subfloat[\(a. (b \Par c.d)\): %
    \({[t_1] \tdep [t_2]}\), \({[t_1] \ci [t_2]}\)
    ]{\label{fig:example-3}%
		\input{figures/example-ltsi-3}%
	}
    \qquad
   	\subfloat[\(((a \Par a.b)\Par \out{a}) \bs  a\): \({[t_1] \cf [t_3]}\), \({[t_1] \ind [t_3]}\) ]{\label{fig:example-2}%
        \input{figures/example-ltsi-2}\qquad%
    }
   	\Description[short description]{long description}%
	\caption{LTSIs for Examples~\ref{def-ind-dep-ccsk-causal}, \ref{conc-tran-clo}, \ref{causal-ind} and \ref{ex-conflict-ind}.%
	} 
	\label{fig:LTSIs}

\end{figure}

Since the LTSI of \pccsk is pre-reversible and satisfies \IRE and \RPI, \autoref{lem-coind ind adj} and 
\autoref{prop-IRE co coind} imply:

\begin{proposition}[Core Independence and Concurrency Match Independence on Adjacent Events]%
	\label{prop-coind-adj-ind}
	Let \(e_1\), \(e_2\) be events in the LTSI of \pccsk.
	Then $e_1 \coind e_2$ iff $e_1 \co e_2$ iff \(e_1\), \(e_2\) are adjacent and $e_1 \ind e_2$.
\end{proposition}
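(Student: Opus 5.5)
The plan is to assemble the result from facts about general LTSIs already in place, instantiated to \pccsk via \autoref{thm-axioms-hold-pccsk}. That theorem, with Propositions~\ref{prop-pre-reversible-hold-pccsk} and \ref{prop-other-hold-pccsk}, makes the LTSI of \pccsk pre-reversible and gives it \IRE, \CIRE and \RPI. I would prove the chain of equivalences as two biconditionals: $e_1 \coind e_2$ iff $e_1 \co e_2$, and $e_1 \coind e_2$ iff \(e_1\), \(e_2\) are adjacent and $e_1 \ind e_2$.

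For the first biconditional, \CIRE holds, so \autoref{prop-IRE co coind} applies directly.

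For the second biconditional, I would use the two items of \autoref{lem-coind ind adj}.
\begin{itemize}
\item Left to right: item~\itemref{item-adj-ind-coind-1} requires only pre-reversibility. It gives that core independent events are adjacent and independent in the sense of \autoref{def-ind sdep event}, witnessed by the coinitial independent pair in the definition of $\coind$.
\item Right to left: item~\itemref{item-adj-ind-coind-2} requires \IRE and \RPI, both of which hold for \pccsk. From adjacency and $e_1 \ind e_2$, \IRE transports independence onto the adjacent representatives. \RPI then reverses one or two of them to obtain coinitial independent transitions, using the reversed events as in \autoref{fig:rpi}.
\end{itemize}

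The step that needs care is this last reversal. Reversing changes the events: one obtains $\rev{e_1}$ or $\rev{e_2}$ rather than $e_1$, $e_2$, so one must check that the conclusion is about the original events. I would handle this with \autoref{lem-rpi-ev}. It gives $\rev{e_1} \coind e_2$ from $e_1 \coind e_2$, and since $\rev{\rev{e}} = e$, this relation can be transported back and forth. Beyond this check, the proof reduces to citing these results.
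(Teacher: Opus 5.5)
Your proof is correct and matches the paper's, which obtains the proposition directly from \autoref{lem-coind ind adj} and \autoref{prop-IRE co coind}, using that the LTSI of \pccsk is pre-reversible and satisfies \IRE (hence \CIRE) and \RPI. Your extra appeal to \autoref{lem-rpi-ev}, together with $\rev{\rev{e}} = e$, to pass from the reversed events back to $e_1$, $e_2$ makes explicit a step that the paper's proof of \autoref{lem-coind ind adj}~\itemref{item-adj-ind-coind-2} leaves implicit.
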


It is important to observe that adjacency is required for the \enquote{if} part of the implication.
Indeed, independence based on proof labels for \pccsk is more general than core independence for non-adjacent but connected events: 

\begin{example}[Independence does not imply core independence on non-adjacent events]%
	\label{def-ind-dep-ccsk-causal}
	Consider $(a.b\Par \out{b}.c) \bs  b$ and: %
	\[
	\rlap{$\underbrace{\phantom{(a.b\Par \out{b}.c)\bs b \pr{f}[\lmidl a][k] (a[k].b \Par \out{b}.c) \bs b}}_{t_1}$} %
	(a.b\Par \out{b}.c) \bs b \pr{f}[\lmidl a][k] %
	\overbrace{(a[k].b \Par \out{b}.c) \bs b \pr{f}[\cpair{b[l]}{\out{b}[l]}] \phantom{(a[k].b[l]\Par \out{b}[l].c) \bs  b}}^{t_2}%
	\nhphantom{$(a[k].b[l]\Par \out{b}[l].c) \bs  b$}
	\underbrace{(a[k].b[l]\Par \out{b}[l].c) \bs  b \pr{f}[\lmidr c][n] (a[k].b[l]\Par \out{b}[l].c[n]) \bs  b}_{t_3}\text{.}
	\]
    
	Those transitions have to be executed in this exact order, as concisely represented in \autoref{fig:example-1}, 
	yet \(t_1 \ind t_3\):
	\begin{align*}
	\lblof{t_1} = \lmidl a[k] \lind \lmidr c[n] = \lblof{t_3} \tag{By P\(^2_k\) (\autoref{fig:pccsk-dep-ind}), since producing \(t_3\) requires to have $k \neq n$}
	\end{align*}
	
	However, $[t_1]<[t_3]$ and polychotomy (\autoref{prop-poly}) yields that 
	$[t_1] \coind [t_3]$ does not hold\footnote{This can also be observed considering that \(t_1\) and \(t_3\) are neither coinitial nor event equivalent to coinitial transitions.}.
\end{example}

Similarly to \autoref{prop-coind-adj-ind}, immediate predecessor (\autoref{def-immed pred}) is non-independence on composable events:

\begin{proposition}[Immediate Predecessor Matches Dependence on Composable Events]%
	\label{lem-immed pred composable}
	Let \(e_1\), \(e_2\) be forward events  in the LTSI of \pccsk.
	Then $e_1 \ip e_2$ iff $e_1$ is composable with $e_2$ and $e_1 \dep e_2$.
\end{proposition}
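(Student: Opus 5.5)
This follows by combining \autoref{lem-IRE RPI ip comp} with complementarity for events (\autoref{prop-complementarity events}). By \autoref{thm-axioms-hold-pccsk} (more precisely Propositions~\ref{prop-pre-reversible-hold-pccsk} and \ref{prop-other-hold-pccsk}), the LTSI of \pccsk is pre-reversible and satisfies \IRE and \RPI. Hence \autoref{lem-IRE RPI ip comp} applies: for forward events \(e_1\), \(e_2\), we have \(e_1 \ip e_2\) iff \(e_1\) is composable with \(e_2\) and \(e_1 \nind e_2\). It then remains to show that, for composable events, \(e_1 \nind e_2\) is equivalent to \(e_1 \dep e_2\).

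First observe that composable events are connected. If \(t_1 \in e_1\) and \(t_2 \in e_2\) are composable, then \(t_1\) is connected to \(t_2\) via the path consisting of \(t_1\) itself, which goes from \(\srcof{t_1}\) to \(\tgtof{t_1} = \srcof{t_2}\), followed by \(t_2\). This gives a path from \(\srcof{t_1}\) to \(\tgtof{t_2}\). Alternatively, connectedness in the sense of \autoref{def-transitions-paths} only requires a path from the source of one transition to the target of the other, and \(t_1 t_2\) is such a path.

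Since \(e_1\) and \(e_2\) are connected, \autoref{prop-complementarity events} says that exactly one of \(e_1 \ind e_2\) and \(e_1 \dep e_2\) holds. Therefore \(e_1 \nind e_2\) iff \(e_1 \dep e_2\). Substituting this into the characterisation from \autoref{lem-IRE RPI ip comp} gives both directions of the statement.

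The only point needing care is that the notions match. The event-level \(\ind\) used in \autoref{lem-IRE RPI ip comp} is the existential one of \autoref{def-ind sdep event}. The \(\dep\) of \autoref{def-dep-on-events} is also existential and is restricted to connected events, which is exactly why the connectedness observation above is needed. With these matched, the proof is a direct combination of earlier results and I expect no real obstacle.
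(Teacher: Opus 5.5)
Your proof is correct and follows the paper's argument. You apply \autoref{lem-IRE RPI ip comp}, which is available because the \pccsk LTSI is pre-reversible and satisfies \IRE and \RPI, and then replace $e_1 \nind e_2$ by $e_1 \dep e_2$ using complementarity for events (\autoref{prop-complementarity events}); your explicit check that composable events are connected, which the paper leaves implicit, is what justifies that replacement.
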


\begin{proof}
	Since non-independence coincides with dependence on composable events (\autoref{prop-complementarity events}), we obtain both implications by \autoref{lem-IRE RPI ip comp}, which we can use since the LTSI of \pccsk is pre-reversible and satisfies \IRE, \RPI.
\end{proof}

Since the LTSI for \pccsk is pre-reversible and satisfies \IRE and \RPI, we can apply Propositions~\ref{prop-chain ip} and \ref{lem-immed pred composable}
to obtain an alternative label-based characterisation of the causality relation $<$: this will be useful in proving that the semantic ordering of events corresponds to a static ordering on keys for \pccsk and \ccsk processes (\autoref{thm-ordering-event-key}) in \autoref{sec:key-based}.

Since \pccsk further satisfies \ED, we can apply \autoref{thm-conf is conf dep} to obtain a label-based characterisation 
of conflict $\Cf$
via global conflict $\Cfg$ (\autoref{def-gl conf dep}) and thus initial conflict $\Cfi$ (\autoref{def-conf dep}) and causality.
For this we require the result below, which follows by \autoref{prop-complementarity events}, that relates initial conflict and dependence.

\begin{proposition}[Initial Conflict Matches Dependence on Coinitial Events]%
	\label{lem-iinit conflict}
Let \(e_1\), \(e_2\) be forward events  in the LTSI of \pccsk.
Then $e_1 \Cfi e_2$ iff $e_1, e_2$ are coinitial and $e_1 \dep e_2$.
\end{proposition}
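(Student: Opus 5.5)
The plan is to unfold \autoref{def-conf dep} and reduce both directions to facts about transitions, using complementarity on events (\autoref{prop-complementarity events}) together with \autoref{prop-ind-dep-on-events}. Recall that \pccsk is pre-reversible and satisfies \CIRE and \ED (\autoref{thm-axioms-hold-pccsk}), so \autoref{def-conf dep} applies to it. Recall also that in \pccsk non-independence of connected transitions is exactly dependence (\autoref{prop-complementarity transitions}).

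For ($\Rightarrow$), assume $e_1 \Cfi e_2$. By definition there are $t_1 \in e_1$ and $t_2 \in e_2$ that are distinct, coinitial and forward, with $t_1 \nind t_2$. Coinitiality of $t_1$ and $t_2$ gives coinitiality of $e_1$ and $e_2$ (\autoref{def-ind sdep event}). Coinitial transitions are connected via the empty path, so by \autoref{prop-complementarity transitions} the failure of $t_1 \ind t_2$ gives $t_1 \dep t_2$. Hence $e_1 \dep e_2$ by \autoref{def-dep-on-events}, since $e_1$ and $e_2$ are connected.

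For ($\Leftarrow$), assume $e_1$ and $e_2$ are coinitial and $e_1 \dep e_2$. Pick coinitial forward $t_1 \in e_1$ and $t_2 \in e_2$; both are forward because $e_1$ and $e_2$ are forward events. By \autoref{prop-ind-dep-on-events}(2), $t_1 \dep t_2$. By complementarity (\autoref{prop-complementarity transitions}) this gives $t_1 \nind t_2$. It remains to show $t_1 \neq t_2$, and this is the only delicate point. We cannot simply argue from $e_1 \neq e_2$, because the statement does not assume it and $\dep$ is reflexive. Indeed, if $e_1 = e_2$ then every $t \in e_1$ satisfies $t \dep t$, so the right-hand side holds trivially.

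I would handle this obstacle in one of two ways. The first option is to read the statement as concerning distinct events, matching the convention of \autoref{def-initial conf}. Then $e_1 \neq e_2$, so $t_1 \nequ t_2$ and in particular $t_1 \neq t_2$. The second option is to observe that for $e_1 = e_2$ the left-hand side $e_1 \Cfi e_1$ fails. Two coinitial transitions of the same event must be cofinal by \ED, and \FLD (\autoref{prop-FLD pccsk}) then forces them to be equal, so no distinct pair exists. The biconditional then needs the tacit hypothesis $e_1 \neq e_2$.

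I would state this caveat explicitly and assume $e_1 \neq e_2$, which I expect is what the authors intend. Under that assumption, $t_1$ and $t_2$ are distinct, coinitial, forward and non-independent. This gives $t_1 \Cfi t_2$, and therefore $e_1 \Cfi e_2$.
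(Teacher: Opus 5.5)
Your proof is correct and follows the paper's route. The paper derives the statement in one line from complementarity on events (\autoref{prop-complementarity events}), which is the combination of \autoref{prop-complementarity transitions} and \autoref{prop-ind-dep-on-events} that you use. Your caveat is also right, and the paper's one-line justification does not address it. Since $\dep$ is reflexive while $e \Cfi e$ can never hold, the ($\Leftarrow$) direction needs the extra hypothesis $e_1 \neq e_2$. The impossibility of $e \Cfi e$ follows from \FLD alone, because transitions in the same event share their label, so your appeal to \ED is redundant. One minor slip: coinitial $t_1$, $t_2$ are connected via the one-step path $t_2$, not via the empty path.
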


We can transfer these results regarding causality and conflict from \pccsk to \ccsk using their bijection.

Propositions~\ref{prop-chain ip} and \ref{lem-immed pred composable} show that causality can be expressed as chains of (immediate predecessor, event) pairs or chains of pairs of composable events that are dependent, respectively.
Hence, it is natural to ask whether the transitive closure of $\dep$, denoted by $\tdep$, can play the role of the chain and capture causality.
We show below that $\tdep$ is necessary but not sufficient; hence not suitable for representing causality.

\begin{proposition}%
	\label{leqcftsdep}
	Let \(e_1\), \(e_2\) be connected forward events in the LTSI of \pccsk. %
	\begin{multicols}{2}
        \noindent
		\begin{enumerate}
			\item
			If $e_1 \leq e_2$ then $e_1 \tdep e_2$; \label{leqcftsdep-1}
			\item
			If $e_1 \Cf e_2$ then $e_1 \tdep e_2$. \label{leqcftsdep-2}
		\end{enumerate}
	\end{multicols}
\end{proposition}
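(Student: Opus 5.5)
Both parts rest on the characterisations of Section~\ref{sec:CCC}, which apply since the LTSI of \pccsk is pre-reversible and satisfies \IRE, \RPI, \CIRE and \ED (Theorem~\ref{thm-axioms-hold-pccsk}). We read $\tdep$ as the transitive closure of $\dep$ on (connected) events, so it suffices to produce a finite chain of events, each dependent on the next. Consecutive events in the chain must be connected; this holds because all events occurring below are connected to $e_1$ and $e_2$ through rooted paths from a common origin, as in the first paragraph of the proof of Theorem~\ref{thm-conf is conf dep}.

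For \itemref{leqcftsdep-1}, if $e_1 = e_2$ we use reflexivity of $\ldep$ (Remark~\ref{rem:ind-irreflexivity}). Take $t \in e_1$; then $t$ is connected to itself, so $t \dep t$ and hence $e_1 \dep e_1$. If $e_1 < e_2$, Proposition~\ref{prop-chain ip} yields a chain $e_1 \ip e'_1 \ip \cdots \ip e'_n \ip e_2$. By Proposition~\ref{lem-immed pred composable}, each link $\ip$ consists of composable (hence connected) events with $\dep$ between them. Concatenating these links gives $e_1 \tdep e_2$.

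For \itemref{leqcftsdep-2}, Theorem~\ref{thm-conf is conf dep} turns $e_1 \Cf e_2$ into $e_1 \Cfg e_2$. This provides forward events $e_3 \Cfi e_4$ with $e_3 \leq e_1$ and $e_4 \leq e_2$. By Proposition~\ref{lem-iinit conflict}, $e_3 \dep e_4$. By part \itemref{leqcftsdep-1} (applied to the connected pairs $e_3,e_1$ and $e_4,e_2$), we get $e_3 \tdep e_1$ and $e_4 \tdep e_2$. Symmetry of $\ldep$, and hence of $\dep$ on events, turns the first into $e_1 \tdep e_3$. Chaining $e_1 \tdep e_3 \dep e_4 \tdep e_2$ gives $e_1 \tdep e_2$.

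The main obstacle is bookkeeping connectedness, since $\dep$ on events (Definition~\ref{def-dep-on-events}) is only defined for connected events. For \itemref{leqcftsdep-1} this is automatic from composability. For \itemref{leqcftsdep-2} we need $e_3, e_1$ and $e_4, e_2$ to be connected, and $e_3, e_4$ to be coinitial. I would read both off the construction in the proof of Theorem~\ref{thm-conf is conf dep}, where $e_3, e_4$ occur on forward rooted paths sharing a common source and ending in transitions of $e_1, e_2$; any two transitions on such paths are connected via the reverse of one path followed by the other.
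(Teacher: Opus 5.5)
Your proof is correct and follows the paper's own argument: part (1) from Proposition~\ref{prop-chain ip} combined with Proposition~\ref{lem-immed pred composable}, and part (2) from Theorem~\ref{thm-conf is conf dep}, Proposition~\ref{lem-iinit conflict} and part (1). You are slightly more careful than the paper, which leaves implicit the case $e_1 = e_2$ (handled by reflexivity of $\ldep$) and the symmetry of $\dep$ needed to reverse $e_3 \tdep e_1$. The connectedness bookkeeping is also lighter than you fear: $e_3, e_4$ are coinitial by the definition of $\Cfi$, and $e_3 \leq e_1$ already places transitions of both events on one rooted forward-only path.
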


\begin{proof}
    \begin{enumerate}
    \item By Propositions~\ref{prop-chain ip} and \ref{lem-immed pred composable}.%
    \item %
    By \autoref{thm-conf is conf dep}, $e_1\Cfg e_2$, so by \autoref{def-gl conf dep} there exist forward events $e_3$, $e_4$ such that $e_3 \Cfi e_4$, $e_3 \leq e_1$ and $e_4 \leq e_2$.
    By \itemref{leqcftsdep-1}, \(e_3 \tdep e_1\) and \(e_4 \tdep e_2\), and by \autoref{lem-iinit conflict}, $e_3 \Cfi e_4$ implies \(e_3 \dep e_4\), so that \(e_1 \tdep e_2\). \qedhere
    \end{enumerate}
\end{proof}
The converse version of \autoref{leqcftsdep} stipulating that two events in the transitive closure of dependence are either causally related or in conflict would not hold.
Indeed, $\tdep $ captures, in addition to causality and conflict, some core independence and independence, on adjacent and non-adjacent transitions:

\begin{example}[Transitive dependence and core independence are compatible]%
	\label{conc-tran-clo}    
    Consider the following transitions:  
   	\[
    \rlap{$\underbrace{\phantom{a[k]. (b \Par c.d) \pr{f}[\lmidl b][m] a[k].(b[m] \Par c.d)}}_{t_1}$} %
    a[k]. (b \Par c.d) \pr{f}[\lmidl b][m] %
    \overbrace{a[k].(b[m] \Par c.d) \pr{f}[\lmidr c][l] \phantom{a[k].(b[m] \Par c[l].d)}}^{t_2}%
    \nhphantom{$a[k].(b[m] \Par c[l].d)$}
    \underbrace{a[k].(b[m] \Par c[l].d) \pr{f}[\lmidr d][n] a[k].(b[m] \Par c[l].d[n])}_{t_3}\text{.}
    \]
    
	\autoref{fig:example-3} presents all forward transitions from \(a. (b \Par c.d)\), where additional transitions are such that:
	\begin{multicols}{4}
		\noindent
		\[\lblof{t_0} =  a[k]\]
		\[\lblof{t_2} = \lblof{t_2' }\]
		\[\lblof{t_1} = \lblof{t_1'} = \lblof{t_1''}\]
		\[\lblof{t_3} = \lblof{t_3'}\text{.}\]
	\end{multicols}

    We have at the same time
    \begin{align*}
        \lblof{t_1} \lind \lblof{t_2} \text{ by P\(^2_k\) (\autoref{fig:pccsk-dep-ind}), and \(t_1\), \(t_2\) are adjacent} & \implies [t_1] \ci [t_2] \text{ by \autoref{prop-coind-adj-ind}}\\
        \lblof{t_1} \sdep \lblof{t_0} \sdep \lblof{t_2} \text{ by A\(^1\) and A\(^2\) (\autoref{fig:pccsk-dep-ind})} %
        & \implies [t_1] \tdep [t_2]
    \end{align*}

    Similarly, for the non-adjacent transitions \(t_1\) and \(t_3\), we have $[t_1]\coind [t_3]$ (since $t_1\sim t_1' \ind t_3$) and yet $[t_1]\tdep [t_3]$.
\end{example}

One can also find situations where conflicting events (\autoref{causal-ind}), or causally dependent events (\autoref{ex-conflict-ind}), have independent labels and are in the \(\tdep\) relationship at the same time.
We finish this section with two such examples: %

\begin{example}%
	\label{causal-ind}
     Consider the following three forward transitions:
    \begin{align*}
        t_1 : ((a \Par a.b)\Par \out{a}) \bs a & \pr{f}[\cpair{\lmidl a[k]}{\out{a}[k]}] ((a[k] \Par a.b) \Par \out{a}[k]) \bs a \\ 
        \rlap{$\underbrace{\phantom{%
                    ((a \Par a.b)\Par \out{a}) \bs a \pr{f}[\cpair{\lmidr a[k]}{\out{a}[k]}] ((a \Par a[k].b) \Par \out{a}[k]) \bs a %
            }}_{t_2}$} %
        ((a \Par a.b) \Par \out{a}) \bs a  & \pr{f}[\cpair{\lmidr a[k]}{\out{a}[k]}]  %
        \overbrace{ ((a \Par a[k].b) \Par \out{a}[k]) \bs a \pr{f}[\lmidl \lmidr b][m] ((a \Par a[k].b[m]) \Par \out{a}[k])}^{t_3}%
        \text{.}
    \end{align*}
    
    They are such that \([t_1] \ind [t_3]\) and \([t_1] \tdep [t_3]\) both hold, since 
    \[
    \begin{alignedat}{7}
    	\lblof{t_1} & =  \cpair{\lmidl a[k]}{\out{a}[k]} && \lind \lmidl \lmidr b[n] = \lblof{t_3} \\
    	\lblof{t_1} & =  \cpair{\lmidl a[k]}{\out{a}[k]} && \sdep \cpair{\lmidr a[m]}{\out{a}[m]} \sdep \lmidl \lmidr b[n] = \lblof{t_3} 
    \end{alignedat}
    \]    
    and it is also easy to see that \([t_1] \cf [t_3]\), as pictured in \autoref{fig:example-2}.
\end{example}

\begin{example}%
	\label{ex-conflict-ind}
	
	Reusing the process $(a.b\Par \out{b}.c) \bs  b$ from \autoref{def-ind-dep-ccsk-causal}, pictured in \autoref{fig:example-1}, one can further observe that $[t_1]\leq [t_3]$ and also $[t_1] \tdep [t_3]$:
	\[	
		\lblof{t_1} =  \lmidl a[k] \sdep \cpair{b[m]}{\out{b}[m]} \sdep \lmidr c[n] = \lblof{t_3}
	\]
\end{example}

%% file: figures/event-stable-bij.tex
    \begin{tikzpicture}[
    ]

    \node (P) {\(P\)};
    \node [above right= of P] (Q) {\(Q\)};
    \node [below right= of P] (R) {\(R\)};
    \node (S) at ($(P)!2!($(Q)!0.5!(R)$)$)  {\(S\)}; %
    \node (S') [right= .4cm of S] {\(S'\)};
    \draw[fb] (P) -- node (PQ) [above = .2, pos = .2]{\(\alpha[k]\)} (Q);
    \draw[fb] (P) -- node (PR) [below = .3, pos = .2]{\(\beta[m]\)} (R);
    \draw[fb] (R) -- node (RS) [above = .2, pos = .2]{\(\alpha[k]\)} (S);
    \draw[fb] (Q) -- node (QS) [below = .3, pos = .2]{\(\beta[m]\)} (S);
    
    \draw[fb, dashed, shorten >=8pt] (R) -- node (RS'')[below = .1, pos = .8]{\(\alpha[k]\)} (S');
    \draw[fb, dashed, shorten >=8pt] (Q) -- node (QS'')[above = .15, pos = .8]{\(\beta[m]\)} (S');
    
    \node (equal) at ($(S)!.5!(S')%
    $) {\(=\)}; %
        \begin{scope}[on background layer]
     \draw[draw=ord,fill=tsdep!30,rounded corners=0.2cm]
    (S.north west)  -- (S'.north east) -- node [above right = .1 and -.5] {\emph{\bel{code/ccsk/axioms/defs-and-properties.bel}[Auxiliary lemma][250][1990]}} (S'.south east) -- (S.south west) -- cycle;
        \end{scope}
    
    \begin{scope}[xshift=7cm]
    \node (P) {\(P\)};
    \node [above right= of P] (Q) {\(Q\)};
    \node [below right= of P] (R) {\(R\)};
    \node at ($(P)!2!($(Q)!0.5!(R)$)$) (S) {\(S\)}; %
    \node [right= .7cm of S] (S') {\(S'\)};
    \draw[pfb] (P) -- node (PQ') [above = .05, pos = .25]{\(\theta_1\)} (Q);
    \draw[pfb] (P) -- node (PR') [below = .05, pos = .25]{\(\theta_2\)} (R);
    \draw[pfb] (R) -- node (RS') [above = .1, pos = .4]{\(\theta_1'\)} (S);
    \draw[pfb] (Q) -- node (QS') [below = .1, pos = .4]{\(\theta_2'\)} (S);
    
    \draw[pfb, dashed] ($(R) + (.4, 0)$) -- node (RS''') [below = .0, pos = .8]{\(\theta_1\)} (S'); %
    \draw[pfb, dashed] ($(Q) + (.4, 0)$) -- node (QS''')[above = .0, pos = .8]{\(\theta_2\)} (S');  %
    
    \begin{scope}[on background layer]
     \draw[draw=tsdep,fill=tsdep!30,rounded corners=0.2cm]
    (Q.north)  --  node[right = .3, yshift=4]  {\emph{Obtained by \SP}} ($(S'.north east)+ (0, .5)$) -- ($(S'.south east)+(0, -.5)$) -- (R.south)  -- cycle;
    
	    \node [%
        draw=ord, fill=ord!30, %
         fit= (P) (Q) (S) (R), rotate fit=45,
        rectangle, label={[xshift=4, yshift=-4, align=left]above:\emph{Obtained by}\\\emph{proof enrichment}},
        inner xsep = -6, inner ysep = -9, 
        yshift=4pt,
        xshift=-5pt,
        rounded corners=0.2cm
        ] {};        
    \end{scope}
    \end{scope}
    
    \draw[f, draw=ord, shorten >=-2pt] (PQ) edge[out=50,in=100] %
    node[above] {\(\prov{\cdot}\)} (PQ');

    \draw[f, draw=ord, shorten <=3pt, shorten >=3pt] (QS) edge[out=50,in=80] %
    node[above, pos=.35] {\(\prov{\cdot}\)} (QS');

    \draw[f, draw=ord, shorten >=-2pt] (PR) edge[out=-40,in=-100] %
    node[above] {\(\prov{\cdot}\)} (PR');

    \draw[f, draw=ord, shorten <=1pt, shorten >=3pt] (RS) edge[out=-45,in=-80] %
    node[above, pos=.35] {\(\prov{\cdot}\)} (RS');

    \draw[f, draw=tsdep] (RS''') edge[out=-130,in=-30] %
    node[above, pos=.73] {\(\base{\cdot}\)} (RS'');
    
    \draw[f, draw=tsdep] (QS''') edge[out=130,in=30] %
    node[above, pos=.75] {\(\base{\cdot}\)} (QS'');   
\end{tikzpicture}

%% file: figures/example-ltsi-1.tex
\begin{tikzpicture}[
	x={(1.1, 0)},  %
	y={(0, 1))}, %
    baseline,
	baseline=3.9em,
	]
	
	\node[inner sep=0, anchor=east, minimum height=8em] %
		 (a) at (0, 0) {};
	\node (b) at (1.5, 0){};
    \node (c) at (3, 0){};
    \node (d) at (4.5, 0){};

	\draw[pf] (a) -- node[above, pos=.5]{\(t_1\)} (b);
	\draw[pf] (b) -- node[above, pos=.5]{\(t_2\)} (c);
	\draw[pf] (c) -- node[above, pos=.5]{\(t_3\)} (d);
\end{tikzpicture}

%% file: figures/example-ltsi-3.tex
\begin{tikzpicture}[
	x={(1, 0)},  %
	y={(0, 1)}, %
	baseline,
	anchor=base
	]
	\node (z) at (-1, -1){};	
	\node (a) at (0, -1){};
	\node (b) at (1, -2){};
	\node (c) at (1, 0){};
	\node (d) at (2, -1){};
	\node (e) at (3.41, -1){};
	\node (f) at (2.41, -2){};
	
	\draw[pf] (z) -- node[above, pos=.5]{\(t_0\)} (a);
	\draw[pf] (a) -- node[left, pos=.5, yshift=-2 %
			]{\(t_2'\)} (b);
	\draw[pf] (a) -- node[left, pos=.5]{\(t_1\)} (c);
	\draw[pf] (b) -- node[left, pos=.5, yshift=2 %
			]{\(t_1'\)} (d);
	\draw[pf] (c) -- node[above, pos=.5]{\(t_2\)} (d);
	\draw[pf] (d) -- node[above, pos=.5]{\(t_3\)} (e);
	\draw[pf] (b) -- node[below, pos=.5]{\(t_3'\)} (f);
	\draw[pf] (f) -- node[right, pos=.5, xshift=2 %
			]{\(t_1''\)} (e);	
\end{tikzpicture}

%% file: figures/example-ltsi-2.tex
\begin{tikzpicture}[
	x={(1, 0)},  %
	y={(0, 1)}, %
	baseline=2.8em,
	anchor=base,
	]
	\node[minimum height=8em] (a) at (0, 0){};
	\node (b) at (1, .6){};
	\node (c) at (1, -.6){};
	\node (d) at (2, -.6){};
	
	\draw[pf] ($(a)+(0, .1)$) -- node[above, pos=.4]{\(t_1\)} (b);
	\draw[pf] ($(a)+(0, -.1)$) -- node[below, pos=.4]{\(t_2\)} (c);
	\draw[pf] (c) -- node[above, pos=.4]{\(t_3\)} (d);
\end{tikzpicture}

%% file: sections/key-based.tex
\section{Causality and Core Independence from Key-based Events}%
\label{sec:key-based}

The results in this section rely on the keys manipulated by transitions and processes of \pccsk and \ccsk.
A basic design decision in both systems is that fresh keys are chosen when computing forwards, so past events will have different keys.
We shall see that we can decide whether transitions belong to the same event simply using keys (\autoref{prop-event_coincide})---a continuation of \autoref{prop:eveqt evleqt}.
Moreover, it is possible to decide whether an event was caused by another, or if they are core independent, using this information statically stored in the processes (Theorems~\ref{thm-ordering-event-key} and \ref{thm-independent-event-key}).
As in \autoref{ssec:true-concu-relation-pccsk}, all the results below apply to \ccsk, but we focus on \pccsk to ease the presentation: as usual, either the same argument applies, or the bijection from \autoref{def-ccsk-pccsk-bijection} allows us to transfer the results between the systems.

By definition, for any LTSI, if $t_1 \eveqt t_2$ then $t_1$ and $t_2$ must have the same labels (keys included).
However, the converse is false: consider \eg $t_1: a[m].b \pr{f}[b][k] a[m].b[k]$ and $t_2: a[n].b \pr{f}[b][k] a[n].b[k]$.
These transitions have the same proof labels, but are not the same event, since they are caused by different events.
An alternative way of reaching the same conclusion, using the label-based equivalence (\autoref{def-event-label-equivalence}), is to observe that their targets cannot be connected by a path in which \(b[k]\) does not occur.
We will refine this observation by showing that any two transitions with the same key are equivalent, 
provided that there is a path connecting their target processes without using their key.
This sharpens event label equivalence by focusing on keys only, while still being more direct
than than the general (\autoref{def-event-general}) or simplified (\autoref{def-event-simplified}) definition of events, as it neither relies on a notion of independence nor requires constructing \enquote{ladders} of commuting squares~\cite[Figure 5]{LPU24}.

\begin{definition}[Event key equivalence]%
	\label{def-event-key-equivalence}
	Two forward transitions $t_1:X_1 \pr{f}[\theta_1] X'_1$ and $t_2:X_2 \pr{f}[\theta_2] X'_2$ with \(\kay{\theta_1} = \kay{\theta_2}\) are \emph{event key equivalent} ($t_1 \evkeqt t_2$) if there is a path $r: X'_1 \pr{bf}^* X'_2$ such that $\kay{\theta_1}$ does not occur in any transition of \(r\).
	We extend to backward transitions by letting $t_1 \evkeqt t_2$ iff $\rev{t_1} \evkeqt \rev{t_2}$.
\end{definition}

\begin{figure}
    \input{figures/example-transition2.tex}
    \Description[short description]{long description}%
    \caption{Illustrating event key equivalence using processes reachable from \(a \Par b[k]\) (\autoref{ex:key-ev-equiv}).}%
    \label{fig:example-transitions2}
\end{figure}

\begin{example}%
    \label{ex:key-ev-equiv}
    Consider the process $a\Par b[k]$ and some of its transitions in \autoref{fig:example-transitions2}.
    The two transitions with the key \(l\) are event key equivalent, as there is a path from $a \Par b[l]$ to $a[k] \Par b[l]$ which does not involve $l$.	
    However, all paths from $a \Par b$ to $a[k] \Par b[l]$ involve at least one transition with key \(k\), and hence the transition with proof label \(\lmidr b[k]\) and the forward transition with proof label \(\lmidl a[k]\) are not event key equivalent.
\end{example}

It is clear in the previous example that the transitions with proof label \(\lmidr b[k]\) and \(\lmidl a[k]\) cannot be equivalent according to the general definition of events either, since they have different labels.
We now set the stage to prove that both equivalences indeed coincide,
first by stating formally a comment from \autoref{rem:ind-irreflexivity}:

\begin{lemma}[Independence implies different keys]%
	\label{lem-ind-diff-keys}
	 For all \(t_1\), \(t_2\), $t_1 \ind t_2$ implies $\key{t_1} \neq \key{t_2}$.%
\end{lemma}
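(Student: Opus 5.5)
The plan is to reduce the statement to the corresponding property of proof labels, which is already available (\autoref{rem:ind-irreflexivity}): if $\theta_1 \lind \theta_2$ then $\kay{\theta_1} \neq \kay{\theta_2}$. By \autoref{def-ind-dep-pccsk}, $t_1 \ind t_2$ in \pccsk means that $t_1$, $t_2$ are connected and $\lblof{t_1} \lind \lblof{t_2}$. Since $\key{t_i} = \kay{\lblof{t_i}}$ by \autoref{def-transition-key}, the conclusion follows at once. For \ccsk, $t_1 \ind t_2$ means $\lblof{\prov{t_1}} \lind \lblof{\prov{t_2}}$ (\autoref{def-ind-dep-ccsk}). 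The proof enrichment mapping of \autoref{def-ccsk-pccsk-bijection} preserves keys, since $\kay{\lblof{\prov{t}}} = k$ whenever $\lblof{t} = \alpha[k]$. So the \ccsk case reduces to the \pccsk one.

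The only substantive step is the label-level lemma, which I would prove by induction on the derivation of $\theta_1 \lind \theta_2$ using the rules of \autoref{fig:pccsk-dep-ind}.
\begin{itemize}
\item There are no action rules, so $\alpha[k] \lind \theta$ is never derivable.
\item Rule P$^2_k$ carries the premise $\kay{\theta} \neq \kay{\theta'}$ explicitly, and prefixing by $\lmidd$ does not change the key.
\item Rules P$^1$ and C$^1$ follow directly from the induction hypothesis, because $\kay{\lmidd\theta} = \kay{\theta}$ and $\kay{\lplusd\theta} = \kay{\theta}$.
\item For S$^1$ and S$^2$, the key of $\cpair{\theta_{\L}}{\theta_{\R}}$ is the common key of $\theta_{\L}$ and $\theta_{\R}$, hence equals $\kay{\theta_{\D}}$. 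The induction hypothesis on the premise then gives the claim.
\item For S$^3$, the same observation applied to either premise suffices.
\end{itemize}

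The one delicate point is the synchronisation cases: the key of a synchronisation label must coincide with that of each of its components. This holds only for well-formed proof labels, which per \autoref{dis:pr-lab} are exactly those satisfying the validity predicate. I would therefore state the lemma for valid labels, and note that labels of actual transitions are valid by construction of the syn rule, which requires the same key $k$ on both sides. Open proof labels, as in \autoref{dis:open-closed}, are handled by instantiating the bound name, since keys are unaffected by renaming names.
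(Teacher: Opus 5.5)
Your proof is correct and follows the paper's argument. The paper likewise reads off from the rules of \autoref{fig:pccsk-dep-ind} that $\theta_1 \lind \theta_2$ implies $\kay{\theta_1} \neq \kay{\theta_2}$ and lifts this through \autoref{def-ind-dep-pccsk}, while you also spell out the rule induction and the transfer to \ccsk via $\prov{\cdot}$ (the restriction to valid labels is needed only for the over-expressive Beluga encoding, since on paper the grammar already forces both components of a synchronisation to share the key $k$).
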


\begin{proof}
	From the rules for $\lind$ in \autoref{fig:pccsk-dep-ind}, if $\theta_1 \lind \theta_2$ then $\kay{\theta_1} \neq \kay{\theta_2}$.
	Hence, if $t_1 \ind t_2$ then $\key {t_1} \neq \key {t_2}$.
\end{proof}

\begin{lemma}[Backward key determinism]%
	\label{lem-bwd_key_det}
	If \(t_1\) and \(t_2\) are coinitial backward transitions with $\key{t_1} = \key{t_2}$, then $t_1 = t_2$.
\end{lemma}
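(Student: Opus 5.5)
Two proofs suggest themselves; the direct one is preferable. The first route is via the axioms. Suppose $t_1 \neq t_2$ are coinitial backward transitions in \pccsk. By \BTI (\autoref{prop-pre-reversible-hold-pccsk}) we get $t_1 \ind t_2$. Then \autoref{lem-ind-diff-keys} gives $\key{t_1} \neq \key{t_2}$, which is the contrapositive of the claim. So the statement follows from \BTI and \autoref{lem-ind-diff-keys} alone. For \ccsk we would transfer through the bijection of \autoref{lem-bijection}: $\prov{\cdot}$ preserves source, target, direction and key, and it is injective. Hence coinitial backward \ccsk transitions with the same key map to coinitial backward \pccsk transitions with the same key. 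These are equal by the \pccsk case, so the original transitions are equal. Alternatively, \BTI for \ccsk (\autoref{prop-pre-rev-ccsk}) together with \autoref{def-ind-dep-ccsk} gives the same conclusion directly.

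The one step to check is that \BTI, as stated, concludes $t \ind u$ for \emph{distinct} backward transitions. Independence on transitions (\autoref{def-ind-dep-pccsk}) also requires connectedness, but coinitial transitions are trivially connected via the empty path.

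If a self-contained structural argument is preferred, I would instead induct on the derivation of $t_1$, using uniqueness of derivations (\autoref{lem-derivation-uniqueness-ccskp}) and the shape of the source process $X$. The cases go as follows.
\begin{itemize}
\item For a keyed prefix $\alpha[k].X'$, the key of $t_1$ is $k$ exactly when rule \tRev{act} applies, which needs $\keys{X'} = \emptyset$. Otherwise rule \tRev{pre} applies with a key different from $k$, which lies in $X'$. The shared key therefore forces both transitions to use the same rule, and we conclude by the inductive hypothesis.
\item For a parallel composition $X \Par Y$, a key occurs in at most one side, unless it is a synchronisation key occurring in both. This side condition, $\kay{\theta}\notin\keys{Y}$, distinguishes \rlmidl, \rlmidr and \tRev{syn}. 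Hence equal keys force the same rule, and the subderivations are determined by induction.
\item Sum and restriction follow similarly, since for a sum only one side carries keys.
\end{itemize}
The main obstacle in this route is the parallel case, where one must show that a key present in both components can only be undone by \tRev{syn}. Given that complication, I would present the short \BTI-based proof.
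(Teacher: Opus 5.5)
Your main argument is exactly the paper's proof: assuming $t_1 \neq t_2$, \BTI (\autoref{prop-pre-reversible-hold-pccsk}) gives $t_1 \ind t_2$, which contradicts \autoref{lem-ind-diff-keys}. One small slip in your side check: coinitial $t_1$, $t_2$ are connected via the one-step path $t_2$ (from $\srcof{t_1}$ to $\tgtof{t_2}$), not via the empty path, though this does not matter since \BTI as established for \pccsk already yields $t_1 \ind t_2$ with connectedness included.
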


\begin{proof}
	Suppose that \(t_1\) and \(t_2\) are two coinitial backward transitions with the same key %
	and that $t_1 \neq t_2$.
	By \BTI %
	(\autoref{prop-pre-reversible-hold-pccsk})
	we have $t_1 \ind t_2$.
	By \autoref{lem-ind-diff-keys} we have $\key {t_1} \neq \key {t_2}$, which is a contradiction.
\end{proof}

\begin{proposition}[Event equivalences coincide]\label{prop-event_coincide}
	For all \(t_1\), \(t_2\), 
	$t_1 \eveqt t_2$ iff $t_1 \evkeqt t_2$.
\end{proposition}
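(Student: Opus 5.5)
We mirror the proof of \autoref{prop:eveqt evleqt}, with keys in place of labels, \autoref{lem-bwd_key_det} in place of \BLD, and \autoref{lem-ind-diff-keys} to control keys along commuting squares. Both relations are defined on backward transitions through their inverses, and transitions related by either relation have the same direction. So we may assume throughout that \(t_1\) and \(t_2\) are forward.

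(\(\Rightarrow\)) Since \(\eveqt\) is the smallest equivalence relation generated by commuting squares, and \(\evkeqt\) is an equivalence relation, it suffices to treat a single square from \autoref{def-event-simplified}. In such a square, \(t_1\) and \(t_2\) sit on opposite sides. They have the same proof label, hence the same key \(k\). Their targets are joined by the transition \(u'\), which lies on the opposite side from \(u\) and carries the same label as \(u\). By the square we have \(t_1 \ind u\), so \autoref{lem-ind-diff-keys} gives \(\key{u} \neq k\), and therefore \(\key{u'} \neq k\). Hence \(u'\) alone, or its inverse \(\rev{u'}\) depending on orientation, is a path avoiding \(k\), and so \(t_1 \evkeqt t_2\).

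(\(\Leftarrow\)) Take \(t_1 : X_1 \pr{f}[\theta_1] X_1'\) and \(t_2 : X_2 \pr{f}[\theta_2] X_2'\) with a common key \(k\) and a path \(r : X_1' \pr{bf}^* X_2'\) avoiding \(k\). We induct on \(\len{r}\).

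If \(\len{r} = 0\), the two transitions are cofinal. Then \(\rev{t_1}\) and \(\rev{t_2}\) are coinitial backward transitions with the same key, so \autoref{lem-bwd_key_det} gives \(t_1 = t_2\).

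Otherwise we apply \PL, which holds by \autoref{lem-sp-bti-imply-pl}. The new path is no longer than \(r\), and every transition in it is event equivalent to a transition of \(r\). Since event equivalent transitions share a proof label, hence a key, the key \(k\) still does not occur in the parabolic path. Either its last transition is forward or its first transition is backward, and we treat the first case; the second is symmetric.

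So let the last transition be \(t : Y \pr{f}[\theta] X_2'\) with \(\kay{\theta} \neq k\). Then \(\rev{t}\) and \(\rev{t_2}\) are coinitial backward transitions. They are distinct, because their keys differ, so \BTI gives \(\rev{t} \ind \rev{t_2}\). Applying \SP closes a square with \(t_2' : Z \pr{f} Y\) and \(t' : Z \pr{f} X_2\). By \PCI and the simplified definition of events, \(t_2' \eveqt t_2\), so \(t_2'\) has proof label \(\theta_2\) and key \(k\). The remaining path from \(X_1'\) to \(Y\) is strictly shorter and still avoids \(k\). The induction hypothesis therefore gives \(t_1 \eveqt t_2'\), and transitivity yields \(t_1 \eveqt t_2\).

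The main obstacle is making sure that the construction never introduces \(k\). The last step needs that the induction hypothesis is stated for arbitrary forward transitions sharing a key. It also needs that \PL preserves key-avoidance, which we derive from the event-preservation clause of \PL together with the fact that event equivalent transitions share a proof label.
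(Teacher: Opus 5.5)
Your proof is correct and is essentially the paper's own (omitted) argument. It mirrors the proof of \autoref{prop:eveqt evleqt}, with backward key determinism (\autoref{lem-bwd_key_det}) replacing \BLD in the cofinal base case, and with the event-preserving clause of \PL keeping the key absent from the parabolic path. The only small deviation is in ($\Rightarrow$): you get $\key{u'} \neq k$ directly from $t_1 \ind u$ via \autoref{lem-ind-diff-keys}, rather than through backward key determinism and non-degeneracy of the square, which is a harmless shortcut.
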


The proof, that we omit, follows very closely the proof of \autoref{prop:eveqt evleqt}, simply using backward key determinism in place of \BLD.

\autoref{thm-ordering-event-key} proves that the causal ordering on events $\leq$ from \autoref{def-event relations} can be computed by a static ordering on keys, whose definition follows.
A crucial lemma (\autoref{lem-ord}) has its proof facilitated by using event key equivalence.

\begin{definition}[Partial order on keys~\protect{\cite[Definition 3.1]{LaneseP21}}]%
	\label{def-partial-order-on-keys}
	The \emph{partial order \(\leq_{X}\) on \(\keys{X}\)} of a process \(X\) is the reflexive and transitive closure of \(\ord{X}\):%
	\begin{align*}
		\ord{\nil} & = \emptyset  & \ord{X + Y} &= \ord{X} \cup \ord{Y} \\
		\ord{\alpha.X} & = \ord{X} & \ord{X \Par  Y} &= \ord{X} \cup \ord{Y} \\
		\ord{X \bs a} & = \ord{X} & \ord{\alpha[n].X} & = \ord{X} \cup \{n < k \setst k \in \keys{X}\}
	\end{align*}
\end{definition}

The development below assumes that processes are reachable (\autoref{definition:reachable}), as in the rest of this paper.

\begin{lemma}%
	\label{lem:fwd keys}
	If \(X \pr{f}[\theta] X'\) then $\kay{\theta} \notin \keys X$ and $\keys{X'} = \keys X \union \{\kay{\theta}\}$.
\end{lemma}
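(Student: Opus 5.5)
We argue by induction on the derivation tree of \(X \pr{f}[\theta] X'\), which is unique by \autoref{lem-derivation-uniqueness-ccskp}. We prove both claims together, \(\kay{\theta} \notin \keys{X}\) and \(\keys{X'} = \keys{X} \union \{\kay{\theta}\}\), with one case per forward rule of \autoref{fig:provedltsrulesccskfw}.

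The base case is act: we have \(\alpha.X \pr{f}[\alpha][k] \alpha[k].X\) with \(\keys{X}=\emptyset\). Then \(\keys{\alpha.X}=\emptyset\), so \(k\notin\keys{\alpha.X}\), and \(\keys{\alpha[k].X}=\{k\}\), as required.

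The inductive cases are as follows.
\begin{itemize}
\item \textbf{pre.} The side condition gives \(\kay{\theta}\neq k\). The induction hypothesis gives \(\kay{\theta}\notin\keys{X}\), hence \(\kay{\theta}\notin\{k\}\union\keys{X}=\keys{\alpha[k].X}\). The equation for \(X'\) propagates by adding \(k\) to both sides.
\item \textbf{res.} Restriction does not change the keys, so both claims follow directly from the induction hypothesis.
\item \textbf{\(\lmidl\) (and symmetrically \(\lmidr\)).} The induction hypothesis gives \(\kay{\theta}\notin\keys{X}\), and the side condition gives \(\kay{\theta}\notin\keys{Y}\). Since \(\keys{X\Par Y}=\keys{X}\union\keys{Y}\), we get \(\kay{\theta}\notin\keys{X\Par Y}\). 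The equation for the target follows by taking unions.
\item \textbf{\(\lplusl\) (and symmetrically \(\lplusr\)).} The side condition gives \(\keys{Y}=\emptyset\), so \(\keys{X+Y}=\keys{X}\) and the result is immediate from the induction hypothesis.
\end{itemize}

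The main obstacle is the syn case. Both premises carry the same key \(k\), and the rule has no side condition ensuring \(k\notin\keys{Y}\) for the left premise, or \(k\notin\keys{X}\) for the right one. The induction hypothesis on the two premises only gives \(k\notin\keys{X}\) and \(k\notin\keys{Y}\) separately, and together these give exactly \(k\notin\keys{X\Par Y}\). So the first claim still holds. For the equation, \(\keys{X'\Par Y'}=(\keys{X}\union\{k\})\union(\keys{Y}\union\{k\})\), which is the desired set. No appeal to reachability is needed here: it is used only implicitly, to guarantee that keys are treated as a set in processes such as \(a[k].b[k]\), which are excluded.

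For open transitions in the formalised setting, the restriction case merely instantiates the bound name, and keys are unaffected. Backward transitions are not concerned. However, the dual statement for backward transitions follows from the Loop Lemma (\autoref{lem-loop_proved}).
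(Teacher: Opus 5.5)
Your proof is correct and follows the same approach as the paper, which proves the lemma by induction on derivations using the forward rules of \autoref{fig:provedltsrulesccskfw}. Your case analysis supplies exactly the omitted details, including the syn case, where the two induction hypotheses together give \(k \notin \keys{X \Par Y}\). The appeal to uniqueness of derivation trees and the aside on reachability are unnecessary but harmless.
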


\begin{proof}
	By induction on derivations using the forward transition rules in \autoref{fig:provedltsrulesccskfw}.
\end{proof}

\begin{lemma}%
	\label{lem:fwd key maximal}
	If \(t: X \pr{f}[\theta] X'\) then $\kay{\theta}$ is maximal under $\leq_{X'}$.
\end{lemma}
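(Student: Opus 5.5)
We prove the statement by induction on the derivation of \(t: X \pr{f}[\theta] X'\), using the forward rules of \autoref{fig:provedltsrulesccskfw}. The statement says \(\kay{\theta}\) is maximal under \(\leq_{X'}\), \ie there is no \(n \in \keys{X'}\) with \(\kay{\theta} < n\). The invariant we actually maintain is stronger and cleaner: every pair in \(\ord{X'}\) of the form \(\kay{\theta} < n\) is absent, so \(\kay{\theta}\) never occurs on the left of a generating pair. Maximality for the reflexive–transitive closure \(\leq_{X'}\) then follows immediately, since any chain \(\kay{\theta} <_{X'} \cdots\) would have to start with a generating pair whose left element is \(\kay{\theta}\).

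\textbf{Base case (act).} Here \(X = \alpha.Y\) with \(\keys{Y} = \emptyset\) and \(X' = \alpha[k].Y\). Then \(\ord{X'} = \ord{Y} \cup \{k < m \setst m \in \keys{Y}\} = \emptyset\), because \(Y\) has no keys. So \(k\) is trivially maximal.

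\textbf{Inductive cases.} For res, \(\ord{X'\bs a} = \ord{X'}\), and the induction hypothesis applies directly. For \(\lplusl\), \(\ord{X'+Y} = \ord{X'} \cup \ord{Y}\) with \(\keys{Y} = \emptyset\), so \(\ord{Y} = \emptyset\) and nothing new involving \(\kay{\theta}\) appears.

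For \(\lmidl\) (and symmetrically \(\lmidr\)), \(\ord{X' \Par Y} = \ord{X'} \cup \ord{Y}\). The key fact is that \(\kay{\theta} \notin \keys{Y}\) by the side condition, so no pair in \(\ord{Y}\) mentions \(\kay{\theta}\). It remains to observe that \(\ord{Z}\) only relates keys in \(\keys{Z}\), which is a trivial induction.

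For syn, the two premises give, by induction, that \(k\) is not on the left of any pair in \(\ord{X'}\) or in \(\ord{Y'}\). Their union is \(\ord{X'\Par Y'}\), so the invariant holds.

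\textbf{Case pre.} Here \(X = \alpha[n].Y\), \(X' = \alpha[n].Y'\), and \(\kay{\theta} \neq n\). We have \(\ord{X'} = \ord{Y'} \cup \{n < m \setst m \in \keys{Y'}\}\). The new pairs have left element \(n \neq \kay{\theta}\), so they do not place \(\kay{\theta}\) below anything. Combined with the induction hypothesis on \(\ord{Y'}\), the invariant holds.

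The main obstacle is to make sure we phrase the induction hypothesis as the invariant on generating pairs rather than as maximality under the closure. Maximality alone does not compose well across unions: in principle, transitivity could combine pairs from \(\ord{X'}\) and \(\ord{Y}\) into a chain. The generating-pair formulation avoids this, because a chain starting from \(\kay{\theta}\) needs a first step.

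We do not expect to need reachability or \autoref{lem:fwd keys}, except possibly to note that \(\kay{\theta} \in \keys{X'}\). For \ccsk, the same argument goes through via the identical rule structure or via the bijection (\autoref{lem-bijection}).
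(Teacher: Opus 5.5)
Your proof is correct and is essentially the paper's argument. The paper also observes that act attaches $\kay{\theta}$ to a prefix of a standard sub-process and that the other rules preserve this, so that no pair $(\kay{\theta}, n)$ lies in $\ord{X'}$, and then passes to the reflexive--transitive closure. Your explicit induction on the derivation makes that sketch precise, and it uses the side conditions of pre, $\lmidl$ and $\lplusl$ directly instead of invoking \autoref{lem:fwd keys} for freshness.
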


\begin{proof}
	By \autoref{lem:fwd keys}, $\kay{\theta}$ is introduced by the forward transition $t$.
	According to the forward transition rules of \autoref{fig:provedltsrulesccskfw}, keys are introduced only on prefixes of standard sub-processes: this is obtained by the side condition of act, and maintained through the derivation by the other rules.
	Hence we have $(\kay{\theta}, n) \notin \ord{X'}$ for any key $n$, and this is still true for \(n \neq \kay{\theta}\) when we take the reflexive and transitive closure to obtain $\leq_{X'}$.
\end{proof}

\begin{lemma}\label{lem:fwd ord}
	If \(X \pr{f}[\theta] X'\) and $m,n \in \keys X$ then $(m,n) \in \ord X$ iff $(m,n) \in \ord {X'}$.
\end{lemma}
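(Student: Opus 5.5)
We proceed by induction on the derivation of \(X \pr{f}[\theta] X'\), following the forward rules of \autoref{fig:provedltsrulesccskfw}, and proving the biconditional for all \(m, n \in \keys{X}\). By \autoref{lem:fwd keys}, \(\keys{X'} = \keys{X} \union \{\kay{\theta}\}\) with \(\kay{\theta} \notin \keys{X}\). Hence \(m, n \neq \kay{\theta}\), and the only pairs that \(\ord{X'}\) might add compared with \(\ord{X}\) are pairs involving the fresh key. The goal is to show that these are the only new pairs, and that no old pair is lost.

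\textbf{Base case (act).} Here \(\alpha.X \pr{f}[\alpha][k] \alpha[k].X\) with \(\keys{X} = \emptyset\). The hypothesis \(m, n \in \keys{\alpha.X} = \emptyset\) is vacuous, so the statement holds trivially.

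\textbf{Rules res, \(\lplusl\) and \(\lplusr\).} By definition, \(\ord{}\) commutes with restriction and sum. For res, the claim follows immediately from the induction hypothesis. For \(\lplusl\), the untouched summand \(Y\) is standard, so \(\ord{Y} = \emptyset\) and \(\keys{Y} = \emptyset\). Therefore \(m, n \in \keys{X}\), and the induction hypothesis on the left summand concludes. The rule \(\lplusr\) is symmetric.

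\textbf{Rules \(\lmidl\), \(\lmidr\) and syn.} We have \(\ord{X \Par Y} = \ord{X} \union \ord{Y}\). Take a pair \((m,n)\) from the union, say from \(\ord{X}\) (or from \(\ord{X'}\) after the transition). Since \(\ord{X} \subseteq \keys{X} \times \keys{X}\), which is easy by structural induction, the pair \((m,n)\) has both keys in the sub-process that owns it. The sides that do not move keep their \(\ord{}\) unchanged, and the moving side is handled by the induction hypothesis.

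One subtlety arises in the direction from \(\ord{X'}\) back to \(\ord{X}\). Suppose \((m,n) \in \ord{X'}\) for the moving component \(X'\), while \(m, n \in \keys{X \Par Y}\). We must check that \(m, n \in \keys{X}\), so that the induction hypothesis applies. This follows from \(\keys{X'} = \keys{X} \union \{\kay{\theta}\}\) together with \(m, n \neq \kay{\theta}\). For syn, the same argument is applied to both components.

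\textbf{Rule pre.} This is the main case. We have \(\alpha[k].X \pr{f}[\theta] \alpha[k].X'\) with \(\kay{\theta} \neq k\). By definition,
\[
\ord{\alpha[k].X'} = \ord{X'} \union \{k < j \setst j \in \keys{X'}\}.
\]
Compared with \(\alpha[k].X\), the only extra pair is \((k, \kay{\theta})\), since \(\keys{X'} = \keys{X} \union \{\kay{\theta}\}\). Because \(n \neq \kay{\theta}\), this extra pair is excluded for \(m, n \in \keys{\alpha[k].X}\). The pairs \((k, j)\) with \(j \in \keys{X}\) appear on both sides. The remaining pairs lie in \(\ord{X}\) or \(\ord{X'}\), where the induction hypothesis applies once we note that \(m, n \neq k\) implies \(m, n \in \keys{X}\). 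Note that \(k \notin \keys{X}\) in reachable processes; even without this fact, pairs with first component \(k\) are handled uniformly by the explicit set above.

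\textbf{Expected obstacle.} The main difficulty is the bookkeeping that \(\ord{X}\) only mentions keys of \(X\), and that fresh-key pairs never have both components in \(\keys{X}\). I would isolate this as a small auxiliary fact, proved by structural induction, before running the main induction.
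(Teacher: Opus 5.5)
Your proof is correct and is essentially the paper's argument written out in full. The paper observes, by inspection of the clauses defining $\ord{\cdot}$ (chiefly the keyed-prefix clause), that $\ord{X'}$ is $\ord{X}$ plus some pairs involving the fresh key $\kay{\theta} \notin \keys{X}$; your rule-by-rule induction, using the auxiliary fact $\ord{X} \subseteq \keys{X} \times \keys{X}$, makes that inspection rigorous. The paper also cites the proof of \autoref{lem:fwd key maximal} to exclude pairs $(\kay{\theta}, n)$, but your argument shows this is unnecessary, since $m, n \neq \kay{\theta}$ already rules them out.
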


\begin{proof}
	Letting \(\kay{\theta} = k\), by \autoref{lem:fwd keys}, the only difference between $X$ and $X'$ is that $k$ occurs in $X'$. %
	By the proof of \autoref{lem:fwd key maximal}, there are no pairs $(k,n)$ in $\ord {X'}$. By inspection of the clauses of \autoref{def-partial-order-on-keys}, in particular the clause for $\ord{\alpha[n].X}$, we see that $\ord{X'}$ is the same as $\ord X$ except that some pairs of the form $(n,k)$ may be added.
\end{proof}

\begin{proposition}%
	\label{prop:order-trans-pres}
	If \(t: X \pr{f}[\theta] X'\) and $m,n \in \keys{X}$ then $m \leq_X n$ iff $m \leq_{X'} n$.
\end{proposition}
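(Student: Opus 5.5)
The plan is to reduce the statement to \autoref{lem:fwd ord}, which already tells us that the generating relation \(\ord{\cdot}\) is unchanged on pairs of keys of \(X\). The only subtlety is that \(\leq_{X'}\) is the reflexive and transitive closure of \(\ord{X'}\). A chain witnessing \(m \leq_{X'} n\) could therefore a priori pass through the new key \(k = \kay{\theta}\), which is not in \(\keys X\) by \autoref{lem:fwd keys}. We must rule this out.

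\emph{Direction \((\Rightarrow)\).} Suppose \(m \leq_X n\). If \(m = n\), reflexivity gives \(m \leq_{X'} n\), since \(\keys X \subseteq \keys{X'}\) by \autoref{lem:fwd keys}. Otherwise there is a chain \(m = m_0, m_1, \ldots, m_j = n\) with \((m_i, m_{i+1}) \in \ord X\). Each \(m_i\) lies in \(\keys X\), because \(\ord X\) only relates keys occurring in \(X\), as is clear from \autoref{def-partial-order-on-keys}. By \autoref{lem:fwd ord}, each pair \((m_i, m_{i+1})\) is in \(\ord{X'}\), so \(m \leq_{X'} n\).

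\emph{Direction \((\Leftarrow)\).} Suppose \(m \leq_{X'} n\) with \(m \neq n\), witnessed by a chain \(m = m_0, \ldots, m_j = n\) in \(\ord{X'}\). The key step is to show that no intermediate \(m_i\) (with \(0 < i < j\)) equals \(k\).
\begin{itemize}
\item If some intermediate \(m_i = k\), then \((k, m_{i+1}) \in \ord{X'}\).
\item The proof of \autoref{lem:fwd key maximal} shows that \(\ord{X'}\) contains no pair of the form \((k, n')\).
\item This is a contradiction, so no intermediate key is \(k\).
\end{itemize}
The endpoints \(m\) and \(n\) lie in \(\keys X\) by hypothesis, so they are not \(k\) either. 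Hence every \(m_i\) lies in \(\keys{X'} \setminus \{k\} = \keys X\), using \autoref{lem:fwd keys}. By \autoref{lem:fwd ord}, each pair \((m_i, m_{i+1})\) then belongs to \(\ord X\), giving \(m \leq_X n\).

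The main obstacle is precisely this exclusion of \(k\) from chains. Beyond citing the proof of \autoref{lem:fwd key maximal}, it needs one more fact. Without using reachability assumptions, \(k\) does not occur in \(\keys X\) (\autoref{lem:fwd keys}). Consequently, a pair \((k, n')\) could only arise in \(\ord{X'}\) from the clause for \(\ord{\alpha[k].Y}\) with \(\keys Y \neq \emptyset\). The side condition of act rules this out, since it requires \(Y\) to be standard. I would make this point explicit, by a short induction on the derivation of \(X \pr{f}[\theta] X'\), rather than relying on the informal remark in the proof of \autoref{lem:fwd key maximal}.
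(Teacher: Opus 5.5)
Your proof is correct and follows the paper's argument. Chains in \(\ord{X}\) transfer to \(\ord{X'}\) by \autoref{lem:fwd ord}, and for the converse the new key \(\kay{\theta}\) is excluded from any witnessing chain by its maximality (\autoref{lem:fwd key maximal}, whose proof gives exactly your ``no pair \((k,n')\)'' fact) together with \(\kay{\theta}\notin\keys X\); your extra checks that intermediate keys lie in \(\keys X\) only make explicit details the paper leaves implicit.
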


\begin{proof}
     Take $m,n \in \keys X$.  If $m = n$ then there is nothing to show. 
     
	(\(\Rightarrow\)) 
    Suppose $m <_X n$.  Then there is a chain 
    \(\{(m = m_0, m_1), \cdots, (m_{i-1}, m_i = n)\} \subseteq \ord{X}\) for $i \geq 1$.
	By \autoref{lem:fwd ord}, all the elements of this chain are also included in $\ord{X'}$, so that $m <_{X'} n$ as required.
	
    (\(\Leftarrow\)) 
    Suppose $m <_{X'} n$.
    Then there is a chain %
    \(\{(m = m_0, m_1), \cdots, (m_{i-1}, m_i = n)\} \subseteq \ord{X'}\) for $i \geq 1$.
     But $m_0,\ldots,m_i$ are all different from $\kay{\theta}$, since $\kay{\theta}$ is maximal in $\leq_{X'}$ (\autoref{lem:fwd key maximal}) and $m,n \neq \kay{\theta}$ since %
     \(\kay{\theta} \notin \keys{X}\).
     Hence by \autoref{lem:fwd ord}, all the elements of the chain are also included in $\ord{X}$, and $m <_{X} n$ as required.
\end{proof}

\begin{definition}[Events in a process]%
	\label{def-events process alt1}
	Given a process \(X\), we let the \emph{events of $X$} be
	$\ev{X} = \{e \setst \cte(r,e) > 0\}$, where $r$ is any rooted forward-only path to $X$.
\end{definition}
There is always such a rooted forward-only path by \PL; and the following guarantees that \(\ev{X}\) is well-defined:

\begin{lemma}[\protect{\cite[Lemma 4.12]{LPU24}}]%
	\label{lem:event-count}
	Assume an LTSI is pre-reversible, and let $r$ and $s$ be coinitial and cofinal paths.
	Then for each event $e$ we have that $\cte(r,e) = \cte(s,e)$.
\end{lemma}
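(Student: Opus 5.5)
The statement is \autoref{lem:event-count}: in a pre-reversible LTSI, coinitial and cofinal paths $r$, $s$ satisfy $\cte(r,e) = \cte(s,e)$ for every event $e$. The plan is to reduce to a single closed loop and then show that every closed loop has zero net count for every event.

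Concretely, consider the path $r\rev{s}$, which starts and ends at $\srcof{r}$. From \autoref{def-count-events}, $\cte$ is additive on concatenation, $\cte(r_1r_2,e)=\cte(r_1,e)+\cte(r_2,e)$, by a trivial induction on $r_1$. It is also antisymmetric under inversion, $\cte(\rev{s},e) = -\cte(s,e)$. This follows because $t\in e$ iff $\rev{t}\in\rev{e}$, and because $e\neq\rev{e}$, since $\sqeqt$ preserves direction. Hence it suffices to prove that $\cte(c,e)=0$ for every cycle $c$ (a path with $\srcof{c}=\tgtof{c}$). Transitions lying in neither $e$ nor $\rev{e}$ do not contribute to the count, so only occurrences of $e$ and $\rev{e}$ matter.

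I would prove the cycle claim by induction on the length of $c$, using the square moves underlying \PL (\autoref{lem-sp-bti-imply-pl}). The key local fact is that whenever $c$ contains a backward-after-forward pair $t\,\rev{u}$ with $t,u$ forward, \BTI/\SP allow us to proceed as follows.
\begin{itemize}
\item If $t\neq u$, then \BTI gives $\rev{t}\ind\rev{u}$ and \SP gives a square, so we can replace $t\,\rev{u}$ by $\rev{u'}\,t'$ with $t'\sqeqt t$ and $u'\sqeqt u$. The needed independences at all corners come from \PCI, and non-degeneracy from \autoref{lem-non-degenerate}, so the equivalences hold by \autoref{def-event-general}. The replacement preserves the multiset of events traversed, hence preserves $\cte(\cdot,e)$.
\item If $t=u$, then $t\,\rev{t}$ can simply be cancelled, which removes one $+1$ and one $-1$ for $[t]$ and leaves the count unchanged.
\end{itemize}
Iterating these moves terminates, as in the proof of \PL, and yields a parabolic path $\rev{p}\,q$ from $\srcof{c}$ to itself with the same counts. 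It remains to show that a parabolic cycle has zero counts for all events.

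The crux, and the step I expect to be the main obstacle, is precisely this: forward-only paths $p$ and $q$ from a common $R$ to a common $P$ must have the same events with the same multiplicities. I would use \WF to pick a rooted forward-only path $r_0$ to $R$. Then $r_0p$ and $r_0q$ are rooted forward paths to $P$, and by \NRE (\autoref{lem-NRE}) every event occurs at most once in each of them. I would then argue by induction on $|p|+|q|$. Take the last transitions $t$ of $p$ and $u$ of $q$, both ending at $P$.
\begin{itemize}
\item If $t=u$, strip both and apply induction.
\item If $t\neq u$, then \BTI gives $\rev{t}\ind\rev{u}$, and \SP/\PCI close a square, so $t$ and $u$ are cofinal with a common predecessor $Q$. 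Reaching $Q$ via the prefix of $p$ followed by $u'\sqeqt u$, and via the prefix of $q$ followed by $t'\sqeqt t$, reduces the comparison to shorter forward paths into $Q$. The induction hypothesis then shows that the two sides have the same event multisets.
\end{itemize}
The delicate bookkeeping is ensuring that this induction measure genuinely decreases. If it proves awkward, I would instead strengthen the induction to hold simultaneously for all pairs of cofinal forward paths with a common source, inducting on the maximal length. Once the event multisets of $p$ and $q$ coincide, $\cte(\rev{p}q,e)=0$ for every $e$, and the lemma follows by additivity.
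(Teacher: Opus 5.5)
The reduction to loops is sound: the antisymmetry $\cte(\rev{s},e)=-\cte(s,e)$ holds, and rewriting $r\rev{s}$ into a parabolic loop $\rev{p}q$, as in the proof of \PL, preserves counts. The gap is in the crux, and it is more than bookkeeping: the inductive step as described is not well-formed. Take $p = p't$, $q = q'u$ and $t \neq u$. Then \BTI and \SP produce a square \emph{below} $P$, \ie a process $Q$ with forward transitions $u' : Q \to \srcof{t}$ and $t' : Q \to \srcof{u}$, where $u' \sqeqt u$ and $t' \sqeqt t$. Neither $p'$ nor $q'$ passes through $Q$. So \enquote{the prefix of $p$ followed by $u'$} is not a path: you would need $\rev{u'}$, which leaves forward-only paths. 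Nothing gives you a forward path from $R$, or from the root of $r_0$, to $Q$. Any forward path to $Q$ that you obtain from \WF has a priori unbounded length. Hence neither $\len{p}+\len{q}$ nor your fallback measure (the maximal length of a pair of cofinal forward paths with a common source) is guaranteed to decrease.

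The missing idea is to drop the common source $R$ and prove a stronger claim: any two rooted forward-only paths with the same target traverse the same multiset of events. Prove it by well-founded induction on the target, under the order $P' \prec P$ iff $P'$ reaches $P$ by a non-empty forward path; \WF is exactly what makes $\prec$ well-founded. When $t \neq u$, use \WF to choose any rooted forward path $r_Q$ to $Q$. Apply the induction hypothesis at $\srcof{t}$ and $\srcof{u}$, both strictly below $P$, to the pairs $(r_0p', r_Qu')$ and $(r_0q', r_Qt')$. This shows that $r_0p$ and $r_0q$ both traverse the events of $r_Q$ plus $[t]=[t']$ and $[u]=[u']$, and cancelling $r_0$ gives your claim. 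Induction on the length of one path, quantified over all others, also works: the hypothesis on $(r_0p', r_Qu')$ forces $\len{r_Q} = \len{r_0p}-2$ before you apply it to $(r_Qt', r_0q')$. Working with multisets also removes your need for \NRE. That is fortunate, because \NRE is Proposition 4.21 of \cite{LPU24}, stated after this lemma (Lemma 4.12 there), so invoking it risks circularity.

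For comparison, the paper only cites this result. In \cite{LPU24} it follows from causal consistency (coinitial cofinal paths are causally equivalent, derived from \SP, \BTI and \WF) together with the invariance of $\cte$ under the generating rules of causal equivalence. These rules are cancelling $t\rev{t}$ and swapping across a square with $t \ind u$, where \PCI and \autoref{lem-non-degenerate} give $t \sqeqt t'$ and $u \sqeqt u'$. Once repaired, your argument is essentially that proof with causal equivalence inlined and specialised to event counts.
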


Note that, by definition, all the events in a process are \emph{forward} events.
Recall also that transitions in the same event \(e\) have the same label, and hence the same key, that we denote by \(\key{e}\).
We now prove the following properties, which will be useful later on for \autoref{def-event-key} and \autoref{lem-ord}.

\begin{lemma}[Event keys properties]\label{lem-event-key}
	For any 
	process $X$:
	\begin{enumerate}
		\item if $e_1,e_2 \in \ev{X}$ and $e_1 \neq e_2$, then $\key {e_1} \neq \key {e_2}$; \label{lem-event-key-1}
		\item $\{\key e \setst e \in \ev{X}\} = \keys X$. \label{lem-event-key-2}
	\end{enumerate}
\end{lemma}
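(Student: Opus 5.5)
We fix a process $X$ (reachable, so it has an origin $\orig{X}$) and, by \PL and \WF, a rooted forward-only path $r = t_1 \cdots t_n$ from $\orig{X}$ to $X$. By \autoref{def-events process alt1}, $\ev{X}$ is the set of events with $\cte(r,e) > 0$. Since $r$ is forward-only, $\cte(r,e)$ counts the occurrences of transitions of $e$ in $r$, and \NRE (\autoref{lem-NRE}) bounds this by $1$. Hence $\ev{X} = \{[t_1], \ldots, [t_n]\}$, and the $[t_i]$ are pairwise distinct. Both items then reduce to statements about the keys of the transitions $t_1, \ldots, t_n$, using that $\key{[t_i]} = \key{t_i}$ because transitions in the same event share the same label.

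For item \itemref{lem-event-key-2}, we induct on $n$ using \autoref{lem:fwd keys}. Along $r$, write $X_0 = \orig{X}$ and $X_i = \tgtof{t_i}$. Then $\keys{X_0} = \emptyset$, and each step gives $\keys{X_i} = \keys{X_{i-1}} \union \{\key{t_i}\}$. It follows that $\keys{X} = \{\key{t_1}, \ldots, \key{t_n}\} = \{\key e \setst e \in \ev{X}\}$.

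For item \itemref{lem-event-key-1}, take $e_1 \neq e_2$ in $\ev{X}$, say $e_1 = [t_i]$ and $e_2 = [t_j]$ with $i < j$. By \autoref{lem:fwd keys} applied to $t_j$, we have $\key{t_j} \notin \keys{X_{j-1}}$. By the computation for item \itemref{lem-event-key-2}, $\keys{X_{j-1}}$ contains $\key{t_i}$. Hence $\key{t_i} \neq \key{t_j}$, so $\key{e_1} \neq \key{e_2}$.

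The only delicate point is the identification of $\ev{X}$ with the events of the transitions of $r$, namely that distinct positions in $r$ give distinct events and that every such event lies in $\ev{X}$. This rests on the forward-only shape of $r$, which makes all counts nonnegative and increasing, together with \NRE. Independence of the choice of $r$ is already guaranteed by \autoref{lem:event-count}. Everything else follows directly from \autoref{lem:fwd keys}.
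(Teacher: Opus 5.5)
Your proof is correct and follows essentially the same route as the paper. Both arguments fix a rooted forward-only path to $X$ and apply \autoref{lem:fwd keys} transition by transition; the paper inducts on $\card{\ev{X}}$ where you induct on the path length (the same quantity by \NRE), and your item~(1) simply spells out the paper's one-line remark that all keys along such a path are distinct.
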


\begin{proof}%
	\begin{enumerate}
		\item We consider any forward-only rooted path to \(X\), %
		all keys of transitions must be distinct by \autoref{lem:fwd keys}.
		\item 
		By induction on the cardinality of \(\ev{X}\), which we denote by \(\card{\ev{X}}\).
		If $\card{\ev{X}} = 0$ then $\ev X = \emptyset$ and $X$ is standard, \ie $\keys X = \emptyset$.
		If $\card{\ev{X}} > 0$, then consider a rooted forward-only path $r$ to $X$.
        It must be of the form \(r = r't\) with \(t : X' \pr{f}[\theta] X\).
		Then we have:
		
		\noindent
        \begin{minipage}[b]{0.83\textwidth}
		\begin{align*}
			\keys{X} & = \keys{X'} \union \{\kay{\theta}\} \tag{By \autoref{lem:fwd keys}}\\
					 & = \{\key e \setst e \in \ev{X'}\} \union \{\kay{\theta}\} \tag{By inductive hypothesis, since \(\card{\ev{X'}} < \card{\ev{X}}\)} \\
					 & = \{\key e \setst e \in \ev{X'}\} \union \{\key{[t]}\} \tag{By definition of \(\key{e}\)}\\	 
					 & = \{\key e \setst e \in \ev{X} \} \tag{Since \(\ev{X} = \ev{X'} \cup \{[t]\}\)}
		\end{align*}
		\end{minipage}%
		\hfill \begin{minipage}[b]{0.1\textwidth}
		\qedhere
		\end{minipage}

	\end{enumerate}
\end{proof}

One can also easily observe from the definition of conflict (\autoref{def-event relations}) that for all \(e_1, e_2 \in \ev{X}\), \(e_1 \ncf e_2\). 
However other relations have interesting connections to the events in a process and their keys, as we illustrate now, starting with the immediate predecessor relation \(\ip\) (\autoref{def-immed pred}), which can be used to infer partial order between keys:

\begin{lemma}[Order from events to keys]%
	\label{lem-composable sdep}
	Suppose %
	 $e_1,e_2 \in \ev{X}$. 
     If $e_1 \ip e_2$, then $(\key{e_1}, \key {e_2}) \in \ord{X}$.
\end{lemma}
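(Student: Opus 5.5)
We plan to reduce everything to a single pair of composable transitions and then do a syntactic induction on the derivation of the second one. Since $e_1 \ip e_2$, \autoref{lem-CIRE ip comp} (\CIRE holds for \pccsk by \autoref{prop-other-hold-pccsk}) tells us that $e_1$ and $e_2$ are composable. Equivalently, by \autoref{lem-immed pred composable}, there are composable forward transitions $t_1 \in e_1$ and $t_2 \in e_2$ with $t_1 \dep t_2$, hence $\lblof{t_1} \sdep \lblof{t_2}$. Writing $k = \key{e_1}$ and $n = \key{e_2}$, the core claim is local: for $t_1 : Y_0 \pr{f}[\theta_1] Y_1$ and $t_2 : Y_1 \pr{f}[\theta_2] Y_2$ with $\theta_1 \sdep \theta_2$, we want $(k,n) \in \ord{Y_2}$. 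After that we must transport this pair to $X$.

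\textbf{Local step.} We first prove by induction on the derivation of $t_2$ (unique by \autoref{lem-derivation-uniqueness-ccskp}) that $(k,n) \in \ord{Y_2}$, with a case analysis along the dependence rules of \autoref{fig:pccsk-dep-ind}.
\begin{itemize}
\item Rules P$^1$, C$^1$, S$^1$, S$^2$ and S$^3$ peel off the same top-level operator from both labels. The corresponding LTS rules then decompose $Y_1$ into subprocesses on which both transitions act, and the induction hypothesis applies to the relevant component. Since $\ord{X\Par Y} = \ord X \cup \ord Y$, the pair lifts.
\item Rule P$^2_k$ would require equal keys on different sides of a parallel composition, which is impossible by \autoref{lem:fwd keys}.
\item Rule C$^2$ cannot occur between composable transitions, because after $\lplusl$ the right summand is fixed and $\keys{Y}=\emptyset$ forbids switching sides.
\item The action rules A$^1$ and A$^2$ form the base case. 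If $\theta_1 = \alpha[k]$ with no proof prefix, then at that level $t_1$ was produced by act, turning $\alpha.Z$ into $\alpha[k].Z$. Then $t_2$ either acts inside $Z$ through pre, which gives $n \in \keys{Z'}$ and thus $(k,n)\in\ord{\alpha[k].Z'}$, or the labels would have to be keyed differently, which is excluded.
\end{itemize}
The restriction and prefix (pre) rules are transparent to both labels and to $\ord{\cdot}$, so they are handled by straightforward induction.

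\textbf{Transport to $X$.} Take a rooted forward-only path to $X$. By \autoref{lem-CIRE ip comp} we may assume it contains adjacent $t_1 t_2$ with $t_1\in e_1$ and $t_2\in e_2$. This is the step I expect to be delicate: a composable pair must be found inside a path ending at $X$, not merely somewhere in the LTS. I would reuse the swapping argument from the proof of \autoref{lem-CIRE ip comp}, which uses Sideways Square (\autoref{lem-sideways}) to move intervening transitions and preserves the target, so that it applies to a path reaching $X$. Once the local step gives $(k,n)\in\ord{Y_2}$, \autoref{lem:fwd ord} applied along the remaining forward transitions to $X$ (both keys are already present, by \autoref{lem:fwd keys}) yields $(k,n)\in\ord{X}$.

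The main obstacle is the A$^1$/A$^2$ base case. The overlap of these rules (\autoref{rem:rules-overlap}) means that dependence can be derived even when $\theta_2$ is the prefix action and $\theta_1$ is the deeper one. That situation must be ruled out by composability: a key introduced by act is maximal (\autoref{lem:fwd key maximal}), so the earlier transition must sit at the prefix.
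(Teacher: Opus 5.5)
Your proposal is correct and follows essentially the paper's route. You use \autoref{lem-immed pred composable} to get a composable dependent pair, place it adjacently in a forward path $rt_1t_2r'$ to $X$, do a case analysis along the rules of \autoref{fig:pccsk-dep-ind} to show $(\key{e_1},\key{e_2}) \in \ord{\tgtof{t_2}}$, and transport this to $X$ with \autoref{lem:fwd ord}. The differences are minor: the paper inducts on the structure of $\tgtof{t_2}$ and projects the whole path $rt_1t_2$ onto components, whereas you induct locally on the derivation of $t_2$; you also justify, via the Sideways Square swapping of \autoref{lem-CIRE ip comp}, the existence of an adjacent pair inside a path to $X$, which the paper simply asserts.
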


\begin{proof}
	We apply \autoref{lem-immed pred composable} to deduce that $e_1$ is composable with $e_2$ and $e_1 \dep e_2$.
	Let $k_1 = \key {e_1}$, $k_2 = \key {e_2}$.
	Composability of $e_1$ and $e_2$ implies that $k_1 \neq k_2$.
	Let $rt_1t_2r'$ be a forward-only path with target \(X\) and $t_1 \in e_1$, $t_2 \in e_2$.
	We now consider the path \(rt_1t_2\) with target \(X'\): as \(r'\) is forward-only, \autoref{lem:fwd ord} will allow us to conclude $(\key{e_1}, \key {e_2}) \in \ord{X}$ from $(\key{e_1}, \key {e_2}) \in \ord{X'}$, which we now prove.

	We have $\lblof {t_1} \sdep \lblof {t_2}$ as $e_1 \dep e_2$.
	From there, we proceed by structural induction on $X'$, using the rules for $\sdep$ and $\lind$ as pictured in \autoref{fig:pccsk-dep-ind}.
	\begin{description}
		\item[$P$.]
		This case cannot arise, since $\ev P = \emptyset$.
		\item[${\alpha[k].Y}$.]
		There are various sub-cases:
		\begin{enumerate}
			\item $k_1 = k$.
			Then $k_2 \in \keys Y$, and so $(k_1,k_2) \in \ord {\alpha[k].Y}$.
			\item $k_2 = k$.
			This cannot arise, since $t_1$ occurs before $t_2$ in the path $rt_1t_2$.
			\item $k_1,k_2 \neq k$.
			Then $k_1,k_2 \in \keys Y$.
			There is a forward-only path $r't'_1t'_2$ obtained by omitting the first transition of $r$ and removing prefixes $\alpha[k]$ from processes.
			Clearly $\lblof{t'_1} \sdep \lblof{t'_2}$.
			By induction we have $(k_1,k_2) \in \ord{Y}$, and so $(k_1,k_2) \in \ord {\alpha[k].Y}$.
		\end{enumerate}
		\item[$Y + Q$.]
		Then there is a forward-only path $r^{\L}t_1^{\L}t_2^{\L}$ with target \(Y\) obtained by projecting onto the left-hand component.
		We use rule C$^1$ to deduce $\lblof{t_1^{\L}} \sdep \lblof{t_2^{\L}}$.
		So $\evtkof {Y} {k_1} \sdep \evtkof {Y} {k_2}$ and these events are composable.
		By induction we have $(k_1, k_2) \in \ord{Y}$, and so $(k_1, k_2) \in \ord {Y + Q}$.
		
		\item[$P + Y$.]
		Similar to the preceding case.
		
		\item[$Y \bs \lambda$.]
		Then there is a forward-only path $r't'_1t'_2$ with target \(Y\) obtained by removing the restriction.
		Clearly $\evtkof {Y} {k_1} \sdep \evtkof {Y} {k_2}$ and these events are composable.
		By induction we have $(k_1, k_2) \in \ord{Y}$, and so $(k_1, k_2) \in \ord {Y \bs \lambda}$.
		
		\item[$Y_{\L} \Par Y_{\R}$.]
		There are various sub-cases:
		\begin{enumerate}
			\item $k_1,k_2 \in \keys{Y_{\L}} \inter \keys{Y_{\R}}$.
			Then there are forward-only paths $r^{\D}t_1^{\D}t_2^{\D}$ with target \(Y_{\D}\), for both $\D \in \{\L, \R\}$, obtained by projecting onto the left-hand and right-hand components.
			We use rule S$^3$ to deduce \mbox{$\lblof{t_1^{\D}} \sdep \lblof{t_2^{\D}}$} for either $\D =\L$ or $\D = \R$.
			Without loss of generality, suppose $\lblof{t_1^{\L}} \sdep \lblof{t_2^{\L}}$.
			By induction we have $(k_1, k_2) \in \ord{Y_{\L}}$, and so $(k_1, k_2) \in \ord {Y_{\L} \Par Y_{\R}}$.
			\item $k_1 \in \keys{Y_{\L}} \inter \keys{Y_{\R}}$ and $k_2 \in \keys{Y_{\L}} \setminus \keys{Y_{\R}}$.
			Then there is a forward-only path $r^{\L}t_1^{\L}t_2^{\L}$ to \(Y_{\L}\) obtained by projecting onto the left-hand component.
			We use rule S$^2$ to deduce $\lblof{t_1^{\L}} \sdep \lblof{t_2^{\L}}$.
			By induction we have $(k_1, k_2) \in \ord{Y_{\L}}$, and so $(k_1, k_2) \in \ord {Y_{\L} \Par Y_{\R}}$.
			\item $k_1 \in \keys{Y_{\L}} \inter \keys{Y_{\R}}$ and $k_2 \in \keys{Y_{\R}} \setminus \keys{Y_{\L}}$.
			Similar to the preceding case.
			\item $k_1 \in \keys{Y_{\L}} \setminus \keys{Y_{\R}}$ and $k_2 \in \keys{Y_{\L}} \inter \keys{Y_{\R}}$.
			Then there is a forward-only path $r^{\L}t_1^{\L}t_2^{\L}$ to \(Y_{\L}\) obtained by projecting onto the left-hand component.
			We use rule S$^1$ to deduce $\lblof{t_1^{\L}} \sdep \lblof{t_2^{\L}}$.
			By induction we have $(k_1, k_2) \in \ord{Y_{\L}}$, and so $(k_1, k_2) \in \ord {Y_{\L} \Par Y_{\R}}$.
			\item $k_1 \in \keys{Y_{\R}} \setminus \keys{Y_{\L}}$ and $k_2 \in \keys{Y_{\L}} \inter \keys{Y_{\R}}$.
			Similar to the preceding case.
			\item $k_1, k_2 \in \keys{Y_{\L}} \setminus \keys{Y_{\R}}$.
			Then there is a forward-only path $r^{\L}t_1^{\L}t_2^{\L}$ to \(Y_{\L}\) obtained by projecting onto the left-hand component.
			We use rule P$^1$ to deduce $\lblof{t_1^{\L}} \sdep \lblof{t_2^{\L}}$.
			By induction we have $(k_1, k_2) \in \ord{Y_{\L}}$, and so $(k_1, k_2) \in \ord {Y_{\L} \Par Y_{\R}}$.
			\item $k_1, k_2 \in \keys{Y_{\R}} \setminus \keys{Y_{\L}}$.
			Similar to the preceding case.
			\item $k_1 \in \keys{Y_{\L}} \setminus \keys{Y_{\R}}$ and $k_2 \in \keys{Y_{\R}} \setminus \keys{Y_{\L}}$.
			This case cannot arise, since we would have $\lblof {t_1} \ind \lblof {t_2}$ by rule P$^2_k$, a contradiction.
			\item $k_1 \in \keys{Y_{\R}} \setminus \keys{Y_{\L}}$ and $k_2 \in \keys{Y_{\L}} \setminus \keys{Y_{\R}}$.
			Similar to the preceding case.
			\qedhere
		\end{enumerate}
	\end{description}
\end{proof}

In view of \autoref{lem-event-key} we can make the following definition: \itemref{lem-event-key-1} guarantees that no two events in \(\ev{X}\) have the same key; \itemref{lem-event-key-2} guarantees that for every key in \(\keys{X}\), there is an event in \(\ev{X}\) with that key.

\begin{definition}[Event key]%
	\label{def-event-key}
	Let $X$ be reachable and let $k \in \keys X$.  Define $\evtkof X k$ to be the unique event $e \in \ev{X}$ such that $\key e = k$.
\end{definition}

\begin{lemma}[Order from keys to events]%
	\label{lem-ord}
	Suppose $X$ is reachable and $(k_1,k_2) \in \ord{X}$.  Then $\evtkof X {k_1} < \evtkof X {k_2}$.
\end{lemma}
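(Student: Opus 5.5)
We need to prove: if $(k_1,k_2)\in\ord{X}$ then $\evtkof X{k_1} < \evtkof X{k_2}$. Write $e_i = \evtkof X{k_i}$. By \autoref{def-partial-order-on-keys}, the pair comes from some sub-process $\alpha[k_1].Y$ of $X$ with $k_2\in\keys{Y}$. Intuitively, $k_2$ sits under the keyed prefix $\alpha[k_1]$, so every transition using $k_2$ occurred while $\alpha[k_1]$ was already executed. We turn this into the definition of $\leq$: for every rooted path $r$, if $\cte(r,e_2)>0$ then $\cte(r,e_1)>0$. Distinctness $e_1\neq e_2$ follows from $k_1\neq k_2$ (keys in $Y$ differ from $k_1$ in a reachable process, by \autoref{lem:fwd keys}), together with \autoref{lem-event-key}.

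The plan avoids quantifying over all rooted paths directly. By polychotomy (\autoref{prop-poly}), and since $e_1,e_2\in\ev{X}$ are not in conflict (both occur in a rooted path to $X$), it suffices to exclude $e_2<e_1$ and $e_1\coind e_2$. First step: take a rooted forward-only path $r$ to $X$ and use \autoref{prop:order-trans-pres} and \autoref{lem:fwd key maximal} to show that in $r$ the transition with key $k_1$ precedes that with key $k_2$. Indeed, take the prefix of $r$ ending right after the first of the two keys is introduced. If $k_2$ came first, then in the resulting process $X''$ no pair $(k_1,k_2)$ could be present, because $k_1\notin\keys{X''}$. Yet \autoref{lem:fwd ord} states that later forward steps only add pairs $(n,k)$ with $k$ the newly introduced key. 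Continuing until $k_1$ is introduced, the new pairs all have the form $(n,k_1)$, never $(k_1,k_2)$. Hence $(k_1,k_2)\notin\ord{X}$, a contradiction. So $e_1$ occurs before $e_2$ in every rooted forward-only path to $X$, which rules out $e_2<e_1$ (by \NRE, as in \autoref{lem-adj non-ind vs ccc}).

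Second, and the main obstacle: exclude $e_1\coind e_2$. Suppose it holds. Using \CIRE and Sideways Square (\autoref{lem-sideways}) repeatedly, as in the proof of \autoref{lem-ED paths}, we would permute $r$ into a rooted forward-only path to $X$ in which the $e_2$-transition precedes the $e_1$-transition. The path between them must be handled carefully. Following the strategy of \autoref{lem-CIRE ip comp}, we first move transitions of events core independent with $e_1$ out of the way, then swap. By \autoref{lem-ED paths} the permuted path still ends at $X$, so this contradicts the first step. The delicate point is justifying that the intermediate events can be moved, since they may be caused by $e_1$ and cause $e_2$. I would try a fallback in that case: argue that then $e_1<e<e_2$ for such an $e$, and conclude $e_1<e_2$ by transitivity (\autoref{lem-partial-ordering}). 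This means choosing, among the events between the $k_1$ and $k_2$ transitions, a maximal one comparable to $e_1$, as in \autoref{lem-CIRE ip comp}.

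An alternative for the second step, which I would try if the permutation argument gets messy, is structural induction on $X$ mirroring \autoref{lem-composable sdep}. Project the path onto the component $\alpha[k_1].Y$. There, any transition with key in $\keys{Y}$ is derived via rule pre from a process whose prefix $\alpha[k_1]$ is already keyed, so no coinitial pair of $k_1$- and $k_2$-transitions exists. Then reversing $e_1$ before $e_2$ is impossible, since act${}^{\bullet}$ requires $\keys{Y}=\emptyset$. This contradicts \autoref{lem-rpi-ev} combined with \BTI-style reasoning, which would be needed to realise $\rev{e_1}\coind e_2$ at $X$.
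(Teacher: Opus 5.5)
Your main route is correct, but it differs from the paper's. The paper argues by structural induction on $X$, directly from the definition of $<$ restricted to rooted forward-only paths via \cite[Lemma 4.26]{LPU24}. For $\alpha[k_1].Y$ with $k_2\in\keys{Y}$, any path performing the $k_2$-event must first fire $\alpha[k_1]$. In the remaining cases (inner prefix, sum, restriction, parallel), a counterexample path is projected onto the relevant component, with event key equivalence (\autoref{prop-event_coincide}) used to transport event membership, which contradicts the induction hypothesis.

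You instead use polychotomy to reduce the goal to excluding $e_2<e_1$ and $e_1\coind e_2$, and you handle both with one invariant. Forward steps never remove pairs from $\ord{\cdot}$ and only add pairs whose second component is the fresh key (the proof of \autoref{lem:fwd ord} together with \autoref{lem:fwd key maximal}). Hence $(k_1,k_2)\in\ord{X}$ forces the $k_1$-transition to precede the $k_2$-transition in every rooted forward-only path to $X$.

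Excluding $\coind$ is then a permutation argument that needs no maximality. In $r_0t_1st_2r_2$, each event $e$ of $s$ is either caused by $e_1$ or core independent of $e_1$. In the first case $e\coind e_2$, since otherwise $e_1<e<e_2$. Events of the second kind are core independent of every earlier event of the first kind, by transitivity. So \autoref{lem-sideways} lets you bubble the second-kind events before $t_1$, and then bubble $t_2$ before $t_1$. Sideways Square preserves source and target, so \autoref{lem-ED paths} is superfluous. The result is a rooted forward-only path to $X$ that violates the invariant.

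Your argument is more axiomatic: its only calculus-specific input is the monotone growth of $\ord{\cdot}$, and it avoids the informal path projections and event key equivalence. The cost is that it uses \CIRE, which does hold for \pccsk. The paper's argument avoids \CIRE and sideways squares.

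Drop the final structural fallback. As sketched it does not go through, because $\rev{e_1}\coind e_2$ may be witnessed anywhere in the LTS, not only at $X$ or inside one projected component.
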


\begin{proof}
	By structural induction on $X$.
	In the definition of $e_1 < e_2$ we can restrict to forward-only paths using~\cite[Lemma 4.26]{LPU24}, since the LTSI of \pccsk is pre-reversible.
	It is convenient to use event key equivalence (\autoref{def-event-key-equivalence}), which is equivalent to equivalence (\autoref{def-event-general}) by \autoref{prop-event_coincide}.
	There are various cases:
	\begin{description}
		\item[$P$.]
		This case cannot arise, since $\keys P = \emptyset$.
		\item[${\alpha[k].X}$.]
		There are two sub-cases:
		\begin{enumerate}
			\item \label{item:ord base}
			Suppose that $(k_1,k_2) \in \ord {\alpha[k].X}$, and this is derived from $k_1 = k$ and $k_2 \in \keys X$.
			Suppose that $r$ is any rooted forward-only path with $\cte(r,\evtkof {\alpha[k].X} {k_2}) = 1$.
			Then the first transition of $r$ is labelled with $\alpha[k]$, so that $\cte(r,\evtkof {\alpha[k].X} {k_1}) = 1$ also.
			This shows that $\evtkof {\alpha[k].X} {k_1} < \evtkof {\alpha[k].X} {k_2}$.
			
			\item \label{item:ord ind}
			Suppose that $(k_1,k_2) \in \ord {\alpha[k].X}$, and this is derived from $(k_1,k_2) \in \ord{X}$.
			To reach a contradiction, suppose also that $\evtkof {\alpha[k].X} {k_1} \not < \evtkof {\alpha[k].X} {k_2}$.
			Then there is a rooted forward-only path $r$ with \begin{align*}
				\cte(r,\evtkof {\alpha[k].X} {k_2}) = 1 && \text{ and } && \cte(r,\evtkof {\alpha[k].X} {k_1}) = 0\text{.}
			\end{align*}
			Suppose that $t$ is the transition in $r$ which belongs to $\evtkof {\alpha[k].X} {k_2}$.
			Then $t \evkeqt t'$ where $t'$ belongs to a rooted forward-only path $r'$ with target $\alpha[k].X$.
			Let $s$ be a path not containing $k_2$ from $\tgtof {t}$ to $\tgtof {t'}$.
			
			We can omit the initial $\alpha[k]$ transition in $r$ and $r'$, and delete all $\alpha[k]$ prefixes in \(r\), \(r'\) and \(s\), yielding paths \(r_0\), \(r'_0\) and \(s_0\) and transitions $t_0 \in r_0$ and $t'_0 \in r'_0$. Using $s_0$ we see that $t_0 \evkeqt t'_0$.  Using $r'_0$ we see that $t'_0 \in \evtkof X {k_2}$, and so $t_0 \in \evtkof X {k_2}$.
			Now $\cte(r_0, \evtkof X {k_2}) = 1$ and $\cte(r_0, \evtkof X {k_1}) = 0$,
			showing that $\evtkof X {k_1} \not < \evtkof X {k_2}$, contradicting the inductive hypothesis.
		\end{enumerate}
		\item[$X + Q$] Similar to sub-case \itemref{item:ord ind} for $\alpha[k].X$.
		\item[$P + X$] Similar to the preceding case.
		\item[$P \bs \lambda$] Similar to sub-case \itemref{item:ord ind} for $\alpha[k].X$.
		\item[$X \Par Y$]
		Suppose that $(k_1,k_2) \in \ord {X \Par Y}$, which is derived from $(k_1,k_2) \in \ord{X}$.
		To reach a contradiction, suppose also $\evtkof {X \Par Y} {k_1} \not < \evtkof {X \Par Y} {k_2}$.
		Then there is a rooted forward-only path $r$ with $\cte(r,\evtkof {X \Par Y} {k_2}) = 1$ and $\cte(r,\evtkof {X \Par Y} {k_1}) = 0$.
		Suppose that $t$ is the transition in $r$ which belongs to $\evtkof {X \Par Y} {k_2}$.  Then $t \evkeqt t'$ where $t'$ belongs 
		to a rooted forward-only path $r'$ with target $X \Par Y$.
		Let $s$ be a path not containing $k_2$ from $\tgtof {t}$ to $\tgtof {t'}$.
		We can project \(r\), \(r'\) and \(s\) onto their left-hand components (by omitting any moves made solely on the right-hand component, \ie still executing the left-hand component of synchronisations as isolated actions), yielding paths \(r_{\Left}\), \(r'_{\Left}\), and \(s_\Left\) and transitions $t_\Left \in r_\Left$ and $t'_\Left \in r'_\Left$. Using $s_\Left$ we see that $t_\Left \evkeqt t'_\Left$.  Using $r'_\Left$ we see that $t'_\Left \in \evtkof X {k_2}$, and so $t_\Left \in \evtkof X {k_2}$.
		Now $\cte(r_\Left, \evtkof X {k_2}) = 1$ and $\cte(r_\Left, \evtkof X {k_1}) = 0$,
		showing that $\evtkof X {k_1} \not < \evtkof X {k_2}$, contradicting the inductive hypothesis.
		
		The case where $(k_1,k_2) \in \ord {X \Par Y}$ is derived from $(k_1,k_2) \in \ord Y$ is similar. %
		\qedhere
	\end{description}
\end{proof}

\begin{theorem}[Orderings coincide]%
	\label{thm-ordering-event-key}
	For any process \(X\), if $e_1, e_2 \in \ev{X}$ we have: $e_1 \leq e_2$ iff $\key{e_1} \leq_{X} \key {e_2}$.
\end{theorem}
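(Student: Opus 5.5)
We prove both directions separately, treating the reflexive case first: if $e_1 = e_2$ then $\key{e_1} = \key{e_2}$ and $\leq_X$ is reflexive. Conversely, if $\key{e_1} = \key{e_2}$ with $e_1, e_2 \in \ev{X}$, then \autoref{lem-event-key}~\itemref{lem-event-key-1} gives $e_1 = e_2$. It therefore remains to show that $e_1 < e_2$ iff $\key{e_1} <_X \key{e_2}$, where $<_X$ is the strict part of $\leq_X$, \ie the transitive closure of $\ord{X}$. Since keys of distinct events of $X$ are distinct, $<_X$ is a genuine strict relation on the keys of $X$.

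($\Rightarrow$) Assume $e_1 < e_2$. By \autoref{prop-chain ip} there is a chain $e_1 = f_0 \ip f_1 \ip \cdots \ip f_n = e_2$ of forward events. The first step is to check that every $f_i$ belongs to $\ev{X}$. Each $f_i \leq e_2$, and $e_2 \in \ev{X}$, so there is a rooted forward-only path $r$ to $X$ with $\cte(r, e_2) = 1$. By definition of $\leq$, we then get $\cte(r, f_i) > 0$, hence $f_i \in \ev{X}$ (well-defined by \autoref{lem:event-count}). Next, \autoref{lem-composable sdep} yields $(\key{f_{i-1}}, \key{f_i}) \in \ord{X}$ for each $i$. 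Chaining these pairs gives $\key{e_1} <_X \key{e_2}$.

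($\Leftarrow$) Assume $\key{e_1} <_X \key{e_2}$. Then there is a chain $(m_0, m_1), \ldots, (m_{j-1}, m_j)$ in $\ord{X}$ with $m_0 = \key{e_1}$ and $m_j = \key{e_2}$. All $m_i$ lie in $\keys{X}$, since $\ord{X}$ only relates keys of $X$ by inspection of \autoref{def-partial-order-on-keys}. So $\evtkof{X}{m_i}$ is defined by \autoref{def-event-key}, and by uniqueness $\evtkof{X}{m_0} = e_1$ and $\evtkof{X}{m_j} = e_2$. \autoref{lem-ord} gives $\evtkof{X}{m_{i-1}} < \evtkof{X}{m_i}$ for each $i$, and transitivity of $<$ (\autoref{lem-partial-ordering}) yields $e_1 < e_2$.

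The main obstacle is the closure argument in ($\Rightarrow$): showing that the intermediate events of the immediate-predecessor chain lie in $\ev{X}$, so that \autoref{lem-composable sdep} applies to them. The argument sketched above relies on the definition of $\leq$ together with the well-definedness of $\ev{X}$. Everything else reduces to the two lemmas relating $\ip$ and $\ord{X}$ in each direction.
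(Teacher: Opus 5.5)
Your proof is correct and follows the paper's argument: you reduce to the strict orders via \autoref{lem-event-key}, then combine \autoref{prop-chain ip} with \autoref{lem-composable sdep} for ($\Rightarrow$) and chain \autoref{lem-ord} with transitivity of $<$ for ($\Leftarrow$). You also check explicitly that the intermediate events of the $\ip$-chain lie in $\ev{X}$, which is needed for \autoref{lem-composable sdep} to apply; the paper leaves this implicit, and your argument for it from the definition of $\leq$ is sound.
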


\begin{proof}
    Note that \(\key{e_1} = \key{e_2}\) iff \(e_1 = e_2\) by \autoref{def-event-key}, hence we only have to prove for the strict orders.
    
	($\Rightarrow$)
	Let \(e_1 < e_2\).
	Since \(\leq_{X}\) is the transitive reflexive closure of \(\ord{X}\) (\autoref{def-partial-order-on-keys}), and  \(\leq\) is the transitive reflexive closure of \({\ip}\) (\autoref{prop-chain ip}), we obtain the result by \autoref{lem-composable sdep}.
	
	($\Leftarrow$)
	Suppose \(\key{e_1} <_X \key{e_2}\), then by \autoref{def-partial-order-on-keys} there exists a chain of keys \(k_1, \hdots, k_n \in \keys{X}\) such that 
	\[\{(\key{e_1}, k_1), (k_1, k_2), \hdots, (k_n, \key{e_2}) \} \subseteq \ord{X}\text{,}\]
	with \(n \geq 0\) since \(\key{e_1} \neq \key{e_2}\). 
	By \autoref{lem-ord} we obtain that 
	\[\evtkof{X}{\key{e_1}} < \evtkof{X}{k_1} < \evtkof{X}{k_2} < \hdots < \evtkof{X}{k_n} < \evtkof{X}{\key{e_2}}\text{,}\]
	hence \(\evtkof{X}{\key{e_1}} = e_1 < e_2 = \evtkof{X}{\key{e_2}}\) as desired.
\end{proof}

We now conclude this section by showing that the core independence on past events from \autoref{def-event relations} can also be equivalently computed using static ordering on keys (\autoref{thm-ordering-event-key}).

\begin{theorem}[Core independence from keys]%
	\label{thm-independent-event-key}
	For any process \(X\), if $e_1, e_2 \in \ev{X}$, we have: $e_1 \coind  e_2$ iff neither \(\key{e_1} \leq_{X} \key{e_2}\) nor \(\key{e_2} \leq_{X} \key{e_1}\) holds.
\end{theorem}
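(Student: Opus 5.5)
The plan is to reduce the statement to polychotomy for forward events (\autoref{prop-poly}) together with \autoref{thm-ordering-event-key}. Fix $X$ and $e_1, e_2 \in \ev{X}$; these are forward events by \autoref{def-events process alt1}. The key preliminary observation is that conflict is excluded: both events occur in any rooted forward-only path $r$ to $X$, so $\cte(r,e_1) > 0$ and $\cte(r,e_2) > 0$ for that rooted path, hence $e_1 \ncf e_2$ (this is the remark made just before \autoref{lem-composable sdep}). Since the LTSI of \pccsk is pre-reversible (\autoref{thm-axioms-hold-pccsk}), polychotomy then tells us that exactly one of $e_1 = e_2$, $e_1 < e_2$, $e_2 < e_1$, $e_1 \coind e_2$ holds. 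In other words, $e_1 \coind e_2$ iff neither $e_1 \leq e_2$ nor $e_2 \leq e_1$.

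Next I transport this to keys. By \autoref{thm-ordering-event-key}, $e_1 \leq e_2$ iff $\key{e_1} \leq_X \key{e_2}$, and symmetrically $e_2 \leq e_1$ iff $\key{e_2} \leq_X \key{e_1}$. Substituting into the previous equivalence yields the statement directly. For the ($\Rightarrow$) direction, suppose $e_1 \coind e_2$; exclusivity in polychotomy rules out $e_1 = e_2$ and $e_1 < e_2$, so $e_1 \not\leq e_2$, hence $\key{e_1} \not\leq_X \key{e_2}$, and symmetrically for the other order. For the ($\Leftarrow$) direction, if neither key inequality holds then neither $e_1 \leq e_2$ nor $e_2 \leq e_1$. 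This excludes equality and both strict causalities, and conflict was excluded above, so polychotomy forces $e_1 \coind e_2$.

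The only step needing care is the exclusion of conflict for events of the same process, which rests on $\ev{X}$ being well defined independently of the chosen rooted forward-only path (\autoref{lem:event-count}) and on the existence of such a path (\PL). I also want to double-check that the relation $\leq$ appearing in \autoref{thm-ordering-event-key} is exactly the reflexive order used in polychotomy, so that ``$e_1 \not\leq e_2$'' correctly excludes both equality and $<$. I expect no genuine obstacle beyond this: the result is essentially a corollary. The statement for \ccsk then follows either by the same argument, since \autoref{thm-axioms-hold-ccsk} gives polychotomy, or via the bijection of \autoref{def-ccsk-pccsk-bijection}.
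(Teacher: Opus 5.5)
Your proposal is correct and follows essentially the same route as the paper: polychotomy for forward events (\autoref{prop-poly}) combined with \autoref{thm-ordering-event-key}, with conflict excluded because both events have positive count on a rooted forward-only path to \(X\). The only cosmetic difference is in excluding \(e_1 = e_2\). The paper uses reflexivity of \(\leq_X\) to deduce \(\key{e_1} \neq \key{e_2}\), hence \(e_1 \neq e_2\). You instead use reflexivity of \(\leq\) after transporting through \autoref{thm-ordering-event-key}, and the two arguments are equivalent.
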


\begin{proof}
	($\Rightarrow$)	
	Since $e_1 \coind e_2$, by polychotomy we have $e_1 \not\leq e_2$ and $e_2 \not\leq e_1$.
	Then the result follows from \autoref{thm-ordering-event-key}.

	($\Leftarrow$)
	By polychotomy for events (\autoref{prop-poly}), as \(e_1 \ci e_2\) is the only option left:
	\begin{itemize}
		\item \(e_1 = e_2\) cannot hold since neither \(\key{e_1} \leq_{X} \key{e_2}\) nor \(\key{e_2} \leq_{X} \key{e_1}\) hold by assumption; hence \(\key{e_1} \neq \key{e_2}\) and \(e_1 \neq e_2\) follows.
		\item \(e_1 < e_2\) and \(e_1 > e_2\) do not hold by using \autoref{thm-ordering-event-key} with our hypothesis.
		\item \(e_1 \cf e_2\) does not hold: since $e_1, e_2 \in \ev{X}$, by defitinion of \(\ev{X}\) (\autoref{def-events process alt1}), there exists a rooted forward-only path \(r\) to \(X\) such that \(\cte(r, e_1) > 0\) and \(\cte(r, e_2) > 0\), which would contradict \(e_1 \cf e_2\) by \autoref{def-event relations}, remembering that all events in \(\ev{X}\) are forward events. \qedhere
	\end{itemize}
\end{proof}

We conclude with a new proof of the following statement, which is of interest, for example, when establishing the operational correspondence between configuration structures and process algebras~\cite[Lemma 19]{Aubert2020b}\cite[Lemma 3.5]{AubertPU26}:

\begin{corollary}%
	\label{cor:bwd maximal}
	For any process \(X\), $X \pr{b}[\theta] X'$ with \(\kay{\theta} = k\) iff $k$ is maximal under $\leq_X$.
\end{corollary}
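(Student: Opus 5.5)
We reduce the statement to the event-level characterisation of \autoref{thm-ordering-event-key}, and then use the axiomatic facts that a backward move undoes a maximal event of a rooted forward-only path. Throughout, $X$ is reachable. By \autoref{lem-event-key}, the map $e \mapsto \key{e}$ is a bijection from $\ev{X}$ onto $\keys{X}$. By \autoref{thm-ordering-event-key}, this bijection is an order isomorphism between $(\ev{X},\leq)$ and $(\keys{X},\leq_X)$. Hence $k$ is maximal under $\leq_X$ iff $\evtkof{X}{k}$ is maximal in $\ev{X}$ with respect to $<$. It therefore suffices to show that $X$ has a backward transition with key $k$ iff $k \in \keys{X}$ and $e = \evtkof{X}{k}$ is $<$-maximal in $\ev{X}$.

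($\Rightarrow$) Let $t: X \pr{b}[\theta] X'$ with $\kay{\theta}=k$. Then $\rev{t}: X' \pr{f}[\theta] X$ is a forward transition, so $k$ is maximal under $\leq_X$ directly by \autoref{lem:fwd key maximal}. (This also gives $k \in \keys{X}$ by \autoref{lem:fwd keys}.) No event-level reasoning is needed in this direction.

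($\Leftarrow$) Let $k$ be maximal and $e = \evtkof{X}{k}$. Take a rooted forward-only path $r$ to $X$ (it exists by \WF and \PL); it contains a unique $t \in e$ by \NRE, so write $r = r_1 t s$. Every transition $u$ in $s$ has $[u] \in \ev{X}$ and $[u] \neq e$. By maximality of $e$, we do not have $e < [u]$. Also $[u] < e$ is impossible: $r_1 t$ is a rooted path in which $[u]$ does not occur (by \NRE it occurs only once in $r$, and it lies in $s$), yet $\cte(r_1t, e) > 0$. Conflict is excluded because both events occur in $r$. Polychotomy (\autoref{prop-poly}) then gives $e \coind [u]$. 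We now push $t$ to the end of $r$ by repeatedly applying the sideways square (\autoref{lem-sideways}, available since \CIRE holds for \pccsk). Each application replaces $t\,u$ by $u'\,t'$ with $u' \sim u$, $t' \sim t$, and the same source and target. After $|s|$ steps we obtain a path $r_1 s' t''$ ending at $X$ with $t'' \in e$. Then $\rev{t''}: X \pr{b}[\theta] X'$ has key $\key{e} = k$, as required.

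The main obstacle is the sideways-square step. We must check that each swap keeps the path rooted, forward-only and ending at $X$, and that the event of the moved transition remains core independent of the next one. The first point holds because \autoref{lem-sideways} produces forward transitions with the same endpoints. The second holds because the events along $s'$ are exactly those of $s$, since swaps preserve events. The order-isomorphism reduction via \autoref{thm-ordering-event-key} and \autoref{lem-event-key} is routine.
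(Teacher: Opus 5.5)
Your proposal is correct and follows essentially the same route as the paper. For ($\Rightarrow$) you use the Loop Lemma together with \autoref{lem:fwd key maximal}. For ($\Leftarrow$) you transfer maximality of $k$ to the event via \autoref{thm-ordering-event-key}, use polychotomy to show that $t$ is core independent with every later transition of a rooted forward-only path, and then swap $t$ to the end. The only cosmetic difference is that you do the swaps with the sideways square (\autoref{lem-sideways}, via \CIRE), whereas the paper invokes \RPI and \SP directly.
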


\begin{proof}
	($\Rightarrow$)
	This direction follows from \autoref{lem:fwd key maximal}, using the Loop Lemma (\autoref{lem-loop_proved}).

	($\Leftarrow$)
	Suppose $k$ is maximal under $\leq_X$, and consider any rooted forward-only path $r: \orig X \r{f}^* X$.
	Then there is some transition $t : X_1 \pr{f}[\theta] X_2$ with $\kay{\theta}=k$ in $r$, so that we can let $r = r_1tr_2$.
	By \autoref{thm-ordering-event-key}, $[t]$ is maximal in $\ev{X}$, since $k$ is maximal under $\leq_X$.
	By polychotomy for transitions (\autoref{prop-poly-trans}), $t \coind u$ for all $u$ in $r_2$: \(t \sim u\) cannot hold since they both occur in the same path as forward transitions, \(t < u\) cannot hold since \([t]\) is maximal under \(<\) in \(\ev{X}\), \(t > u\) cannot hold since \(t\) occurred before \(u\), and finally \(t \cf u\) cannot hold since they both occur in the same path.
	Using \RPI and \SP, we can swap $t$ with all transitions in $r_2$ to obtain $X \pr{b}[\theta] X'$ with \(\kay{\theta} = k\).
\end{proof}

%% file: figures/example-transition2.tex
	\tikzset{
		lbl/.style = {
			scale=.8,
		}
	}
	
	\begin{tikzpicture}[
		x={(1.5, 0)},  %
		y={(0, 1.1)}, %
		baseline,
		anchor=base
		]
		\node (aab) at (0, 0){\(a \Par b\)};
		\node (amamb) at (1.2, .9){\(a[k] \Par b\)};
		\node (aamb) at (1.2, -.9){\(a \Par b[l]\)};
		\node (akabm) at (2.4, 0){\(a[k] \Par b[l]\)};
		\node (neg) at (-1.2, 0){\(a \Par b[k]\)};
		
		\draw[pb, reverse] (amamb) -- node[lbl, left, yshift=8, xshift=6]{\(\lmidl a[k]\)} (aab);
		\draw[pf] (aab) -- node[lbl, right, xshift=-3, yshift=6]{\(\lmidr b[l]\)} (aamb);
		\draw[pf] (amamb) -- node[lbl, right, xshift=-3, yshift=6]{\(\lmidr b[l]\)} (akabm);
		\draw[pf] (aamb) -- node[lbl, left, yshift=8, xshift=6]{\(\lmidl a[k]\)} (akabm);
		\draw[pb, reverse] (neg)  node[lbl, above, xshift=35, yshift=0]{\(\lmidr b[k]\)} -- (aab);
	\end{tikzpicture}

%% file: sections/newconclusion.tex
\section{Conclusion}%
\label{sec:conclusion}

\subsection*{Scope and Applicability of our Contributions}

We have presented a detailed and complete study of how independence relates to concurrency, causality and conflict, introducing along the way additional useful relations on transitions and events.
Our work focused on reversible systems, %
but its applicability goes beyond them: arbitrary (forward-only) systems can be automatically extended to a pre-reversible semantics~\cite[Proposition 20]{LaneseM20}, where our theorems can then be applied.
Our results can also be read \enquote{backwards}: for example, since our independence on \ccsk conservatively extends the existing definition of independence on \ccs~\cite[Sect. 3]{BC94}, it strongly grounds it retrospectively as correct.
Finally, while our uniqueness results (\autoref{prop-prerev-coinitial-unique} and \autoref{thm-uniqueness}) are focused on adjacent transitions (following \autoref{remark-on-adjacent-indep}), it is important to note that independence is routinely defined only on such transitions,
and that independence on coinitial transitions can in some systems fully determine independence between all transitions%
        \footnote{This is the case for example for systems satisfying \IRE and the \enquote{Independence of Events is Coinitial} property from the axiomatic approach~\cite[Definition 5.10]{LaneseM20}: if \(t_1 \ind t_2\) then \([t_1] \ci [t_2]\), which in turns implies that there are coinitial \(t_1' \eveqt t_1\), \(t_2' \eveqt t_2\) such that \(t_1' \ind t_2'\).}.

\subsection*{Beluga Formalisation: Evaluation and Reflections}

Mechanising a formal system comes with a range of choices that can significantly affect the formalisation effort.
We chose to treat restriction as a binding operator, as is customary in mechanisations of concurrent calculi~\cite{HonsellMS01,SanoKP23}, even though some pen-and-paper developments of \ccs and reversible variants~\cite{milner80lncs,PU07,aubert2023c} do not explicitly identify processes up to \(\alpha\)-equivalence.
We followed the original \ccsk definition~\cite{PU07} in considering all keys as free, even if the distinction between free and bound keys is gaining traction~\cite{LaneseP21,Vallee2026,lanese_et_al:LIPIcs.CONCUR.2026.41}.
The presence of binders in the target system, together with the existence of a previous Beluga formalisation of \pccsk~\cite{cecilia_2025_15660907}, motivated our choice of Beluga. %

Beluga's support for HOAS relieved us from the need to explicitly manage indices or prove results up to \(\alpha\)-equivalence when dealing with restriction, while its proof-term-oriented style facilitated a close correspondence between formal and informal proofs.
On the other hand, its contextual, two-level approach appears less suitable for the axiomatic development, requiring one to design appropriate context schemas to represent relations such as independence, and to establish the invariance of notions and proofs under the choice of representative of an event.
Therefore, we regard Beluga as a strong choice for languages with binding constructs, while a proof assistant such as Lean~\cite{Lean4} may be better suited to approach the axiomatic material in Sections~\ref{sec:axiomatic}, \ref{sec:CCC}, \ref{ssec:true-concu-relation-pccsk} and~\ref{sec:key-based}, given its support for abstract mathematical structures and quotient types.

The formalisation helped us identify and resolve issues, revise some of our proof strategies, and certify the correctness of our results.
In particular, it revealed that viewing restriction as binding makes its treatment in \pccsk significantly more complex than anticipated:
handling bound names in proof labels required a duplication of types and proofs, accounting for a significant portion of the overall \bkloc{10} of the development.
Despite this, our confidence in the correctness of the encoding of \pccsk rests on its bijection with the more conventional encoding of \ccsk, that establishes their \emph{formal adequacy}~\cite[p.~211]{CheneyNV12}.
Overall, we hope that the richness of our example bank and the central status of \ccsk in the world of reversible concurrent systems will facilitate and inspire future implementation efforts~\cite{DBLP:conf/rc/AubertB23}.

\subsection*{Extending and Strengthening our Contributions}
Our contributions sharpen the understanding that the tools used to make systems reversible do more than enabling backtracking.
For example, marking past actions with keys further enables a static detection of causality and core independence (Theorems~\ref{thm-ordering-event-key} and \ref{thm-independent-event-key}) that could be exploited independently of backward transitions.
Indeed, keys essentially play the role of abstracted proof labels stored inside the process, 
and exploring other unstructured and anonymised ways of recording this information 
could allow other approaches to concurrency to benefit from our results.
Such an effort would be on a per-system basis, as the axiomatic approach refrains from assuming structure in processes.

Another contribution of reversibility to concurrency theory~\cite{AubertPU26,10.1007/3-540-48340-3_32,Aubert2020b,BEM25} revolves around the study of %
history-preserving bisimulation, seen as \enquote{(just) a bisimulation for causality}~\cite{10.1007/3-540-48340-3_32} %
and \emph{hereditary} history-preserving bisimulation, known to preserve concurrency.
In light of our results, one could imagine defining those bisimulations in terms of true-concurrency relations~\cite{bisim-applied-pi}, %
\eg requiring %
 transitions to play the bisimulation game only when their true-concurrency behaviour matches.
Last but not least, a possibly fruitful extension of our work would be to investigate whether the qualitative changes introduced by our characterisations of true-concurrency relations (\eg moving from universally-quantified definitions to existentially-quantified ones) enable us to reduce the complexity of deciding those relations.

%% file: main.bbl

\begin{thebibliography}{53}


\ifx \showCODEN    \undefined \def \showCODEN     #1{\unskip}     \fi
\ifx \showISBNx    \undefined \def \showISBNx     #1{\unskip}     \fi
\ifx \showISBNxiii \undefined \def \showISBNxiii  #1{\unskip}     \fi
\ifx \showISSN     \undefined \def \showISSN      #1{\unskip}     \fi
\ifx \showLCCN     \undefined \def \showLCCN      #1{\unskip}     \fi
\ifx \shownote     \undefined \def \shownote      #1{#1}          \fi
\ifx \showarticletitle \undefined \def \showarticletitle #1{#1}   \fi
\ifx \showURL      \undefined \def \showURL       {\relax}        \fi
\providecommand\bibfield[2]{#2}
\providecommand\bibinfo[2]{#2}
\providecommand\natexlab[1]{#1}
\providecommand\showeprint[2][]{arXiv:#2}
\makeatletter
\@ifundefined{NAT@parse@date}{}{\let\NAT@parse@date@orig\NAT@parse@date}
\@ifundefined{NAT@parse@date}{}{\def\NAT@parse@date#1#2#3#4#5#6@@{\NAT@parse@date@orig#1#2#3#4#5#6@@\def\NAT@tempyear{0000}\def\NAT@tempexlab{{?}}\ifx\NAT@year\NAT@tempyear\ifx\NAT@exlab\NAT@tempexlab\def\NAT@date{[n.\,d.]}\else\edef\NAT@date{[n.\,d.]\NAT@exlab}\fi\fi}}
\makeatother

\bibitem[Abel et~al\mbox{.}(2019)]%
        {AAH19}
\bibfield{author}{\bibinfo{person}{Andreas Abel}, \bibinfo{person}{Guillaume Allais}, \bibinfo{person}{Aliya Hameer}, \bibinfo{person}{Brigitte Pientka}, \bibinfo{person}{Alberto Momigliano}, \bibinfo{person}{Steven Sch{\"{a}}fer}, {and} \bibinfo{person}{Kathrin Stark}.} \bibinfo{year}{2019}\natexlab{}.
\newblock \showarticletitle{POPLMark reloaded: Mechanizing proofs by logical relations}.
\newblock \bibinfo{journal}{\emph{Journal of Functional Programming}}  \bibinfo{volume}{29} (\bibinfo{year}{2019}), \bibinfo{pages}{e19}.
\newblock
\href{https://doi.org/10.1017/S0956796819000170}{doi:\nolinkurl{10.1017/S0956796819000170}}


\bibitem[Aman et~al\mbox{.}(2020)]%
        {Aman2020}
\bibfield{author}{\bibinfo{person}{Bogdan Aman}, \bibinfo{person}{Gabriel Ciobanu}, \bibinfo{person}{Robert Gl{\"{u}}ck}, \bibinfo{person}{Robin Kaarsgaard}, \bibinfo{person}{Jarkko Kari}, \bibinfo{person}{Martin Kutrib}, \bibinfo{person}{Ivan Lanese}, \bibinfo{person}{Claudio~Antares Mezzina}, \bibinfo{person}{Lukasz Mikulski}, \bibinfo{person}{Rajagopal Nagarajan}, \bibinfo{person}{Iain C.~C. Phillips}, \bibinfo{person}{G.~Michele Pinna}, \bibinfo{person}{Luca Prigioniero}, \bibinfo{person}{Irek Ulidowski}, {and} \bibinfo{person}{Germ{\'{a}}n Vidal}.} \bibinfo{year}{2020}\natexlab{}.
\newblock \showarticletitle{Foundations of Reversible Computation}.
\newblock In \bibinfo{booktitle}{\emph{Reversible Computation: Extending Horizons of Computing - Selected Results of the {COST} Action {IC1405}}}, \bibfield{editor}{\bibinfo{person}{Irek Ulidowski}, \bibinfo{person}{Ivan Lanese}, \bibinfo{person}{Ulrik~Pagh Schultz}, {and} \bibinfo{person}{Carla Ferreira}} (Eds.). \bibinfo{series}{LNCS}, Vol.~\bibinfo{volume}{12070}. \bibinfo{publisher}{Springer}, \bibinfo{pages}{1--40}.
\newblock
\href{https://doi.org/10.1007/978-3-030-47361-7_1}{doi:\nolinkurl{10.1007/978-3-030-47361-7_1}}


\bibitem[Aubert(2023)]%
        {aubert2023c}
\bibfield{author}{\bibinfo{person}{Cl\'{e}ment Aubert}.} \bibinfo{year}{2023}\natexlab{}.
\newblock \showarticletitle{{The Correctness of Concurrencies in (Reversible) Concurrent Calculi}}.
\newblock \bibinfo{journal}{\emph{Journal of Logical and Algebraic Methods in Programming}}  \bibinfo{volume}{136} (\bibinfo{year}{2023}), \bibinfo{pages}{100924}.
\newblock
\showISSN{2352-2208}
\href{https://doi.org/10.1016/j.jlamp.2023.100924}{doi:\nolinkurl{10.1016/j.jlamp.2023.100924}}


\bibitem[Aubert and Browning(2023)]%
        {DBLP:conf/rc/AubertB23}
\bibfield{author}{\bibinfo{person}{Cl\'{e}ment Aubert} {and} \bibinfo{person}{Peter Browning}.} \bibinfo{year}{2023}\natexlab{}.
\newblock \showarticletitle{Implementation of a Reversible Distributed Calculus}. In \bibinfo{booktitle}{\emph{Reversible Computation - 15th International Conference, {RC} 2023, Giessen, Germany, July 18-19, 2023, Proceedings}} \emph{(\bibinfo{series}{LNCS}, Vol.~\bibinfo{volume}{13960})}, \bibfield{editor}{\bibinfo{person}{Martin Kutrib} {and} \bibinfo{person}{Uwe Meyer}} (Eds.). \bibinfo{publisher}{Springer}, \bibinfo{pages}{210--217}.
\newblock
\href{https://doi.org/10.1007/978-3-031-38100-3_13}{doi:\nolinkurl{10.1007/978-3-031-38100-3_13}}


\bibitem[Aubert and Cristescu(2020)]%
        {Aubert2020b}
\bibfield{author}{\bibinfo{person}{Cl\'{e}ment Aubert} {and} \bibinfo{person}{Ioana Cristescu}.} \bibinfo{year}{2020}\natexlab{}.
\newblock \showarticletitle{How Reversibility Can Solve Traditional Questions: The Example of Hereditary History-Preserving Bisimulation}. In \bibinfo{booktitle}{\emph{{CONCUR}}} \emph{(\bibinfo{series}{LIPICS}, Vol.~\bibinfo{volume}{171})}, \bibfield{editor}{\bibinfo{person}{Igor Konnov} {and} \bibinfo{person}{Laura Kov\'{a}cs}} (Eds.). \bibinfo{publisher}{Schloss Dagstuhl - Leibniz-Zentrum f{\"{u}}r Informatik}, \bibinfo{pages}{13:1--13:24}.
\newblock
\href{https://doi.org/10.4230/LIPIcs.CONCUR.2020.7}{doi:\nolinkurl{10.4230/LIPIcs.CONCUR.2020.7}}


\bibitem[Aubert et~al\mbox{.}(2022)]%
        {bisim-applied-pi}
\bibfield{author}{\bibinfo{person}{Cl\'{e}ment Aubert}, \bibinfo{person}{Ross Horne}, {and} \bibinfo{person}{Christian Johansen}.} \bibinfo{year}{2022}\natexlab{}.
\newblock \showarticletitle{Bisimulations Respecting Duration and Causality for the Non-interleaving Applied {\(\pi\)}-Calculus}, In \bibinfo{booktitle}{Proceedings Combined 29th International Workshop on Expressiveness in Concurrency and 19th Workshop on Structural Operational Semantics, {EXPRESS/SOS} 2022, Warsaw, Poland, 12th September 2022}, \bibfield{editor}{\bibinfo{person}{Valentina Castiglioni} {and} \bibinfo{person}{Claudio~Antares Mezzina}} (Eds.).
\newblock \bibinfo{journal}{\emph{EPTCS}}  \bibinfo{volume}{368}, \bibinfo{pages}{3--22}.
\newblock
\href{https://doi.org/10.4204/EPTCS.368.1}{doi:\nolinkurl{10.4204/EPTCS.368.1}}


\bibitem[Aubert et~al\mbox{.}(2024)]%
        {APU24}
\bibfield{author}{\bibinfo{person}{Cl\'{e}ment Aubert}, \bibinfo{person}{Iain C.~C. Phillips}, {and} \bibinfo{person}{Irek Ulidowski}.} \bibinfo{year}{2024}\natexlab{}.
\newblock \bibinfo{title}{Dependence and Independence for Reversible Process Calculi}.
\newblock
\showeprint[arxiv]{2410.14699}~[cs.LO]
\href{https://doi.org/10.48550/arXiv.2410.14699}{doi:\nolinkurl{10.48550/arXiv.2410.14699}}


\bibitem[Aubert et~al\mbox{.}(2025)]%
        {APU25}
\bibfield{author}{\bibinfo{person}{Cl\'{e}ment Aubert}, \bibinfo{person}{Iain C.~C. Phillips}, {and} \bibinfo{person}{Irek Ulidowski}.} \bibinfo{year}{2025}\natexlab{}.
\newblock \showarticletitle{Independence and Causality in the Reversible Concurrent Setting}. In \bibinfo{booktitle}{\emph{Reversible Computation - 17th International Conference, {RC} 2025, Odense, Denmark, July 3-4, 2025, Proceedings}} \emph{(\bibinfo{series}{LNCS}, Vol.~\bibinfo{volume}{15716})}, \bibfield{editor}{\bibinfo{person}{Robert Gl{\"{u}}ck} {and} \bibinfo{person}{Robin Kaarsgaard}} (Eds.). \bibinfo{publisher}{Springer}, \bibinfo{pages}{9--26}.
\newblock
\href{https://doi.org/10.1007/978-3-031-97063-4_2}{doi:\nolinkurl{10.1007/978-3-031-97063-4_2}}


\bibitem[Aubert et~al\mbox{.}(2026)]%
        {AubertPU26}
\bibfield{author}{\bibinfo{person}{Cl\'{e}ment Aubert}, \bibinfo{person}{Iain C.~C. Phillips}, {and} \bibinfo{person}{Irek Ulidowski}.} \bibinfo{year}{2026}\natexlab{}.
\newblock \showarticletitle{Bisimulations and Reversibility}. In \bibinfo{booktitle}{\emph{Components Operationally: Reversibility and System Engineering: Essays Dedicated to Jean-Bernard Stefani on the Occasion of His 65th Birthday}} \emph{(\bibinfo{series}{LNCS}, Vol.~\bibinfo{volume}{16065})}, \bibfield{editor}{\bibinfo{person}{Claudio~Antares Mezzina} {and} \bibinfo{person}{Alan Schmitt}} (Eds.). \bibinfo{publisher}{Springer}, \bibinfo{pages}{46--67}.
\newblock
\href{https://doi.org/10.1007/978-3-031-99717-4_3}{doi:\nolinkurl{10.1007/978-3-031-99717-4_3}}


\bibitem[Baeten(1992)]%
        {TOA1992}
\bibfield{author}{\bibinfo{person}{Jos C.~M. Baeten}.} \bibinfo{year}{1992}\natexlab{}.
\newblock \showarticletitle{The Total Order Assumption}. In \bibinfo{booktitle}{\emph{{NAPAW} 92, Proceedings of the First North American Process Algebra Workshop, Stony Brook, New York, USA, 28 Agust 1992}} \emph{(\bibinfo{series}{Workshops in Computing})}, \bibfield{editor}{\bibinfo{person}{S.~Purushothaman} {and} \bibinfo{person}{Amy~E. Zwarico}} (Eds.). \bibinfo{publisher}{Springer}, \bibinfo{pages}{231--240}.
\newblock
\href{https://doi.org/10.1007/978-1-4471-3217-2\_14}{doi:\nolinkurl{10.1007/978-1-4471-3217-2\_14}}


\bibitem[Baeten(2005)]%
        {BAETEN2005131}
\bibfield{author}{\bibinfo{person}{Jos C.~M. Baeten}.} \bibinfo{year}{2005}\natexlab{}.
\newblock \showarticletitle{A brief history of process algebra}.
\newblock \bibinfo{journal}{\emph{Theoretical Computer Science}} \bibinfo{volume}{335}, \bibinfo{number}{2} (\bibinfo{year}{2005}), \bibinfo{pages}{131--146}.
\newblock
\showISSN{0304-3975}
\href{https://doi.org/10.1016/j.tcs.2004.07.036}{doi:\nolinkurl{10.1016/j.tcs.2004.07.036}}


\bibitem[Bednarczyk(1991)]%
        {Bednarczyk1991}
\bibfield{author}{\bibinfo{person}{Marek~A. Bednarczyk}.} \bibinfo{year}{1991}\natexlab{}.
\newblock \bibinfo{booktitle}{\emph{Hereditary History Preserving Bisimulations or What is the Power of the Future Perfect in Program Logics}}.
\newblock \bibinfo{type}{{T}echnical {R}eport}. \bibinfo{institution}{Instytut Podstaw Informatyki PAN filia w Gdańsku}.
\newblock


\bibitem[Bernardo et~al\mbox{.}(2024)]%
        {BEM24}
\bibfield{author}{\bibinfo{person}{Marco Bernardo}, \bibinfo{person}{Andrea Esposito}, {and} \bibinfo{person}{Claudio~Antares Mezzina}.} \bibinfo{year}{2024}\natexlab{}.
\newblock \showarticletitle{Expansion Laws for Forward-Reverse, Forward, and Reverse Bisimilarities via Proved Encodings}, In \bibinfo{booktitle}{Proceedings Combined 31st International Workshop on Expressiveness in Concurrency and 21st Workshop on Structural Operational Semantics, {EXPRESS/SOS} 2024, Calgary, Canada, 9th September 2024}, \bibfield{editor}{\bibinfo{person}{Georgiana Caltais} {and} \bibinfo{person}{Cinzia~Di Giusto}} (Eds.).
\newblock \bibinfo{journal}{\emph{EPTCS}}  \bibinfo{volume}{412}, \bibinfo{pages}{51--70}.
\newblock
\href{https://doi.org/10.4204/EPTCS.412.5}{doi:\nolinkurl{10.4204/EPTCS.412.5}}


\bibitem[Bernardo et~al\mbox{.}(2025)]%
        {BEM25}
\bibfield{author}{\bibinfo{person}{Marco Bernardo}, \bibinfo{person}{Andrea Esposito}, {and} \bibinfo{person}{Claudio~Antares Mezzina}.} \bibinfo{year}{2025}\natexlab{}.
\newblock \showarticletitle{Alternative Characterizations of Hereditary History-Preserving Bisimilarity via Backward Ready Multisets}. In \bibinfo{booktitle}{\emph{Foundations of Software Science and Computation Structures - 28th International Conference, FoSSaCS 2025}} \emph{(\bibinfo{series}{LNCS}, Vol.~\bibinfo{volume}{15691})}, \bibfield{editor}{\bibinfo{person}{Parosh~Aziz Abdulla} {and} \bibinfo{person}{Delia Kesner}} (Eds.). \bibinfo{publisher}{Springer}, \bibinfo{pages}{67--87}.
\newblock
\href{https://doi.org/10.1007/978-3-031-90897-2_4}{doi:\nolinkurl{10.1007/978-3-031-90897-2_4}}


\bibitem[Boudol and Castellani(1988a)]%
        {BC88}
\bibfield{author}{\bibinfo{person}{G\'{e}rard Boudol} {and} \bibinfo{person}{Ilaria Castellani}.} \bibinfo{year}{1988}\natexlab{a}.
\newblock \showarticletitle{A non-interleaving semantics for {CCS} based on proved transitions}.
\newblock \bibinfo{journal}{\emph{Fundamenta Informaticae}} \bibinfo{volume}{11}, \bibinfo{number}{4} (\bibinfo{year}{1988}), \bibinfo{pages}{433--452}.
\newblock
\href{https://doi.org/10.3233/FI-1988-11406}{doi:\nolinkurl{10.3233/FI-1988-11406}}


\bibitem[Boudol and Castellani(1988b)]%
        {BoudolC88Rex}
\bibfield{author}{\bibinfo{person}{G{\'{e}}rard Boudol} {and} \bibinfo{person}{Ilaria Castellani}.} \bibinfo{year}{1988}\natexlab{b}.
\newblock \showarticletitle{Permutation of transitions: An event structure semantics for {CCS} and {SCCS}}. In \bibinfo{booktitle}{\emph{Linear Time, Branching Time and Partial Order in Logics and Models for Concurrency, School/Workshop, Noordwijkerhout, The Netherlands, May 30 - June 3, 1988, Proceedings}} \emph{(\bibinfo{series}{LNCS})}, \bibfield{editor}{\bibinfo{person}{J.~W. de~Bakker}, \bibinfo{person}{Willem~P. de~Roever}, {and} \bibinfo{person}{Grzegorz Rozenberg}} (Eds.). \bibinfo{publisher}{Springer}, \bibinfo{pages}{411--427}.
\newblock
\href{https://doi.org/10.1007/BFB0013028}{doi:\nolinkurl{10.1007/BFB0013028}}


\bibitem[Boudol and Castellani(1994)]%
        {BC94}
\bibfield{author}{\bibinfo{person}{G\'{e}rard Boudol} {and} \bibinfo{person}{Ilaria Castellani}.} \bibinfo{year}{1994}\natexlab{}.
\newblock \showarticletitle{Flow Models of Distributed Computations: Three Equivalent Semantics for {CCS}}.
\newblock \bibinfo{journal}{\emph{Information and Computation}} \bibinfo{volume}{114}, \bibinfo{number}{2} (\bibinfo{year}{1994}), \bibinfo{pages}{247--314}.
\newblock
\href{https://doi.org/10.1006/inco.1994.1088}{doi:\nolinkurl{10.1006/inco.1994.1088}}


\bibitem[Cecilia(2025a)]%
        {Cec25}
\bibfield{author}{\bibinfo{person}{Gabriele Cecilia}.} \bibinfo{year}{2025}\natexlab{a}.
\newblock \showarticletitle{A Formalization of the Reversible Concurrent Calculus {CCSKP} in Beluga}, In \bibinfo{booktitle}{Proceedings 18th Interaction and Concurrency Experience, {ICE} 2025, Lille, France, 20th June 2025}, \bibfield{editor}{\bibinfo{person}{Cl\'{e}ment Aubert}, \bibinfo{person}{Cinzia Di~Giusto}, \bibinfo{person}{Simon Fowler}, {and} \bibinfo{person}{Violet Ka~I Pun}} (Eds.).
\newblock \bibinfo{journal}{\emph{EPTCS}}  \bibinfo{volume}{425}, \bibinfo{pages}{55--72}.
\newblock
\href{https://doi.org/10.4204/EPTCS.425.5}{doi:\nolinkurl{10.4204/EPTCS.425.5}}


\bibitem[Cecilia(2025b)]%
        {cecilia_2025_15660907}
\bibfield{author}{\bibinfo{person}{Gabriele Cecilia}.} \bibinfo{year}{2025}\natexlab{b}.
\newblock \bibinfo{booktitle}{\emph{A Formalization of the Reversible Concurrent Calculus CCSKP in Beluga (artifact)}}.
\newblock
\href{https://doi.org/10.5281/zenodo.16179366}{doi:\nolinkurl{10.5281/zenodo.16179366}}


\bibitem[Cecilia and Momigliano(2024)]%
        {momigliano24}
\bibfield{author}{\bibinfo{person}{Gabriele Cecilia} {and} \bibinfo{person}{Alberto Momigliano}.} \bibinfo{year}{2024}\natexlab{}.
\newblock \showarticletitle{A Beluga Formalization of the Harmony Lemma in the {\(\pi\)}-Calculus}. In \bibinfo{booktitle}{\emph{Proceedings Workshop on Logical Frameworks and Meta-Languages: Theory and Practice, Tallinn, Estonia, 8th July 2024}} \emph{(\bibinfo{series}{EPTCS}, Vol.~\bibinfo{volume}{404})}, \bibfield{editor}{\bibinfo{person}{Florian Rabe} {and} \bibinfo{person}{Claudio Sacerdoti~Coen}} (Eds.). \bibinfo{publisher}{Open Publishing Association}, \bibinfo{pages}{1--17}.
\newblock
\href{https://doi.org/10.4204/EPTCS.404.1}{doi:\nolinkurl{10.4204/EPTCS.404.1}}


\bibitem[Cheney et~al\mbox{.}(2012)]%
        {CheneyNV12}
\bibfield{author}{\bibinfo{person}{James Cheney}, \bibinfo{person}{Michael Norrish}, {and} \bibinfo{person}{Ren{\'{e}} Vestergaard}.} \bibinfo{year}{2012}\natexlab{}.
\newblock \showarticletitle{Formalizing Adequacy: {A} Case Study for Higher-order Abstract Syntax}.
\newblock \bibinfo{journal}{\emph{J. Autom. Reason.}} \bibinfo{volume}{49}, \bibinfo{number}{2} (\bibinfo{year}{2012}), \bibinfo{pages}{209--239}.
\newblock
\href{https://doi.org/10.1007/S10817-011-9221-6}{doi:\nolinkurl{10.1007/S10817-011-9221-6}}


\bibitem[Danos and Krivine(2004)]%
        {DK04}
\bibfield{author}{\bibinfo{person}{Vincent Danos} {and} \bibinfo{person}{Jean Krivine}.} \bibinfo{year}{2004}\natexlab{}.
\newblock \showarticletitle{Reversible Communicating Systems}. In \bibinfo{booktitle}{\emph{{CONCUR} 2004 - Concurrency Theory, 15th International Conference, London, UK, August 31 - September 3, 2004, Proceedings}} \emph{(\bibinfo{series}{LNCS}, Vol.~\bibinfo{volume}{3170})}, \bibfield{editor}{\bibinfo{person}{Philippa Gardner} {and} \bibinfo{person}{Nobuko Yoshida}} (Eds.). \bibinfo{publisher}{Springer}, \bibinfo{pages}{292--307}.
\newblock
\showISBNx{3-540-22940-X}
\href{https://doi.org/10.1007/978-3-540-28644-8_19}{doi:\nolinkurl{10.1007/978-3-540-28644-8_19}}


\bibitem[D{\'{a}}valos and Melgratti(2026)]%
        {Davalos2026}
\bibfield{author}{\bibinfo{person}{Daniel D{\'{a}}valos} {and} \bibinfo{person}{Hern{\'{a}}n~C. Melgratti}.} \bibinfo{year}{2026}\natexlab{}.
\newblock \showarticletitle{A Lean Mechanization of Reversible Occurrence Nets}. In \bibinfo{booktitle}{\emph{Reversible Computation - 18th International Conference, {RC} 2026, Turin, Italy, July 9-10, 2026, Proceedings}} \emph{(\bibinfo{series}{LNCS}, Vol.~\bibinfo{volume}{16626})}, \bibfield{editor}{\bibinfo{person}{Cl\'{e}ment Aubert} {and} \bibinfo{person}{Luca Roversi}} (Eds.). \bibinfo{publisher}{Springer}, \bibinfo{pages}{95--111}.
\newblock
\href{https://doi.org/10.1007/978-3-032-30839-9_5}{doi:\nolinkurl{10.1007/978-3-032-30839-9_5}}


\bibitem[de~Moura and Ullrich(2021)]%
        {Lean4}
\bibfield{author}{\bibinfo{person}{Leonardo de Moura} {and} \bibinfo{person}{Sebastian Ullrich}.} \bibinfo{year}{2021}\natexlab{}.
\newblock \showarticletitle{The Lean 4 Theorem Prover and Programming Language}. In \bibinfo{booktitle}{\emph{Automated Deduction – CADE 28: 28th International Conference on Automated Deduction, Virtual Event, July 12–15, 2021, Proceedings}}. \bibinfo{publisher}{Springer-Verlag}, \bibinfo{address}{Berlin, Heidelberg}, \bibinfo{pages}{625–635}.
\newblock
\showISBNx{978-3-030-79875-8}
\href{https://doi.org/10.1007/978-3-030-79876-5_37}{doi:\nolinkurl{10.1007/978-3-030-79876-5_37}}


\bibitem[Degano et~al\mbox{.}(2003)]%
        {DeganoGP03}
\bibfield{author}{\bibinfo{person}{Pierpaolo Degano}, \bibinfo{person}{Fabio Gadducci}, {and} \bibinfo{person}{Corrado Priami}.} \bibinfo{year}{2003}\natexlab{}.
\newblock \showarticletitle{Causality and Replication in Concurrent Processes}. In \bibinfo{booktitle}{\emph{{PSI}}} \emph{(\bibinfo{series}{LNCS}, Vol.~\bibinfo{volume}{2890})}, \bibfield{editor}{\bibinfo{person}{Manfred Broy} {and} \bibinfo{person}{Alexandre~V. Zamulin}} (Eds.). \bibinfo{publisher}{Springer}, \bibinfo{pages}{307--318}.
\newblock
\showISBNx{3-540-20813-5}
\href{https://doi.org/10.1007/978-3-540-39866-0_30}{doi:\nolinkurl{10.1007/978-3-540-39866-0_30}}


\bibitem[Fr{\"o}schle and Hildebrandt(1999)]%
        {10.1007/3-540-48340-3_32}
\bibfield{author}{\bibinfo{person}{Sibylle~B. Fr{\"o}schle} {and} \bibinfo{person}{Thomas~T. Hildebrandt}.} \bibinfo{year}{1999}\natexlab{}.
\newblock \showarticletitle{On Plain and Hereditary History-Preserving Bisimulation}. In \bibinfo{booktitle}{\emph{Mathematical Foundations of Computer Science 1999}}, \bibfield{editor}{\bibinfo{person}{Miros{\l}aw Kuty{\l}owski}, \bibinfo{person}{Leszek Pacholski}, {and} \bibinfo{person}{Tomasz Wierzbicki}} (Eds.). \bibinfo{publisher}{Springer Berlin Heidelberg}, \bibinfo{address}{Berlin, Heidelberg}, \bibinfo{pages}{354--365}.
\newblock
\showISBNx{978-3-540-48340-3}
\href{https://doi.org/10.1007/3-540-48340-3_32}{doi:\nolinkurl{10.1007/3-540-48340-3_32}}


\bibitem[Glabbeek and Goltz(2001)]%
        {Glabbeek2001}
\bibfield{author}{\bibinfo{person}{Robert J.~van Glabbeek} {and} \bibinfo{person}{Ursula Goltz}.} \bibinfo{year}{2001}\natexlab{}.
\newblock \showarticletitle{Refinement of actions and equivalence notions for concurrent systems}.
\newblock \bibinfo{journal}{\emph{Acta Informatica}} \bibinfo{volume}{37}, \bibinfo{number}{4/5} (\bibinfo{year}{2001}), \bibinfo{pages}{229--327}.
\newblock
\href{https://doi.org/10.1007/s002360000041}{doi:\nolinkurl{10.1007/s002360000041}}


\bibitem[Glabbeek and Plotkin(2009)]%
        {Glabbeek2009}
\bibfield{author}{\bibinfo{person}{Robert J.~van Glabbeek} {and} \bibinfo{person}{Gordon~D. Plotkin}.} \bibinfo{year}{2009}\natexlab{}.
\newblock \showarticletitle{Configuration structures, event structures and {P}etri nets}.
\newblock \bibinfo{journal}{\emph{Theoretical Computer Science}} \bibinfo{volume}{410}, \bibinfo{number}{41} (\bibinfo{year}{2009}), \bibinfo{pages}{4111--4159}.
\newblock
\href{https://doi.org/10.1016/j.tcs.2009.06.014}{doi:\nolinkurl{10.1016/j.tcs.2009.06.014}}


\bibitem[Glabbeek and Vaandrager(1997)]%
        {vGV97}
\bibfield{author}{\bibinfo{person}{Robert J.~van Glabbeek} {and} \bibinfo{person}{Frits~W. Vaandrager}.} \bibinfo{year}{1997}\natexlab{}.
\newblock \showarticletitle{The difference between splitting in $n$ and $n+1$}.
\newblock \bibinfo{journal}{\emph{Information and Computation}} \bibinfo{volume}{136}, \bibinfo{number}{2} (\bibinfo{year}{1997}), \bibinfo{pages}{109--142}.
\newblock
\href{https://doi.org/10.1006/inco.1997.2634}{doi:\nolinkurl{10.1006/inco.1997.2634}}


\bibitem[Gl{\"{u}}ck et~al\mbox{.}(2023)]%
        {DBLP:conf/rc/GluckLMMPUV23}
\bibfield{author}{\bibinfo{person}{Robert Gl{\"{u}}ck}, \bibinfo{person}{Ivan Lanese}, \bibinfo{person}{Claudio~Antares Mezzina}, \bibinfo{person}{Jaroslaw~Adam Miszczak}, \bibinfo{person}{Iain C.~C. Phillips}, \bibinfo{person}{Irek Ulidowski}, {and} \bibinfo{person}{Germ{\'{a}}n Vidal}.} \bibinfo{year}{2023}\natexlab{}.
\newblock \showarticletitle{Towards a Taxonomy for Reversible Computation Approaches}. In \bibinfo{booktitle}{\emph{Reversible Computation - 15th International Conference, {RC} 2023, Giessen, Germany, July 18-19, 2023, Proceedings}} \emph{(\bibinfo{series}{LNCS}, Vol.~\bibinfo{volume}{13960})}, \bibfield{editor}{\bibinfo{person}{Martin Kutrib} {and} \bibinfo{person}{Uwe Meyer}} (Eds.). \bibinfo{publisher}{Springer}, \bibinfo{pages}{24--39}.
\newblock
\href{https://doi.org/10.1007/978-3-031-38100-3_3}{doi:\nolinkurl{10.1007/978-3-031-38100-3_3}}


\bibitem[Harper et~al\mbox{.}(1993)]%
        {Harper93}
\bibfield{author}{\bibinfo{person}{Robert Harper}, \bibinfo{person}{Furio Honsell}, {and} \bibinfo{person}{Gordon~D. Plotkin}.} \bibinfo{year}{1993}\natexlab{}.
\newblock \showarticletitle{A Framework for Defining Logics}.
\newblock \bibinfo{journal}{\emph{J. {ACM}}} \bibinfo{volume}{40}, \bibinfo{number}{1} (\bibinfo{year}{1993}), \bibinfo{pages}{143--184}.
\newblock
\href{https://doi.org/10.1145/138027.138060}{doi:\nolinkurl{10.1145/138027.138060}}


\bibitem[Honsell et~al\mbox{.}(2001)]%
        {HonsellMS01}
\bibfield{author}{\bibinfo{person}{Furio Honsell}, \bibinfo{person}{Marino Miculan}, {and} \bibinfo{person}{Ivan Scagnetto}.} \bibinfo{year}{2001}\natexlab{}.
\newblock \showarticletitle{{\(\pi\)}-calculus in (Co)inductive-type theory}.
\newblock \bibinfo{journal}{\emph{Theoretical Computer Science}} \bibinfo{volume}{253}, \bibinfo{number}{2} (\bibinfo{year}{2001}), \bibinfo{pages}{239--285}.
\newblock
\href{https://doi.org/10.1016/S0304-3975(00)00095-5}{doi:\nolinkurl{10.1016/S0304-3975(00)00095-5}}


\bibitem[Lanese and Medic(2020)]%
        {LaneseM20}
\bibfield{author}{\bibinfo{person}{Ivan Lanese} {and} \bibinfo{person}{Doriana Medic}.} \bibinfo{year}{2020}\natexlab{}.
\newblock \showarticletitle{A General Approach to Derive Uncontrolled Reversible Semantics}. In \bibinfo{booktitle}{\emph{{CONCUR}}} \emph{(\bibinfo{series}{LIPICS}, Vol.~\bibinfo{volume}{171})}, \bibfield{editor}{\bibinfo{person}{Igor Konnov} {and} \bibinfo{person}{Laura Kov\'{a}cs}} (Eds.). \bibinfo{publisher}{Schloss Dagstuhl - Leibniz-Zentrum f{\"{u}}r Informatik}, \bibinfo{pages}{33:1--33:24}.
\newblock
\href{https://doi.org/10.4230/LIPIcs.CONCUR.2020.33}{doi:\nolinkurl{10.4230/LIPIcs.CONCUR.2020.33}}


\bibitem[Lanese et~al\mbox{.}(2026)]%
        {lanese_et_al:LIPIcs.CONCUR.2026.41}
\bibfield{author}{\bibinfo{person}{Ivan Lanese}, \bibinfo{person}{Claudio~Antares Mezzina}, \bibinfo{person}{Iain C.~C. Phillips}, \bibinfo{person}{Irek Ulidowski}, {and} \bibinfo{person}{Shoji Yuen}.} \bibinfo{year}{2026}\natexlab{}.
\newblock \showarticletitle{On the Encodability of Reversible Process Calculi}. In \bibinfo{booktitle}{\emph{37th International Conference on Concurrency Theory (CONCUR 2026)}} \emph{(\bibinfo{series}{LIPICS}, Vol.~\bibinfo{volume}{391})}, \bibfield{editor}{\bibinfo{person}{Ana Sokolova} {and} \bibinfo{person}{Patrick Totzke}} (Eds.). \bibinfo{publisher}{Schloss Dagstuhl - Leibniz-Zentrum f{\"{u}}r Informatik}, \bibinfo{address}{Dagstuhl, Germany}, \bibinfo{pages}{41:1--41:20}.
\newblock
\showISBNx{978-3-95977-447-5}
\showISSN{1868-8969}
\href{https://doi.org/10.4230/LIPIcs.CONCUR.2026.41}{doi:\nolinkurl{10.4230/LIPIcs.CONCUR.2026.41}}


\bibitem[Lanese and Phillips(2021)]%
        {LaneseP21}
\bibfield{author}{\bibinfo{person}{Ivan Lanese} {and} \bibinfo{person}{Iain C.~C. Phillips}.} \bibinfo{year}{2021}\natexlab{}.
\newblock \showarticletitle{Forward-Reverse Observational Equivalences in {CCSK}}. In \bibinfo{booktitle}{\emph{Reversible Computation - 13th International Conference, {RC} 2021, Virtual Event, July 7-8, 2021, Proceedings}} \emph{(\bibinfo{series}{LNCS}, Vol.~\bibinfo{volume}{12805})}, \bibfield{editor}{\bibinfo{person}{Shigeru Yamashita} {and} \bibinfo{person}{Tetsuo Yokoyama}} (Eds.). \bibinfo{publisher}{Springer}, \bibinfo{pages}{126--143}.
\newblock
\href{https://doi.org/10.1007/978-3-030-79837-6_8}{doi:\nolinkurl{10.1007/978-3-030-79837-6_8}}


\bibitem[Lanese et~al\mbox{.}(2024)]%
        {LPU24}
\bibfield{author}{\bibinfo{person}{Ivan Lanese}, \bibinfo{person}{Iain C.~C. Phillips}, {and} \bibinfo{person}{Irek Ulidowski}.} \bibinfo{year}{2024}\natexlab{}.
\newblock \showarticletitle{An Axiomatic Theory for Reversible Computation}.
\newblock \bibinfo{journal}{\emph{ACM Transactions on Computational Logic}} \bibinfo{volume}{25}, \bibinfo{number}{2} (\bibinfo{year}{2024}), \bibinfo{pages}{1--40}.
\newblock
\href{https://doi.org/10.1145/3648474}{doi:\nolinkurl{10.1145/3648474}}


\bibitem[Maletto and Roversi(2024)]%
        {MalettoR24}
\bibfield{author}{\bibinfo{person}{Giacomo Maletto} {and} \bibinfo{person}{Luca Roversi}.} \bibinfo{year}{2024}\natexlab{}.
\newblock \showarticletitle{Certifying expressive power and algorithms of reversible primitive permutations with Lean}.
\newblock \bibinfo{journal}{\emph{Journal of Logical and Algebraic Methods in Programming}}  \bibinfo{volume}{136} (\bibinfo{year}{2024}), \bibinfo{pages}{100923}.
\newblock
\href{https://doi.org/10.1016/J.JLAMP.2023.100923}{doi:\nolinkurl{10.1016/J.JLAMP.2023.100923}}


\bibitem[Melgratti et~al\mbox{.}(2020)]%
        {MMU20}
\bibfield{author}{\bibinfo{person}{Hern{\'{a}}n~C. Melgratti}, \bibinfo{person}{Claudio~Antares Mezzina}, {and} \bibinfo{person}{Irek Ulidowski}.} \bibinfo{year}{2020}\natexlab{}.
\newblock \showarticletitle{Reversing Place Transition Nets}.
\newblock \bibinfo{journal}{\emph{LMCS}} \bibinfo{volume}{16}, \bibinfo{number}{4} (\bibinfo{year}{2020}).
\newblock
\href{https://doi.org/10.23638/LMCS-16(4:5)2020}{doi:\nolinkurl{10.23638/LMCS-16(4:5)2020}}


\bibitem[Milner(1980)]%
        {milner80lncs}
\bibfield{author}{\bibinfo{person}{Robin Milner}.} \bibinfo{year}{1980}\natexlab{}.
\newblock \bibinfo{booktitle}{\emph{A Calculus of Communicating Systems}}. \bibinfo{series}{LNCS}, Vol.~\bibinfo{volume}{92}.
\newblock \bibinfo{publisher}{Springer-Verlag}.
\newblock
\showISBNx{9783540102359}
\href{https://doi.org/10.1007/3-540-10235-3}{doi:\nolinkurl{10.1007/3-540-10235-3}}


\bibitem[Nielsen et~al\mbox{.}(1981)]%
        {NPW81}
\bibfield{author}{\bibinfo{person}{Mogens Nielsen}, \bibinfo{person}{Gordon~D. Plotkin}, {and} \bibinfo{person}{Glynn Winskel}.} \bibinfo{year}{1981}\natexlab{}.
\newblock \showarticletitle{Petri nets, event structures and domains, part {I}}.
\newblock \bibinfo{journal}{\emph{Theoretical Computer Science}}  \bibinfo{volume}{13} (\bibinfo{year}{1981}), \bibinfo{pages}{85--108}.
\newblock
\href{https://doi.org/10.1016/0304-3975(81)90112-2}{doi:\nolinkurl{10.1016/0304-3975(81)90112-2}}


\bibitem[Paolini et~al\mbox{.}(2015)]%
        {PaoliniPR15}
\bibfield{author}{\bibinfo{person}{Luca Paolini}, \bibinfo{person}{Mauro Piccolo}, {and} \bibinfo{person}{Luca Roversi}.} \bibinfo{year}{2015}\natexlab{}.
\newblock \showarticletitle{A Certified Study of a Reversible Programming Language}. In \bibinfo{booktitle}{\emph{21st International Conference on Types for Proofs and Programs, {TYPES} 2015, Tallinn, Estonia, May 18-21, 2015}} \emph{(\bibinfo{series}{LIPICS})}, \bibfield{editor}{\bibinfo{person}{Tarmo Uustalu}} (Ed.). \bibinfo{publisher}{Schloss Dagstuhl - Leibniz-Zentrum f{\"{u}}r Informatik}, \bibinfo{pages}{7:1--7:21}.
\newblock
\href{https://doi.org/10.4230/LIPICS.TYPES.2015.7}{doi:\nolinkurl{10.4230/LIPICS.TYPES.2015.7}}


\bibitem[Pfenning and Elliott(1988)]%
        {pfenning88pldi}
\bibfield{author}{\bibinfo{person}{Frank Pfenning} {and} \bibinfo{person}{Conal Elliott}.} \bibinfo{year}{1988}\natexlab{}.
\newblock \showarticletitle{Higher-Order Abstract Syntax}. In \bibinfo{booktitle}{\emph{Proceedings of the {ACM}-{SIGPLAN} Conference on Programming Language Design and Implementation}}. \bibinfo{publisher}{ACM Press}, \bibinfo{pages}{199--208}.
\newblock
\href{https://doi.org/10.1145/53990.54010}{doi:\nolinkurl{10.1145/53990.54010}}


\bibitem[Phillips and Ulidowski(2007a)]%
        {PU07a}
\bibfield{author}{\bibinfo{person}{Iain C.~C. Phillips} {and} \bibinfo{person}{Irek Ulidowski}.} \bibinfo{year}{2007}\natexlab{a}.
\newblock \showarticletitle{Reversibility and Models for Concurrency}. In \bibinfo{booktitle}{\emph{Proceedings of the Fourth Workshop on Structural Operational Semantics, SOS@LICS/ICALP 2007, Wroclaw, Poland, July 9, 2007}} \emph{(\bibinfo{series}{ENTCS}, Vol.~\bibinfo{volume}{192(1)})}, \bibfield{editor}{\bibinfo{person}{Robert J.~van Glabbeek} {and} \bibinfo{person}{Matthew Hennessy}} (Eds.). \bibinfo{publisher}{Elsevier}, \bibinfo{pages}{93--108}.
\newblock
\href{https://doi.org/10.1016/j.entcs.2007.08.018}{doi:\nolinkurl{10.1016/j.entcs.2007.08.018}}


\bibitem[Phillips and Ulidowski(2007b)]%
        {PU07}
\bibfield{author}{\bibinfo{person}{Iain C.~C. Phillips} {and} \bibinfo{person}{Irek Ulidowski}.} \bibinfo{year}{2007}\natexlab{b}.
\newblock \showarticletitle{Reversing algebraic process calculi}.
\newblock \bibinfo{journal}{\emph{Journal of Logic and Algebraic Programming}} \bibinfo{volume}{73}, \bibinfo{number}{1-2} (\bibinfo{year}{2007}), \bibinfo{pages}{70--96}.
\newblock
\href{https://doi.org/10.1016/j.jlap.2006.11.002}{doi:\nolinkurl{10.1016/j.jlap.2006.11.002}}


\bibitem[Pientka and Dunfield(2010)]%
        {PientkaD10}
\bibfield{author}{\bibinfo{person}{Brigitte Pientka} {and} \bibinfo{person}{Jana Dunfield}.} \bibinfo{year}{2010}\natexlab{}.
\newblock \showarticletitle{Beluga: {A} Framework for Programming and Reasoning with Deductive Systems (System Description)}. In \bibinfo{booktitle}{\emph{Automated Reasoning, 5th International Joint Conference, {IJCAR} 2010, Edinburgh, UK, July 16-19, 2010. Proceedings}} \emph{(\bibinfo{series}{LNCS})}, \bibfield{editor}{\bibinfo{person}{J{\"{u}}rgen Giesl} {and} \bibinfo{person}{Reiner H{\"{a}}hnle}} (Eds.). \bibinfo{publisher}{Springer}, \bibinfo{pages}{15--21}.
\newblock
\href{https://doi.org/10.1007/978-3-642-14203-1_2}{doi:\nolinkurl{10.1007/978-3-642-14203-1_2}}


\bibitem[Pizzo and Sacerdoti~Coen(2026)]%
        {Nicolo2026}
\bibfield{author}{\bibinfo{person}{Nicolò Pizzo} {and} \bibinfo{person}{Claudio Sacerdoti~Coen}.} \bibinfo{year}{2026}\natexlab{}.
\newblock \showarticletitle{A Reversible Crumbling Abstract Machine for Plotkin's Call-by-Value}. In \bibinfo{booktitle}{\emph{Reversible Computation - 18th International Conference, {RC} 2026, Turin, Italy, July 9-10, 2026, Proceedings}} \emph{(\bibinfo{series}{LNCS}, Vol.~\bibinfo{volume}{16626})}, \bibfield{editor}{\bibinfo{person}{Cl\'{e}ment Aubert} {and} \bibinfo{person}{Luca Roversi}} (Eds.). \bibinfo{publisher}{Springer}, \bibinfo{pages}{41--58}.
\newblock
\href{https://doi.org/10.1007/978-3-032-30839-9_2}{doi:\nolinkurl{10.1007/978-3-032-30839-9_2}}


\bibitem[Reisig(1985)]%
        {Rei85}
\bibfield{author}{\bibinfo{person}{Wolfgang Reisig}.} \bibinfo{year}{1985}\natexlab{}.
\newblock \bibinfo{booktitle}{\emph{Petri Nets, An Introduction}}.
\newblock \bibinfo{publisher}{Springer}.
\newblock
\showISBNx{978-3-642-69970-2}
\href{https://doi.org/10.1007/978-3-642-69968-9}{doi:\nolinkurl{10.1007/978-3-642-69968-9}}


\bibitem[Sano et~al\mbox{.}(2023)]%
        {SanoKP23}
\bibfield{author}{\bibinfo{person}{Chuta Sano}, \bibinfo{person}{Ryan Kavanagh}, {and} \bibinfo{person}{Brigitte Pientka}.} \bibinfo{year}{2023}\natexlab{}.
\newblock \showarticletitle{Mechanizing Session-Types using a Structural View: Enforcing Linearity without Linearity}.
\newblock \bibinfo{journal}{\emph{Proceedings of the ACM on Programming Languages}} \bibinfo{volume}{7}, \bibinfo{number}{{OOPSLA2}} (\bibinfo{year}{2023}), \bibinfo{pages}{374--399}.
\newblock
\href{https://doi.org/10.1145/3622810}{doi:\nolinkurl{10.1145/3622810}}


\bibitem[Sassone et~al\mbox{.}(1996)]%
        {Sassone1996}
\bibfield{author}{\bibinfo{person}{Vladimiro Sassone}, \bibinfo{person}{Mogens Nielsen}, {and} \bibinfo{person}{Glynn Winskel}.} \bibinfo{year}{1996}\natexlab{}.
\newblock \showarticletitle{Models for Concurrency: Towards a Classification}.
\newblock \bibinfo{journal}{\emph{Theoretical Computer Science}} \bibinfo{volume}{170}, \bibinfo{number}{1-2} (\bibinfo{year}{1996}), \bibinfo{pages}{297--348}.
\newblock
\href{https://doi.org/10.1016/S0304-3975(96)80710-9}{doi:\nolinkurl{10.1016/S0304-3975(96)80710-9}}


\bibitem[Trogni et~al\mbox{.}(2026)]%
        {EPTCS448.1}
\bibfield{author}{\bibinfo{person}{Lea Trogni}, \bibinfo{person}{Gabriele Cecilia}, {and} \bibinfo{person}{Alberto Momigliano}.} \bibinfo{year}{2026}\natexlab{}.
\newblock \showarticletitle{Barbed Similarity for the $\pi$-Calculus in Beluga: A Case Study in Coinductive Reasoning}. In \bibinfo{booktitle}{\emph{Proceedings of the 21st Workshop on Logical Frameworks and Meta Languages: Theory and Practice, Lisbon, Portugal, 24th July 2026}} \emph{(\bibinfo{series}{EPTCS}, Vol.~\bibinfo{volume}{448})}, \bibfield{editor}{\bibinfo{person}{Sophie Tourret} {and} \bibinfo{person}{Olivier Hermant}} (Eds.). \bibinfo{publisher}{Open Publishing Association}, \bibinfo{pages}{1--17}.
\newblock
\href{https://doi.org/10.4204/EPTCS.448.1}{doi:\nolinkurl{10.4204/EPTCS.448.1}}


\bibitem[Ulidowski et~al\mbox{.}(2014)]%
        {Ulidowski2014}
\bibfield{author}{\bibinfo{person}{Irek Ulidowski}, \bibinfo{person}{Iain C.~C. Phillips}, {and} \bibinfo{person}{Shoji Yuen}.} \bibinfo{year}{2014}\natexlab{}.
\newblock \showarticletitle{Concurrency and Reversibility}. In \bibinfo{booktitle}{\emph{Reversible Computation - 6th International Conference, {RC} 2014, Kyoto, Japan, July 10-11, 2014. Proceedings}} \emph{(\bibinfo{series}{LNCS}, Vol.~\bibinfo{volume}{8507})}, \bibfield{editor}{\bibinfo{person}{Shigeru Yamashita} {and} \bibinfo{person}{Shin{-}ichi Minato}} (Eds.). \bibinfo{publisher}{Springer}, \bibinfo{pages}{1--14}.
\newblock
\showISBNx{978-3-319-08493-0}
\href{https://doi.org/10.1007/978-3-319-08494-7_1}{doi:\nolinkurl{10.1007/978-3-319-08494-7_1}}


\bibitem[Vall\'ee and Lanese(2026)]%
        {Vallee2026}
\bibfield{author}{\bibinfo{person}{Baptiste Vall\'ee} {and} \bibinfo{person}{Ivan Lanese}.} \bibinfo{year}{2026}\natexlab{}.
\newblock \showarticletitle{On Weak Bisimilarities in CCSK}. In \bibinfo{booktitle}{\emph{Reversible Computation - 18th International Conference, {RC} 2026, Turin, Italy, July 9-10, 2026, Proceedings}} \emph{(\bibinfo{series}{LNCS}, Vol.~\bibinfo{volume}{16626})}, \bibfield{editor}{\bibinfo{person}{Cl\'{e}ment Aubert} {and} \bibinfo{person}{Luca Roversi}} (Eds.). \bibinfo{publisher}{Springer}, \bibinfo{pages}{59--74}.
\newblock
\href{https://doi.org/10.1007/978-3-032-30839-9_3}{doi:\nolinkurl{10.1007/978-3-032-30839-9_3}}


\bibitem[Winskel(1986)]%
        {Winskel1986}
\bibfield{author}{\bibinfo{person}{Glynn Winskel}.} \bibinfo{year}{1986}\natexlab{}.
\newblock \showarticletitle{Event Structures}. In \bibinfo{booktitle}{\emph{Petri Nets: Central Models and Their Properties, Advances in Petri Nets 1986, Part II, Proceedings of an Advanced Course, Bad Honnef, 8.-19. September 1986}} \emph{(\bibinfo{series}{LNCS}, Vol.~\bibinfo{volume}{255})}, \bibfield{editor}{\bibinfo{person}{Wilfried Brauer}, \bibinfo{person}{Wolfgang Reisig}, {and} \bibinfo{person}{Grzegorz Rozenberg}} (Eds.). \bibinfo{publisher}{Springer}, \bibinfo{pages}{325--392}.
\newblock
\showISBNx{3-540-17906-2}
\href{https://doi.org/10.1007/3-540-17906-2_31}{doi:\nolinkurl{10.1007/3-540-17906-2_31}}


\end{thebibliography}
